\documentclass[11pt]{article}

\usepackage[english]{babel}
\usepackage[T1]{fontenc}

\usepackage[letterpaper,top=3cm,bottom=3cm,left=2cm,right=2cm,marginparwidth=2cm]{geometry}

\usepackage{amsmath}
\usepackage{graphicx}
\usepackage{amsfonts}
\usepackage{amsthm}
\usepackage{calc}
\usepackage{accents}
\usepackage{comment}
\usepackage{slashed}
\usepackage[colorinlistoftodos]{todonotes}
\usepackage[colorlinks=true, allcolors=black]{hyperref}
\usepackage{thmtools}
\usepackage{thm-restate}
\usepackage[nameinlink]{cleveref}
\usepackage{amssymb}
\usepackage{float}
\usepackage{tikzfill}
\usepackage{lipsum}
\usepackage[justification=centering]{caption}
\usepackage{subcaption}

\allowdisplaybreaks
\newtheorem{lemma}{Lemma}[section]
\newtheorem{theorem}{Theorem}[section]
\newtheorem*{theorem*}{Theorem}
\newtheorem{definition}{Definition}[section]
\newtheorem{remark}{Remark}[section]
\newtheorem{corollary}{Corollary}[section]
\newtheorem{proposition}{Proposition}[section]
\newtheorem*{proposition*}{Proposition}
\newtheorem{assumption}{Assumption}[section]

\numberwithin{equation}{section}
\title{Weak cosmic censorship for the circularly symmetric \\ Einstein-scalar field system in $2+1$ dimensions}
\author{Serban Cicortas\footnote{Princeton University, Department of Mathematics, Fine Hall, Washington Road, Princeton, NJ 08544, USA}}

\begin{document}

\maketitle

\begin{abstract}
    We prove the weak cosmic censorship conjecture in $2+1$ spacetime dimensions for the circularly symmetric Einstein-scalar field system in the presence of a negative cosmological constant $\Lambda<0$. More precisely, we show that for any integer $k\geq2$, the maximal development of generic $C^k$ initial data does not contain naked singularities. An essential step of the proof is establishing the presence of a mass gap. In particular, this implies that all naked singularities have infinite blueshift, which represents the main instability mechanism.
\end{abstract}

\tableofcontents

\section{Introduction}

One of the main open problems in general relativity is the \textit{weak cosmic censorship conjecture}, proposed by Penrose in \cite{P69} for general Einstein-matter systems. The conjecture states that, for generic initial data, so-called "naked singularities" do not form. That is to say, singularities are generically hidden inside black holes and invisible to far away observers who exist for all time (more formally, null infinity $\mathcal{I}$ is future complete \cite{GH78, C99-GL}). The conjecture is motivated by the fact that the future incompleteness of $\mathcal{I}$ would threaten the predictability of general relativity as a physical~theory. 

While the weak cosmic censorship conjecture is completely open in its full generality, in the remarkable work \cite{ChrWCC}, Christodoulou proved a suitable formulation of the conjecture for the model problem of the spherically symmetric Einstein-scalar field system in $3+1$ dimensions. The argument of \cite{ChrWCC} relies on a novel \textit{blueshift} effect to show that, while explicit naked singularity spacetimes can be constructed \cite{C94, Cho93}, all naked singularities are in fact \textit{unstable} to black hole formation and the set of naked singularity spacetimes is indeed \textit{non-generic} in a suitable sense. However, the instability mechanism in \cite{ChrWCC} required working in a topology defined by a low-regularity class of solutions. Thus, the question of weak cosmic censorship in a smoother topology remains open even in this setting. We discuss the work \cite{ChrWCC} in detail in \Cref{Christodoulou section}.

In the present work, we initiate the mathematical study of gravitational collapse in $2+1$ dimensions for the circularly symmetric Einstein-scalar field system in the presence of a negative cosmological constant $\Lambda<0$, and we prove a version of the weak cosmic censorship conjecture for this system. We note that the natural problem in this setting is solving an initial-boundary value problem with reflective boundary conditions at null infinity $\mathcal{I}$ (which is now in fact timelike). This model was first studied in the numerical relativity literature by Pretorius and Choptuik \cite{Pretorius}, and we will discuss its motivation in detail in \Cref{self intro sectioom}. We also discuss the existence of examples of naked singularities for this model in \Cref{examples section}. To prove weak cosmic censorship, we use the blueshift effect as the main instability mechanism, similarly to \cite{ChrWCC}. A key new feature in our setting is the presence of a \textit{mass gap}, which states that no singularities form in the domain of dependence of a disk bounded by a circle with Hawking mass strictly less than~1. (In particular, this allows us to confirm a conjecture of \cite{ADS3}). Moreover, the different scaling in the present model allows us to activate the blueshift instability in the class of $C^k$ solutions, for any $k\geq2$. Thus, unlike \cite{ChrWCC}, our formulation of weak cosmic censorship in the present setting does not rely on low-regularity perturbations and holds in a topology of arbitrary smoothness.

\subsection{The Einstein-scalar field system in $2+1$ dimensions with $\Lambda<0$}\label{self intro sectioom}

We consider the Einstein-scalar field system of equations with a negative cosmological constant $\Lambda<0$:
\begin{equation}\label{Einstein equations}
    \begin{cases}
        Ric_{\mu\nu}-\frac{1}{2}Rg_{\mu\nu}+\Lambda g_{\mu\nu}=\nabla_{\mu}\phi\nabla_{\nu}\phi-\frac{1}{2}g_{\mu\nu}|\nabla\phi|^2,\\
        \square_{g}\phi=0.
    \end{cases}
\end{equation}
In $2+1$ dimensions, the system \eqref{Einstein equations} has been studied extensively in the physics and numerical relativity literature \cite{BTZ, BTZsurvey, Pretorius, Garfinkle, GarfGund, jalmuzna}. This represents
perhaps the most elementary model for gravitational collapse in $2+1$ dimensions which allows for the dynamical formation of singularities \cite{Pretorius, jalmuzna}.

The simplest solution of \eqref{Einstein equations} is given by Anti-de Sitter (AdS) spacetime $\big(\mathbb{R}^3,g_{AdS},\phi_{AdS}\big):$
\[g_{AdS}=-\big(1+|\Lambda|r^2\big)dt^2+\big(1+|\Lambda|r^2\big)^{-1}dr^2+r^2d\sigma_1^2,\ \phi_{AdS}=0,\]
where $d\sigma_1^2$ is the standard metric on $S^1.$ We note that AdS spacetime is circularly symmetric, does not contain any singularities, and is geodesically complete. Moreover, we can attach an asymptotic boundary at infinity, called \textit{null infinity} $\mathcal{I}$, which is timelike and complete.

In the present work, we initiate the rigorous mathematical study of the system \eqref{Einstein equations}. We consider the class of circularly symmetric solutions, which take the following form with respect to global double null coordinates $(u,v):$
\begin{equation}\label{def of axisym intro}
    g(u,v)=-\Omega^2(u,v)dudv+r^2(u,v)d\sigma_1^2,\ \phi=\phi(u,v).
\end{equation}

The solutions of \eqref{Einstein equations} studied in this paper will be the unique maximal developments of $C^k$ asymptotically AdS characteristic data sets, obtained by solving \eqref{Einstein equations} as an initial-boundary value problem with reflective boundary conditions at null infinity $\mathcal{I}$. Moreover, we only consider solutions arising in \textit{gravitational collapse}, i.e.~solutions with a regular center $\Gamma=\{r=0\}.$ In particular, AdS spacetime itself is contained in this class of solutions. We refer the reader to \Cref{set up section} for a detailed introduction to the circularly symmetric Einstein-scalar field system in $2+1$ dimensions.

An essential role in the study of \eqref{Einstein equations} is played by the Bañados--Teitelboim--Zanelli (BTZ) spacetimes \cite{BTZ}. These are a $1$-parameter family of circularly symmetric static solutions\footnote{We change the notation convention from the physics literature and translate the mass parameter by 1. For us, AdS spacetime is given by $M=0,$ whereas in \cite{BTZ} AdS spacetime is given by $M=-1$.}\textsuperscript{,}\footnote{More generally, \cite{BTZ} constructs a $2$-parameter family of axially symmetric stationary solutions $g_{M,J}$. The rotating BTZ solutions with $J\neq0$ are not circularly symmetric and they do not play a role for the rest of the paper.} of \eqref{Einstein equations} defined for $M\geq0$ with respect to static coordinates $(t,r):$
\begin{equation}\label{BTZ definition}
    g_M=-\Big(1-M+|\Lambda|r^2\Big)dt^2+\Big(1-M+|\Lambda|r^2\Big)^{-1}dr^2+r^2d\sigma_1^2,\ \phi_{M}=0.
\end{equation}
We briefly explain some properties of the BTZ solutions and we refer the reader to the survey \cite{BTZsurvey}. For $M=0,$ the solution coincides with AdS spacetime. In the case $M\in(0,1)$ the metric \eqref{BTZ definition} has a conical singularity at $r=0,$ while for $M=1$ the metric \eqref{BTZ definition} has a cusp singularity at $r=0$. Finally, for $M>1,$ the spacetime has no center but it can be extended to a two-ended black hole spacetime. In view of our restriction in the previous paragraph to the class of spacetimes arising in gravitational collapse, the only member of the BTZ family that lies in our class of solutions is AdS spacetime. Nonetheless, the entire BTZ family will play an important role in the asymptotic behavior of the solutions in our class at $\mathcal{I}.$

\subsection{The main result}

In order to state the main result in detail, we consider the moduli space $\mathfrak{M}$ of $C^k$ asymptotically AdS characteristic data posed on an outgoing null cone $C_0^+$ which extends from the center $\Gamma$ to null infinity $\mathcal{I}$. The moduli space $\mathfrak{M}$ is endowed with the structure of a metric space (see already \Cref{moduli space definition}). To each element $\mathfrak{D}\in\mathfrak{M}$, we associate the unique (up to isometry) maximal development of the initial data $\mathfrak{D}.$

We briefly define certain subsets of the moduli space $\mathfrak{M}$, and we refer the reader to \Cref{global structure section} for the precise definitions. These subsets are defined based on the global properties of the maximal development of the initial data. First, we denote by $\mathfrak{M}_{\mathrm{black}}\subset\mathfrak{M}$ the set of initial data that evolve to black hole spacetimes, i.e. spacetimes such that $J^-(\mathcal{I})$ has a nontrivial complement. We classify black hole spacetimes depending on the nature of the singularity inside the black hole region. We denote by $\mathfrak{M}_{\mathrm{black}}^{\mathrm{spacelike}}\subset\mathfrak{M}_{\mathrm{black}}$ the set of initial data that evolve to black hole spacetimes with a spacelike-only singularity, as depicted in \Cref{fig:bh2}. Then, we denote by $\mathfrak{M}_{\mathrm{black}}^{\mathrm{loc.naked}}\subset\mathfrak{M}_{\mathrm{black}}$ the set of initial data that evolve to black hole spacetimes with the property that their Penrose diagram is that in \Cref{fig:bh1} (the superscript stands for locally naked, which we define below). We prove in \Cref{set up section} that we have the disjoint union: 
\begin{equation}
    \mathfrak{M}_{\mathrm{black}}=\mathfrak{M}_{\mathrm{black}}^{\mathrm{spacelike}}\sqcup\mathfrak{M}_{\mathrm{black}}^{\mathrm{loc.naked}}.
\end{equation}
Next, we denote by $\mathfrak{M}_{\mathrm{non}}\subset\mathfrak{M}$ the set of initial data that evolve to non-collapsing spacetimes, defined by the property that their Penrose diagram is that in \Cref{fig:non} (for example AdS spacetime). Furthermore, we denote by $\mathfrak{M}_{\mathrm{naked}}\subset\mathfrak{M}$ the set of  initial data that evolve to naked singularity spacetimes, defined\footnote{One can show that any solution with future incomplete $\mathcal{I}$ is the development of initial data in $\mathfrak{M}_{\mathrm{naked}}.$ We note that in our definition, $\mathfrak{M}_{\mathrm{naked}}$ will also include initial data that evolve to spacetimes with complete $\mathcal{I}$, see \Cref{remark about I incomplete}.} by the property that their Penrose diagram is that in \Cref{fig:nakedsing}. Finally, we denote by $\mathfrak{M}_{\mathrm{loc.naked}}\subset\mathfrak{M}$ the set of  initial data that evolve to locally naked singularity spacetimes, defined by the property that their causal boundary contains a null component $\mathcal{B}_0$ arising out of the first singularity at the center $b_{\Gamma}$. Equivalently, locally naked singularities do not contain trapped surfaces in a neighborhood of $b_{\Gamma}$. We prove in \Cref{set up section} that we have the disjoint union: 
\begin{equation}
    \mathfrak{M}_{\mathrm{loc.naked}}=\mathfrak{M}_{\mathrm{naked}}\sqcup\mathfrak{M}_{\mathrm{black}}^{\mathrm{loc.naked}}.
\end{equation}

The classification of the general structure of the causal boundary of spacetimes leads to a decomposition of the moduli space $\mathfrak{M}$. Adapting the work of \cite{kommemi, localwellposed} to the $(2+1)$-dimensional setting, we prove in \Cref{set up section} that the moduli space $\mathfrak{M}$ can be written as the disjoint union:
\begin{equation}\label{moduli space decomposition intro}
    \mathfrak{M}=\mathfrak{M}_{\mathrm{black}}^{\mathrm{spacelike}}\sqcup\mathfrak{M}_{\mathrm{black}}^{\mathrm{loc.naked}}\sqcup\mathfrak{M}_{\mathrm{non}}\sqcup\mathfrak{M}_{\mathrm{naked}}.
\end{equation}
For a discussion of the non-emptiness of the various subsets of $\mathfrak{M},$ see \Cref{global structure section}.

Using these definitions, we state the main result of our paper:
\begin{theorem}[Weak cosmic censorship for circularly symmetric solutions of \eqref{Einstein equations}]\label{main theorem}
    For any $k\geq 2$, the maximal development of generic $C^k$ asymptotically AdS characteristic data does not contain naked singularities. More precisely, we have that:
    \begin{equation}\label{M generic open and dense}
        \mathfrak{M}_{\mathrm{generic}}:=\mathfrak{M}_{\mathrm{black}}\sqcup\mathrm{int}\big(\mathfrak{M}_{\mathrm{non}}\big)\text{ is open and dense in }\mathfrak{M},
    \end{equation}
    \begin{equation}\label{inclusions for main theorem}
        \mathfrak{M}_{\mathrm{naked}}\subset\mathrm{cl}\big(\mathfrak{M}_{\mathrm{black}}\big)\backslash\mathfrak{M}_{\mathrm{black}}\subset\mathfrak{M}\backslash\mathfrak{M}_{\mathrm{generic}},
    \end{equation}
    \begin{equation}\label{loc naked for main theorem}
        \mathfrak{M}_{\mathrm{loc.naked}}\subset\mathrm{cl}\big(\mathfrak{M}_{\mathrm{black}}^{\mathrm{spacelike}}\big)\backslash\mathfrak{M}_{\mathrm{black}}^{\mathrm{spacelike}}.
    \end{equation}
    In particular, naked singularity spacetimes are non-generic and are unstable to black hole formation. Additionally, locally naked singularities are unstable to the formation of spacelike-only singularities.
\end{theorem}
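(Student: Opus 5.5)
The plan is to reduce the theorem to two ingredients: openness of the black hole and non-collapsing classes, and a perturbation statement at naked singularities. By the decomposition \eqref{moduli space decomposition intro}, every $\mathfrak{D}\in\mathfrak{M}$ lies in exactly one of $\mathfrak{M}_{\mathrm{black}}^{\mathrm{spacelike}}$, $\mathfrak{M}_{\mathrm{black}}^{\mathrm{loc.naked}}$, $\mathfrak{M}_{\mathrm{non}}$, $\mathfrak{M}_{\mathrm{naked}}$.

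\textbf{Openness.} First I will show that $\mathfrak{M}_{\mathrm{black}}$ is open. The characterization I use is that the maximal development contains a trapped surface, i.e.\ a point with $\partial_v r<0$, which for this system is equivalent to $J^-(\mathcal{I})$ having nontrivial complement. Given $\mathfrak{D}\in\mathfrak{M}_{\mathrm{black}}$, I choose a trapped point $p$ and a compact characteristic rectangle containing $p$ inside the regular region. Cauchy stability of the initial-boundary value problem in $C^k$, together with the reflective condition at $\mathcal{I}$, then shows that nearby data develop a rectangle on which $\partial_v r$ is close to its unperturbed value, so a trapped surface persists. Consequently $\mathfrak{M}_{\mathrm{generic}}=\mathfrak{M}_{\mathrm{black}}\sqcup\mathrm{int}(\mathfrak{M}_{\mathrm{non}})$ is a union of open sets and is open.

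\textbf{Density.} It then suffices to prove the key claim that every $\mathfrak{D}\in\mathfrak{M}_{\mathrm{loc.naked}}$ is a limit of data in $\mathfrak{M}_{\mathrm{black}}^{\mathrm{spacelike}}$. Granting this, density follows. Take $\mathfrak{D}\notin\mathfrak{M}_{\mathrm{generic}}$; then $\mathfrak{D}$ lies in $\mathfrak{M}_{\mathrm{naked}}$ or in $\mathfrak{M}_{\mathrm{non}}\setminus\mathrm{int}(\mathfrak{M}_{\mathrm{non}})$. In the first case it is approximated by black holes by the key claim. In the second case every neighbourhood of $\mathfrak{D}$ meets $\mathfrak{M}\setminus\mathfrak{M}_{\mathrm{non}}$, and the points found there are black hole data or naked data. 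Naked data are themselves approximated by black holes, so $\mathfrak{D}\in\mathrm{cl}(\mathfrak{M}_{\mathrm{black}})$. The inclusions \eqref{inclusions for main theorem} and \eqref{loc naked for main theorem} follow from the key claim and the disjointness of the decomposition. In particular $\mathfrak{M}_{\mathrm{naked}}\cap\mathfrak{M}_{\mathrm{black}}=\emptyset$, and since $\mathfrak{M}_{\mathrm{generic}}$ is open and contains $\mathfrak{M}_{\mathrm{black}}$, the set $\mathrm{cl}(\mathfrak{M}_{\mathrm{black}})\setminus\mathfrak{M}_{\mathrm{black}}$ avoids it.

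\textbf{The key claim.} This step is the main obstacle. Given a locally naked singularity with first central singular point $b_\Gamma$, I will first invoke the mass gap: since a singularity forms at $b_\Gamma$, the Hawking mass must satisfy $m\geq1$ along approaching cones. Near $b_\Gamma$, where $r\to0$, this forces the $\Lambda$-term to be negligible and $1-m\leq0$. I then expect to show that the blueshift factor along the incoming null direction to $\mathcal{B}_0$ is infinite, i.e.\ $\int\frac{-\partial_u r}{r}(\ldots)\,du=\infty$ in the appropriate renormalized form, in analogy with Christodoulou.

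The strategy is to add to the data on $C_0^+$ a small $C^k$ perturbation of $\partial_v\phi$, supported near the cone through $b_\Gamma$, chosen from a two-parameter family. Integrating the Raychaudhuri and wave equations along the perturbed outgoing cones, the infinite blueshift amplifies the perturbation of $\partial_v r/\Omega^2$ until it becomes negative before $b_\Gamma$ is reached, which produces a trapped surface near $b_\Gamma$. A two-parameter argument shows that at least one parameter choice, arbitrarily small in $C^k$, yields trapping.

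The $2+1$ scaling (with $r\partial_v\phi$ rather than $r\partial_v\phi$ weighted as in $3+1$) should make $C^k$-small perturbations sufficient; this is the point where care is needed. Finally I will check that the resulting trapped region eliminates $\mathcal{B}_0$ in the perturbed spacetime. A locally naked structure requires an untrapped neighbourhood of its first singularity, so the perturbed data lie in $\mathfrak{M}_{\mathrm{black}}^{\mathrm{spacelike}}$, possibly after a further perturbation to remove remaining null components.
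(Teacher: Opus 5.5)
Your reduction and the bookkeeping match the paper. $\mathfrak{M}_{\mathrm{black}}$ is open because $\mathcal{T}\neq\emptyset$ persists under Cauchy stability, and density together with \eqref{inclusions for main theorem} follows from the decomposition \eqref{decomposition of M} once \eqref{loc naked for main theorem} is known. One slip: $\mathrm{cl}(\mathfrak{M}_{\mathrm{black}})\setminus\mathfrak{M}_{\mathrm{black}}$ misses $\mathfrak{M}_{\mathrm{generic}}$ because $\mathrm{int}(\mathfrak{M}_{\mathrm{non}})$ is an open set disjoint from $\mathfrak{M}_{\mathrm{black}}$, not because $\mathfrak{M}_{\mathrm{generic}}$ is open.

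The gap is in the key claim, where the mechanism you describe cannot be carried out as stated. Trapping is not visible perturbatively. In the region where the perturbed solution can be compared with the background, $\partial_vr_\lambda=\tfrac12e^{-\mathbf{b}_\lambda}\sim e^{-\mathbf{b}(u)}>0$ and $\gamma(u,0)\le\gamma_\lambda\le2\gamma(u,0)$. A marginally trapped sphere, by contrast, means $\gamma_\lambda=\Omega_\lambda^2/\partial_vr_\lambda\to\infty$. Also, what the blueshift amplifies is $\theta_\lambda-\theta$, not $\partial_vr/\Omega^2$.

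The missing idea is the quantitative trapped surface criterion, Theorem~\ref{trapped surface formation theorem}: if $\eta>\delta$ on a thin annulus adjacent to $C_0^-$, a marginally trapped sphere forms to the future on the outer ingoing cone. It is used twice.
\begin{itemize}
\item In contrapositive, on the unperturbed locally naked spacetime, it gives $\eta\le\delta$ near $b_\Gamma$. Hence it gives the a priori bound \eqref{background energy bound 1} on the background $\theta$. That bound controls the background in every perturbative estimate, and it is what lets the amplified term $e^{\mathbf{b}(u)}\lambda v^n$ dominate $\int_0^v\theta_\lambda^2$.
\item Directly, in the perturbed spacetime at suitably chosen points $(u,v_\sharp(u))\to b_\Gamma$, it turns the lower bound $\int_0^{v_\sharp}\theta_\lambda^2\gtrsim\lambda^2v_\sharp^{2n+1}e^{2\mathbf{b}(u)}$ into $\eta_\lambda>\delta_\lambda$. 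This yields actual trapped surfaces, which lie outside the perturbative regime.
\end{itemize}
The estimates also use the explicit rate $\mathbf{b}(u)\ge2\log(u_0/u)$, which comes from $\gamma(u,0)\ge8|\Lambda|^{-1}|u|^{-2}$ on $C_0^-$, and not merely unboundedness of the blueshift.

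Two further ingredients are missing.

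First, the perturbation must vanish on $\{v\le0\}$; the paper takes $\theta_\lambda=\theta+\lambda v^n$ for $v\ge0$, with $n>k+1$. This keeps $b_\Gamma$ and its past unchanged. You must then produce trapped surfaces accumulating at $b_\Gamma$, not a single one nearby. A single trapped surface is compatible with the perturbed data lying in $\mathfrak{M}_{\mathrm{black}}^{\mathrm{loc.naked}}$, and ``a further perturbation to remove remaining null components'' just restarts the problem. Defining the curve $\mathcal{C}_\lambda$ by $|\lambda|v_\sharp^n=e^{-\mathbf{b}(u)}|u|^{\frac12-\frac1{10n}}\sqrt{\gamma(u,0)}$ gives such a sequence for every $\lambda\neq0$, so no two-parameter argument is needed.

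Second, a local change of $\partial_v\phi$ on $C_0^+$ does not by itself produce an element of $\mathfrak{M}$. Through the null constraints it changes $\gamma,\partial_Ur,\mathcal{T}\phi,\mathcal{T}^2\phi$ at all larger $r$. The weighted Sobolev decay at the timelike boundary $\mathcal{I}$ requires the three non-local vanishing conditions \eqref{vanishing conditions for D}, which then generically fail. The paper restores them by re-prescribing $\partial_r\mathcal{T}(\sqrt r\gamma\mathcal{T}\phi)$ at large $r$ via the implicit function theorem (Proposition~\ref{local perturbation extension proposition}). It also reduces to the case $C_0^-\cap C_0^+\neq\emptyset$ by perturbing on a later outgoing cone and using Cauchy stability towards the past.
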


\begin{remark}
    For reasons to be explained later, see already \Cref{Christodoulou section}, we choose a more elementary notion of genericity than \cite{ChrWCC}. Thus, we prove that $\mathfrak{M}_{\mathrm{naked}}$ is contained in the complement of an open and dense subset of $\mathfrak{M}$, rather than proving that $\mathfrak{M}_{\mathrm{naked}}$ has positive codimension in $\mathfrak{M}$.
\end{remark}

We outline the structure of the remainder of the introduction. In \Cref{proof idea section} we discuss the ideas of the proof in some detail. In \Cref{previous results section} we discuss some relevant previous results. Finally, in \Cref{outline section} we outline the structure of the rest of the paper.

\begin{figure}[H]
\centering

% -------- top row --------
\begin{subfigure}[t]{0.48\textwidth}
\centering
\begin{tikzpicture}[scale=0.8]
    \path[use as bounding box] (-0.5,-2.4) rectangle (3.0,5.4);

    \draw (0,-2) .. controls (0.2,-0.5) and (0.2,1.5) .. (0,3);
    \draw[thick, dashdotted] (0,3) .. controls (0.3,2.8) and (1.2,3) .. (2.1,3.5);
    \draw[dashed] (2.1,0.1) .. controls (2.3,1.2) and (2.3,3.3) .. (2.1,3.5); 
    \draw (0,-2) -- (2.1,0.1);
    \draw (0.125,1.525) -- (2.1,3.5);

    \filldraw[color=black, fill=white] (0,3) circle (2pt);
    \filldraw[color=black, fill=white] (2.1,3.5) circle (2pt);

    \draw (2.2,3.6) node[anchor=west] {$i^+$};
    \draw (2.3,2)   node[anchor=west] {$\mathcal{I}$};
    \draw (0.1,0.5) node[anchor=east] {$\Gamma$};
    \draw (-0.1,3)  node[anchor=east] {$b_{\Gamma}$};
    \draw (1.2,-1)  node[anchor=north] {$C_0^+$};
    \draw (1.1,3.2) node[anchor=south] {$\mathcal{B}$};
    \draw (1.3,2)   node[anchor=south] {$\mathcal{H}$};
\end{tikzpicture}
\caption{Black hole spacetime with a \\ spacelike-only singularity ($\mathfrak{M}_{\mathrm{black}}^{\mathrm{spacelike}}$)}
\label{fig:bh2}
\end{subfigure}
\hfill
\begin{subfigure}[t]{0.48\textwidth}
\centering
\begin{tikzpicture}[scale=0.8]
    \path[use as bounding box] (-0.5,-2.4) rectangle (3.0,5.4);

    \draw (0,-2) .. controls (0.2,-0.5) and (0.2,1.5) .. (0,3);
    \draw[dashed] (0,3) -- (0.7,3.7);
    \draw[thick, dashdotted] (0.7,3.7) .. controls (1.3,3.4) and (1.7,3.3) .. (2.1,3.5);
    \draw[dashed] (2.1,0.1) .. controls (2.3,1.2) and (2.3,3.3) .. (2.1,3.5); 
    \draw (0,-2) -- (2.1,0.1);
    \draw (0.125,1.525) -- (2.1,3.5);

    \filldraw[color=black, fill=white] (0,3) circle (2pt);
    \filldraw[color=black, fill=white] (2.1,3.5) circle (2pt);

    \draw (2.2,3.6) node[anchor=west] {$i^+$};
    \draw (2.3,2)   node[anchor=west] {$\mathcal{I}$};
    \draw (0.1,0.5) node[anchor=east] {$\Gamma$};
    \draw (-0.1,3)  node[anchor=east] {$b_{\Gamma}$};
    \draw (1.2,-1)  node[anchor=north] {$C_0^+$};
    \draw (0.2,3.4) node[anchor=south] {$\mathcal{B}_0$};
    \draw (1.5,3.5) node[anchor=south] {$\mathcal{B}$};
    \draw (1.3,2)   node[anchor=south] {$\mathcal{H}$};
\end{tikzpicture}
\caption{Black hole spacetime with a \\ locally naked singularity ($\mathfrak{M}_{\mathrm{black}}^{\mathrm{loc.naked}}$)}
\label{fig:bh1}
\end{subfigure}

% -------- bottom row --------
\begin{subfigure}[t]{0.48\textwidth}
\centering
\begin{tikzpicture}[scale=0.72]
    \path[use as bounding box] (-0.5,-2.4) rectangle (3.0,5.4);

    \draw (0,-2) .. controls (0.2,-0.5) and (0.2,1.5) .. (0.8,4);
    \draw[dashed] (2,0) .. controls (1.8,1.5) and (1.4,3.3) .. (0.8,4); 
    \draw (0,-2) -- (2,0);

    \filldraw[color=black, fill=white] (0.8,4) circle (2pt);

    \draw (1.6,2.5) node[anchor=west] {$\mathcal{I}$};
    \draw (0.1,1)   node[anchor=east] {$\Gamma$};
    \draw (1,4.1)   node[anchor=west] {$i^+$};
    \draw (1.2,-1)  node[anchor=north] {$C_0^+$};
\end{tikzpicture}
\caption{Non-collapsing spacetime ($\mathfrak{M}_{\mathrm{non}}$)}
\label{fig:non}
\end{subfigure}
\hfill
\begin{subfigure}[t]{0.48\textwidth}
\centering
\begin{tikzpicture}[scale=0.68]
    \path[use as bounding box] (-0.5,-2.4) rectangle (3.0,5.4);

    \draw (0,-2) .. controls (0.2,-0.5) and (0.2,1.5) .. (0,3);
    \draw[dashed] (0,3) -- (2,5);
    \draw[dashed] (2,0) .. controls (2.2,1.5) and (2.2,3.5) .. (2,5); 
    \draw (0,-2) -- (2,0);

    \filldraw[color=black, fill=white] (0,3) circle (2pt);

    \draw (2.3,2.5) node[anchor=west] {$\mathcal{I}$};
    \draw (0.1,0.5) node[anchor=east] {$\Gamma$};
    \draw (-0.1,3)  node[anchor=east] {$b_{\Gamma}$};
    \draw (1.2,-1)  node[anchor=north] {$C_0^+$};
    \draw (2.1,5)   node[anchor=west] {$i^+$};
    \draw (0.8,4)   node[anchor=south] {$\mathcal{B}_0$};
\end{tikzpicture}
\caption{Naked singularity spacetime ($\mathfrak{M}_{\mathrm{naked}}$)}
\label{fig:nakedsing}
\end{subfigure}

\caption{All the possible Penrose diagrams of %circularly symmetric solutions of \eqref{Einstein equations} arising in gravitational collapse. 
the maximal development of $\mathfrak{D}\in\mathfrak{M}.$ \\ \ref{fig:bh2} and \ref{fig:bh1} are black hole spacetimes, while \ref{fig:bh1} and \ref{fig:nakedsing} are locally naked singularity spacetimes.}
\label{fig:all_spacetimes}
\end{figure}

\subsection{Idea of the proof}\label{proof idea section} 

We briefly outline the main ideas of the proof, then we flesh out our discussion with more details in Sections~\ref{mass gap intro section}-\ref{proof of main thm section intro}. The overall strategy of the proof goes back to the work of Christodoulou \cite{ChrWCC}. In the present section we explain our argument, then we describe in detail in \Cref{Christodoulou section} the similarities and differences to \cite{ChrWCC}. We refer the reader to \Cref{set up section} for the precise notation and definitions used in this section.

The main step in the proof of \Cref{main theorem} is establishing \eqref{loc naked for main theorem}. To show that locally naked singularities are unstable to the formation of spacelike-only singularities, it suffices to prove that locally naked singularities can be perturbed to form trapped surfaces arbitrarily close to $b_{\Gamma}$.

The proof relies on two fundamental results about circularly symmetric solutions of \eqref{Einstein equations}. Firstly, in \Cref{mass gap intro section} we state \Cref{mass gap theorem}, which establishes the presence of a mass gap, showing that no singularities form in the domain of dependence of a disk bounded by a circle with Hawking mass strictly less than 1. We also state some notable consequences of \Cref{mass gap theorem}. Secondly, in \Cref{trapped surface section intro} we state \Cref{trapped surface formation theorem}, which provides a quantitative criterion for the formation of trapped surfaces. 

We explain in \Cref{locally naked section intro} how \Cref{mass gap theorem} and \Cref{trapped surface formation theorem} are used to prove a quantitative description of locally naked singularity spacetimes. In particular, we note that any first singularity at the center must have unbounded blueshift. In \Cref{blueshift instability section intro}, we explain why locally naked singularities can be perturbed to form trapped surfaces arbitrarily close to $b_{\Gamma}$. The idea is to consider perturbations that leave the causal past of $b_{\Gamma}$ unchanged and activate the blueshift instability. The blueshift effect will then amplify the initial data perturbation and allow us to apply \Cref{trapped surface formation theorem} to conclude the existence of trapped surfaces arbitrarily close to $b_{\Gamma}$. Finally, we explain in \Cref{proof of main thm section intro} how to complete the proof of \Cref{main theorem}.

\subsubsection{The mass gap}\label{mass gap intro section}

The first essential ingredient of our proof is the presence of a mass gap for circularly symmetric solutions of \eqref{Einstein equations}. We define\footnote{Similarly to the BTZ family \eqref{BTZ definition}, we change the notation convention from the physics literature and translate the Hawking mass by 1.} the Hawking mass by:
\begin{equation}\label{Hawking mass}
    m=-g(\nabla r,\nabla r)-\Lambda r^2+1.
\end{equation}
We note that the Hawking mass is non-negative for solutions arising in gravitational collapse. Moreover, a solution is exactly AdS spacetime in the domain of dependence of a disk bounded by a circle if and only if the Hawking mass of the circle is zero.

For asymptotically AdS characteristic data $\mathfrak{D}\in\mathfrak{M},$ we define the Bondi mass $M=\lim_{r\rightarrow\infty}m|_{C_0^+}(r).$ We note that the reflective boundary conditions at $\mathcal{I}$ imply that $m|_{\mathcal{I}}$ is constant, so $M=m|_{\mathcal{I}}.$

The numerical relativity work \cite{ADS3} conjectures that initial data must have Bondi mass at least 1 to form singularities in evolution, and provides some evidence in the case of very small Bondi mass. In \Cref{mass gap section}, we prove the following local version of this conjecture in terms of the Hawking mass: %We refer to this property of circularly symmetric solutions of \eqref{Einstein equations} as the \textit{mass gap}.
\begin{restatable}{theorem}{extension}\label{mass gap theorem}
    For any $k\geq 2$, let $\big(\mathcal{M},g,\phi\big)$ be the maximal globally hyperbolic development of $C^k$ characteristic data on the outgoing null cone $C_0^+=\{u=u_0,\ v_{\Gamma}\leq v\leq\Tilde{v}\}$ with $r(u_0,v_{\Gamma})=0,\ r(u_0,\Tilde{v})<\infty.$ If $m(u_0,v_0)=1-\delta<1$ for some $v_{\Gamma}< v_0<\Tilde{v},$ then $\big(\mathcal{M},g,\phi\big)$ is regular in the domain of dependence of $C_0^+\cap\{v\leq v_0\}.$
\end{restatable}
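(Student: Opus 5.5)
We propose to prove \Cref{mass gap theorem} by combining monotonicity of the Hawking mass with an extension criterion at the first singularity. Write the equations in double null gauge: $\partial_u\partial_v r$, $\partial_u\partial_v\phi$, the Raychaudhuri equations, and the mass equations
\[
\partial_u m=\frac{2 r\,(\partial_u\phi)^2}{\Omega^2}(-\partial_v r)\cdot(\text{const}),\qquad \partial_v m=\frac{2 r\,(\partial_v\phi)^2}{\Omega^2}(-\partial_u r)\cdot(\text{const}),
\]
using $m=1+|\Lambda| r^2+4\Omega^{-2}\partial_u r\partial_v r$. In the domain of dependence $\mathcal{D}$ of $C_0^+\cap\{v\le v_0\}$, with a regular center, there are no trapped surfaces: $\partial_v r>0$ on $C_0^+$ and Raychaudhuri propagates this. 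Also $\partial_u r<0$. Hence $m$ is nondecreasing in $v$ and nonincreasing in $u$, so $m\le m(u_0,v_0)=1-\delta$ throughout $\mathcal{D}$. Consequently
\[
1-m+|\Lambda|r^2=-4\Omega^{-2}\partial_u r\,\partial_v r\ \ge\ \delta,
\]
which is the key coercive quantity. It bounds the ratio $\Omega^2/(\partial_u r\,\partial_v r)$ from above uniformly, so the spacetime stays quantitatively far from trapped.

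Next we invoke the extension principle, adapted from \cite{kommemi, localwellposed} to $2+1$ dimensions in \Cref{set up section}. If $\mathcal{D}$ is not entirely regular, there is a first singular point, which must lie on the center, $b_\Gamma=(u_*,v_*)$. It then suffices to show that $\phi$ and the metric quantities remain bounded in $C^1$ (and hence $C^k$ by propagation) in $J^-(b_\Gamma)\cap\mathcal{D}$. We fix a gauge $\partial_v r=1$ along $\Gamma$ (or $\partial_v r$ normalized on $C_0^+$) and set $\nu=\partial_u r$, $\lambda=\partial_v r$. The bound $m\le1-\delta$ combined with $m\ge0$ controls $\kappa=\lambda/(1-m+|\Lambda|r^2)\cdot(\ldots)$ from both sides, giving uniform two-sided bounds on $\lambda/(-\nu)$ relative to $\Omega^2$. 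The crucial point is dimensional. In $2+1$ dimensions $m$ is dimensionless, and the energy flux $\partial_v m\sim r(\partial_v\phi)^2/\lambda\cdot(1-m+\ldots)$ is scale invariant. Integrating $\partial_v m$ and $\partial_u m$ along null segments in a small diamond near the center, with total mass variation bounded by $1$, then yields
\[
\int r(\partial_v\phi)^2/\lambda\,dv\lesssim \delta^{-1}.
\]

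Finally we convert this scale-invariant energy bound into pointwise bounds on $r\partial_v\phi/\lambda$ and $\partial_v\phi$ near the center, using the wave equation written as $\partial_u(r\partial_v\phi)=-\lambda\partial_u\phi$ and its symmetric counterpart. We integrate from the center, where regularity gives $r\partial_v\phi\to0$, and use Grönwall with the bounded coefficients from the previous step. This is the main obstacle: mass control gives only an $L^2$-type bound, and in a critical setting it does not immediately give $L^\infty$. We would first attempt a Christodoulou-style small-energy argument. In a sufficiently small backward cone of $b_\Gamma$, the incoming mass flux $m(u,v)-m(u,v_{\Gamma}(u))$ tends to $0$ by monotonicity and continuity; note $m=0$ at the center. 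Hence $m$ is small near $b_\Gamma$, and small mass yields pointwise bounds on $\mu$-type quantities and a bootstrap for $\sup|\partial_v\phi|$, showing that the solution extends beyond $b_\Gamma$. The gap $\delta>0$ is what prevents concentration of mass of order $1$ at the tip, which would correspond to a cusp or conical, BTZ-like, singularity. This contradiction shows that no first singularity exists, so $\mathcal{D}$ is regular.
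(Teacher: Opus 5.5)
Your preliminary steps match the paper: the bound $m\le 1-\delta$ on the domain of dependence, the reduction to a first singularity $b_\Gamma=(u_*,v_*)$ on the center, and the scale-invariant energy bound $\int r(\partial_v\phi)^2/\partial_v r\,dv\lesssim\delta^{-1}$. One minor correction: Raychaudhuri does not propagate $\partial_v r>0$, since it makes $\Omega^{-2}\partial_v r$ nonincreasing in $v$, and $\partial_u\partial_v r<0$ off the center. Absence of trapping follows instead from the mass bound: a first marginally trapped point $p$ would have $m(p)=1+|\Lambda|r^2(p)\ge 1$, contradicting $m(p)\le 1-\delta$.

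The genuine gap is the step meant to get past criticality, namely the claim that $m$ is small near $b_\Gamma$ ``by monotonicity and continuity''. Monotonicity only shows that $m(u,v_*)$ decreases to some limit $m_*\in[0,1-\delta]$ as $u\to u_*^-$. So the flux $m(u,v_*)-m_*$ through $C_*^-$ tends to zero, but $m$ itself need not. Continuity of $m$ together with $m|_\Gamma=0$ holds at regular central points, not at $b_\Gamma\notin\mathcal{Q}$, which is exactly where it is in question.

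In $2+1$ dimensions $m$ is dimensionless; it plays the role of $2m/r$ in $3+1$, which is not automatically small at a first central singularity. $m_*>0$ means a fixed amount of scale-invariant energy concentrates in $\{r\le r(u,v_*)\}$ with $r(u,v_*)\to 0$. The bound $m\le 1-\delta$ alone does not forbid this. Conical BTZ singularities have $M\in(0,1)$, so, contrary to your closing remark, the gap excludes concentrated mass $\ge 1$ (cusp-type) but not conical concentration with $M\in(0,1-\delta]$.

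Proving $m_*=0$ is a non-concentration statement essentially equivalent to the theorem, so your argument is circular at this point. Even granting small mass, the ``bootstrap for $\sup|\partial_v\phi|$'' would itself be a nontrivial extension criterion, not a Grönwall step.

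What is missing is a correct source of smallness; in the paper it is the size of the region, not the mass. In Bondi coordinates the wave equation sees the geometry only through $\bar\gamma$. The mass gap gives $2\le\gamma\le 2/\delta$, $1\le\bar\gamma\le 1+|\Lambda|r^2/\delta$, and $\partial_r\bar\gamma=|\Lambda|r\gamma$. So near the center the equation is the flat $2+1$ radial wave equation perturbed by terms carrying a factor $|\Lambda|r^2$ (times $(r\partial_r\phi)^2$ in $\partial_r^2\bar\gamma$).

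The paper then proceeds as follows:
\begin{itemize}
\item It propagates the critical higher-order energy $\int(\partial_r\psi)^2\,dr$, with $\psi=r^{3/2}\partial_r\phi$. The flat part is coercive after a Hardy inequality.
\item Errors are absorbed using $|\Lambda|r_*^2/\delta\ll 1$. The quartic term is handled by a bootstrap $|r\partial_r\phi|\le A$ combined with the mass bound.
\item This gives $|r\partial_r\phi|\le A$, and an $r^{-\rho}$-weighted version gives $|r\partial_r\phi|\lesssim r^{\rho/2}$. In particular this yields $m\lesssim r^{\rho}\to 0$ at the tip, i.e.\ the non-concentration you assumed.
\item The slightly subcritical bound controls the term cubic in second derivatives (through $\partial_u\bar\gamma$) in the $\partial_u$-commuted energy identity.
\item This gives uniform $H^2$ bounds, contradicting the first singularity via local well-posedness.
\end{itemize}
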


We briefly explain the idea of the proof of \Cref{mass gap theorem}. We note that in \Cref{mass gap section} it is convenient to work in Bondi coordinates $(u,r).$ First, we show that the bound on $m$ implies quantitative control of the geometry of the spacetime. We then prove a non-concentration estimate for the scalar field, which in particular implies that the scalar field cannot blow up at a faster than self-similar rate, so $r\partial_r\phi$ is bounded. We further improve this estimate by a small power of $r,$ showing in particular that  $r^{1-\rho}\partial_r\phi$ is bounded, where $\rho>0$ is sufficiently small. The preliminary bounds for the geometry of the spacetime and the scalar field allow us to propagate from initial data boundedness for the local well-posedness norm, concluding that no singularities form in evolution.

The main use of \Cref{mass gap theorem} for the purposes of this paper is the following description of first singularities at the center:
\begin{corollary}
    Let $b_{\Gamma}=(u_*,v_*)$ be a first singularity at the center $\Gamma.$ Then $m(u,v_*)\rightarrow1^+$ as $u\rightarrow u_*^-.$
\end{corollary}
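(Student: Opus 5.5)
The plan is to combine monotonicity of the Hawking mass along the ingoing null ray $\{v=v_*\}$ with \Cref{mass gap theorem}, applied to characteristic data posed on outgoing cones $\{u=u_1\}$ for $u_1<u_*$ close to $u_*$.

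First, record the monotonicity. Let $\kappa=\partial_v r/\Omega^2$ be the usual renormalized quantity. In double null gauge the Einstein equations give
\[\partial_u m = -\frac{4r}{\Omega^2}\,\partial_v r\,(\partial_u\phi)^2\]
up to the normalization in \eqref{Hawking mass}, together with the analogous formula for $\partial_v m$ with the roles of $u,v$ exchanged. In the regular region $J^-(b_\Gamma)\setminus\{b_\Gamma\}$ near the center we have $\partial_v r>0$ and $\partial_u r<0$, because a first singularity at the center is by definition not preceded by trapped surfaces nearby (this is the set up in \Cref{set up section}). Hence $m$ is nonincreasing in $u$ along $\{v=v_*\}$ for $u<u_*$, so the limit $L:=\lim_{u\to u_*^-}m(u,v_*)$ exists in $[0,\infty]$. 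Monotonicity gives $m(u,v_*)\ge L$ for all such $u$. The claim then reduces to two facts: $L\ge 1$, and the approach to $L$ is from above.

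Second, to show $L\ge 1$, argue by contradiction. Suppose $L<1$. Choose $u_1<u_*$ close to $u_*$ with $m(u_1,v_*)<1$; by monotonicity in $u$ this holds for every $u\in[u_1,u_*)$. Restrict the spacetime to the future of the outgoing cone $\{u=u_1\}$, which starts at the center point $(u_1,v_\Gamma(u_1))$ on $\Gamma$ and extends past $v_*$ (as $b_\Gamma$ is a first singularity, the cone is regular up to $v_*+\epsilon$). This is $C^k$ characteristic data of the kind \Cref{mass gap theorem} allows, with $r(u_0,\tilde v)<\infty$. Take $v_0=v_*$ and $\delta=1-m(u_1,v_*)>0$. \Cref{mass gap theorem} then says the development is regular in the domain of dependence of $\{u=u_1,\ v\le v_*\}$. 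That domain contains the point $(u_*,v_*)=b_\Gamma$ and a neighborhood of it in the quotient, so the solution extends regularly beyond $b_\Gamma$ along $\Gamma$. This contradicts $b_\Gamma$ being a singularity, i.e.\ maximality.

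Third, to get the strict approach $1^+$, note that the same argument excludes $m(u,v_*)<1$ for every $u<u_*$. If $m(u,v_*)=1$ held for some $u$, monotonicity would give $m\equiv1$ on $[u,u_*)$. To rule out an attained value below or equal to $1$, I would either use that the theorem only needs $m(u_0,v_0)<1$ at some $v_0$ slightly larger than $v_*$ together with continuity, or observe that the statement $\to1^+$ only needs $m\ge1$ and $L=1$. What remains is $L\le 1$. For this I would use the regular center: $m(u,v)\to0$ as $v\to v_\Gamma(u)$, with $m$ controlled by $\sup|\partial\phi|$ in a cone of small $r$. Alternatively, I would appeal to the trapped surface criterion together with the absence of trapped surfaces near $b_\Gamma$, which forces $1-m=-g(\nabla r,\nabla r)-\Lambda r^2$ to satisfy $-g(\nabla r,\nabla r)\ge0$. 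Since $r(u,v_*)\to0$, this gives $m\le 1+|\Lambda|r^2\to1$, so $L\le1$.

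I expect the main obstacle to be the application of the mass gap theorem in the second step: verifying that the domain of dependence of $\{u=u_1,\ v\le v_*\}$ genuinely contains $b_\Gamma$ and that regularity there contradicts $b_\Gamma$ being a first singularity in the sense of the maximal development. This needs care with the definition of $b_\Gamma$ and with the extension criterion from local well-posedness. The upper bound $L\le1$ from $\partial_u r<0\le\partial_v r$ is comparatively routine.
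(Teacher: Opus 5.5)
Your argument is correct and is essentially the paper's, which is the two-sided bound $1\le m\le 1+|\Lambda|r^2$ used at the start of \Cref{naked singularity section}. Applying \Cref{mass gap theorem} to the data on $\{u=u_1\}$ with $v_0=v_*$ rules out $m(u_1,v_*)<1$, while non-trapping of $(u,v_*)$ gives $m(u,v_*)\le 1+|\Lambda|r^2(u,v_*)$ by \eqref{trapped using mass}, and $r(u,v_*)\to0$ squeezes $m\to1^+$; so the monotonicity in $u$ and the limit $L$ are not needed.

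Two things in your last step need fixing. First, the signs: in $\mathcal{R}$ one has $g(\nabla r,\nabla r)=-4\Omega^{-2}\partial_ur\partial_vr>0$ and $m-1=-g(\nabla r,\nabla r)+|\Lambda|r^2$, whereas you wrote $1-m$ and the opposite inequality. Second, discard the alternative routes. Because $\partial_vm\ge0$, moving $v_0$ past $v_*$ can never produce $m<1$. Likewise no bound on $\sup|\partial\phi|$ near $b_\Gamma$ is available, since such a bound would give $L=0$, contradicting $L\ge1$.
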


As a corollary of \Cref{mass gap theorem}, we also obtain the mass gap result in terms of the Bondi mass, thus proving the conjecture of \cite{ADS3}:
\begin{corollary}\label{bizcorollary}
    Solutions with Bondi mass less than 1 do not form singularities. More precisely, if $\mathfrak{D}\in\mathfrak{M}$ has $M<1,$ then $\mathfrak{D}\in\mathrm{int}\big(\mathfrak{M}_{\mathrm{non}}\big).$ In particular, small perturbations of AdS spacetime are non-collapsing.
\end{corollary}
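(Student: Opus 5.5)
The corollary follows from \Cref{mass gap theorem} applied on the initial cone $C_0^+$, combined with monotonicity of the Hawking mass along $\mathcal{I}$ and a continuity argument in moduli space. Take $\mathfrak{D}\in\mathfrak{M}$ with Bondi mass $M<1$. The reflective boundary conditions give $m|_{\mathcal{I}}\equiv M$. Along $C_0^+$, the standard monotonicity $\partial_v m\geq 0$ holds in the untrapped region; this should follow from the Raychaudhuri and mass equations in the set-up section, provided $\partial_u r<0<\partial_v r$ on $C_0^+$, which holds for data arising in gravitational collapse. Hence $m(u_0,v)\le M<1$ for every $v$ along $C_0^+$.

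Next I would show that the maximal development has the Penrose diagram of \Cref{fig:non}. Suppose some first singularity $b_\Gamma=(u_*,v_*)$ exists, or more generally that the development fails to be non-collapsing.

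\begin{itemize}
\item By the decomposition \eqref{moduli space decomposition intro}, the alternative is either a black hole or a naked singularity. In each case there is a first singular point, either at the center or on the boundary of the regular region.
\item To rule these out, I would apply \Cref{mass gap theorem} not on $C_0^+$ itself but on later outgoing cones $\{u=u_1\}$. On any such cone in the regular region, $m$ is bounded by its value at $\mathcal{I}$: $\partial_v m\ge0$ holds as long as the cone is untrapped, and no trapped surfaces can exist. Indeed, a trapped surface would force $m\ge 1+|\Lambda|r^2>1$, since $-g(\nabla r,\nabla r)\ge0$ when $\nabla r$ is timelike or null. Yet $\partial_u m\le0$ in regular untrapped regions, so $m$ decreases from the value on $C_0^+$.
\item So $m\le M<1$ throughout. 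For a cone truncated at finite $\tilde v$, \Cref{mass gap theorem} gives regularity in the domain of dependence up to any $v_0$. Exhausting in $v_0$ and $u_1$ shows that the maximal development is regular everywhere with complete $\mathcal{I}$, i.e. $\mathfrak{D}\in\mathfrak{M}_{\mathrm{non}}$.
\end{itemize}

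This step also uses the extension principle from the set-up section, which says that the only possible first singularities emanate from the center.

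The corollary claims membership in the interior, not just in $\mathfrak{M}_{\mathrm{non}}$. For openness I would use continuity of the Bondi mass with respect to the metric on $\mathfrak{M}$ given in \Cref{moduli space definition}. If the topology controls $m|_{C_0^+}$ uniformly up to $r=\infty$, then $M$ is continuous, so $\{M<1\}$ is open and is contained in $\mathfrak{M}_{\mathrm{non}}$. Hence it lies in $\mathrm{int}(\mathfrak{M}_{\mathrm{non}})$. AdS has $M=0$, so small perturbations of it are non-collapsing.

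The main obstacle is the global step: excluding singularities that do not originate in a domain of dependence covered by a single application of \Cref{mass gap theorem}. In particular, one must ensure that the development reaches $\mathcal{I}$ for all time and that no singularity forms in a neighborhood of $i^+$. To handle this I would rely on the trapped-surface and anti-trapped-surface classification of the causal boundary (adapted from \cite{kommemi}). That classification shows that any non-\Cref{fig:non} diagram has either a first singularity $b_\Gamma$ at the center or a trapped surface. The first is excluded because the mass at $b_\Gamma$ would tend to $1^+$ by the previous corollary, contradicting $m\le M<1$. The second is excluded by the bound $m>1$ on trapped spheres.
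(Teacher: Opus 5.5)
Your proposal is correct and follows essentially the route the paper intends; the paper states this result as a direct consequence of \Cref{mass gap theorem} without writing out a proof. The steps match: $\partial_v m\geq0$ and $m|_{\mathcal{I}}=M$ give $m\leq M<1$, the first-marginally-trapped-point argument from the start of \Cref{mass gap section} rules out $\mathcal{T}\cup\mathcal{A}$, and the mass gap theorem rules out a first singularity at the center, so $\mathcal{B}=\mathcal{B}_0=\emptyset$. The condition you left open in the openness step does hold: $d_{X([0,\infty))}$ controls $\int_0^\infty(1+r^2)\,r(\partial_r\phi)^2\,dr$, and $M$ is obtained from this by integrating $\partial_r m=(1-m+|\Lambda|r^2)\,r(\partial_r\phi)^2$ from $m(0)=0$, so $M$ is continuous on $\mathfrak{M}$.
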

This result marks a sharp contrast between AdS spacetime in $2+1$ dimensions and AdS spacetime in $3+1$ dimensions, since the latter is unstable to black hole formation \cite{dafhol,BzRo,nulldust,vlasov}. In view of the numerical study of \cite{ADS3}, it is expected that AdS spacetime in $2+1$ dimensions is dynamically unstable and perturbations of it exhibit weak turbulence. Finally, we refer the reader to \cite{dafhol}, which conjectures that the Eguchi-Hanson solitons in $4+1$ dimensions are unstable to naked singularity formation.

\subsubsection{Formation of trapped surfaces criterion}\label{trapped surface section intro}

The second essential ingredient in our argument is a quantitative criterion for trapped surface formation, inspired by the work of Christodoulou \cite{ChrBH}. Intuitively, this result states that if the mass within an annular region bounded by two circles on the outgoing null cone $C_0^+$ is large compared to the size of the region, then a trapped surface will form in evolution. We prove the following result in \Cref{trapped surface formation section}:
\begin{restatable}{theorem}{trapped}\label{trapped surface formation theorem}
    For any $k\geq 2$, let $\big(\mathcal{M},g,\phi\big)$ be the maximal globally hyperbolic development of $C^k$ characteristic data on the outgoing null cone $C_0^+=\{u=u_0,\ v_{\Gamma}\leq v\leq\Tilde{v}\}$ with $r(u_0,v_{\Gamma})=0,\ r(u_0,\Tilde{v})<\infty,$ and $\partial_vr(u_0,v)>0$ for $v\in[v_{\Gamma},\Tilde{v}].$ Assume that there exists an ingoing cone $C_1=\{u_0\leq u<0,\  v=v_1\}$ satisfying $r(u,v_1)\rightarrow0$ as $u\rightarrow0^-$ and $m(u,v_1)\geq1$ for all $u\in[u_0,0)$. We consider any ingoing cone $C_2=\{u\geq u_0,\ v=v_2\}$ such that $v_1<v_2<\Tilde{v},$ and we denote:
    \begin{equation}\label{def of delta0 and eta0}
        \delta_0=\frac{r(u_0,v_2)}{r(u_0,v_1)}-1,\ \eta_0=\frac{m(u_0,v_2)-m(u_0,v_1)}{|\Lambda|r^2(u_0,v_2)}.
    \end{equation}
    If $\delta_0>0$ is small enough and $\eta_0>\delta_0$, then a marginally trapped surface $(u_{\mathcal{A}},v_2)$ forms on $C_2$ with:
    \begin{equation}
        r(u_{\mathcal{A}},v_2)\geq\frac{5\delta_0/4}{1+\delta_0}r(u_0,v_2).
    \end{equation}
\end{restatable}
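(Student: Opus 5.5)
The plan is to argue by contradiction, following Christodoulou's trapped surface argument \cite{ChrBH} adapted to the $2+1$ dimensional equations with $\Lambda<0$. Write the Hawking mass \eqref{Hawking mass} in double null coordinates as
\[
m=1+|\Lambda|r^2+\frac{4\partial_ur\partial_vr}{\Omega^2},
\]
so that, in the region where $\partial_ur<0$, we have $\partial_vr\le0$ precisely when $m\ge 1+|\Lambda|r^2$. Set
\[
r_*:=\frac{5\delta_0/4}{1+\delta_0}\,r(u_0,v_2),
\]
and let $u_*$ be the first $u$ with $r(u,v_2)=r_*$. Suppose no marginally trapped surface forms on $C_2$ with $r\ge r_*$. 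Then, by the monotonicity of $\partial_vr$ in $u$ (Raychaudhuri equation) together with $\partial_vr>0$ on $C_0^+$, the whole characteristic rectangle $[u_0,u_*]\times[v_1,v_2]$ is untrapped. I expect one can arrange $u_*<0$, i.e. $r(u,v_2)$ reaches $r_*$ before $r(u,v_1)$ reaches $0$, since the annulus width stays comparable to $\delta_0 r(u_0,v_2)$. In this rectangle I will use the standard sign information $\partial_ur<0$, $\partial_vr>0$, and the mass equations
\[
\partial_u m \le 0,\qquad \partial_v m\ge 0,
\]
which follow from the Einstein-scalar system: the fluxes are $2r(\partial\phi)^2$ times the non-negative quantity $(1-m+|\Lambda|r^2)/|\partial r|$. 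I also use $m(u,v_1)\ge1$ on $C_1$.

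The first step is to control the geometry of the annulus. I integrate the wave equation for $r$ (in $2+1$ dimensions, $\partial_u\partial_vr$ is proportional to $\Omega^2|\Lambda|r$ with the sign fixed by the vacuum term) between $v_1$ and $v_2$. Since $\partial_ur/(1-m+|\Lambda|r^2)$ is monotone in $v$, the decrease of $r(u,v_2)$ compares to that of $r(u,v_1)$. The target is a lower bound for $r(u,v_2)-r(u,v_1)$ on $[u_0,u_*]$ of the form $(1-O(\delta_0))\,\delta_0\, r(u_0,v_1)$, using smallness of $\delta_0$ to absorb the $\Lambda$ contribution, which is of relative size $|\Lambda|r^2\cdot\delta_0$.

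The second step is to show that the mass excess in the annulus is not lost too fast:
\[
m(u,v_2)-m(u,v_1)\ \ge\ (1-C\delta_0)\,\eta_0\,|\Lambda|r^2(u_0,v_2)\qquad\text{for } u\in[u_0,u_*].
\]
To see this, I write the $u$-derivative of the mass difference as the difference of the ingoing fluxes at $v_2$ and $v_1$. Then I use the transport equation for $\partial_u\phi$ in $v$ and Cauchy--Schwarz against the outgoing flux $\partial_vm$ to bound this difference by the mass difference itself times a factor integrating to $O(\delta_0)$, which gives a Gronwall inequality. At $u=u_*$ this yields $m(u_*,v_2)\ge 1+(1-C\delta_0)\eta_0|\Lambda|r^2(u_0,v_2)$. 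Since $\eta_0>\delta_0$ while
\[
|\Lambda|r_*^2\sim\tfrac{25}{16}\,\delta_0^2\,|\Lambda|r^2(u_0,v_2)\ll\delta_0|\Lambda|r^2(u_0,v_2)
\]
for $\delta_0$ small, we get $m(u_*,v_2)>1+|\Lambda|r^2(u_*,v_2)$. This means $\partial_vr<0$ at $(u_*,v_2)$, contradicting the assumption. Continuity in $u$ then produces the first $u_{\mathcal A}\le u_*$ where $m=1+|\Lambda|r^2$ on $C_2$, which is the marginally trapped surface with $r(u_{\mathcal A},v_2)\ge r_*$.

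The main obstacle is the second step, the flux comparison. Unlike $3+1$ dimensions, the relevant scale here is the $|\Lambda|r^2$ term rather than $2m/r$, so the flux integrals need to be weighted carefully, and the bounds must be uniform as $r(u,v_1)\to0$. I would first try to close the estimate using only the monotone quantity $\partial_ur/(1-m+|\Lambda|r^2)$ together with $m\ge1$ on $C_1$, which gives $1-m+|\Lambda|r^2\le|\Lambda|r^2$ there. If that fails, I would invoke the mass gap description of Theorem~\ref{mass gap theorem}, which says that $m$ near $C_1$ is essentially $\ge1$, to control the ratio of the fluxes.
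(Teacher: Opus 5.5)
Your endgame is the same as the paper's: argue by contradiction up to the time $u_*$ at which $r_2=r_*$, show that $m_2-m_1$ there still exceeds $|\Lambda|r_*^2=O(\delta_0^2)|\Lambda|r_2^2(u_0)$, and use $m_1\ge1$. (Subscripts $1,2$ denote quantities on $C_1,C_2$.) The gap is in your second step, which carries all the weight. The flux difference is not bounded by $m_2-m_1$ times anything. With $\zeta=\sqrt{r_2}\partial_u\phi_2-\sqrt{r_1}\partial_u\phi_1$ and $X=\sqrt{r_1}\partial_u\phi_1$ one has
\[
\partial_u(m_2-m_1)=-\frac{4}{\gamma_2}\Big(\zeta^2+2\zeta X-\Big(\frac{\gamma_2}{\gamma_1}-1\Big)X^2\Big).
\]
Your transport and Cauchy--Schwarz argument does give $\zeta^2\le\frac{(m_2-m_1)\gamma_2|\partial_ur_2|}{16r_2}\big(\frac{r_2}{r_1}-1\big)$. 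However, the cross term $2\zeta X$ involves the ingoing flux through $C_1$, which $m_2-m_1$ does not control. The missing idea is to absorb the cross term into the favorable weight mismatch $(\gamma_2/\gamma_1-1)X^2$, at a cost of $\zeta^2/(\gamma_2/\gamma_1-1)$. One then bounds $\gamma_2/\gamma_1-1\ge\log(\gamma_2/\gamma_1)=\int\theta^2\partial_vr\ge\eta:=\frac{m_2-m_1}{|\Lambda|r_2^2}$, using $1-m+|\Lambda|r^2\le|\Lambda|r_2^2$ on the annulus. This is where $m\ge1$ enters; no mass gap input is needed. The result is the inhomogeneous inequality
\[
\partial_u(m_2-m_1)\ge-\frac{|\partial_ur_2|}{4r_2}\Big(\frac{r_2}{r_1}-1\Big)\big(m_2-m_1+|\Lambda|r_2^2\big),
\]
whose source term does not vanish with $m_2-m_1$. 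So a homogeneous Gronwall argument is not available.

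Moreover, $r_2/r_1-1$ is not $O(\delta_0)$ on $[u_0,u_*]$, so even the homogeneous factor does not integrate to $O(\delta_0)$, even under your own picture of the annulus. The lower bound on $r_2-r_1$ in your first step is unavailable in general, because $\gamma$ is unbounded as $1-m+|\Lambda|r^2\to0$; it is also unnecessary. What $\partial_u\partial_vr\le0$ does give is $r_2-r_1\le\delta_0r_1(u_0)$. Equivalently, $\frac{r_2}{r_1}-1\le\frac{\delta_0}{x(1+\delta_0)-\delta_0}$ with $x=r_2/r_2(u_0)$. This both shows $u_*<0$ and shows the relative width reaches $4$ at $r_2=r_*$.

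Integrating, the homogeneous part costs a harmless factor of about $5^{-1/4}$. The source term, however, costs up to about $\frac{\delta_0}{4}|\Lambda|r_2^2(u_0)$, which is a fixed fraction of the initial excess when $\eta_0\approx\delta_0$. This is exactly why the hypothesis is $\eta_0>\delta_0$. Your claimed near-conservation $(1-C\delta_0)\eta_0$ would already give trapping for $\eta_0\gtrsim\delta_0^2$, which signals that it is too strong.

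To close the argument, integrate the correct inequality. The paper does this for $\eta$ and obtains $\eta_0\le\mathbf{F}(x_1)+\eta(x_1)e^{-\mathbf{G}(x_1)}$, with $\mathbf{F}(x_1)=\frac{\delta_0}{4}+O(\delta_0^2|\log\delta_0|)$ and $e^{-\mathbf{G}(x_1)}=O(\delta_0^2)$. It then uses $\eta(x_1)\le1$ for an untrapped sphere. With this replacement your comparison at $u_*$ goes through, since $\tfrac34\cdot5^{-1/4}\delta_0$ dominates $\tfrac{25}{16}\delta_0^2$ for small $\delta_0$.
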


To prove \Cref{trapped surface formation theorem}, we adapt the proof of \cite{ChrBH} to the $(2+1)$-dimensional setting. The main idea is to use the monotonicity properties of the system \eqref{Einstein equations} in circular symmetry and derive a differential inequality for the quantity $\eta(u,v_2)=|\Lambda|^{-1}r^{-2}(u,v_2)\cdot\big(m(u,v_2)-m(u,v_1)\big).$

\begin{remark}
    In analogy with \cite{ChrBH}, one might naively expect on dimensional grounds that $\eta_0$ should be defined as $m(u_0,v_2)-m(u_0,v_1).$ However, since the cosmological constant $\Lambda$ scales like inverse length squared, the quantity $|\Lambda|r^2$ is dimensionless. Thus, the definition of $\eta_0$ in \eqref{def of delta0 and eta0} is dimensionally consistent.
\end{remark}

\subsubsection{Locally naked singularity spacetimes}\label{locally naked section intro}

As a foundation for the instability argument, in \Cref{naked singularity section} we establish a detailed description of locally naked singularity spacetimes in a suitable neighborhood of $b_{\Gamma}.$ 

We introduce the set-up of \Cref{naked singularity section}, which will be used in the present section and in \Cref{blueshift instability section intro}. We consider $\big(\mathcal{M},g,\phi\big)$ to be the maximal globally hyperbolic development of a $C^k$ characteristic data set $\mathfrak{D}\in\mathfrak{M}_{\mathrm{loc.naked}},$ posed on the outgoing null cone $C_0^+.$ We assume that the ingoing null cone $C_0^-$ passing through $b_{\Gamma}$ intersects $C_0^+$, and we choose double null coordinates $(u,v)$ such that: \[b_{\Gamma}=(0,0),\ C_0^-=\big\{v=0,\ 2r=-u\big\},\ C_0^+=\big\{u=u_0,\ 2r=v-u_0\big\}.\]
We define the renormalized lapse and renormalized outgoing derivative of the scalar field by:
\[\gamma=\frac{\Omega^2}{\partial_vr},\ \theta=\sqrt{r}\frac{\partial_v\phi}{\partial_vr}.\]
The quantity $\theta$ satisfies the following propagation equation:
\begin{equation}\label{du theta intro}
    \partial_u\theta=-\frac{\Lambda}{2}r\gamma\theta-\frac{1}{2\sqrt{r}}\partial_u\phi.
\end{equation}
We define the \textit{blueshift} as the integrating factor for \eqref{du theta intro}:
    \begin{equation}\label{blueshift definition eq intro}
        \mathbf{b}(u,v)=\int_{u_0}^u\frac{|\Lambda|}{2}r\gamma(u',v)du',\ \mathbf{b}(u)=\mathbf{b}(u,0).
    \end{equation}

The main implication of the mass gap property in \Cref{mass gap theorem} is that first singularities at the center have infinite blueshift. More precisely, in \Cref{blueshift lower bound proposition} we show that $\mathbf{b}$ is unbounded on $C_0^-$ and satisfies the following lower bound for any $u\in[u_0,0):$
\begin{equation}\label{lower bound blueshift intro}
    \mathbf{b}(u)\geq2\log\frac{u_0}{u}.
\end{equation}

\begin{figure}[H]
\begin{center}
\begin{tikzpicture}[scale=1]
    \coordinate (G0)   at (0,-2);
    \coordinate (bG)   at (0,3);
    \coordinate (I)    at (2.5,0.5);
    \coordinate (R)    at (4,2);
    \coordinate (Rp)   at (4.6,2.6);
    \coordinate (Bend) at (2.1,5.1);
    \coordinate (J)    at (3.55,1.55);
    \coordinate (K)    at (0.72,2.50);
    \fill[gray!20]
        (bG)
        .. controls (1.35,2.95) and (2.9,2.55) .. (R)
        -- (I)
        -- cycle;
    \draw (0,-2) .. controls (0.2,-0.5) and (0.2,1.5) .. (0,3);
    \draw[dashed] (bG) -- (Bend);
    \draw (bG) .. controls (1.35,2.95) and (2.9,2.55) .. (R);
    \draw (G0) -- (Rp);
    \draw (bG) -- (I);
    \filldraw[color=black, fill=white] (bG) circle (2pt);
    \draw (-0.1,1.1) node[anchor=north] {$\Gamma$}; 
    \draw (-0.05,3.08) node[anchor=east] {$b_{\Gamma}$}; 
    \draw (0.85,4) node[anchor=south] {$\mathcal{B}_0$}; 
    \draw (1.6,1.2) node[anchor=north] {$C_0^-$}; 
    \draw (1.6,-0.7) node[anchor=north] {$C_0^+$}; 
    \draw (2.4,2) node[anchor=north] {$\mathcal{P}$};
\end{tikzpicture}
\end{center}
\caption{Schematic representation of the region $\mathcal{P}\subset\mathcal{M}$.}
\label{fig:locallynakedsing}
\end{figure}

In \Cref{bounds for locally naked singularity proposition} we prove quantitative estimates for the solution in a suitable region $\mathcal{P},$ represented schematically in \Cref{fig:locallynakedsing}, whose definition depends on the strength of the blueshift, as made precise in \Cref{naked singularity section}. The remarkable aspect is that the absence of trapped surfaces in a neighborhood of $b_{\Gamma}$ implies by the contrapositive of \Cref{trapped surface formation theorem} bounds for the energy of the scalar field. More precisely, for any $(u,v)\in\mathcal{P}:$
\begin{equation}\label{bound for theta intro}
    \int_0^v\theta^2(u,v')dv'\lesssim v|u|\gamma(u,0).
\end{equation}
Additionally, in the region $\mathcal{P}$ we also prove that $\gamma(u,v)\sim\gamma(u,0),\ e^{\mathbf{b}(u,v)}\sim e^{\mathbf{b}(u)},\ \partial_vr(u,v)\sim e^{-\mathbf{b}(u)}$.

\subsubsection{The blueshift instability}\label{blueshift instability section intro}

In \Cref{instability section}, we prove that locally naked singularities can be perturbed to form trapped surfaces arbitrarily close to $b_{\Gamma}.$ We explain the main steps of the argument and highlight the key role of the blueshift instability.

We consider perturbations of the initial data on $C_0^+$, which are supported on $\{u=u_0,\ v\geq0\},$ so they leave the causal past of $b_{\Gamma}$ unchanged. By domain of dependence, it suffices to restrict to $v\leq v_0',$ for some $v_0'>0$ sufficiently small. For any integer $n>k+1$ and any $\lambda\neq0,$ we define the characteristic data $r_{\lambda}=r,\ \phi_{\lambda}=\phi$ on $\{u=u_0,\ v\in[u_0,0]\},$ and $r_{\lambda}=r,\ \theta_{\lambda}=\theta+\lambda v^n$ on $\{u=u_0,\ v\in[0,v_0']\},$ which determine a $C^k$ characteristic data set $\mathfrak{D}_{\lambda}.$ We denote by $\big(\mathcal{M}_{\lambda},g_{\lambda},\phi_{\lambda}\big)$  the maximal development of $\mathfrak{D}_{\lambda}.$

In \Cref{instability to trapped surfaces proposition}, we prove that $\big(\mathcal{M}_{\lambda},g_{\lambda}\big)$ contains a sequence of trapped surfaces converging to $b_{\Gamma}.$ The first step consists of establishing existence and quantitative estimates in a suitable region $\mathcal{P}_{\lambda}\subset\mathcal{P},$ represented schematically in \Cref{fig:instability}, which depends on $\lambda,\ n,$ and the strength of the blueshift on $C_0^-.$ In the region $\mathcal{P}_{\lambda}$, we show that $\gamma_{\lambda}(u,v)\sim\gamma(u,0),\ e^{\mathbf{b}_{\lambda}(u,v)}\sim e^{\mathbf{b}(u)},$ and $\partial_vr_{\lambda}(u,v)\sim e^{-\mathbf{b}(u)}$. %Moreover, we establish difference estimates for the quantities $\psi_{\lambda}-\psi$ with $\psi\in\big\{\gamma,r,\partial_vr,\partial_ur,\sqrt{r}\partial_u\phi\big\},$ which are consistent with the $n$-th order vanishing of the initial data perturbation at $v=0$. 
The main role of the blueshift instability is to amplify the initial data perturbation. To illustrate this, we use \eqref{du theta intro} for $\theta_{\lambda}$ and $\theta$ to derive the equation:
\begin{equation}\label{du theta difference intro}
    \partial_u\big(\theta_{\lambda}-\theta)=-\frac{\Lambda}{2}r_{\lambda}\gamma_{\lambda}\big(\theta_{\lambda}-\theta)+Err,
\end{equation}
where the term $Err$ is defined exactly in \Cref{A3 section}. The main contribution in $\theta_{\lambda}-\theta$ comes from the integrating factor (i.e.~the blueshift) acting on the initial data perturbation $\theta_{\lambda}(u_0,v)-\theta(u_0,v)=\lambda v^n$, while the contribution of the term $Err$ is proved to be of lower order. This behavior is captured by the following key estimate in \Cref{estimates for perturbed spacetime section}:
\begin{equation}\label{blueshift instability intro eq}
    \int_0^v\Big|\big(\theta_{\lambda}-\theta\big)(u,v')-e^{\mathbf{b}_{\lambda}(u,v')}\big(\theta_{\lambda}-\theta\big)(u_0,v')\Big|^2dv'\leq\lambda^2v^{2n+\frac{4}{3}}e^{2\mathbf{b}(u)}\ll\lambda^2v^{2n+1}e^{2\mathbf{b}(u)}.
\end{equation}

\begin{figure}
\begin{center}
\begin{tikzpicture}[scale=1]
    \coordinate (G0)   at (0,-2);
    \coordinate (bG)   at (0,3);
    \coordinate (I)    at (2.5,0.5);
    \coordinate (R)    at (4,2);
    \coordinate (Rp)   at (4.6,2.6);
    \coordinate (Bend) at (2.1,5.1);
    \coordinate (J)    at (3.55,1.55);
    \coordinate (K)    at (0.72,2.50);
    \fill[gray!45]
        (bG)
        .. controls (0.75,2.65) and (2.15,2.10) .. (J)
        -- (I)
        -- cycle;
    \draw (0,-2) .. controls (0.2,-0.5) and (0.2,1.5) .. (0,3);
    \draw (bG) .. controls (0.75,2.65) and (2.15,2.10) .. (J);
    \draw (bG)
    .. controls (0.16,2.90) and (0.42,2.70) .. (K)
    .. controls (1.00,2.33) and (1.22,2.32) .. (1.42,2.38)
    .. controls (1.62,2.45) and (1.84,2.58) .. (2.05,2.73)
    .. controls (2.24,2.87) and (2.50,3.00) .. (2.78,3.2);
    \fill (K) circle (1pt);
    \draw (G0) -- (Rp);
    \draw (bG) -- (I);
    \filldraw[color=black, fill=white] (bG) circle (2pt);
    \draw (-0.1,1.1) node[anchor=north] {$\Gamma$}; 
    \draw (-0.05,3.08) node[anchor=east] {$b_{\Gamma}$}; 
    \draw (1.6,1.2) node[anchor=north] {$C_0^-$}; 
    \draw (1.6,-0.7) node[anchor=north] {$C_0^+$}; 
    \draw (2.2,1.8) node[anchor=north] {$\mathcal{P}_{\lambda}$};
    \draw (1.75,3.1) node[anchor=west] {$\mathcal{C}_{\lambda}$};
    \draw[<-] (0.72,2.56) -- (0.72,3.1);
    \draw (0.72,3.29) node {$S_{\lambda}$}; 
\end{tikzpicture}
\end{center}
\caption{Schematic representation of the region $\mathcal{P}_{\lambda}\subset\mathcal{M}_{\lambda}$, the curve $\mathcal{C}_{\lambda},$ and $S_{\lambda}\in\mathcal{C}_{\lambda}.$}
\label{fig:instability}
\end{figure}

The second step in the proof of \Cref{instability to trapped surfaces proposition} consists of showing that the blueshift instability causes the amplification of $\theta_{\lambda}$ sufficiently close to $b_{\Gamma},$ and allows us to apply \Cref{trapped surface formation theorem} to obtain a sequence of trapped surfaces converging to $b_{\Gamma}.$ In \Cref{proof of instability thm section}, we define a suitable curve $\mathcal{C}_{\lambda},$ parametrized by $\big(u,v_{\sharp}(u)\big)$ for $u\in[u_0,0),$ which converges to $b_{\Gamma}$ and is contained in the region $\mathcal{P}_{\lambda}$ sufficiently close to $b_{\Gamma}$, as illustrated schematically in \Cref{fig:instability}. The curve $\mathcal{C}_{\lambda}$ is chosen such that \eqref{bound for theta intro} and \eqref{blueshift instability intro eq} imply the following lower bound for all $|u|$ sufficiently small:
\begin{equation}\label{lower bound for theta lambda intro}
        \int_0^{v_{\sharp}(u)}\big|\theta_{\lambda}(u,v')\big|^2dv'\gtrsim\lambda^2\big[v_{\sharp}(u)\big]^{2n+1}e^{2\mathbf{b}(u)}.
\end{equation}
We consider any $S_{\lambda}=\big(u,v_{\sharp}(u)\big)\in\mathcal{C}_{\lambda}$ sufficiently close to $b_{\Gamma},$ as in \Cref{fig:instability}. In \Cref{proof of instability thm section}, we show that the lower bound \eqref{lower bound for theta lambda intro} implies:
\begin{equation}\label{eta>delta intro}
    \eta_{\lambda}\big(u,v_{\sharp}(u)\big)=\frac{m_{\lambda}\big(u,v_{\sharp}(u)\big)-m_{\lambda}(u,0)}{|\Lambda|r_{\lambda}^2\big(u,v_{\sharp}(u)\big)}>\delta_{\lambda}\big(u,v_{\sharp}(u)\big)=\frac{r_{\lambda}\big(u,v_{\sharp}(u)\big)}{r_{\lambda}(u,0)}-1.
\end{equation}
Thus, \eqref{eta>delta intro} allows us to apply \Cref{trapped surface formation theorem} and show that the future ingoing null cone passing through $S_{\lambda}$ contains a trapped surface. As a result, we conclude that the spacetime $\big(\mathcal{M}_{\lambda},g_{\lambda}\big)$ contains a sequence of trapped surfaces converging to $b_{\Gamma}$.

\subsubsection{The proof of \Cref{main theorem}}\label{proof of main thm section intro}

We explain briefly how we complete the proof of \eqref{loc naked for main theorem} and \Cref{main theorem} in \Cref{WCC section}. In order to prove \eqref{loc naked for main theorem}, we show that any $\mathfrak{D}\in\mathfrak{M}_{\mathrm{loc.naked}}$ can be perturbed to some $\mathfrak{D}'\in\mathfrak{M}_{\mathrm{black}}^{\mathrm{spacelike}}.$ We notice that it is not always the case that for the maximal development of $\mathfrak{D}\in\mathfrak{M}_{\mathrm{loc.naked}}$ we have $C_0^-\cap C_0^+\neq\emptyset,$ as can be seen for example in \Cref{fig:bh1} and \ref{fig:nakedsing}. However, we can reduce the proof to the case $C_0^-\cap C_0^+\neq\emptyset$ by Cauchy stability, so we can use the results outlined in Sections~\ref{locally naked section intro}-\ref{blueshift instability section intro}. We note that the argument described in \Cref{blueshift instability section intro} constructs $\mathfrak{D}_{\lambda}$ only in a compact subset of $C_0^+.$ For $\lambda$ sufficiently small, we use the results in Appendix~\ref{appendix} to extend $\mathfrak{D}_{\lambda}$ to an asymptotically AdS characteristic data set $\mathfrak{D}'$ as a perturbation of $\mathfrak{D}$. We note this is a nontrivial step that is necessary due to the presence of a timelike asymptotic boundary at $\mathcal{I}.$ The results outlined in \Cref{blueshift instability section intro} show that the maximal development of $\mathfrak{D}_{\lambda}$ contains trapped surfaces arbitrarily close to $b_{\Gamma}$, which implies that $\mathfrak{D}'\in\mathfrak{M}_{\mathrm{black}}^{\mathrm{spacelike}},$ completing the proof of \eqref{loc naked for main theorem}. Having established \eqref{loc naked for main theorem}, the remaining statements in \Cref{main theorem} follow from \eqref{moduli space decomposition intro}, the fact that $\mathfrak{M}_{\mathrm{naked}}\subset\mathfrak{M}_{\mathrm{loc.naked}}$,  and the fact that $\mathfrak{M}_{\mathrm{black}}$ is open.

\subsection{Previous results}\label{previous results section}
We present some previous results relevant for the weak cosmic censorship conjecture. In \Cref{Christodoulou section}, we discuss \cite{ChrWCC} in detail and mention additional works regarding the instability of naked singularities. In \Cref{examples section}, we discuss some previous results on the construction of examples of naked singularity spacetimes. We refer the reader to the recent survey \cite{Wccsurvey} for an extended literature review.

\subsubsection{Instability of naked singularities}\label{Christodoulou section}

In the remarkable work \cite{ChrWCC}, Christodoulou proved a version of the weak cosmic censorship conjecture for the spherically symmetric Einstein-scalar field system in $3+1$ dimensions. We state the main result of \cite{ChrWCC} following the formulations in \cite{ChrWCC,C99-GL,Wccsurvey}:
\begin{theorem}[\cite{ChrWCC}]\label{Chr theorem}
    We parametrize initial data for the $(3+1)$-dimensional spherically symmetric Einstein-scalar field system of equations by a choice of $\vartheta=r\partial_v\phi$ along an initial outgoing null cone $C_0^+,$ normalized such that $\partial_vr=1/2.$ We consider as the space of initial data the space $AC$ of all absolutely continuous functions $\vartheta$ of finite total variation on the half-line. We denote by $\mathcal{E}\subset AC$ the subset consisting of initial data whose maximal development is a locally naked singularity spacetime. Then $\mathcal{E}$ has positive codimension in $AC.$
\end{theorem}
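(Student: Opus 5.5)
Since this is Christodoulou's theorem from \cite{ChrWCC}, the plan follows his scheme, which \Cref{proof idea section} adapts; I describe it in the $(3+1)$-dimensional setting. The goal is to show that for every $\vartheta\in\mathcal{E}$ there is a two-dimensional space of perturbations $\{\vartheta+c_1f_1+c_2f_2\}$ in $AC$ along which at most one line, or a smaller set, remains in $\mathcal{E}$. This yields positive codimension. The perturbations are supported on $\{v\geq v_*\}$ along $C_0^+$, where $v_*$ is the $v$-coordinate of $b_\Gamma$. Hence they leave the causal past of the first singularity $b_\Gamma$ unchanged, and the perturbed solutions agree with the original in the past of $C_0^-$.

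\textbf{Step 1 (trapped surface criterion).} First I would establish the $(3+1)$-dimensional analogue of \Cref{trapped surface formation theorem}, following \cite{ChrBH}. Consider an annulus on an outgoing cone between $r_1<r_2$, with $\delta=r_2/r_1-1$ and $\eta=2(m_2-m_1)/r_2$. If $\delta$ is small and
\[\eta>C\,\delta\log(1/\delta),\]
then a trapped surface forms on the future ingoing cone through the outer sphere. The proof derives a differential inequality for $\eta$ along the ingoing cone, using the monotonicity of $m$ and of $\partial_ur$. Along the way one needs the sign $\partial_u r<0$ and the Raychaudhuri equations.

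\textbf{Step 2 (structure of a locally naked singularity).} Assume there are no trapped surfaces near $b_\Gamma$. Applying the contrapositive of Step 1 on a family of annuli accumulating at $b_\Gamma$ gives smallness bounds: the mass content near $C_0^-$ is bounded by $C\delta\log(1/\delta)$ relative to $r$. I would then introduce the analogue of $\theta$, the variable $\vartheta/\partial_vr$ (suitably renormalized), and its blueshift factor $e^{\mathbf b}$, with
\[\mathbf b(u)=\int \frac{\Omega^2}{4r}\frac{m}{\partial_vr}\,du\]
along $C_0^-$. The key claim is that $\mathbf b(u)\to\infty$ as $u\to0$. In Christodoulou's setting there is no mass gap, so I would argue by contradiction instead. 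If $\mathbf b$ were bounded, the solution would remain controlled in BV up to $b_\Gamma$. Then the BV extension criterion, namely that small total variation in a neighborhood of the center prevents singularity formation, would contradict $b_\Gamma$ being a first singularity.

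\textbf{Step 3 (instability).} Propagate the perturbation $c_1f_1+c_2f_2$ from $C_0^+$ up to a neighborhood of $C_0^-$. Here $f_1,f_2$ are chosen to vanish at $v_*$ at a prescribed low-regularity rate, which is allowed in $AC$. The linearized transport equation for $\partial_v\phi/\partial_vr$ has integrating factor $e^{\mathbf b}$, so the perturbation is amplified by $e^{\mathbf b(u)}\to\infty$. Error terms coming from the nonlinear coupling should be of lower order, provided one works in the region where the Step 2 bounds persist for the perturbed solution. Then, for generic $(c_1,c_2)$, the perturbed annular mass $\eta$ exceeds $C\delta\log(1/\delta)$ on suitable annuli shrinking to $b_\Gamma$. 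Step 1 then produces trapped surfaces, so the perturbed data lie in the black hole set and outside $\mathcal{E}$. The exceptional pairs $(c_1,c_2)$ are those for which the amplified leading terms cancel. The two functions are needed precisely so that these cancellations occur on at most a line in the $(c_1,c_2)$-plane.

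\textbf{Main obstacle.} I expect the hardest step to be Step 3. The perturbation must be rough enough, in $AC$ but not smooth, that its amplified contribution beats the $\delta\log(1/\delta)$ threshold. At the same time, the difference estimates must control the nonlinear error uniformly, where the background is only known to satisfy the weak bounds of Step 2. I would first try to establish a difference estimate of the form \eqref{blueshift instability intro eq}, comparing the perturbed quantity with $e^{\mathbf b}$ times its initial value in $L^2$. The curve of annuli would then be chosen, as in \eqref{lower bound for theta lambda intro}, so that this main term dominates both the background mass and the error.
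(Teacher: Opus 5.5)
\textbf{There is a genuine gap, and it sits in your Step 2.} The paper does not prove this theorem; it is quoted from \cite{ChrWCC}. However, the paper's discussion of Christodoulou's argument singles out exactly your Step 2 as the weak point, and your proposal does not close it.

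Unboundedness of the blueshift along $C_0^-$ drives the whole instability mechanism. The paper stresses that no published proof of it exists in $3+1$ dimensions: strictly speaking, \cite{ChrWCC,C99-GL} only give the theorem for those elements of $\mathcal{E}$ whose developments have unbounded blueshift along the past cone of $b_\Gamma$.

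Your justification, ``if $\mathbf{b}$ were bounded, the solution would remain controlled in BV up to $b_\Gamma$'', is an assertion rather than an argument. Boundedness of $\mathbf{b}$ is information along the single cone $C_0^-$. Write $\mathbf{b}(u)=\int\frac{\mu}{1-\mu}\frac{|\partial_ur|}{r}\,du$ with $\mu=2m/r$, and use $\partial_um\le0$ along $C_0^-$. A dyadic argument then shows that bounded $\mathbf{b}$ forces $\mu\to0$ along $C_0^-$.

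The extension criterion, as you state it yourself, needs small total variation in a neighborhood of the center in the past of $b_\Gamma$. Monotonicity only gives $m(u,v)\le m(u,0)$ for $v<0$. That does not make $2m/r$ small near $\Gamma$, where $r(u,v)\ll r(u,0)$. Converting cone information into such neighborhood control is exactly the missing ingredient.

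In the paper's $(2+1)$-dimensional setting this step is supplied by the mass gap (\Cref{mass gap theorem}). It gives $m\ge1$ on $C_0^-$, hence $\gamma(u,0)\gtrsim|u|^{-2}$ and the quantitative bound $\mathbf{b}(u)\ge2\log(u_0/u)$ of \Cref{blueshift lower bound proposition}. Nothing analogous is available in $3+1$.

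Separately, your blueshift integrand is off. From $\partial_u\partial_vr=-\Omega^2m/(2r^2)$ the blueshift is $\int\frac{\Omega^2m}{2r^2\partial_vr}\,du$, not $\int\frac{\Omega^2m}{4r\partial_vr}\,du$; the latter is not even dimensionless.

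\textbf{Step 3 is also only a plan, and the difficulty you flag is sharper in $3+1$ than in the paper.} The paper's direct route works for two reasons. First, the contrapositive of its criterion bounds the background by $\eta\le\delta$. Second, on the curve $\mathcal{C}_\lambda$ the amplified perturbation exceeds $\delta$ by a factor $|u|^{-1/(5n)}$. That gain rests on the quantitative lower bound for $\mathbf{b}$ and on the $(2+1)$-dimensional scaling.

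In $3+1$, the a priori bound on the background is of the same order $\delta\log(1/\delta)$ as the threshold you must beat, and you only know $\mathbf{b}\to\infty$ qualitatively. So two of your claims are exactly what needs proof, and neither is derived:
\begin{itemize}
\item that the nonlinear errors ``should be of lower order'';
\item that the amplified leading terms cancel only on a line in the $(c_1,c_2)$-plane.
\end{itemize}

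As the paper describes it, Christodoulou's argument is organized differently. He first derives refined necessary conditions for a locally naked singularity, in the form of compatibility conditions between the data on $C_0^+$ and the solution on $C_0^-$. He then shows that suitable perturbations in $AC$ violate them. He does not track the perturbed spacetime directly, as the paper does in $2+1$.
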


The present paper is inspired by the strategy of \cite{ChrWCC}, so we explain in more detail the common features and key differences of the two works. We begin with some comments regarding the differences in the statements of \Cref{main theorem} and \Cref{Chr theorem}. Firstly, the space of initial data in \Cref{Chr theorem} has a low-regularity topology, namely $AC$, which is necessary in order to activate the blueshift instability. On the other hand, \Cref{main theorem} holds in a topology of arbitrary smoothness, as the different scaling in our setting and the quantitative lower bound \eqref{lower bound blueshift intro} allow us to use the blueshift instability in the class of $C^k$ solutions, for any $k\geq2.$ Secondly, \Cref{Chr theorem} uses a stronger notion of genericity than \Cref{main theorem}, showing that $\mathcal{E}$ has positive codimension in $AC.$ The argument outlined in \Cref{blueshift instability section intro} is consistent with this stronger notion of genericity, but the need to extend the initial data perturbations globally and satisfy the boundary conditions at $\mathcal{I}$ forces us to settle for the notion of genericity in \Cref{main theorem}. For similar reasons, \Cref{main theorem} only proves the non-genericity of naked singularities, rather than locally naked singularities.

We make some further comments regarding the proofs of \Cref{main theorem} and \Cref{Chr theorem}. Firstly, we follow the strategy of \cite{ChrWCC} and use the formation of trapped surfaces criterion in two different ways: we use \Cref{trapped surface formation theorem} directly in \Cref{instability section}, but we also use the contrapositive of \Cref{trapped surface formation theorem} in \Cref{naked singularity section} to get a priori control of the scalar field for any locally naked singularity spacetime. Secondly, both \Cref{main theorem} and \Cref{Chr theorem} rely on the fundamental fact that locally naked singularity spacetimes have unbounded blueshift. In our case, this follows as a consequence of the presence of a mass gap in \Cref{mass gap theorem}, as explained in \Cref{locally naked section intro}. (Unfortunately, there is no published account of this part of the proof in the case of \Cref{Chr theorem}. Strictly speaking, from the published versions in \cite{ChrWCC,C99-GL}, one can only infer that \Cref{Chr theorem} applies for $\mathcal{E}\subset AC$ consisting of initial data whose maximal development is a locally naked singularity spacetime and has unbounded blueshift along the past lightcone of the singularity $b_{\Gamma}$, see also \cite{Wccsurvey}). %This assumption is justified in \cite{ChrWCC,C99-GL} by the connection to the extension principle in \cite{BV}. 
Thirdly, we follow the idea of \cite{ChrWCC} and use the blueshift effect to prove the instability of locally naked singularities. We explain how the two approaches are somewhat different. \cite{ChrWCC} derives refined necessary conditions for a spacetime to be a locally naked singularity, in the form of compatibility conditions between the initial data on $C_0^+$ and the solution on $C_0^-,$ which are then violated by suitable perturbations of the initial data. On the other hand, we start with the perturbation in \Cref{blueshift instability section intro} and we carry out most of the argument at the level of the perturbed spacetime, providing a detailed description of how the solution enters the regime where \Cref{trapped surface formation theorem} applies.

We mention some further results regarding the instability of naked singularities for the Einstein-scalar field system in $3+1$ dimensions. The work \cite{LiuLi} revisits the proof of \cite{ChrWCC}, while \cite{LiuLi2} considers non-spherically symmetric perturbations of spherically symmetric naked singularity spacetimes. Finally, in the recent work \cite{An}, An uses anisotropic perturbations to construct finitely many unstable directions for the examples of \cite{C94}.

\subsubsection{Previous examples of naked singularities and upcoming work}\label{examples section}

Constructing examples of naked singularities plays a fundamental role in understanding their general properties and their instability. Moreover, the existence of such examples justifies the need to include genericity conditions in any formulation of the weak cosmic censorship conjecture. We outline some of the previous results on the construction of examples of naked singularities, and we refer the reader to \cite{Wccsurvey,CKdss} for further discussion. We also discuss the upcoming work \cite{preparation}.

For the spherically symmetric Einstein-scalar field system in $3+1$ dimensions, \cite{C94} constructed examples of naked singularities, which were recently revisited\footnote{The recent works \cite{Singh2,Zheng,Zheng2} study the stability properties of the examples of \cite{C94}. In particular, Singh and Zheng show that the naked singularity spacetimes of \cite{C94} are stable to smooth perturbations. We refer the reader to \cite{Zheng} for the sharp result established.} by \cite{Singh1} and \cite{Singh2,Zheng,Zheng2}. Notably, the examples of \cite{C94} arise from low regularity initial data (more precisely, in the notation of \Cref{Chr theorem}, $\vartheta$ is only Hölder continuous). However, based on numerical relativity works on critical collapse \cite{Cho93,Gundlach2007}, it is conjectured that there exist naked singularities arising from smooth data on $\partial\mathfrak{M}_{\mathrm{black}}\cap\partial\mathfrak{M}_{\mathrm{non}},$ which represents the moduli space threshold between black hole formation and non-collapse. Outside of spherical symmetry, the groundbreaking work \cite{RSR23,SR22} constructed examples of naked singularities for the Einstein vacuum equations in $3+1$ dimensions, which also arise from low regularity initial data. 

In the case of certain matter models, there are more elementary examples of naked singularities arising from smooth initial data. The first known examples of naked singularities arising in gravitational collapse are due to Christodoulou \cite{dust} for the spherically symmetric Einstein-dust system in $3+1$ dimensions. Remarkably, \cite{dust} constructs examples of stable naked singularities, so weak cosmic censorship is false for this model. More recently, the remarkable work \cite{euler} constructed examples of naked singularities for the spherically symmetric Einstein--Euler system in $3+1$ dimensions. Since these models are prone to singularity formation even in the absence of gravity, it is not clear however what is the physical significance of these results for the issue of weak cosmic censorship.

In $2+1$ dimensions, to date, no rigorous examples of naked singularities for \eqref{Einstein equations} have been constructed. Similarly to our discussion in the previous paragraph, the numerical relativity works on critical collapse \cite{Pretorius,jalmuzna} suggest that there exist such examples arising from smooth data on $\partial\mathfrak{M}_{\mathrm{black}}\cap\partial\mathfrak{M}_{\mathrm{non}}.$ In the upcoming work \cite{preparation}, we prove a local version of this result and we provide the first rigorous construction of locally naked singularities arising in (local) critical collapse. More precisely, on the finite outgoing cone $C_0^+=\{u=u_0,\ 0\leq r<r_0\},$ we construct a 2-parameter family of $C^3$ characteristic data $\mathfrak{D}_{\lambda}^q$, for any integer $q\geq4$ and $\lambda\in(-\epsilon,\epsilon),$ which satisfies the following properties:
\begin{itemize}
    \item For $\lambda<0,$ the development of $\mathfrak{D}_{\lambda}^q$ does not form singularities.
    \item For $\lambda=0,$ the development of $\mathfrak{D}_{\lambda}^q$ is a locally naked singularity spacetime and contains no trapped surfaces. %, and $C_0^+\cap C_0^-=\{(u_0,r_0)\},$ with $0<r_0<r_1$. 
    Moreover, the case $(\lambda,q)=(0,4)$ corresponds to the numerical examples of \cite{Pretorius,jalmuzna}.
    \item For $\lambda>0,$ the development of $\mathfrak{D}_{\lambda}^q$ contains trapped surfaces.
\end{itemize}
In the above, it is then the $\lambda=0$ solutions which represent examples of locally naked singularities arising in (local) critical collapse. An interesting quantitative aspect of the examples of \cite{preparation} is that for $\lambda=0,$ the blueshift defined in \eqref{blueshift definition eq intro} satisfies the following rate as $u\rightarrow0^-:$
\begin{equation}\label{blueshift rate for CR}
    \mathbf{b}(u)=\bigg(3-\frac{1}{2q}\bigg)\log\frac{u_0}{u}+o(1),
\end{equation}
which should be compared to the general lower bound \eqref{lower bound blueshift intro} for first singularities at the center. 

In the context of the moduli space $\mathfrak{M}$ of the present paper, an immediate consequence of the results in \cite{preparation} is that $\mathfrak{D}_{0}^q$ can be extended to asymptotically AdS data $\widetilde{\mathfrak{D}}_{0}^q$ such that $\widetilde{\mathfrak{D}}_{0}^q\in\mathfrak{M}_{\mathrm{loc.naked}}.$ In particular, it follows that $\mathfrak{M}_{\mathrm{loc.naked}}\neq\emptyset.$ It would be an interesting problem to prove that the family $\mathfrak{D}_{\lambda}^q$ can be extended to a family $\widetilde{\mathfrak{D}}_{\lambda}^q$ of asymptotically AdS data, such that in fact $\widetilde{\mathfrak{D}}_{0}^q\in\mathfrak{M}_{\mathrm{naked}}$ and $\widetilde{\mathfrak{D}}_{0}^q\in\partial\mathfrak{M}_{\mathrm{black}}\cap\partial\mathfrak{M}_{\mathrm{non}}.$

\subsection{Outline of the paper}\label{outline section}
We outline the rest of the paper. In \Cref{set up section}, we introduce general properties of the Einstein-scalar field system \eqref{Einstein equations} in circular symmetry. In \Cref{mass gap section}, we prove the mass gap result in \Cref{mass gap theorem}. In \Cref{trapped surface formation section}, we prove the formation of trapped surface criterion in \Cref{trapped surface formation theorem}. In \Cref{naked singularity section}, we provide a detailed description of locally naked singularity spacetimes. In \Cref{instability section}, we prove the instability of locally naked singularity spacetimes to trapped surface formation. In \Cref{WCC section}, we prove the main result in \Cref{main theorem}. Finally, in Appendix~\ref{appendix} we construct asymptotically AdS characteristic data perturbations.

\paragraph{Acknowledgments.} The author would like to acknowledge Igor Rodnianski for his valuable advice in the process of writing this paper. The author would also like to thank Mihalis Dafermos and Christoph Kehle for the very helpful discussions and comments.

\section{The Einstein-scalar field system in circular symmetry}\label{set up section}

The goal of this section is to introduce the general properties of circularly symmetric solutions of the Einstein-scalar field system \eqref{Einstein equations}. In \Cref{eq in double null section}, we rewrite the system \eqref{Einstein equations} using double null coordinates. In \Cref{asymptotic ads section} we define the notion of asymptotically AdS characteristic data sets. In \Cref{well posedness section}, we introduce the local well-posedness theory for asymptotically AdS spacetimes. In \Cref{global structure section}, we state additional important properties of circularly symmetric solutions of \eqref{Einstein equations}, including a description of the general structure of maximal globally hyperbolic developments arising in gravitational collapse.

\subsection{Equations in double null coordinates}\label{eq in double null section}

In this section, we introduce global systems of double null coordinates for circularly symmetric spacetimes, which allow us to represent certain causal properties of spacetimes using Penrose diagrams. We also rewrite the Einstein-scalar field system of equations \eqref{Einstein equations} in double null coordinates.

We first make precise the notion of circular symmetry:
\begin{definition} We say that $\big(\mathcal{M},g,\phi\big)$ is circularly symmetric if the group $U(1)$ acts by isometry on $\mathcal{M}$ and preserves the scalar field $\phi.$ Moreover, we require that:
        \[\mathcal{Q}=\mathcal{M}/U(1)\]
    inherits the structure of a $1+1$ dimensional Lorentzian manifold with boundary. Denoting by $\overline{g}$ the metric induced on $\mathcal{Q},$ we also require that the metric $g$ on $\mathcal{M}$ takes the form:
    \[g=\overline{g}+r^2d\sigma_1^2,\]
    where $d\sigma_1^2$ is the standard metric on $S^1.$ Moreover, $r$ and $\phi$ are functions on $\mathcal{Q},$ and $r$ is nonnegative.
\end{definition}

As a $1+1$ dimensional Lorentzian manifold, $\mathcal{Q}$ can be globally covered by double null coordinates $(u,v)$. In these coordinates, the induced metric $\overline{g}$ takes the form:
\[\overline{g}=-\Omega^2dudv.\]
Thus, the spacetime $(\mathcal{M},g)$ can be covered globally by the double null coordinates $(u,v,\theta):$
\begin{equation}\label{metric in double null}
    g=-\Omega^2dudv+r^2d\theta^2.
\end{equation}
Since the metric on $\mathcal{Q}$ is conformal to that of the $1+1$ dimensional Minkowski space, we can represent
$\mathcal{Q}$ as a submanifold of $\mathbb{R}^{1+1}$ using Penrose diagrams. This approach offers a canonical way of attaching a boundary to the quotient $\mathcal{Q}$. We point out that we use the topology and causality of the ambient $\mathbb{R}^{1+1}$.

Next, we define two components of the boundary of $\mathcal{Q}$: the \textit{center} $\Gamma$ and \textit{null infinity} $\mathcal{I}.$ We denote by $\Gamma$ the portion of the boundary of $\mathcal{Q}$ corresponding to the set of fixed points of the action of $U(1)$, and note that $r=0$ on $\Gamma$. We also define the set $\mathcal{U}=\big\{u: \sup_{(u,v)\in\mathcal{Q}}r(u,v)=\infty\big\},$ which is nonempty according to \Cref{well posedness section}. We notice that for any $u\in\mathcal{U},$ there exists a unique $v_{\mathcal{I}}(u)$ such that $\big(u,v_{\mathcal{I}}(u)\big)\in\overline{\mathcal{Q}}\backslash\mathcal{Q}.$
\begin{definition}
    We define null infinity as the set:
    \[\mathcal{I}=\bigcup_{u\in\mathcal{U}}\big(u,v_{\mathcal{I}}(u)\big).\]
    According to \Cref{well posedness section}, null infinity $\mathcal{I}$ is a timelike curve. We denote by $i^+$ the future limit point of $\mathcal{I}$.
\end{definition}

We rewrite the Einstein-scalar field system with respect to double null coordinates. The system of equations \eqref{Einstein equations} is equivalent to:
\begin{align}
    2r\partial_u\partial_v\phi&=-\partial_ur\partial_v\phi-\partial_vr\partial_u\phi,\label{wave equation phi}\\
    2\partial_u\partial_vr&=\Lambda\Omega^2r,\label{wave equation r}\\
    2\partial_u\partial_v\log\Omega^{2}&=-2\partial_u\phi\partial_v\phi+\Lambda\Omega^2,\label{wave equation Omega}\\
    \partial_u\big(\Omega^{-2}\partial_ur\big)&=-\Omega^{-2}r(\partial_u\phi)^2,\label{Ray u}\\
    \partial_v\big(\Omega^{-2}\partial_vr\big)&=-\Omega^{-2}r(\partial_v\phi)^2\label{Ray v}.
\end{align}

In double null coordinates, the Hawking mass defined in \eqref{Hawking mass} takes the form:
\begin{equation}\label{Hawking mass double null}
    m(u,v)=4\Omega^{-2}\partial_ur\partial_vr-\Lambda r^2+1.
\end{equation}
We impose the boundary condition at the center $m|_{\Gamma}=0$, necessary for the regularity of $(\mathcal{M},g)$.

We derive the following propagation equations for the mass:
\begin{align}
    \partial_vm&=-4\Omega^{-2}\partial_ur\cdot r(\partial_v\phi)^2,\label{dv m} \\
    \partial_um&=-4\Omega^{-2}\partial_vr\cdot r(\partial_u\phi)^2.\label{du m}
\end{align}
We define the renormalized outgoing derivative of the scalar field by:
\begin{equation}\label{definition of theta}
    \theta=\sqrt{r}\frac{\partial_v\phi}{\partial_vr}.
\end{equation}
Finally, we define the quantity $\gamma,$ which plays a key role in our analysis:
\begin{equation}\label{definition of gamma}
    \gamma=\frac{\Omega^2}{\partial_vr}.
\end{equation}
Using this notation, the Raychaudhuri equation \eqref{Ray v} is equivalent to:
\begin{equation}\label{dv log gamma}
    \partial_v\log\gamma=\frac{r(\partial_v\phi)^2}{\partial_vr}=\theta^2\partial_vr.
\end{equation}

\subsection{Asymptotically AdS characteristic data}\label{asymptotic ads section}

It is essential to have a local well-posedness theory for circularly symmetric solutions of \eqref{Einstein equations}. In compact regions of spacetime, such results are standard in the literature, see for example \cite{BV,Dafermos_extension,kommemi}. The main difficulty lies in dealing with the boundary at infinity, as $\mathcal{I}$ represents a timelike asymptotic boundary of spacetime. Our strategy is to adapt the $(3+1)$-dimensional theory of \cite{localwellposed} to our $(2+1)$-dimensional setting. The main goal of this section is to define a suitable notion of asymptotically AdS characteristic data and the moduli space of initial data.

Since we want to prove a well-posedness theory which accounts for the boundary at infinity, the discussion only applies to the $J^-(\mathcal{I})$ region of spacetime, i.e. the causal past of $\mathcal{I}.$ In particular, in the case of black hole spacetimes, our discussion only applies to the exterior region as defined in \Cref{global structure section}. To emphasize the distinction from the Sections~\ref{eq in double null section} and \ref{global structure section}, in Sections~\ref{asymptotic ads section} and \ref{well posedness section} we use the notation $(U,V)$ for systems of double null coordinates that are adapted to the region $J^-(\mathcal{I})$.

The first goal is to define asymptotically AdS spacetimes. In particular, the definition must apply to the BTZ spacetimes \eqref{BTZ definition}. For the metric \eqref{BTZ definition}, we consider double null coordinates $(U,V)$ which are Eddington-Finkelstein normalized as follows:
\[\frac{\partial_Vr}{1-M+|\Lambda|r^2}=\frac{1}{2},\ \frac{\partial_Ur}{1-M+|\Lambda|r^2}=-\frac{1}{2},\ \text{and }U=V\text{ on }\mathcal{I}.\]
To achieve this, we define $U=t-r_*,\ V=t+r_*,$ where $r_*$ satisfies:
\[dr_*=\frac{dr}{1-M+|\Lambda|r^2},\ \lim_{r\rightarrow\infty}r_*(r)=0.\]
With respect to the Eddington-Finkelstein coordinates $(U,V),$ the BTZ metric \eqref{BTZ definition} becomes:
\begin{equation}\label{BTZ in double null}
    g_M=-\Big(1-M+|\Lambda|r^2\Big)dUdV+r^2d\theta^2.
\end{equation}

We remark that the leading behavior of the lapse at $\mathcal{I}$ in \eqref{BTZ in double null} is dictated by the cosmological constant and is independent of the dimension. We adapt the definition of asymptotically AdS spacetimes in \cite[Section~2.4]{localwellposed} to our setting:
\begin{definition}\label{asympt ads def}
    We say that a circularly symmetric spacetime $\big(\mathcal{M},g\big)$ is asymptotically AdS if for every $p\in\mathcal{I}$ there exists a coordinate system $(U,V)$ and a neighborhood of $p$ in $\mathcal{M}$:
    \[\Big\{V\geq U_0,\ U\in(V,U_0+\delta]\Big\},\]
    such that $U=V$ on $\mathcal{I},$ $\partial_Ur<0,\ \partial_Vr>0$ and the metric satisfies the asymptotic behavior for some $M\geq0$:
    \[g=-\Big(1+O\big(r^{-3}\big)\Big)\Big(1-M+|\Lambda|r^2\Big)dUdV+r^2d\theta^2,\]
    \[T(\Omega^2)=O\big(r^{-1}\big),\ R_*(\Omega^2)=O\big(r^{3}\big),\ T(r)=O\big(r^{-1}\big),\ R_*(r)=1-M+|\Lambda|r^2+O\big(r^{-1}\big),\]
    \[T(\partial_Ur)=O\big(r^{-1}\big),\ R_*(\partial_Ur)=O\big(r^{3}\big),\ T(\partial_Vr)=O\big(r^{-1}\big),\ R_*(\partial_Vr)=O\big(r^{3}\big),\]
    where $T=\partial_U+\partial_V,\ R_*=\partial_V-\partial_U.$
\end{definition}

Our next goal is to define characteristic initial data. We first define the notion of an asymptotically Eddington-Finkelstein coordinate, similarly to \cite[Proposition~3.1]{localwellposed}:

\begin{definition}
    We consider the interval $[0,\infty)$ parametrized by $r.$ Let $V:[0,\infty)\rightarrow I=\big[V_{\Gamma},V_{\mathcal{I}}\big)$ be a coordinate function such that $V(0)=V_{\Gamma},$ and $\lim_{r\rightarrow\infty}V(r)=V_{\mathcal{I}}.$ We say that the coordinate $V$ on $I$ is asymptotically Eddington-Finkelstein if there exists $M\geq0$ such that as $V\rightarrow V_{\mathcal{I}}$:
     \begin{equation}\label{asympt EF coordinate}
         \partial_Vr=\frac{1}{2}\big(1-M+|\Lambda|r^2\big)+o\big(r^{-1}\big),\ \partial_V\bigg(\frac{\partial_Vr}{1-M+|\Lambda|r^2}\bigg)=o(r^{-2}).
     \end{equation}
\end{definition}

We introduce the weighted spaces used in the definition of initial data:
\begin{definition}
    We define the weighted Sobolev norm $H^n\big([0,\infty)\big)$ for any $n\geq0:$
    \begin{equation}\label{weighted Sobolev space r}
        \big\|f\big\|_{H^n([0,\infty))}^2=\int_{0}^{\infty}\bigg[\big(1+r^2\big)^n\Big(\partial_r^nf\Big)^2+\ldots+\big(1+r^2\big)\Big(\partial_rf\Big)^2+f^2\bigg]rdr.
    \end{equation}
    With respect to an asymptotically Eddington-Finkelstein coordinate $V$ on  $I=\big[V_{\Gamma},V_{\mathcal{I}}\big)$, we define:
    \begin{equation}\label{weighted Sobolev space V}
        \big\|f\big\|_{H^n(I)}^2=\int_{V_{\Gamma}}^{V_{\mathcal{I}}}\bigg[\big(1+r^2\big)^n\bigg[\bigg(\frac{\partial_V}{\partial_Vr}\bigg)^nf\bigg]^2+\ldots+\big(1+r^2\big)\bigg(\frac{\partial_Vf}{\partial_Vr}\bigg)^2+f^2\bigg]r(V)\partial_VrdV.
    \end{equation}
\end{definition}

We define asymptotically AdS characteristic data sets, similarly to \cite[Definition 3.3]{localwellposed}:
\begin{definition}[Asymptotically AdS characteristic data]\label{asympt ads data def}
    For any $k\geq2,$ we say that the tuple:
    \[\mathfrak{D}=\big(r,\ldots,\partial_U^{k+1}r,\Omega,\ldots,\partial_U^k\Omega,m,\ldots,\partial_U^km,\phi,\ldots,\partial_U^k\phi,\mathcal{T}\phi,\ldots,\mathcal{T}^k\phi\big)\]
    represents a $C^k$ asymptotically AdS characteristic data set if there exists $M\geq0$ and an asymptotically Eddington-Finkelstein coordinate $V$ on $I=\big[V_{\Gamma},V_{\mathcal{I}}\big)$ satisfying \eqref{asympt EF coordinate}, such that the following hold:
    \begin{enumerate}
        \item Regularity conditions:
        \begin{equation}\label{regularity conditions geometry}
            \partial_U^{i}r\in C^{k+1-i}(I),i=\overline{0,k+1},\ \partial_U^{j}\Omega,\partial_U^{j}m\in C^{k-j}(I),j=\overline{0,k},
        \end{equation}
        \begin{equation}\label{regularity conditions scalar field L2}
            \phi\in H^{k+1}(I),\mathcal{T}\phi\in H^{k}(I),\ldots,\mathcal{T}^k\phi\in H^1(I),
        \end{equation}
        \begin{equation}\label{regularity conditions pointwise}
            \partial_U^i\phi\in C^{k-1-i}(I),i=\overline{0,k-1},\ \sqrt{r}\partial_V^{k-j}\partial_U^j\phi\in C^0(I),j=\overline{0,k}.
        \end{equation}
        %could even have 1/\log r instead of \sqrt{r}
        \item Compatibility conditions: the elements of $\mathfrak{D}$ satisfy the compatibility conditions obtained by differentiating equations \eqref{wave equation phi}-\eqref{du m} and taking the Kodama vector field $\mathcal{T}$ to be given by:
            \begin{equation}\label{definition of T}
                \mathcal{T}=\Omega^{-2}\partial_Vr\partial_U-\Omega^{-2}\partial_Ur\partial_V.
            \end{equation}
        \item No trapped or anti-trapped surfaces: $\partial_Vr>0,\ \partial_Ur<0$ on $I.$
        \item Boundary conditions at $\Gamma:$ for any $0\leq i\leq k+1,\ 0\leq j\leq k,$ and $0\leq l\leq k-2,$ we have that $\mathcal{T}^ir(V_{\Gamma})=0,\ \mathcal{T}^jm(V_{\Gamma})=0,$ $\mathcal{T}^l(\Omega^{-2}\partial_Vr\partial_U\phi+\Omega^{-2}\partial_Ur\partial_V\phi)(V_{\Gamma})=0.$ 
        \item Boundary conditions at $\mathcal{I}:$ $\lim_{V\rightarrow V_{\mathcal{I}}}m(V)=M.$
        \item Asymptotic behavior as $V\rightarrow V_{\mathcal{I}}$:
        \begin{itemize}
            \item $r$ satisfies \eqref{asympt EF coordinate} and for $0\leq i\leq k+1:$
            \begin{equation}\label{asympt behavior dv r}
                \partial_V^ir=O\big(r^{i+1}\big).
            \end{equation}
            \item $\partial_Ur$ satisfies the asymptotics for $0\leq i\leq k+1:$
            \begin{equation}\label{asympt behavior du r}
                \partial_Ur=-\frac{1}{2}\big(1-M+|\Lambda|r^2\big)+O\big(r^{-1}\big),\ \partial_V^i\partial_Ur=O\big(r^{i+2}\big).
            \end{equation}
            \item $\Omega$ satisfies the asymptotics for $0\leq i\leq k:$
            \begin{equation}\label{asympt behavior Lapse}
                \Omega^2=1-M+|\Lambda|r^2+O\big(r^{-1}\big),\ \partial_V^i\Omega^2=O\big(r^{i+2}\big).
            \end{equation}
            \item The scalar field satisfies the asymptotics for $0\leq i+j\leq k$:
            \begin{equation}\label{asympt behavior phi}
            \partial_V^{i}\partial_U^j\phi=O\big(r^{i+j-\frac{3}{2}}\big).
        \end{equation}
        \end{itemize}
    \end{enumerate}
\end{definition}

\begin{remark}\label{differences from HS remark}
    In \Cref{asympt ads data def} we adapted \cite[Definition 3.3]{localwellposed} from the $(3+1)$-dimensional case to the $(2+1)$-dimensional case, so we highlight some of the differences:
    \begin{enumerate}
        \item The initial data in our case models the setting of gravitational collapse, so it extends from $\Gamma$ to $\mathcal{I},$ whereas the definition of \cite{localwellposed} is localized to a neighborhood of $\mathcal{I}.$ Thus, we also prescribe regularity conditions at $\Gamma.$ Moreover, in view of the analysis of \eqref{wave equation phi} near $\Gamma$ in $(2+1)$-dimensions, for the scalar field it is convenient to assume Sobolev regularity at top order as in \eqref{regularity conditions scalar field L2}, which can be propagated by energy estimates, instead of only assuming the pointwise regularity conditions in \eqref{regularity conditions pointwise}.

        \item The initial data in \Cref{asympt ads data def} is prescribed on an outgoing null cone, whereas \cite{localwellposed} prescribes initial data on an ingoing null cone. Therefore, unlike \cite{localwellposed}, we cannot enforce the conditions in \eqref{regularity conditions scalar field L2}-\eqref{regularity conditions pointwise} by simply requiring suitable decay rates for $\phi,$ so we need to impose quantitative conditions on $\mathcal{T}\phi,\ldots,\mathcal{T}^k\phi$. A similar approach is taken by \cite{holzegellinear} in the setting of spacelike initial data, see \cite[Remark~5.1]{holzegellinear}.

        \item The notion of a characteristic data set $\mathfrak{D}$ is over-determined, subject to the compatibility conditions and boundary conditions at the center. Alternatively, one could attempt to prescribe only the function $\phi$ as initial data (referred to as \textit{seed data}), similarly to \cite{localwellposed}. However, since in our setting the initial data is prescribed on an outgoing null cone, \eqref{regularity conditions scalar field L2} imposes nonlinear and non-local constraints on $\phi$, see already \Cref{nonlocal nonlinear remark}, so the initial data conditions are stated more naturally for~$\mathfrak{D}$.
        \item We make assumptions on the asymptotic rates up to order $k\geq2$, whereas \cite{localwellposed} only requires asymptotic rates up to order $2.$ 
    \end{enumerate}
\end{remark}

\begin{remark}
    Equation \eqref{regularity conditions scalar field L2} represents the main quantitative condition in the definition of asymptotically AdS characteristic data sets and motivates our definition of a distance function below. In particular, \eqref{regularity conditions scalar field L2} implies \eqref{regularity conditions pointwise} by Sobolev embedding. Moreover, \eqref{regularity conditions scalar field L2} implies that $\lim_{r\rightarrow\infty}m(r)$ exists and that $\lim_{r\rightarrow\infty}\mathcal{T}m(r)=0,$ which are consistent with the reflective boundary conditions at infinity $m_{\mathcal{I}}=M.$
\end{remark}

The aim for the rest of the section is to define the moduli space of initial data. We define the following quantity measuring the size of asymptotically AdS characteristic data sets, which is independent of the choice of $(U,V)$ coordinates:
\begin{definition}
    We consider a $C^k$ asymptotically AdS characteristic data set $\mathfrak{D}$. We define:
    \begin{equation}\label{norm definition}
        \big\|\mathfrak{D}\big\|_{X([0,\infty))}=\sum_{i=0}^k\big\|\mathcal{T}^i\phi\big\|_{H^{k+1-i}([0,\infty))}.
    \end{equation}
\end{definition}

We note that the space of asymptotically AdS characteristic data sets is not a vector space, since the compatibility conditions in \Cref{asympt ads data def} are nonlinear, so $X([0,\infty))$ is not a norm. In the asymptotically flat case, the solution to this issue is to consider the linear space of seed data consisting of $\phi$ instead (see for example \cite{ChrWCC}). However, as explained in \Cref{nonlocal nonlinear remark}, for asymptotically AdS characteristic data equation \eqref{regularity conditions scalar field L2} imposes nonlinear constraints on $\phi$ already. Nevertheless, we use $X([0,\infty))$ to define a distance function as follows:

\begin{definition}
    For any two $C^k$ asymptotically AdS characteristic data sets $\mathfrak{D}$ and $\mathfrak{D}',$ we define the distance function:
    \begin{equation}\label{distance function definition}
        d_{X([0,\infty))}(\mathfrak{D},\mathfrak{D}')=\sum_{i=0}^k\Big\|\mathcal{T}^i\phi-\big(\mathcal{T}^i\phi\big)'\Big\|_{H^{k+1-i}([0,\infty))}.
    \end{equation}
\end{definition}

We note that any two asymptotically AdS characteristic data sets satisfying $d_{X([0,\infty))}(\mathfrak{D},\mathfrak{D}')=0$ are related by a change of double null coordinates $U\rightarrow\Tilde{U}(U), V\rightarrow\Tilde{V}(V)$. If this holds, we denote  $\mathfrak{D}\sim\mathfrak{D}'$, and we notice that $\sim$ defines an equivalence relation on the space of $C^k$ asymptotically AdS characteristic data sets.

We can finally use the distance function \eqref{distance function definition} to define a topology on the set of initial data. More specifically, we define the moduli space of $C^k$ asymptotically AdS characteristic data as follows:
\begin{definition}\label{moduli space definition}
    We define the moduli space of $C^k$ asymptotically AdS characteristic data as the metric space $\big(\mathfrak{M},d_{X([0,\infty))}\big),$ where $\mathfrak{M}$ is given by:
\[\mathfrak{M}=\big\{\mathfrak{D}:\ \mathfrak{D}\text{ is a } C^k \text{ asymptotically AdS characteristic data set}\big\}/_{\sim}.\]
\end{definition}
%We note that we prove a decomposition of the moduli space $\mathfrak{M}$ in \Cref{global structure section}.

\subsection{Local well-posedness theory}\label{well posedness section}
We state the main local well-posedness result, which applies to asymptotically AdS characteristic data as in \Cref{asympt ads data def}:
\begin{proposition}\label{lwp proposition}
    Let $k\geq2$ be an integer. 
    For any $C^{k}$ asymptotically AdS characteristic data set $\mathfrak{D}=\big(r,\ldots,\partial_U^{k+1}r,\Omega,\ldots,\partial_U^k\Omega,m,\ldots,\partial_U^km,\phi,\ldots,\partial_U^k\phi,\mathcal{T}\phi,\ldots,\mathcal{T}^k\phi\big)$ on the outgoing null cone $C_0^+=\big\{V_{\mathcal{I}}\big\}\times\big[V_{\Gamma},V_{\mathcal{I}}\big)$, there exists $\delta>0$ and a unique solution $\big(r,\Omega,\phi\big)$ of the equations \eqref{wave equation phi}-\eqref{Ray v} with reflective boundary conditions at infinity $m_{\mathcal{I}}=M,$ which is defined in the domain:
    \[\mathcal{Q}=\Big\{(U,V):\ U\in\big[V_{\mathcal{I}},V_{\mathcal{I}}+\delta\big],\ V\in\big[V_{\Gamma}(U),U\big)\Big\},\]
    where we have that the center $\Gamma$ and null infinity $\mathcal{I}$ given by:
    \[\Gamma=\Big\{\big(U,V_{\Gamma}(U)\big):\ U\in\big[V_{\mathcal{I}},V_{\mathcal{I}}+\delta\big]\Big\},\ \mathcal{I}=\Big\{\big(U,U\big):\ U\in\big[V_{\mathcal{I}},V_{\mathcal{I}}+\delta\big]\Big\}\]
    represent timelike curves. Moreover, the corresponding circularly symmetric spacetime $\big(\mathcal{M},g\big)$ is asymptotically AdS as in \Cref{asympt ads def}, and for each $U\in\big[V_{\mathcal{I}},V_{\mathcal{I}}+\delta\big]$ it induces $C^{k}$ asymptotically AdS  characteristic data on $\big\{U\big\}\times\big[V_{\Gamma}(U),U\big]$ as in \Cref{asympt ads data def}.
\end{proposition}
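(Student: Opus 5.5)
The plan follows the scheme of \cite{localwellposed}, adapted to $2+1$ dimensions with data on an outgoing cone. First, I would renormalize the unknowns near $\mathcal{I}$ so that the problem becomes a regular characteristic initial-boundary value problem on a compact coordinate domain. Concretely, I would replace $r$ by $x=1/r$, or equivalently work with $\partial_V r/(1-m+|\Lambda|r^2)$, $\Omega^2/(1-m+|\Lambda|r^2)$ and a renormalized field $\psi=r^{3/2}\phi$ (or $\sqrt{r}\,\phi$ with weights). The choice follows the decay $\partial_V^i\partial_U^j\phi=O(r^{i+j-3/2})$ in \eqref{asympt behavior phi}. In these variables the system \eqref{wave equation phi}--\eqref{du m} becomes a semilinear first-order system along the characteristics. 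Its coefficients are smooth up to $x=0$, and the reflective condition $m|_{\mathcal{I}}=M$ (together with $U=V$ on $\mathcal{I}$) is imposed at the timelike boundary. Near the center $\Gamma$, I would use the standard approach of \cite{BV,Dafermos_extension,kommemi} adapted to $2+1$ dimensions. This means writing the wave equation for $\phi$ as a degenerate equation in $r$ and using averaging operators of the form $\frac1r\int_0^r$, which is the analogue of Christodoulou's $\frac1r\int_0^r$ trick with the $2+1$ weight. Away from $\Gamma$ and $\mathcal{I}$, the problem is a standard Goursat problem.

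Second, I would split the domain. I would take a compact middle region $\{r_1\le r\le r_2\}$, where the classical Picard iteration for the double-null system gives existence in a thin strip $U\in[V_{\mathcal{I}},V_{\mathcal{I}}+\delta]$. I would then treat the center region and the region near $\mathcal{I}$ separately and glue by uniqueness. Near $\mathcal{I}$, I would run a contraction-mapping argument in the weighted norms $H^n(I)$ of \eqref{weighted Sobolev space V}, applied to the quantities $\mathcal{T}^i\phi$. Here I would exploit that the Kodama field $\mathcal{T}$ in \eqref{definition of T} is tangent to $\mathcal{I}$ and commutes with the equations up to lower-order terms. Energy estimates for $\mathcal{T}^i\phi$, using the $r\,dr$ measure and the $(1+r^2)^n$ weights, close with a loss-free Gronwall on the short time interval $\delta$. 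The boundary term at $\mathcal{I}$ vanishes precisely because of the reflective (Dirichlet-type) condition encoded in the decay $O(r^{-3/2})$. Pointwise control then follows from \eqref{regularity conditions scalar field L2} by Sobolev embedding, which yields \eqref{regularity conditions pointwise}. The geometric quantities $r$, $\Omega$ and $m$ are recovered by integrating the transport equations \eqref{dv m}, \eqref{du m} and \eqref{Ray u}. This step also shows that $m|_{\mathcal{I}}$ stays constant equal to $M$.

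Third, I would verify the conclusions: $\Gamma$ and $\mathcal{I}$ are timelike curves, the solution is asymptotically AdS as in \Cref{asympt ads def}, and each slice $\{U\}\times[V_\Gamma(U),U]$ carries $C^k$ asymptotically AdS data. The sign conditions $\partial_Vr>0$ and $\partial_Ur<0$ persist for small $\delta$ by continuity, and the asymptotic rates are propagated from the renormalized equations. Uniqueness follows from the same energy estimates applied to differences.

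The main obstacle I expect is closing the weighted energy estimates at $\mathcal{I}$ for data posed on an \emph{outgoing} cone. Along an outgoing cone one cannot simply prescribe decay of $\phi$, so the transversal derivatives must be controlled through $\mathcal{T}^i\phi$. The commutator terms between $\mathcal{T}$ and the wave operator also involve the growing coefficients $|\Lambda|r^2$, and these must be shown to be lower order in the weighted norms. My first attempt would be to derive the twisted energy identity for the renormalized field $r^{3/2}\phi$ and check that the resulting potential term has a favorable sign, or is at least bounded, in $2+1$ dimensions.
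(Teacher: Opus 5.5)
Near $\mathcal{I}$ your plan matches what the paper does. The paper defers to \cite{localwellposed} and illustrates the method with the approximate conservation law $\partial_U\big(r(\partial_V\psi)^2\big)+\partial_V\big(r(\partial_U\psi)^2\big)=-\partial_U\psi\,\partial_V\psi\cdot Tr$ for solutions of $\square_g\psi=0$. It also uses vanishing flux through $\mathcal{I}$, commutation with $\mathcal{T}$, and a Hardy inequality.

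The genuine gap is your treatment of the center. Christodoulou's averaging trick relies on a $3+1$ feature: in flat space $\partial_u\partial_v(r\phi)=0$, so the transport equation for $\partial_r(r\phi)$ has a source that vanishes to second order at $\Gamma$. Nothing like this holds in $2+1$. Take $\overline{\gamma}\equiv1$, which is the leading-order behavior at $\Gamma$. Set $f=\partial_r\phi$ and $\bar f=\frac{1}{2\sqrt r}\int_0^r f(\tilde r)\,\tilde r^{-1/2}\,d\tilde r$. Then \eqref{wave eq Bondi} gives $nf=\frac{1}{4r}(f-\bar f)$, equivalently $n(\sqrt r f)=-\bar f/(4\sqrt r)$. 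The coefficient $\frac{1}{4r}$ carries no smallness.
\begin{itemize}
\item If you bound $\bar f$ by $\sup|f|$, you only get $|f|\lesssim\sqrt{r_0/r}$ along the ingoing ray from $(u_0,r_0)$.
\item If you instead use $|f-\bar f|\lesssim r\sup|\partial_r f|$, the problem moves to $\partial_r f$, which satisfies an equation of the same form.
\end{itemize}
So the pointwise scheme loses a derivative at every order. This reflects the $L^\infty$ derivative loss of the wave equation in two space dimensions.

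The data class is built around this. At top order only $\sqrt r\,\partial_V^{k-j}\partial_U^j\phi\in C^0$ is assumed, so $k$-th derivatives of $\phi$ may be unbounded at $\Gamma$. Item~1 of \Cref{differences from HS remark} says exactly this: the top-order Sobolev regularity \eqref{regularity conditions scalar field L2} is imposed because, near $\Gamma$ in $2+1$ dimensions, it is what energy estimates can propagate. You should replace the center step by weighted $L^2$ estimates of the type in \Cref{mass gap section}:
\begin{itemize}
\item use the identity \eqref{identity for psi} for $r^{3/2}\partial_r\phi$ and its higher-order commuted analogues;
\item handle the critical term $\frac{3}{8r^2}\overline{\gamma}\,\partial_r\big((r^{3/2}\partial_r\phi)^2\big)$ by integration by parts and the Hardy inequality \eqref{Hardy mass gap};
\item use the boundary conditions at $\Gamma$ throughout.
\end{itemize}

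There are also two smaller corrections near $\mathcal{I}$.
\begin{itemize}
\item No renormalization makes the coefficients smooth up to $x=0$. The massless field in AdS$_3$ is not conformally coupled; for instance $\sqrt r\,\phi$ solves a $1+1$ wave equation in $(t,r_*)$ with potential $\approx\frac34 r_*^{-2}$ at $\mathcal{I}$. So there is no reduction to a regular initial-boundary value problem, and the weighted energy method is essential rather than optional.
\item Hardy/Sobolev embedding of the energies only gives $r|\psi|\lesssim1$, not the rates \eqref{asympt behavior phi}. The paper obtains $r^{3/2}|\psi|+r^{1/2}|\partial_U\psi|+r^{1/2}|\partial_V\psi|\lesssim1$ from a separate transport estimate for $\zeta=\sqrt r\,\partial_U\psi/\partial_Ur$, multiplied by $r^2$ so that the integrating factor is bounded.
\end{itemize}
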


The proof of \Cref{lwp proposition} uses standard methods and follows in detail the steps of \cite{localwellposed}, so we do not repeat it here. We remark that one can also establish further standard properties, such as persistence of regularity and Cauchy stability. To illustrate how the methods of \cite{localwellposed} are adapted to our setting, we prove a priori estimates for $C^2$ solutions (in the sense of \Cref{asympt ads data def}) of the linear wave equation on asymptotically AdS spacetimes in a neighborhood of $\mathcal{I}$. It is clear that using the methods of \cite{localwellposed} and a propagation of regularity argument, one can turn this proof into the local well-posedness result.

\begin{proposition}\label{proposition linear wave on asympt ads}
    Let $\big(\mathcal{M},g\big)$ be a circularly symmetric asymptotically AdS spacetime. We consider a coordinate system $(U,V)$ and the domain:
    \[\mathcal{D}=\Big\{(U,V):\ U\in\big[V_{\mathcal{I}},V_{\mathcal{I}}+\delta\big],\ V\in\big[V_0,U\big),\ \ V_0<V_{\mathcal{I}}\Big\},\]
    such that the asymptotics in \Cref{asympt ads def} hold in $\mathcal{D}$, and similar rates hold for higher order derivatives of $\Omega^2$ and $r.$
    
    Let $\psi:\mathcal{D}\rightarrow\mathbb{R}$ be a $C^2$ solution of the wave equation with reflective boundary conditions:
    \begin{equation}
        \begin{cases}
            \square_{g}\psi=0,\\
            \psi=0,\ \text{ on }\mathcal{I},
        \end{cases}
    \end{equation}
    which satisfies the bound on $\mathcal{D}\cap\big\{U=V_{\mathcal{I}}\big\}$ and $\mathcal{D}\cap\big\{V=V_0\big\}$:
    \begin{equation}
        \sum_{i=0}^2\big\|\mathcal{T}^i\psi\big\|_{H^{3-i}\big(\{V_{\mathcal{I}}\}\times[V_0,V_{\mathcal{I}})\big)}+\sum_{i=0}^2\big\|\mathcal{T}^i\psi\big\|_{H^{3-i}\big([V_{\mathcal{I}},V_{\mathcal{I}}+\delta)\times\{V_0\}\big)}<C_0.
    \end{equation}
    Then, we have that $\psi$ satisfies the following estimates for any $\big(U,V\big)\in\mathcal{D}:$
    \begin{equation}\label{energy apriori estimate lwp}
            \sum_{i=0}^2\big\|\mathcal{T}^i\psi\big\|_{H^{3-i}\big(\{U\}\times[V_0,V]\big)}+\sum_{i=0}^2\big\|\mathcal{T}^i\psi\big\|_{H^{3-i}\big([\max(V,V_{\mathcal{I}}),U]\times\{V\}\big)}<C,
    \end{equation}
    \begin{equation}\label{pointwise apriori estimate lwp}
            r^{3/2}|\psi|+r^{1/2}|\partial_V\psi|+r^{1/2}|\partial_U\psi|+r^{-1/2}|\partial_V^2\psi|+r^{-1/2}|\partial_U^2\psi|<C,
    \end{equation}
    where the constant $C$ depends on $C_0$ and the background $g.$
\end{proposition}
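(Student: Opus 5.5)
We prove the a priori estimates \eqref{energy apriori estimate lwp}--\eqref{pointwise apriori estimate lwp} by energy estimates for the Kodama-commuted wave equation near $\mathcal{I}$, following \cite{localwellposed}. We proceed in three steps: a first-order energy identity, commutation to reach higher order, and Sobolev-type embeddings for the pointwise bounds.

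\textbf{Step 1: first-order energy identity.} In double null coordinates the equation $\square_g\psi=0$ reads
\[2r\partial_U\partial_V\psi=-\partial_Ur\,\partial_V\psi-\partial_Vr\,\partial_U\psi.\]
We use the Kodama vector field $\mathcal{T}$ of \eqref{definition of T}. It is divergence-free, and for BTZ it equals the static field $\partial_t$. We contract the energy-momentum tensor $\mathbb{T}[\psi]$ with $\mathcal{T}$ and integrate $\nabla^\mu(\mathbb{T}_{\mu\nu}\mathcal{T}^\nu)$ over the characteristic rectangle $[V_{\mathcal{I}},U]\times[V_0,V]$, using the volume form $\Omega^2 r\,dUdV$.

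\emph{Deformation tensor.} The error is ${}^{(\mathcal{T})}\pi$. It vanishes for BTZ, and in general it is controlled by $\partial\phi$-type terms of the background. By the rates in \Cref{asympt ads def} this yields an integrable error $O(r^{-1})$ relative to the energy density.

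\emph{Boundary flux.} The flux through $\mathcal{I}$ vanishes under the Dirichlet condition $\psi|_{\mathcal{I}}=0$. Indeed, $\mathcal{T}$ is tangent to $\mathcal{I}$ to leading order, so the flux is proportional to $r\,\mathcal{T}\psi\cdot(\partial_V-\partial_U)\psi$. We check that this tends to zero using the preliminary decay $\psi=O(r^{-3/2})$. Near the corner one can first work with $\psi$ replaced by a suitable renormalization, or use a cutoff and pass to the limit.

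\emph{Coercivity.} The resulting fluxes on the null cones are $\int r(\partial_V\psi)^2(\Omega^{-2}\partial_V r)^{-1}\cdots$, which are comparable to the $H^1$ part of the weighted norm \eqref{weighted Sobolev space V}. Here we use $\partial_Vr\sim r^2$ and the weight $r\,\partial_Vr\,dV$. The zeroth-order term $\int\psi^2 r\,\partial_Vr\,dV$ is recovered by a Hardy inequality in $r$ that starts from $\psi|_{\mathcal{I}}=0$. The unweighted equation by itself gives only the $\partial\psi$ terms, so this Hardy step is needed.

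\textbf{Step 2: higher order by commutation with $\mathcal{T}$.} We commute the equation with $\mathcal{T}$ twice. The commutator $[\square_g,\mathcal{T}]$ is again expressed through the deformation tensor, and the higher-derivative rates assumed for $\Omega^2$ and $r$ make it a lower-order source. We bound this source by Cauchy--Schwarz against the energy at the previous level, then close with Grönwall in $U$ over the short interval of length $\delta$. The boundary condition is preserved because $\mathcal{T}\psi=0$ on $\mathcal{I}$, as $\mathcal{T}$ is tangent to $\mathcal{I}$.

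Next we recover the transversal derivatives, namely the full $H^{3-i}$ norms of $\mathcal{T}^i\psi$ with the $(1+r^2)^j(\partial_V/\partial_Vr)^j$ weights. We solve the wave equation for $\partial_U\partial_V\psi$ and write $\partial_V$, $\partial_U$ in terms of $\mathcal{T}$ and $R_*$. This expresses $R_*$-derivatives, which are radial, elliptically in terms of $\mathcal{T}$-derivatives. It is exactly an elliptic estimate for the radial operator with Dirichlet condition, and it gives \eqref{energy apriori estimate lwp}.

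\textbf{Step 3: pointwise bounds.} We obtain \eqref{pointwise apriori estimate lwp} from \eqref{energy apriori estimate lwp} by one-dimensional weighted Sobolev embedding along each cone. For example,
\[r^3\psi^2\lesssim\int (\text{weights})\bigl(\psi^2+r^2(\partial_V\psi/\partial_Vr)^2\bigr)\,r\,\partial_Vr\,dV,\]
integrating from $\mathcal{I}$, where $\psi=0$. The second-derivative bounds come from one more derivative in the same way, using $\partial_Vr\sim r^2$ to convert weights.

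\textbf{Main obstacle.} The delicate point is the interplay between the boundary term at $\mathcal{I}$ and the weights, in Steps 1 and 2. We must show that the boundary flux vanishes and that the Hardy and elliptic estimates close with the precise weights $(1+r^2)^j$ of \eqref{weighted Sobolev space V}, including near the corner $U=V=V_{\mathcal{I}}$. We plan to treat the corner by approximating with regions $\{r\le R\}$ and letting $R\to\infty$, using that $\psi\in C^2$ makes all truncated boundary terms $o(1)$.
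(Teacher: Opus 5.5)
Your Steps 1--2 are essentially the paper's argument, and they do give \eqref{energy apriori estimate lwp}. The paper uses the multiplier $T=\partial_U+\partial_V$, whose identity $\partial_U\big(r(\partial_V\psi)^2\big)+\partial_V\big(r(\partial_U\psi)^2\big)=-Tr\,\partial_U\psi\,\partial_V\psi$ has error $Tr=O(r^{-1})$; your Kodama multiplier works equally well. The gap is Step 3: weighted Sobolev embedding along cones cannot give \eqref{pointwise apriori estimate lwp}. In terms of $r$, the cone norm is $\sum_j\int(1+r^2)^j(\partial_r^jf)^2\,r\,dr$. Near $\mathcal{I}$ this is equivalent to the unweighted $H^n(ds)$ norm of $e^sf$, with $s=\log r$. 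Embedding therefore yields exactly $r|\psi|+r|\mathcal{T}\psi|+r|\mathcal{T}^2\psi|\lesssim1$, $|\partial_U\psi|+|\partial_V\psi|\lesssim1$ and $|\partial^2\psi|\lesssim r$, each off by a factor $r^{1/2}$ from the claim. Concretely, $\psi=r^{-1}(\log r)^{-1}$ vanishes at $\mathcal{I}$ and has finite norms of every order near $\mathcal{I}$, yet $r^{3/2}\psi\to\infty$. In your displayed inequality the unspecified weights would need an extra factor $r$, since bounding $r^3\psi^2$ requires $\int\tilde r^4(\partial_r\psi)^2d\tilde r$, and \eqref{energy apriori estimate lwp} does not control that. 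The ingoing-cone norms have the same scaling, so they do not help.

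The missing ingredient is a transport estimate in $V$ along each outgoing cone, using the equation in the interior to exploit the AdS damping of $\partial_U\psi$. Set $\zeta=\sqrt r\,\partial_U\psi/\partial_Ur$. The wave equation gives $\partial_V\zeta=-\frac{\partial_U\partial_Vr}{\partial_Ur}\zeta-\frac{\partial_V\psi}{2\sqrt r}$. Here $-\partial_U\partial_Vr/\partial_Ur=-2\partial_Vr/r$ up to a term integrable in $V$, so $\zeta$ is damped like $r^{-2}$. Simply setting $A=r^2\zeta$ is not enough: the source $-\frac12r^{3/2}\partial_V\psi$ is not integrable, because $dV\sim dr/r^2$ and only $|\partial_V\psi|\lesssim1$ is known. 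You must first write $\partial_V\psi=T\psi-\partial_U\psi$. Then $A=r^2\zeta$ satisfies $\partial_VA=cA-\frac12r^{3/2}T\psi$ with $c=\frac{\partial_Ur}{2r}+(\text{integrable})$, so the integrating factor is bounded. The source is now integrable, since $|T\psi|\le2|\mathcal{T}\psi|+O(r^{-3})(|\partial_U\psi|+|\partial_V\psi|)\lesssim r^{-1}$ by the preliminary bounds. The remaining bounds then follow in order:
\begin{itemize}
\item integrating from $V=V_0$ gives $r^{1/2}|\partial_U\psi|\le C$;
\item $r|\mathcal{T}\psi|\le C$ then gives $r^{1/2}|\partial_V\psi|\le C$;
\item integrating from $\mathcal{I}$ gives $r^{3/2}|\psi|\le C$;
\item the wave equation gives $r^{-1/2}|\partial_U\partial_V\psi|\le C$, and together with $|\partial_U\mathcal{T}\psi|+|\partial_V\mathcal{T}\psi|\le C$ this yields the second-derivative bounds.
\end{itemize}
This $\sqrt r$ gain, not the corner, is the real obstacle. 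Note also that invoking $\psi=O(r^{-3/2})$ in Step 1 to kill the flux at $\mathcal{I}$ is circular unless you impose it as a qualitative assumption on $\psi$.
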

\begin{proof}
    We follow the same steps as \cite{holzegellinear} and \cite[Proposition~6.4]{localwellposed}. The starting point is the following approximate conservation law derived from \eqref{wave equation phi}:
    \[\partial_U\Big(r(\partial_V\psi)^2\Big)+\partial_V\Big(r(\partial_U\psi)^2\Big)=-\partial_U\psi\partial_V\psi\cdot Tr,\]
    where $T=\partial_U+\partial_V.$ We integrate this equation in the domain $\mathcal{D}\cap\big\{U'\leq U,\ V'\leq V\big\}$ using the reflective boundary conditions, which imply that the flux terms on $\mathcal{I}$ vanish. The decay of $Tr$ in \Cref{asympt ads def} implies that the terms on the RHS are lower order, so we obtain the estimate:
    \begin{equation}\label{preliminary apriori estimate lwp}
        \int_{V_0}^{V}r\big(\partial_V\psi\big)^2\big(U,V'\big)dV'+\int_{\max(V,V_{\mathcal{I}})}^{U}r\big(\partial_U\psi\big)^2\big(U',V\big)dU'<C.
    \end{equation}
    Next, we commute \eqref{wave equation phi} by $\mathcal{T}$ and $\mathcal{T}^2$. Repeating the same proof, we obtain that \eqref{preliminary apriori estimate lwp} also holds for $\mathcal{T}\psi$ and $\mathcal{T}^2\psi.$

    The estimate \eqref{preliminary apriori estimate lwp} for $\psi,\mathcal{T}\psi,\mathcal{T}^2\psi,$ and the reflective boundary conditions also imply the pointwise estimate $r|\psi|+r|\mathcal{T}\psi|+r|\mathcal{T}^2\psi|\leq C.$ Furthermore, we have the following Hardy inequality for any function $f$ such that the RHS of \eqref{Hardy ineq near I} is finite:
    \begin{equation}\label{Hardy ineq near I}
        r^2f^2+\int_r^{\infty}\Tilde{r}f^2d\Tilde{r}\leq\lim_{\Tilde{r}\rightarrow\infty}\Tilde{r}^2f^2+\int_r^{\infty}\Tilde{r}^3\big(\partial_rf\big)^2d\Tilde{r}.
    \end{equation}
    Using \eqref{Hardy ineq near I} for $\psi,\mathcal{T}\psi,$ and $\mathcal{T}^2\psi$, we control the $H^1$ norms in \eqref{energy apriori estimate lwp}. Using \eqref{wave equation phi} as well, the $H^1$ bounds further imply that \eqref{energy apriori estimate lwp} holds. Moreover, we then get the pointwise estimates $|\partial_U\psi|+|\partial_V\psi|+|\partial_U\mathcal{T}\psi|+|\partial_V\mathcal{T}\psi|\leq C$ and $|\partial_U^2\psi|+|\partial_U\partial_V\psi|+|\partial_V^2\psi|\leq Cr.$

    For the rest of the proof we show the improved pointwise estimate \eqref{pointwise apriori estimate lwp}. We define the quantity:
    \[\zeta=\sqrt{r}\frac{\partial_U\psi}{\partial_Ur}.\]
    We compute that:
    \[\partial_V\zeta=-\frac{1}{2\sqrt{r}}\partial_V\psi-2|\Lambda|r\cdot\frac{\partial_Vr}{|\Lambda|r^2-m+1}\cdot\zeta=-\frac{T\psi}{2\sqrt{r}}+\bigg(\frac{\partial_Ur}{2r^2}-\frac{2|\Lambda|\partial_Vr}{|\Lambda|r^2-m+1}\bigg)r\zeta.\]
    We define $A=r^n\zeta,$ which satisfies the equation:
    \[\partial_VA=|\Lambda|rA\bigg(\frac{n\partial_Vr}{|\Lambda|r^2}+\frac{\partial_Ur}{2|\Lambda|r^2}-\frac{2\partial_Vr}{|\Lambda|r^2-m+1}\bigg)-r^n\frac{T\psi}{2\sqrt{r}}.\]
    We take $n=2,$ which results in having a bounded integrating factor. The remaining term on the RHS is lower order and can be controlled using the previous pointwise estimates. Therefore, $A$ is bounded, so we obtain the improved estimate $r^{1/2}|\partial_U\psi|\leq C.$ The pointwise bound for $\mathcal{T}\psi$ also implies that $r^{1/2}|\partial_V\psi|\leq C$ and $r^{3/2}|\psi|\leq C.$ Using the wave equation \eqref{wave equation phi}, we also have that $r^{-1/2}|\partial_U\partial_V\psi|\leq C$. The bound $|\partial_V\mathcal{T}\psi|+|\partial_U\mathcal{T}\psi|\leq C$ then implies $|\partial_U^2\psi|+|\partial_V^2\psi|\leq Cr^{1/2}.$
\end{proof}

\subsection{Global structure of spacetimes}\label{global structure section}

In this section, we state some general properties of circularly symmetric solutions of \eqref{Einstein equations}, including: monotonicity properties, a description of first singularities, the existence of maximal globally hyperbolic developments, properties of black hole spacetimes, a description of the general structure of the boundary of spacetimes arising in gravitational collapse, and the decomposition of the moduli space of initial data. The results of this section are analogous to the $(3+1)$-dimensional setting, see for example \cite{Dafermos_extension,Dafermos_trapped,localwellposed,kommemi}. Adapting the proofs to the $(2+1)$-dimensional setting is straightforward, so we simply record the results needed for the rest of the paper.

\paragraph{Monotonicity properties.}
We first state some important monotonicity properties, which follow directly from \eqref{wave equation r}, \eqref{du m}, \eqref{dv m}, and the boundary conditions at $\Gamma:$
\begin{lemma}\label{monotonicity lemma}
In $\mathcal{Q}$ we have that:
\begin{equation}\label{monotonicity relations}
    \partial_ur<0,\ \partial_u\partial_vr\leq0,\ m\geq0,\ \partial_vm\geq0.
\end{equation}
Moreover, $\partial_vr$ and $\partial_um$ have opposite signs. Additionally, $\partial_u\partial_vr<0$ on $\mathcal{Q}\backslash\Gamma$. Finally, in the causal past of $\Gamma$ we have $\partial_vr>0.$
\end{lemma}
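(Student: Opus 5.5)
We prove the four relations in the order that lets each one feed the next. The conditions on $\Gamma$ give the starting point. At the center $r=0$ and $m=0$, and by regularity $\partial_vr=-\partial_ur>0$ there. The data on $C_0^+$ satisfy $\partial_Ur<0$ and $\partial_Vr>0$ (no trapped or anti-trapped surfaces).

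\emph{Sign of $\partial_ur$.} Write the Raychaudhuri equation \eqref{Ray u} as
\[
\partial_u\big(\Omega^{-2}\partial_ur\big)=-\Omega^{-2}r(\partial_u\phi)^2\leq0.
\]
Hence $\Omega^{-2}\partial_ur$ is nonincreasing in $u$. Since it is negative on $C_0^+$, integrating from $C_0^+$ along ingoing null lines shows that $\partial_ur<0$ everywhere in $\mathcal{Q}$.

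\emph{Sign of $\partial_u\partial_vr$.} Equation \eqref{wave equation r} gives $2\partial_u\partial_vr=\Lambda\Omega^2r$. Since $\Lambda<0$ and $r\geq0$, we get $\partial_u\partial_vr\leq0$, with strict inequality wherever $r>0$, that is, on $\mathcal{Q}\backslash\Gamma$.

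\emph{Signs of $\partial_vm$ and $\partial_um$.} Equation \eqref{dv m} gives $\partial_vm=-4\Omega^{-2}\partial_ur\,r(\partial_v\phi)^2\geq0$, because $\partial_ur<0$. Similarly, \eqref{du m} shows that $\partial_um$ is $-\partial_vr$ times a nonnegative factor, so $\partial_um$ and $\partial_vr$ have opposite signs (weakly). Now fix $u$ and integrate $\partial_vm\geq0$ in $v$ from the center point $(u,v_\Gamma(u))$, where $m=0$. This gives $m\geq0$ on every outgoing cone that reaches $\Gamma$. In our gravitational-collapse setting, every point of $\mathcal{Q}$ lies on such a cone, because the outgoing null line through any point can be followed to the past until it meets $\Gamma$. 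The one subtlety is an outgoing cone that starts at the first singularity $b_\Gamma$ rather than at a regular center point. For those points we instead integrate $\partial_um$ from $C_0^+$, or use continuity of $m$ from earlier cones.

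\emph{Positivity of $\partial_vr$ in $J^-(\Gamma)$.} Take a point $p$ in the causal past of $\Gamma$. The outgoing null ray from $p$ reaches a point $q\in\Gamma$, where $\partial_vr>0$. Along the ingoing direction $\partial_vr$ is nonincreasing, since $\partial_u\partial_vr\leq0$. Therefore $\partial_vr(p)\geq\partial_vr(q')$ for the point $q'$ on the same outgoing line, to the future along $u$, where that line hits $\Gamma$, and $\partial_vr(q')>0$.

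\emph{Main obstacle.} The key care is the ordering of the arguments: $\partial_ur<0$ must be established first, since $m\geq0$ and the sign of $\partial_vm$ rest on it. The other delicate point is justifying that the center boundary conditions can be used as initial values for the integrations along the cones.
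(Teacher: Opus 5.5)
\textbf{Gap in the step $\partial_ur<0$.} Your ordering is right, and the other steps are essentially what the paper intends. (The paper only says the lemma follows from \eqref{wave equation r}, \eqref{du m}, \eqref{dv m} and the boundary conditions at $\Gamma$.) The problem is the foundational step. You integrate \eqref{Ray u} forward in $u$ starting from $C_0^+$. In the asymptotically AdS setting null infinity is timelike, so $\mathcal{Q}$ contains points whose past-directed ingoing ray ends on $\mathcal{I}$, not on $C_0^+$; examples are the points with $V_{\mathcal{I}}\le V<U$ in the domain of \Cref{lwp proposition}. For these points the monotonicity of $\Omega^{-2}\partial_ur$ has no initial value, so your claim that $\partial_ur<0$ ``everywhere in $\mathcal{Q}$'' is not justified. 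Your argument does work for the finite-cone developments of Sections 3--5, which are covered by ingoing rays from $C_0^+$, but not globally. To keep your route you would need extra input at $\mathcal{I}$, for instance $\partial_Ur<0$ near $\mathcal{I}$ from \Cref{asympt ads def}.

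\textbf{Simpler fix.} The cleaner repair, and the one the paper's citation of \eqref{wave equation r} points to, is to integrate in the other null direction. You have $\partial_v(\partial_ur)=\tfrac{\Lambda}{2}\Omega^2r\leq0$. At the center, $m=0$ gives $4\Omega^{-2}\partial_ur\,\partial_vr=-1$, so $\partial_ur<0$ on $\Gamma$. Every point of $\mathcal{Q}$ lies on an outgoing segment $\{u\}\times[v_\Gamma(u),v]$ issuing from a regular center point; you use exactly this fact for $m\geq0$. Hence $\partial_ur(u,v)\leq\partial_ur(u,v_\Gamma(u))<0$ everywhere. This uses only the center boundary conditions and nothing from the data or from $\mathcal{I}$.

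\textbf{Rays mixed up in the last step.} The future-directed outgoing ray from $p$ moves away from $\Gamma$. The past outgoing ray to $\Gamma$ is useless, because $\partial_vr$ is monotone in $u$, not in $v$. What you need is the future-directed ingoing ray $\{v=v(p),\ u\geq u(p)\}$, which meets $\Gamma$ at some $q'$ precisely because $p\in J^-(\Gamma)$. Then $\partial_u\partial_vr\leq0$ gives $\partial_vr(p)\geq\partial_vr(q')>0$. Your final sentence is this argument with ``outgoing'' written where ``ingoing'' is meant.

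\textbf{Vacuous subtlety.} The case of outgoing cones starting at $b_\Gamma$ does not arise. That cone is $\mathcal{B}_0\subset\partial\mathcal{Q}$, and $\mathcal{Q}\subset\{u<u_{b_\Gamma}\}$. Your proposed fallback would not work anyway: integrating $\partial_um$ from $C_0^+$ gives no lower bound where $\partial_vr>0$, since there $\partial_um\leq0$.
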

Depending on the behavior of $r,$ we divide the spacetime $\mathcal{Q}$ into the following regions:
\begin{definition}
    We define the regular, trapped and marginally trapped regions of $\mathcal{Q}$:
    \begin{align*}
        \mathcal{R}&=\Big\{p\in\mathcal{Q}:\ \partial_vr(p)>0\Big\},\\
        \mathcal{T}&=\Big\{p\in\mathcal{Q}:\ \partial_vr(p)<0\Big\},\\
        \mathcal{A}&=\Big\{p\in\mathcal{Q}:\ \partial_vr(p)=0\Big\}.
    \end{align*}
\end{definition}
We record some simple properties of the trapped region, which follow directly from \eqref{Ray v},\eqref{Hawking mass double null}, and the monotonicity properties in \eqref{monotonicity relations}:
\begin{lemma}\label{trapped region lemma}
    We have an equivalent characterization of the trapped region in terms of the Hawking mass:
    \begin{equation}\label{trapped using mass}
        p\in\mathcal{T} \Longleftrightarrow m(p)>1-\Lambda r^2(p).
    \end{equation}
    Moreover, the trapped region is a future set: if $p\in\mathcal{T}$, then $J^+(p)\cap\mathcal{Q}\subset\mathcal{T}.$ Similarly, if $p\in\mathcal{T}\cup\mathcal{A}$, then $J^+(p)\cap\mathcal{Q}\subset\mathcal{T}\cup\mathcal{A}$ and $J^+(p)\cap\mathcal{Q}\cap\{u>u(p)\}\subset\mathcal{T}.$
\end{lemma}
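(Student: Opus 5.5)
The equivalence \eqref{trapped using mass} should follow algebraically from the double null form of the Hawking mass, and the future-set property from integrating the two monotonicity statements along null segments.

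\textbf{Step 1: the mass characterization.} Rewrite \eqref{Hawking mass double null} as
\[
m-1+\Lambda r^2=4\Omega^{-2}\partial_ur\,\partial_vr .
\]
By \Cref{monotonicity lemma} we have $\partial_ur<0$ on $\mathcal{Q}$, and $\Omega^2>0$. Hence the right-hand side has the sign of $-\partial_vr$. In particular $\partial_vr(p)<0$ holds if and only if $m(p)>1-\Lambda r^2(p)$, which is \eqref{trapped using mass}. The same identity gives $p\in\mathcal{A}\iff m(p)=1-\Lambda r^2(p)$.

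\textbf{Step 2: monotonicity along each null direction.}
\begin{itemize}
\item Outgoing direction: by the Raychaudhuri equation \eqref{Ray v}, $\partial_v(\Omega^{-2}\partial_vr)\le 0$. So along any outgoing null segment $\{u=\mathrm{const}\}$ in $\mathcal{Q}$ the quantity $\Omega^{-2}\partial_vr$ is nonincreasing. Since $\Omega^2>0$, the sign of $\partial_vr$ can only pass from positive to zero or negative. Thus $\partial_vr\le0$ and $\partial_vr<0$ are each preserved to the future in $v$.
\item Ingoing direction: by \eqref{wave equation r} and \Cref{monotonicity lemma}, $\partial_u\partial_vr\le 0$, and strictly so off $\Gamma$. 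So $\partial_vr$ is nonincreasing in $u$. Moreover, once $\partial_vr\le 0$ at $(u_1,v)$, we get $\partial_vr(u,v)<0$ for all $u>u_1$ on that ingoing segment.

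Points of $\Gamma$ cannot enter this argument. Near $\Gamma$ we have $r\to0$, while $\partial_vr\le0$ together with $\partial_ur<0$ forces $r$ to stay bounded below by $r(p)>0$ on $J^+(p)$, since $r$ is nonincreasing along both future null directions.
\end{itemize}

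\textbf{Step 3: joining $p$ to a point of $J^+(p)$.} Given $q\in J^+(p)\cap\mathcal{Q}$, we have $u(q)\ge u(p)$ and $v(q)\ge v(p)$. Connect $p$ to $q$ by two null segments, first ingoing then outgoing, or the reverse. Applying Step 2 on each piece gives:
\begin{itemize}
\item $p\in\mathcal{T}\Rightarrow q\in\mathcal{T}$;
\item $p\in\mathcal{T}\cup\mathcal{A}\Rightarrow q\in\mathcal{T}\cup\mathcal{A}$;
\item strict trapping if $u(q)>u(p)$, because the ingoing piece then has positive length and produces strict negativity, which the outgoing piece preserves.
\end{itemize}

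\textbf{Main obstacle.} The only delicate point is making sure the broken null path from $p$ to $q$ lies inside $\mathcal{Q}$. I would argue that the causal rectangle $[u(p),u(q)]\times[v(p),v(q)]$ equals $J^+(p)\cap J^-(q)$. This set lies in $\mathcal{Q}$ because $\mathcal{Q}$ is the quotient of a globally hyperbolic development and its boundary components ($\Gamma$, $\mathcal{I}$, and the singular boundary) cannot separate two causally related points. If $\Gamma$ were met, the argument above would yield a contradiction with $r>0$. Once this is in place, everything else is the direct sign bookkeeping of Steps 1 and 2.
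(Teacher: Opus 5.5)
Your route is the paper's: there the lemma is recorded as following directly from \eqref{Ray v}, \eqref{Hawking mass double null}, and \eqref{monotonicity relations}, and your Steps 1--3 are exactly that sign bookkeeping. Step 1 and the two propagation statements in Step 2 are correct. The one faulty piece is how you keep the argument away from $\Gamma$.

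You claim that $\partial_vr\le0$ and $\partial_ur<0$ force $r\ge r(p)$ on $J^+(p)$. But $r$ nonincreasing along both future null directions gives the opposite bound, $r\le r(p)$; indeed $r\to0$ towards $\mathcal{B}$. So this says nothing about avoiding $r=0$, and the last sentence of your ``main obstacle'' paragraph rests on the same error. The accompanying claim that $\Gamma$ cannot separate causally related points is also false in general. With the ambient causality of the Penrose diagram, if the ingoing ray from $p$ reaches $\Gamma$, the corner $(u(q),v(p))$ of your rectangle can lie beyond $\Gamma$ even though $q\in\mathcal{Q}$.

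The repair uses the last clause of \Cref{monotonicity lemma}: $\partial_vr>0$ on $J^-(\Gamma)$, in particular on $\Gamma$. Hence a point $p\in\mathcal{T}\cup\mathcal{A}$ is not in $J^-(\Gamma)$, so $J^+(p)\cap\Gamma=\emptyset$. It follows that the ingoing ray from $p$ cannot exit through $\Gamma$, that $r>0$ on $J^+(p)\cap\mathcal{Q}$, and that $\partial_u\partial_vr=\frac{\Lambda}{2}\Omega^2r<0$ there, which is exactly what your strict statement needs.

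Alternatively, take the outgoing-then-ingoing path. Its ingoing piece $\{v=v(q)\}$ ends at $q$, which lies to the right of $\Gamma$, so it never meets $\Gamma$. Along that piece the correct lower bound is $r\ge r(q)$, the future endpoint. Moreover $r(q)>0$, because the weak propagation gives $\partial_vr(q)\le0$, whereas $\partial_vr>0$ on $\Gamma$. With either fix, the rest of your argument goes through as written.
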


\paragraph{First singularities.} We state the description of first singularities away from the center of \cite[Theorem 1.8]{kommemi} (see also \cite[Proposition A.1]{localwellposed}), adapted to our setting:

\begin{proposition}\label{extension principle away from center}
    Let $p\in\overline{\mathcal{Q}}\backslash\mathcal{Q}$ be a first singularity away from the center, i.e. $p\notin\overline{\Gamma}\backslash\Gamma.$ For any $q\in\big(I^-(p)\cap\mathcal{Q}\big)\backslash\{p\}$ such that $\mathcal{D}=\big(J^+(q)\cap J^-(p)\big)\backslash\{p\}\subset\mathcal{Q}$ we have that $\inf_{r\in\mathcal{D}}=0.$
\end{proposition}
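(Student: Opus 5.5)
We prove \Cref{extension principle away from center} by contradiction, adapting the argument of \cite[Theorem~1.8]{kommemi} (cf.\ \cite{Dafermos_extension}). Suppose $p\notin\overline{\Gamma}\backslash\Gamma$ is a first singularity, $q$ is as in the statement, and $\inf_{\mathcal{D}}r=r_0>0$. Choose double null coordinates with $p=(u_p,v_p)$ and $q=(u_q,v_q)$. Then $\mathcal{D}$ is the characteristic rectangle $[u_q,u_p]\times[v_q,v_p]\backslash\{p\}$. The goal is to show that all quantities entering the local well-posedness norm stay bounded on $\mathcal{D}$. The standard characteristic local existence theorem in compact regions, on a rectangle of uniform size, then extends the solution to a neighborhood of $p$, contradicting the assumption that $p\in\overline{\mathcal{Q}}\backslash\mathcal{Q}$.

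\textbf{Step 1: zeroth-order bounds for $r$.} Since $\overline{J^-(p)\cap J^+(q)}$ avoids $\Gamma$ and $\mathcal{I}$, $r$ is bounded above on $\mathcal{D}$ by its values on the two initial segments. By \Cref{monotonicity lemma}, $\partial_ur<0$, and $r$ is decreasing in $u$, so $r\le \sup_{\{u=u_q\}} r<\infty$ there. Together with the assumption we get $0<r_0\le r\le R$ on $\mathcal{D}$.

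\textbf{Step 2: first-order geometric quantities.} By \Cref{monotonicity lemma}, $\partial_u\partial_vr\le0$, so $\partial_vr$ is bounded above by its initial values on $\{u=u_q\}$. Similarly $\partial_u r<0$ is bounded below by its values on $\{v=v_q\}$, using \eqref{wave equation r} integrated in $v$ once $\Omega^2$ is controlled. The mass is handled by monotonicity: $\partial_vm\ge0$, and $\partial_um$ has sign opposite to $\partial_vr$. Splitting according to $\mathcal{R},\mathcal{A},\mathcal{T}$ (where $\mathcal{T}$ is a future set by \Cref{trapped region lemma}), we get $m$ bounded above and below on $\mathcal{D}$. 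In particular $|\partial_ur\partial_vr|\Omega^{-2}=\frac14|m-1+\Lambda r^2|$ is bounded.

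\textbf{Step 3: the scalar field and the lapse (main step).} Rewrite \eqref{wave equation phi} as
\[
\partial_u(r\partial_v\phi)=-\partial_vr\,\partial_u\phi,\qquad \partial_v(r\partial_u\phi)=-\partial_ur\,\partial_v\phi.
\]
Since $r\ge r_0$, the quantities $\partial_u\phi$ and $\partial_v\phi$ obey a linear system whose coefficients $\partial_vr/r$ and $\partial_ur/r$ we must bound. This uses \eqref{wave equation r} in the form $\partial_u\partial_v r=\frac{\Lambda}{2}\Omega^2 r$, together with
\[
2\partial_u\partial_v\log\Omega^2=-2\partial_u\phi\partial_v\phi+\Lambda\Omega^2 .
\]
I expect the coupled control of $\log\Omega^2$, $\partial_u r$, $\partial_v r$, $\partial\phi$ to be the main obstacle, since the equations are nonlinear and there is no smallness. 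The plan is a Grönwall/bootstrap argument on the compact rectangle. First, the Raychaudhuri equations \eqref{Ray u}--\eqref{Ray v} give monotonicity of $\Omega^{-2}\partial_ur$ in $u$ and of $\Omega^{-2}\partial_vr$ in $v$. Combined with the boundedness of $m$, this bounds $\Omega^2$ from above and below: either via $\Omega^2=4\partial_ur\partial_vr/(m-1+\Lambda r^2)$ away from $\mathcal{A}$, or by integrating $\int r(\partial_u\phi)^2\,du$ along null segments, these being controlled by $\partial_u m$ and the bounds on $m$ and $r$. With $\Omega^2$ and the coefficients bounded, the linear transport system gives $\sup|\partial\phi|$ by Grönwall over the finite rectangle. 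This then yields bounds on $\partial_u\partial_v\log\Omega^2$, hence on $\partial\log\Omega^2$ along null lines.

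\textbf{Step 4: higher derivatives and conclusion.} Commuting the system with $\partial_u,\partial_v$ gives linear equations for the higher derivatives, with coefficients already bounded. Inductively, all $C^k$ norms of $(r,\Omega,\phi)$ are bounded on $\mathcal{D}$, and the solution extends continuously to the null segments ending at $p$. Local existence with a uniform existence time then extends the solution beyond $p$, the desired contradiction. Hence $\inf_{\mathcal{D}}r=0$.
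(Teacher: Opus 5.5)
Your overall strategy (contradiction, uniform bounds on the compact characteristic rectangle, then extension past $p$ by local existence) is the one the paper relies on by deferring to \cite[Theorem~1.8]{kommemi}. However, Steps 2--3 as written do not close.

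First, monotonicity does not bound $m$ from above on $\mathcal{D}$. In $\mathcal{T}$ one has both $\partial_vm\ge0$ and $\partial_um\ge0$, since $\partial_vr<0$ there. So $m$ is nondecreasing towards $p$, and monotonicity only gives lower bounds in $\mathcal{T}\cap\mathcal{D}$. An upper bound is only available a posteriori, e.g.\ from $m-1+\Lambda r^2=4\Omega^{-2}\partial_ur\partial_vr$ once $\Omega^2$ is bounded below, which you only obtain at the very end.

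Second, even granting bounded $m$, neither of your mechanisms controls $\Omega^2$ near $\mathcal{A}$. The identity $\Omega^2=4\partial_ur\partial_vr/(m-1+\Lambda r^2)$ degenerates to $0/0$ there. The relation $\partial_um=-4\Omega^{-2}\partial_vr\cdot r(\partial_u\phi)^2$ carries the vanishing factor $\partial_vr$ and the unknown weight $\Omega^{-2}$, so it does not bound $\int r(\partial_u\phi)^2\,du$. Meanwhile the upper bound on $|\partial_ur|$ in Step 2 is postponed until ``$\Omega^2$ is controlled'', so the argument is circular as stated.

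The missing step is short. It combines the two ingredients you already list with each other rather than with $m$:
\begin{itemize}
\item By \eqref{Ray u}, $\Omega^2/|\partial_ur|$ is nonincreasing in $u$. Hence $\Omega^2\le K|\partial_ur|$ on $\mathcal{D}$, with $K=\sup_{v\in[v_q,v_p]}\big(\Omega^2/|\partial_ur|\big)(u_q,v)<\infty$.
\item Inserting this into \eqref{wave equation r} gives $\partial_v|\partial_ur|=\frac{|\Lambda|}{2}\Omega^2r\le\frac{|\Lambda|}{2}KR\,|\partial_ur|$. Gr\"onwall in $v$ from the compact segment $[u_q,u_p]\times\{v_q\}$ bounds $|\partial_ur|$, hence $\Omega^2$.
\item Integrating \eqref{wave equation r} in $u$ then bounds $|\partial_vr|$.
\end{itemize}
This uses neither $m$ nor $r\ge r_0$, and is presumably the simplification ``due to the structure of \eqref{wave equation r}'' that the paper mentions. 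The assumption $r\ge r_0$ is needed only in the scalar field transport system. The lower bound on $\Omega^2$ can only come afterwards, by integrating \eqref{wave equation Omega} once $\partial\phi$ is bounded.

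Finally, your displayed transport equations are the $(3+1)$-dimensional ones. In $2+1$ dimensions \eqref{wave equation phi} gives $\partial_u(\sqrt r\,\partial_v\phi)=-\frac{\partial_vr}{2\sqrt r}\partial_u\phi$ and $\partial_v(\sqrt r\,\partial_u\phi)=-\frac{\partial_ur}{2\sqrt r}\partial_v\phi$. This is harmless here since $r\in[r_0,R]$, but it should be corrected.
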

The proof of this result follows the same steps as \cite[Theorem 1.8]{kommemi}, but is greatly simplified in our case due to the structure of equation \eqref{wave equation r}. In particular, \Cref{extension principle away from center} also implies that every first singularity away from the center must be preceded by trapped surfaces, which is analogous to the result of \cite{Dafermos_extension}. As a consequence, every first singularity in the regular region must occur at the center:
\begin{corollary}\label{first singularities at the center corollary}
    Let $p\in\overline{\mathcal{Q}}\backslash\mathcal{Q}$ be a first singularity such that $J^-(p)\backslash\{p\}\subset\mathcal{R}.$ Then $p\in\overline{\Gamma}\backslash\Gamma.$
\end{corollary}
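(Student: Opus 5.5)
Since the corollary is a direct consequence of \Cref{extension principle away from center}, the plan is to argue by contradiction using that proposition. Let $p$ be a first singularity with $J^-(p)\backslash\{p\}\subset\mathcal{R}$, and suppose $p\notin\overline{\Gamma}\backslash\Gamma$. Then $p$ is a first singularity away from the center, so \Cref{extension principle away from center} applies: for every $q\in\big(I^-(p)\cap\mathcal{Q}\big)\backslash\{p\}$ with $\mathcal{D}_q=\big(J^+(q)\cap J^-(p)\big)\backslash\{p\}\subset\mathcal{Q}$ we get $\inf_{\mathcal{D}_q}r=0$. Because $p$ is a first singularity, we may choose such a $q$ arbitrarily close to $p$ in the past. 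The goal is to show that the vanishing of $r$ on $\mathcal{D}_q$ is incompatible with $\mathcal{D}_q\subset\mathcal{R}$ unless $p$ lies on the closure of the center.

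The first key step is to bound $r$ from below on $\mathcal{D}_q$ via monotonicity. In $\mathcal{R}$ we have $\partial_vr>0$, and by \Cref{monotonicity lemma} also $\partial_ur<0$. Write $p=(u_p,v_p)$, $q=(u_q,v_q)$ with $u_q<u_p$ and $v_q<v_p$. For any $(u,v)\in\mathcal{D}_q$, follow the outgoing segment from $(u,v_q)$ to $(u,v)$, which stays in $J^-(p)\backslash\{p\}\subset\mathcal{R}$. Along it $r$ increases, so $r(u,v)\geq r(u,v_q)$. Along the ingoing segment $\{v=v_q,\ u_q\leq u\leq u_p\}$, $r$ is decreasing in $u$, so $r(u,v_q)\geq r(u_p,v_q)$.

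The second step is to check that the point $(u_p,v_q)$ lies in $\mathcal{Q}$ and is not on $\Gamma$. This point lies in $J^-(p)\backslash\{p\}$ and is therefore a regular point. If $r(u_p,v_q)>0$, then $\inf_{\mathcal{D}_q}r\geq r(u_p,v_q)>0$, which contradicts \Cref{extension principle away from center}. We are thus reduced to the case $r(u_p,v_q)=0$, meaning $(u_p,v_q)\in\Gamma$. Since $r\geq0$ and $\partial_vr>0$ in $\mathcal{R}$, the outgoing ray $u=u_p$ then runs from the center toward $p$. If this happens for $q$ arbitrarily close to $p$, then the points $(u_p,v_q)\in\Gamma$ accumulate at $p$, so $p\in\overline{\Gamma}$. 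As $p\notin\mathcal{Q}$, this gives $p\in\overline{\Gamma}\backslash\Gamma$.

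The main obstacle is this boundary case: making sure the segment $\{u=u_p,\ v_q\leq v<v_p\}$, or a slight shift of it to $u=u_p-\epsilon$, is genuinely contained in $\mathcal{Q}$. Near $\Gamma$ one must also exclude the possibility that the domain $\mathcal{D}_q$ is cut off by the center. I would handle this by taking $q$ on the ingoing cone $\{v=v_q\}$ through $p$'s past, so that any portion of $\{u\leq u_p,\ v\geq v_q\}$ missing from $\mathcal{Q}$ must be bounded by $\Gamma$. In that case either the contradiction goes through or $p$ is a limit point of $\Gamma$, and in both cases the conclusion $p\in\overline{\Gamma}\backslash\Gamma$ holds.
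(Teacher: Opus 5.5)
Your argument is correct and is the same as the paper's: the Proposition forces $\inf r=0$ on the diamond $\mathcal{D}_q$, while $\partial_vr>0$ and $\partial_ur<0$ in $\mathcal{R}$ give $r\geq r(u_p,v_q)>0$ there. In other words, a first singularity away from the center must be preceded by (marginally) trapped surfaces. The boundary case you worry about disappears once you note that $p\notin\mathcal{Q}\supset\Gamma$ together with $p\notin\overline{\Gamma}\backslash\Gamma$ gives $p\notin\overline{\Gamma}$, so for $q$ close enough to $p$ the whole diamond avoids $\Gamma$ and $r(u_p,v_q)>0$ immediately.
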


\paragraph{Existence of maximal globally hyperbolic developments.} In view of the timelike boundary at infinity, we modify the usual definition of global hyperbolicity as follows, similarly to \cite[Definition~8.1]{localwellposed}: 
\begin{definition}
    An asymptotically AdS spacetime $\big(\mathcal{M},g\big)$ is globally hyperbolic if there exists an outgoing null cone $C_0^+$ such that $C_0^+\cap\Gamma\neq\emptyset,\ \overline{C_0^+}\cap\mathcal{I}\neq\emptyset$ and all past directed inextendible causal curves in $\mathcal{M}$ either intersect $C_0^+$ or have a limit endpoint on $\mathcal{I}$.
\end{definition}

It is standard to use the local well-posedness result in \Cref{lwp proposition} to prove the existence of a unique maximal globally hyperbolic development of initial data. We refer the reader to \cite[Section 4]{localwellposed} and the references therein.
\begin{proposition}
    For any $k\geq2$ and $C^{k}$ asymptotically AdS characteristic data set $\mathfrak{D}\in\mathfrak{M}$, there exists a maximal globally hyperbolic development $\big(\mathcal{M},g,\phi\big)$ of the initial data, which is unique up to isometry.
\end{proposition}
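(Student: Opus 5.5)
The existence and uniqueness of the maximal globally hyperbolic development will follow the standard Choquet-Bruhat--Geroch strategy, in the form adapted to asymptotically AdS spacetimes with a timelike boundary in \cite[Section 4]{localwellposed}. In circular symmetry, the argument reduces to a Zorn's lemma construction on the $1+1$ dimensional quotient $\mathcal{Q}$, gluing together developments produced by \Cref{lwp proposition} and by standard local existence in compact regions away from $\mathcal{I}$.

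First, I will fix the class of objects. A \emph{development} of $\mathfrak{D}$ is a circularly symmetric, asymptotically AdS, globally hyperbolic spacetime (in the sense of the definition above) solving \eqref{wave equation phi}--\eqref{Ray v}. It must have a regular center, reflective boundary conditions $m|_{\mathcal{I}}=M$, and must induce $\mathfrak{D}$ on $C_0^+$. Nonemptiness of this class comes from \Cref{lwp proposition}. Next, I will prove a local uniqueness statement: any two developments are isometric on a common neighborhood of $C_0^+$. Near $\mathcal{I}$, this comes from the uniqueness part of \Cref{lwp proposition}, which in turn rests on the energy estimates of \Cref{proposition linear wave on asympt ads} applied to differences of solutions. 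Away from $\mathcal{I}$ and near $\Gamma$, it is the standard characteristic uniqueness in double null gauge. Uniqueness propagates along the domain of dependence because, in double null coordinates, the solution in any characteristic rectangle is determined by its data on two null sides, and near $\mathcal{I}$ by the data plus the boundary condition.

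I will then order developments by isometric embedding extending the identity on $C_0^+$, and show that every chain has an upper bound by taking the union. The key point is that the union is still globally hyperbolic and asymptotically AdS. This holds because each of those properties is local near $\mathcal{I}$ or is a statement about past inextendible causal curves, which lie in some member of the chain. Zorn's lemma then yields a maximal element. To prove uniqueness up to isometry, I will take two maximal developments and consider the largest common development embedding into both, which exists by Zorn's lemma together with local uniqueness. I then glue the two developments along it. The crucial claim is that the glued space is Hausdorff. If it were not, a non-Hausdorff pair of boundary points would give a point on the boundary of the common region near which both developments agree on a past neighborhood. Local existence and uniqueness, applied via \Cref{lwp proposition} if the point is on $\mathcal{I}$ or via the compact-region theory otherwise, would then extend the common development, contradicting its maximality. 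The glued space is therefore a development extending both, so by maximality both are isometric to it.

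The main obstacle is the Hausdorff/extension step at points of the boundary where the common region meets $\mathcal{I}$ or $\Gamma$. There, one needs local existence for the initial-boundary value problem posed on data induced by an outgoing cone $\{U\}\times[V_\Gamma(U),U)$ in the asymptotically AdS class. One also needs the induced data to lie in the class of \Cref{asympt ads data def}, which \Cref{lwp proposition} guarantees. I would handle this exactly as in \cite[Section 4]{localwellposed}, using the fact that, by the last statement of \Cref{lwp proposition}, each later outgoing cone carries admissible $C^k$ data.
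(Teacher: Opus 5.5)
Your proposal is correct and matches the paper's approach. The paper gives no proof beyond saying that the standard Choquet-Bruhat--Geroch construction, built on the local well-posedness of \Cref{lwp proposition}, goes through as in \cite[Section 4]{localwellposed}; your sketch (local uniqueness from \Cref{lwp proposition} near $\mathcal{I}$ and from the compact-region theory elsewhere, Zorn's lemma for existence, gluing plus the Hausdorff argument for uniqueness) is that construction. In circular symmetry you could also skip the Zorn and Hausdorff steps: once the double null gauge on $\mathcal{Q}$ is fixed in a data-determined way, every development is a subset of one fixed $\mathbb{R}^{1+1}$, the solutions agree on overlaps by local uniqueness, and the maximal development is simply the union of all developments.
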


\paragraph{Black hole spacetimes.} As a consequence of the monotonicity properties established so far, we have that $J^-(\mathcal{I})\cap\mathcal{Q}\subset\mathcal{R}.$ Thus, the black hole region $\mathcal{BH}=\mathcal{Q}\backslash J^-(\mathcal{I})$ satisfies $\mathcal{T}\cup\mathcal{A}\subset\mathcal{BH}.$ In particular, if $\mathcal{T}\neq\emptyset$ then $\mathcal{BH}\neq\emptyset.$ In the case when $\mathcal{BH}\neq\emptyset,$ we define the event horizon $\mathcal{H}^+$ as the future boundary of $J^-(\mathcal{I})\cap\mathcal{Q}$ in $\mathcal{Q}.$

We state the following result for black hole spacetimes, similarly to \cite{Dafermos_trapped}. For the proof one adapts the arguments of \cite[Section 5]{price} and \cite[Section 8.2]{localwellposed} to the $(2+1)$-dimensional setting, once again using the good monotonicity properties in our system. 

\begin{proposition}\label{Penrose ineq and completeness proposition}
    Let $\big(\mathcal{M},g,\phi\big)$ be the maximal globally hyperbolic development of a $C^{k}$ asymptotically AdS characteristic data set with $k\geq2$, and denote $M=m|_{\mathcal{I}}.$ If $\mathcal{BH}\neq\emptyset,$ then $\mathcal{H}^+$ is future complete and satisfies the Penrose inequality $|\Lambda|r^2+1\leq M$. Moreover, in this case we also have that $\mathcal{I}$ is future complete.
\end{proposition}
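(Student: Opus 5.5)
We need to prove the statement: if $\mathcal{BH}\neq\emptyset$, then $\mathcal{H}^+$ is future complete, the Penrose inequality $|\Lambda|r^2+1\leq M$ holds along $\mathcal{H}^+$, and $\mathcal{I}$ is future complete.

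The plan is to work in the exterior region $J^-(\mathcal{I})\cap\mathcal{Q}$ in Eddington--Finkelstein-normalized coordinates $(U,V)$, with $U=V$ on $\mathcal{I}$. The key input is monotonicity. By \Cref{monotonicity lemma} and $J^-(\mathcal{I})\cap\mathcal{Q}\subset\mathcal{R}$, along each outgoing cone in the exterior $m$ is nondecreasing in $v$ and bounded by $M=m|_{\mathcal{I}}$. Since $\partial_vr>0$ there, $\partial_um\leq0$, so $m\leq M$ throughout the exterior.

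\textbf{Step 1: $r$ is bounded on $\mathcal{H}^+$.} Because $\mathcal{BH}\neq\emptyset$, I first show that $r$ is bounded on $\mathcal{H}^+$. Along $\mathcal{H}^+$, $r$ is nondecreasing in $v$, since $\mathcal{H}^+\subset\overline{\mathcal{R}}$. Let $r_{\mathcal{H}}=\sup_{\mathcal{H}^+}r$. The goal is the inequality $1-M+|\Lambda|r^2\leq0$ at points of $\mathcal{H}^+$, that is, $m\geq 1+|\Lambda|r^2$ there.

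I argue by contradiction. Suppose at some horizon point $1-m+|\Lambda|r^2\geq c>0$. By monotonicity this persists to the future along $\mathcal{H}^+$. Using \eqref{Hawking mass double null}, $\partial_vr$ then stays comparable to $\Omega^2\cdot(1-m+|\Lambda|r^2)$ times $(-\partial_ur)^{-1}$. Working with the quantity $\kappa=\partial_vr/(1-m+|\Lambda|r^2)$, which is monotone by \eqref{Ray v}, I show that an outgoing ray slightly to the past of $\mathcal{H}^+$ cannot reach $\mathcal{I}$ while a ray on $\mathcal{H}^+$ fails to. The mechanism is the lower bound $\partial_vr\gtrsim \kappa(1-M+|\Lambda|r^2)$ with $\kappa$ controlled. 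This mirrors \cite[Section 5]{price}: if the horizon is not "marginally trapped in the limit", then $r\to\infty$ along it, so it reaches $\mathcal{I}$, contradicting $\mathcal{H}^+\not\subset J^-(\mathcal{I})$.

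Concretely, I would use the $U$-integrated Raychaudhuri equation to show that $\kappa$ on a nearby exterior ray is bounded below uniformly by Cauchy stability, and pass to the limit. The conclusion is $\lim_{\mathcal{H}^+}(1-m+|\Lambda|r^2)=0$. Combined with $m\leq M$ and the monotonicity of $r$ along $\mathcal{H}^+$, this gives $|\Lambda|r^2+1\leq M$ everywhere on $\mathcal{H}^+$.

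\textbf{Step 2: completeness of $\mathcal{H}^+$ and $\mathcal{I}$.} For completeness I follow \cite[Section 8.2]{localwellposed}. Completeness of $\mathcal{I}$ in the AdS sense means that the $U$-range of $\mathcal{I}$ is infinite in Eddington--Finkelstein normalization, equivalently $\int_{\mathcal{I}}\Omega^2/(1-M+|\Lambda|r^2)\,dT=\infty$. The strategy is:

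(i) Show that every outgoing cone $U<U_{\mathcal{H}}$ reaches $\mathcal{I}$, and that the region between a late ingoing cone and $\mathcal{H}^+$ has $1-m+|\Lambda|r^2$ small with $r$ bounded near $r_{\mathcal{H}}$.

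(ii) Use \eqref{wave equation r}, in the form $\partial_U\log(\partial_Vr)=\Lambda\Omega^2 r/(2\partial_Vr)$, together with the bound $-\partial_Ur\approx\tfrac12(1-m+|\Lambda|r^2)\Omega^2/\kappa$ near the horizon, to show that the redshift factor $\partial_Vr$ along an ingoing cone decays like $e^{-cU}$ as one approaches $\mathcal{H}^+$. Then the $U$-coordinate of $\mathcal{H}^+$ is $+\infty$ in the Eddington--Finkelstein gauge normalized at $\mathcal{I}$. Hence $\mathcal{I}$ is complete, and the affine parameter along $\mathcal{H}^+$ is infinite, because $\Omega^2$ is bounded below there, so $\mathcal{H}^+$ is future complete.

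\textbf{Main obstacle.} The hard step is (ii), namely controlling $\kappa$ and $\Omega^2$ uniformly near the horizon in the presence of the timelike boundary. I would first establish an a priori bound $\kappa\sim1$ in the exterior via \eqref{Ray v} and the asymptotics at $\mathcal{I}$ from \Cref{lwp proposition}, and then run a continuity argument.
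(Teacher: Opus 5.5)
Your Step 1 does not work as written. The hypothesis you try to contradict, $1-m+|\Lambda|r^2\ge c>0$ at a point of $\mathcal{H}^+$, is not contradictory. By \eqref{Hawking mass double null} it just says $\partial_vr>0$ there. It is also not monotone along $\mathcal{H}^+$, since $\partial_v(1-m+|\Lambda|r^2)=-\partial_vm+2|\Lambda|r\partial_vr$ has no sign. What does persist is a violation of Penrose, $1-M+|\Lambda|r^2\ge c$, because $\partial_vr\ge0$ on $\mathcal{H}^+$.

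More seriously, $\kappa=\partial_vr/(1-m+|\Lambda|r^2)=\Omega^2/(-4\partial_ur)$ is monotone in $u$ by \eqref{Ray u}, not \eqref{Ray v}, and it is \emph{nonincreasing} in $u$. So its good normalization near $\mathcal{I}$ does not propagate inward to $\mathcal{H}^+$. Cauchy stability cannot supply a bound uniform in $v$ up to $i^+$ either, since it only controls compact regions.

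The missing idea is to compare $\partial_vr$ itself across ingoing segments via \eqref{wave equation r}, putting the \emph{constant} boundary mass $M$ in the denominator. Since $\partial_u\log\partial_vr=2|\Lambda|r\,\partial_ur/(1-m+|\Lambda|r^2)$ and $m\le M$ on the closure of $J^-(\mathcal{I})$,
\[\partial_u\log\frac{\partial_vr}{1-M+|\Lambda|r^2}=2|\Lambda|r\,\partial_ur\bigg(\frac{1}{1-m+|\Lambda|r^2}-\frac{1}{1-M+|\Lambda|r^2}\bigg)\ge0\quad\text{wherever }1-M+|\Lambda|r^2>0.\]
Suppose $1-M+|\Lambda|r^2(u_{\mathcal{H}},v_1)>0$. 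Then every segment $[u,u_{\mathcal{H}}]\times\{v\}$ with $v_1\le v<v_{\mathcal{I}}(u)$ lies in this region. Set $r_*(r)=-\int_r^\infty d\tilde r/(1-M+|\Lambda|\tilde r^2)$ and integrate in $v$. This gives
\[r_*\big(r(u_{\mathcal{H}},v_{\mathcal{I}}(u))\big)\ge r_*\big(r(u_{\mathcal{H}},v_1)\big)-r_*\big(r(u,v_1)\big)\to0\quad\text{as }u\to u_{\mathcal{H}}.\]
Hence $r\to\infty$ along $\mathcal{H}^+$, so its generator reaches $\mathcal{I}$, which is the contradiction. This yields the Penrose inequality, but not your stronger claim $\lim_{\mathcal{H}^+}(1-m+|\Lambda|r^2)=0$. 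Once $1-M+|\Lambda|r^2\le0$, your lower bound $\kappa(1-M+|\Lambda|r^2)$ is vacuous, so nothing rules out $m_{\mathcal{H}}<M$ with a positive limit.

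Step 2 has the same problem. The claimed decay $\partial_Vr\sim e^{-cU}$ toward $\mathcal{H}^+$ presupposes that the horizon sits at $U=+\infty$, which is what you want to prove. It also rests on the uniform near-horizon control of $\kappa$ that you leave open.

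Completeness of $\mathcal{I}$ needs none of this. By \eqref{dv log gamma}, $\gamma=-4\partial_ur/(1-m+|\Lambda|r^2)$ is nondecreasing in $v$. The Eddington--Finkelstein time along $\mathcal{I}$ satisfies $dT/du=\tfrac12\gamma(u,v_{\mathcal{I}}(u))$. Bound $\gamma$ below by its values on the ingoing cone $v=v_{\mathcal{I}}(u_a)$, and use $m\ge0$ together with the Penrose bound $r\le r_{\max}$ on $\mathcal{H}^+$. This gives
\[\lim_{u\to u_{\mathcal{H}}}T(u)-T(u_a)\ge\int_{r_{\max}}^\infty\frac{2\,dr}{1+|\Lambda|r^2}>0\quad\text{for every }u_a<u_{\mathcal{H}},\]
so $T\to\infty$.

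Finally, your assertion that $\mathcal{H}^+$ has infinite affine length because $\Omega^2$ is bounded below there is unsupported. Nothing in the proposal controls the lapse along $\mathcal{H}^+$ in a regular gauge. That is exactly the near-horizon (redshift) estimate that has to be imported from \cite[Section 5]{price} and \cite[Section 8.2]{localwellposed}.
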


\paragraph{Global structure of spacetimes.} We explain the general structure of the boundary of spacetimes. We first define the other possible components of the boundary, as in \cite{C99-GL,kommemi}.
\begin{definition}
    We denote $b_{\Gamma}=\overline{\Gamma}\backslash\Gamma.$  If $b_{\Gamma}\neq i^+$, we say that $b_{\Gamma}$ is a first singularity at the center. In this case, we define the \textit{central component of the singular boundary}:
    \vspace{-0.2em}
\[\mathcal{B}_0=\Big\{(u_{b_{\Gamma}},v)\in\overline{\mathcal{Q}}\Big\}.\]
\vspace{-0.5em}
Finally, we define the \textit{singular component of the boundary}:
\[\mathcal{B}=\Big\{(u,v)\in\overline{\mathcal{Q}}\backslash\mathcal{Q}:\ u_{i^+}<u<u_{b_{\Gamma}}\Big\}.\]
\end{definition}
\vspace{-0.2em}
Using the description of first singularities in \Cref{extension principle away from center} and the monotonicity properties \eqref{monotonicity relations}, it is straightforward to show that $\mathcal{B}$ is spacelike and $r=0$ on $\mathcal{B}$. Similarly to the $(3+1)$-dimensional case (see \cite[Section~1.4.1]{kommemi} and the references therein), one can prove that there is no incoming null component of the boundary arising from $i^+.$ Thus, the boundary of the quotient spacetime is given by: 
\begin{equation}
    \partial\mathcal{Q}=C_0^+\cup\mathcal{I}\cup i^+\cup\mathcal{B}\cup\mathcal{B}_0\cup b_{\Gamma}\cup\Gamma,
\end{equation}
and we point out that some of the boundary components might be empty.
\vspace{-0.4em}
\paragraph{The decomposition of the moduli space $\mathfrak{M}$.} We note that $\big(\mathcal{M},g\big)$ is a black hole spacetime if and only if $\mathcal{B}\neq\emptyset$. We define\footnote{In defining the subsets of $\mathfrak{M}$, we follow the notation convention of \cite{AKU26}.} the corresponding subset of the moduli space by:
\begin{equation}\label{M black}
    \mathfrak{M}_{\mathrm{black}}=\big\{\mathfrak{D}\in\mathfrak{M}:\ \mathcal{B}\neq\emptyset\big\}.
\end{equation}
Equivalently, $\mathfrak{D}\in\mathfrak{M}_{\mathrm{black}}$ if and only if $\mathcal{T}\neq\emptyset.$ The latter is an open condition, so $\mathfrak{M}_{\mathrm{black}}$ is open.

We can classify black hole spacetimes depending on the nature of the singularity inside the black hole region. In the case when $\mathcal{B}\neq\emptyset$ and $\mathcal{B}_0=\emptyset$, the singularity is spacelike-only and the spacetime has the Penrose diagram in \Cref{fig:bh2}.  We denote the corresponding subset of the moduli space by:
\begin{equation}\label{M black spacelike}
    \mathfrak{M}_{\mathrm{black}}^{\mathrm{spacelike}}=\big\{\mathfrak{D}\in\mathfrak{M}:\ \mathcal{B}\neq\emptyset,\ \mathcal{B}_0=\emptyset\big\}.
\end{equation}
We also have the case when $\mathcal{B}\neq\emptyset$ and $\mathcal{B}_0\neq\emptyset,$ so the spacetime has the Penrose diagram in \Cref{fig:bh1}.  We denote the corresponding subset of the moduli space by:
\begin{equation}\label{M black loc naked}
    \mathfrak{M}_{\mathrm{black}}^{\mathrm{loc.naked}}=\big\{\mathfrak{D}\in\mathfrak{M}:\ \mathcal{B}\neq\emptyset,\ \mathcal{B}_0\neq\emptyset\big\}.
\end{equation}
The superscript stands for locally naked, which we define below in \eqref{M locally naked}. Our discussion shows that $\mathfrak{M}_{\mathrm{black}}$ can be written as the disjoint union\footnote{We note that the results of \Cref{instability section} and \cite{preparation} can be used to show that $\mathfrak{M}_{\mathrm{black}}^{\mathrm{loc.naked}}\neq\emptyset$ and $\mathfrak{M}_{\mathrm{black}}^{\mathrm{spacelike}}\neq\emptyset$.} $\mathfrak{M}_{\mathrm{black}}=\mathfrak{M}_{\mathrm{black}}^{\mathrm{spacelike}}\sqcup\mathfrak{M}_{\mathrm{black}}^{\mathrm{loc.naked}}.$

Next, we have the case $\mathcal{B}=\emptyset$ and $\mathcal{B}_0=\emptyset$. As a result, we must have $b_{\Gamma}=i^+$ and the spacetime has the Penrose diagram in \Cref{fig:non}. We say that such solutions are \textit{non-collapsing}, and we define the corresponding subset of the moduli space as:
\begin{equation}\label{M non}
    \mathfrak{M}_{\mathrm{non}}=\big\{\mathfrak{D}\in\mathfrak{M}:\ \mathcal{B}=\emptyset,\ \mathcal{B}_0=\emptyset\big\}.
\end{equation}
We note that \Cref{bizcorollary} implies that $\mathrm{int}\big(\mathfrak{M}_{\mathrm{non}}\big)\neq\emptyset.$ Using the monotonicity properties of the system and a "zig-zag" argument as in \cite[Section~3]{SchwAdS}, one can prove that $\mathfrak{D}\in\mathfrak{M}_{\mathrm{non}}$ implies $\mathcal{I}$ is future complete and that $\mathfrak{D}\in\mathfrak{M}_{\mathrm{non}}$ if and only if $\Gamma$ is future complete.

In the case when $\mathcal{B}=\emptyset$ and $\mathcal{B}_0\neq\emptyset$, the spacetime has the Penrose diagram as in \Cref{fig:nakedsing}. We say that such solutions are \textit{naked singularities} and we define the corresponding subset of the moduli space as:
\begin{equation}\label{M naked}
    \mathfrak{M}_{\mathrm{naked}}=\big\{\mathfrak{D}\in\mathfrak{M}:\ \mathcal{B}=\emptyset,\ \mathcal{B}_0\neq\emptyset\big\}.
\end{equation}

Finally, we say that a spacetime is a locally naked singularity spacetime if it satisfies $\mathcal{B}_0\neq\emptyset.$ We denote the corresponding subset of the moduli space as:
\begin{equation}\label{M locally naked}
    \mathfrak{M}_{\mathrm{loc.naked}}=\big\{\mathfrak{D}\in\mathfrak{M}:\ \mathcal{B}_0\neq\emptyset\big\}.
\end{equation}
Equivalently, these are spacetimes with no trapped surfaces in a neighborhood of $b_{\Gamma}$ (this follows from \Cref{monotonicity lemma} and \Cref{extension principle away from center}). We note that the results of \cite{preparation} imply that $\mathfrak{M}_{\mathrm{loc.naked}}\neq\emptyset.$ Moreover, we can write $\mathfrak{M}_{\mathrm{loc.naked}}$ as the disjoint union:
\begin{equation}\label{M null in M locally naked}
    \mathfrak{M}_{\mathrm{loc.naked}}=\mathfrak{M}_{\mathrm{naked}}\sqcup\mathfrak{M}_{\mathrm{black}}^{\mathrm{loc.naked}}.
\end{equation}

In conclusion, our discussion shows that the moduli space of initial data can be written as the following disjoint unions:
\begin{align}\label{decomposition of M}
    \mathfrak{M}&=\mathfrak{M}_{\mathrm{black}}\sqcup\mathfrak{M}_{\mathrm{non}}\sqcup\mathfrak{M}_{\mathrm{naked}}=\mathfrak{M}_{\mathrm{black}}^{\mathrm{spacelike}}\sqcup\mathfrak{M}_{\mathrm{non}}\sqcup\mathfrak{M}_{\mathrm{loc.naked}}\\
    &=\mathfrak{M}_{\mathrm{black}}^{\mathrm{spacelike}}\sqcup\mathfrak{M}_{\mathrm{black}}^{\mathrm{loc.naked}}\sqcup\mathfrak{M}_{\mathrm{non}}\sqcup\mathfrak{M}_{\mathrm{naked}}.\notag
\end{align}

\begin{remark}\label{remark about I incomplete}
    We recall that \cite{GH78,C99-GL} define naked singularity spacetimes as having a future incomplete $\mathcal{I}$. We claim that this is a more restrictive notion than our definition of naked singularities. We define the following subset of the moduli space:
    \[\mathfrak{M}_{\mathrm{incomplete}}=\big\{\mathfrak{D}\in\mathfrak{M}:\ \mathcal{I}\text{ is future incomplete}\big\}.\]
    As explained above, the maximal developments of data in $\mathfrak{M}_{\mathrm{non}}$ have a  future complete  $\mathcal{I}$. Moreover, according to \Cref{Penrose ineq and completeness proposition}, black hole spacetimes have a future complete $\mathcal{I}$. Therefore, \eqref{decomposition of M} implies that $\mathfrak{M}_{\mathrm{incomplete}}\subset\mathfrak{M}_{\mathrm{naked}}$, so all spacetimes with a future incomplete $\mathcal{I}$ are naked singularities. We point out that, in principle, this inclusion could be strict. For example, a similar argument to \cite{Dafermos_trapped} implies that any naked singularity spacetime with $r$ uniformly bounded on $\mathcal{B}_0$ has a future complete $\mathcal{I}$.
\end{remark}

\section{The mass gap}\label{mass gap section}

The goal of this section is to establish the mass gap property. More concretely, we prove that no singularities can form in the domain of dependence of a disk bounded by a circle with Hawking mass strictly less than 1. We recall the statement of \Cref{mass gap theorem}:
\extension*

We first claim that $m\leq1-\delta$ in the domain of dependence of $C_0^+\cap\{v\leq v_0\}.$ We suppose there exists $p\in\mathcal{Q}$ in the domain of dependence of $C_0^+\cap\{v\leq v_0\},$ which satisfies $p\in\mathcal{A}$ and $J^-(p)\backslash\{p\}\subset\mathcal{R}$. \Cref{monotonicity lemma} implies $m(p)\leq1-\delta$, so \Cref{trapped region lemma} implies that $p\in\mathcal{R},$ a contradiction. As a consequence, the domain of dependence of $C_0^+\cap\{v\leq v_0\}$ is contained in the regular region $\mathcal{R}$ and $m\leq1-\delta$.

The proof of~\Cref{mass gap theorem} is by contradiction. According to \Cref{first singularities at the center corollary}, any first singularities in the boundary of the domain of dependence of $C_0^+\cap\{v\leq v_0\}$ must occur at the center. For the rest of the section, we work under the following assumption:
\begin{assumption}\label{assumption mass gap}
    We assume that for some $v_{\Gamma}<v_*\leq v_0,$ the future ingoing null cone: 
    \[C_*^-=\Big\{(u,v_*):\ u_0\leq u<u_*,\ r(u_*,v_*)=0\Big\}\] 
    starting at $(u_0,v_*)$ ends in a first singularity at the center at $(u_*,v_*)\in\overline{\Gamma}\backslash\Gamma.$ Also, we denote $r_*=r(u_0,v_*).$
\end{assumption}
The idea of the proof is to establish uniform $C^k$ estimates in the domain:
\begin{equation}\label{domain D}
    \mathcal{D}=\Big\{(u,v)\in\mathcal{Q}:\ u_0\leq u<u_*,\ v\leq v_*\Big\},
\end{equation}
which will contradict \Cref{assumption mass gap} by \Cref{lwp proposition}. We follow the strategy outlined in \Cref{mass gap intro section}.

Since the domain of dependence of $C_0^+\cap\{v\leq v_0\}$ is contained in $\mathcal{R},$ the Bondi coordinates $(u,r)$ are defined globally in $\mathcal{D}.$ It is convenient for the purpose of this section to change coordinates from the double null coordinates $(u,v)$ to the Bondi coordinates $(u,r)$, which we introduce in detail in \Cref{eqts in Bondi section}.

We outline the structure of the section. In \Cref{eqts in Bondi section}, we introduce Bondi coordinates. In \Cref{bounds on geometry section}, we prove bounds on certain geometric quantities. In \Cref{preliminary scalar field section}, we prove preliminary estimates for the scalar field. In \Cref{improved scalar field section}, we establish improved estimates for the scalar field. Finally, in \Cref{proof of mass gap section}, we complete the proof of \Cref{mass gap theorem}.

\subsection{Bondi coordinates}\label{eqts in Bondi section}
We make the change of coordinates $(u,v)\rightarrow\big(u,r(u,v)\big)$ from double null coordinates to Bondi coordinates. In Bondi coordinates $(u,r)$, the metric \eqref{metric in double null} takes the form:
\begin{equation}\label{metric in Bondi}
    g=-e^{2\nu}du^2-2e^{\nu+\lambda}dudr+r^2d\theta^2.
\end{equation}

The unknowns in Bondi coordinates are given by the functions $\big(\nu,\lambda,\phi\big).$ We also define the quantity $\overline{\gamma}=e^{\nu-\lambda}.$ The relation between the double null unknowns $\big(r,\Omega,\phi\big)$ and the Bondi unknowns is:
\begin{equation}
    \gamma=2e^{\nu+\lambda},\ -2\partial_ur=\overline{\gamma}=e^{\nu-\lambda},
\end{equation}
where we recall that $\gamma$ was defined in \eqref{definition of gamma} as $\gamma^{-1}=\Omega^{-2}\partial_vr.$ Therefore, we can regard $\big(\gamma,\overline{\gamma},\phi\big)$ as the unknowns in Bondi coordinates.

In Bondi coordinates, the Hawking mass \eqref{Hawking mass} takes the following form:
\begin{equation}\label{Hawking mass Bondi}
    m=1-2\cdot\frac{\overline{\gamma}}{\gamma}-\Lambda r^2=1-e^{-2\lambda}-\Lambda r^2.
\end{equation}

We fix the freedom to rescale the $u$ coordinate by setting $\nu=0$ on $\Gamma.$ Since $m=0$ on $\Gamma,$ we also have that $\lambda=0$ on $\Gamma,$ so we also get $\gamma=2,\ \overline{\gamma}=1$ on $\Gamma.$

We notice that $\partial_r$ is a future directed outgoing null vector. We then define the future directed ingoing null vector:
\begin{equation}\label{definition of n}
    n=\partial_u-\frac{1}{2}\overline{\gamma}\partial_r.
\end{equation}

The system of equations \eqref{wave equation phi}-\eqref{Ray v} is equivalent to:
\begin{align}
    &2\partial_u\partial_r\phi+\frac{1}{r}\partial_u\phi-\frac{1}{r}\partial_r\big(r\overline{\gamma}\partial_r\phi\big)=0\label{wave eq Bondi}\\
    &\partial_r\log\gamma=r\big(\partial_r\phi\big)^2\label{dr gamma Bondi}\\
    &\partial_r\log\overline{\gamma}=-2r\Lambda e^{2\lambda}\label{dr gamma bar Bondi}\\
    &nm=-4r\gamma^{-1}\big(n\phi\big)^2\label{n m Bondi}\\
    &\partial_rm=re^{-2\lambda}\big(\partial_r\phi\big)^2.\label{dr m Bondi}
\end{align}
For any $(u,r_0)\in\mathcal{D},$ we parametrize the future directed ingoing null curve passing through $(u,r_0)$ as:
\begin{equation}\label{definition of chi}
    \chi_{u,r_0}(s)=\bigg(s,r_0-\frac{1}{2}\int_{u}^s\overline{\gamma}\big(\chi_{u,r_0}(s')\big)ds'\bigg),
\end{equation}
with $s\in\big[u,u_{\Gamma}(u,r_0)\big]$, such that $\chi_{u,r_0}(u)=(u,r_0)$ and $\chi_{u,r_0}\big(u_{\Gamma}(u,r_0)\big)\in\Gamma.$ We note that $\chi_{u,r_0}$ has tangent vector $n.$ For example, the ingoing cone $C_*^-$ is parametrized as $\{\chi_{u_0,r_*}(s):\ s\in[u_0,u_*)\}.$ See \Cref{fig:domain D} for an illustration of this notation. 
\begin{figure}
\begin{center}
    \begin{tikzpicture}
        \draw (0,0) .. controls (0.2,2) and (0.2,3) .. (0,5);
        \draw (0,0) -- (2.5,2.5);
        \draw (0,5) -- (2.5,2.5);
        \draw (0.15,2.15) -- (1.5,3.5);
        \draw (0.15,2.85) -- (1.5,1.5);
        \filldraw[color=black, fill=black] (2.5,2.5) circle (1pt);
        \draw (2.5,2.5)  node[anchor=west] {$(u_0,r_*)$};
        \filldraw[color=black, fill=black] (1.5,3.5) circle (1pt);
        \draw (1.5,3.6)  node[anchor=west] {$\chi_{u_0,r_*}(u)$};
        \filldraw[color=black, fill=black] (1.5,1.5) circle (1pt);
        \draw (1.6,1.5)  node[anchor=west] {$(u_0,r_0)$};
        \filldraw[color=black, fill=black] (0.5,2.5) circle (1pt);
        \draw (0.6,2.5)  node[anchor=west] {$\chi_{u_0,r_0}(u)$};
        \filldraw[color=black, fill=black] (0.15,2.85) circle (1pt);
        \draw[<-] (0,2.85) -- (-2,3);
        \draw (-2,3)  node[anchor=south] {$\chi_{u_0,r_0}\big(u_{\Gamma}(u_0,r_0)\big)$};
        \filldraw[color=black, fill=white] (0,5) circle (2pt);
        \draw (0.1,1.5)  node[anchor=east] {$\Gamma$};
        \draw (-0.1,5)  node[anchor=south] {$b_{\Gamma}=(u_*,0)$};
        \draw (1.2,1)  node[anchor=north] {$C_0^+$};
        \draw (1.2,4)  node[anchor=south] {$C_*^-$};
    \end{tikzpicture}
\end{center}
    \caption{Points $(u_0,r_0),\ \chi_{u_0,r_0}(u),$ and $\chi_{u_0,r_0}\big(u_{\Gamma}(u_0,r_0)\big)$ in the domain $\mathcal{D}$.}
    \label{fig:domain D}
\end{figure}
We also introduce the following notation:
\[r\big(\chi_{u,r_0}(s)\big):=r_0-\frac{1}{2}\int_{u}^s\overline{\gamma}\big(\chi_{u,r_0}(s')\big)ds'.\]
Using this notation, we can rewrite $\mathcal{D}$ as follows:
\begin{align*}
    \mathcal{D}&=\Big\{(u,r)\in\mathcal{Q}:\ u_0\leq u<u_*,\ 0\leq r\leq r\big(\chi_{u_0,r_*}(u)\big)\Big\}\\
    &=\Big\{\chi_{u_0,r}(u)\in\mathcal{Q}:\ 0\leq r\leq r_*,\ u_0\leq u\leq u_{\Gamma}(u_0,r) \Big\}.
\end{align*}
In practice, we also use truncated domains. For any $0<r_1\leq r_*$ and $u_0<u_1\leq u_{\Gamma}(u_0,r_1),$ we define:
\begin{align}
    \mathcal{D}_{u_1,r_1}&=\Big\{(u,r):\ u_0\leq u\leq u_1,\ 0\leq r\leq r\big(\chi_{u_0,r_1}(u)\big)\Big\}\label{truncated domain definition}\\
    &=\Big\{\chi_{u_0,r}(u)\in\mathcal{Q}:\ 0\leq r\leq r_1,\ u_0\leq u\leq \min\big(u_{\Gamma}(u_0,r),u_1\big)\Big\}.\notag
\end{align}
This domain represents the intersection of: the future of $C_0^+$, the past of the outgoing cone $\{u=u_1\}$, and the past of the ingoing cone passing through $(u_0,r_1).$

\subsection{Bounds on the geometry}\label{bounds on geometry section}

In this section, we derive a series of pointwise bounds on the quantities $\lambda,\nu,\gamma,\overline{\gamma}$ (together with their derivatives), as a consequence of the mass estimate $0\leq m\leq 1-\delta$ in $\mathcal{D}.$ At the end of the section, we also define the volume forms used in our argument. 

Shrinking the domain $\mathcal{D}$ if necessary (by increasing $u_0$), we can assume that $r_*$ satisfies:
\begin{equation}\label{bound on r*}
    r_*\leq\frac{\delta^3}{1000(1+|\Lambda|)(1+|u_*-u_0|)}\text{ and }r_*\leq\frac{1}{1000}.
\end{equation}

Firstly, \eqref{Hawking mass Bondi} implies that:
\begin{equation}\label{upper bound lambda}
    e^{2\lambda}\leq\frac{1}{\delta+|\Lambda|r^2}.
\end{equation}
As a consequence, \eqref{dr gamma bar Bondi} implies that:
\[0\leq\partial_r\big(\nu-\lambda\big)\leq\frac{2r|\Lambda|}{\delta+|\Lambda|r^2}=\frac{d}{dr}\log\bigg(1+\frac{|\Lambda|}{\delta}r^2\bigg).\]
We integrate this equation, using the boundary conditions at $\Gamma,$ to obtain:
\begin{equation}\label{bounds for overline gamma}
    1\leq\overline{\gamma}\leq1+\frac{|\Lambda|}{\delta}r^2.
\end{equation}
Since $\gamma=2e^{2\lambda}\overline{\gamma},$ the estimates \eqref{upper bound lambda}-\eqref{bounds for overline gamma} imply:
\begin{equation}\label{bounds for gamma}
    2\leq\gamma\leq\frac{2}{\delta}.
\end{equation}
We derive further bounds for the derivatives of $\overline{\gamma}$ using \eqref{dr gamma Bondi}, \eqref{dr gamma bar Bondi}, and \eqref{bounds for gamma}:
\begin{equation}\label{bounds for dr overline gamma}
    \partial_r\overline{\gamma}=|\Lambda|r\gamma\leq\frac{2|\Lambda|}{\delta}\cdot r
\end{equation}
\begin{equation}\label{bounds for dr^2 overline gamma}
    \partial_r^2\overline{\gamma}=|\Lambda|\gamma\Big(1+r^2(\partial_r\phi)^2\Big)\leq\frac{2|\Lambda|}{\delta}\Big(1+r^2(\partial_r\phi)^2\Big)
\end{equation}

We introduce the following notation, which will be used in the definition of volume forms:
\[\textbf{v}(u,r_0,s):=\frac{d}{dr_0}r\big(\chi_{u,r_0}(s)\big).\]
According to \eqref{definition of chi}, we have for any $s\in\big[u,u_{\Gamma}(u,r_0)\big]$:
\begin{equation}\label{volume form definition}
    \textbf{v}(u,r_0,s)=1-\frac{1}{2}\int_{u}^s\partial_r\overline{\gamma}\big(\chi_{u,r_0}(s')\big)\cdot\textbf{v}(u,r_0,s')ds'.
\end{equation}
Thus, we have the estimate:
\[\big|\textbf{v}(u,r_0,s)\big|\leq1+\frac{1}{2}\int_{u}^s\Big|\partial_r\overline{\gamma}\big(\chi_{u,r_0}(s')\big)\Big|\cdot\big|\textbf{v}(u,r_0,s')\big|ds'.\]
Using Grönwall's inequality, \eqref{bounds for dr overline gamma}, and \eqref{bound on r*}, we get that $\big|\textbf{v}(u,r_0,s)\big|<2.$ We further get that:
\begin{equation}\label{bound volume form}
\big|\textbf{v}(u,r_0,s)-1\big|\leq\int_{u}^s\Big|\partial_r\overline{\gamma}\big(\chi_{u,r_0}(s')\big)\Big|ds'\leq\frac{2|\Lambda|r_*|u_*-u_0|}{\delta}<\frac{1}{50}.
\end{equation}

For any $0<r_1\leq r_*$ and $u_0<u_1\leq u_{\Gamma}(u_0,r_1),$ we consider the truncated domain $\mathcal{D}_{u_1,r_1}$ defined in \eqref{truncated domain definition}.  For any quantity $E$ that is defined on $\mathcal{D}_{u_1,r_1}$, we denote its integral over $\mathcal{D}_{u_1,r_1}$ by:
\begin{align*}
    \iint_{\mathcal{D}_{u_1,r_1}}E\ d\mathrm{Vol}&=\int_0^{r_1}\int_{u_0}^{\min\{u_{\Gamma}(u_0,\Tilde{r}),u_1\}}E\big(\chi_{u_0,\Tilde{r}}(u)\big)\cdot\textbf{v}(u_0,\Tilde{r},u)dud\Tilde{r}\\
    &=\int_{u_0}^{u_1}\int_0^{r(\chi_{u_0,r_1}(u))}E(u,r)drdu.
\end{align*}

\subsection{Preliminary estimates for the scalar field}\label{preliminary scalar field section}
In this section, we establish preliminary estimates for the scalar field. In particular, we show in \eqref{rdrphi bound} that the scalar field cannot blow up at a rate that is faster than the self-similar rate. First, using \eqref{dr m Bondi} and \eqref{upper bound lambda}, we have the following a priori bound for the energy of the scalar field for any $(u,r)\in\mathcal{D}$:
\begin{equation}\label{preliminary bound for phi from m}
    \int_0^r\Tilde{r}\big(\partial_r\phi\big)^2(u,\Tilde{r})d\Tilde{r}\leq\frac{m(u,r)}{\delta}.
\end{equation}

We define $\psi=r^{3/2}\partial_r\phi.$ Since $k\geq2,$ \eqref{regularity conditions scalar field L2}-\eqref{regularity conditions pointwise} imply that there exists a constant $A>0$, such that:
\begin{equation}\label{assumption on initial data mass gap}
    \sup_{r\in[0,r_*]}\big|\partial_r\phi(u_0,r)\big|^2+\int_0^{r_*}\big(\partial_r\psi\big)^2(u_0,\Tilde{r})d\Tilde{r}\leq\frac{A^2}{1000}.
\end{equation}

We state the main result of the section:
\begin{proposition}\label{preliminary scalar field proposition}
     We have the following estimates for the scalar field:
    \begin{equation}\label{rdrphi bound}
        \sup_{(u,r)\in\mathcal{D}}\big|r\partial_r\phi(u,r)\big|\leq A,
    \end{equation}
    \begin{equation}\label{preliminary energy estimate psi}
        \sup_{u\in[u_0,u_*)}\int_0^{r(\chi_{u_0,r_*}(u))}\big(\partial_r\psi\big)^2(u,r)dr+\iint_{\mathcal{D}}\bigg(\frac{\big(\partial_r\psi\big)^2}{r}+\frac{\psi^2}{r^3}\bigg)\ d\mathrm{Vol}\leq A^2.
    \end{equation}
\end{proposition}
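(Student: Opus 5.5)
The plan is to prove \eqref{preliminary energy estimate psi} by an energy estimate for the transport-type equation satisfied by $\psi=r^{3/2}\partial_r\phi$ along the ingoing null direction $n$, and then to deduce \eqref{rdrphi bound} from it by a one-dimensional Sobolev/Hardy argument on each outgoing cone $\{u\}\times[0,r(\chi_{u_0,r_*}(u))]$. First we rewrite \eqref{wave eq Bondi} in terms of $\psi$. Using $n=\partial_u-\frac12\overline{\gamma}\partial_r$, the wave equation becomes
\[
2n(\partial_r\phi)+\frac{1}{r}\partial_u\phi-\frac{\overline{\gamma}}{r}\partial_r\phi-\partial_r\overline{\gamma}\,\partial_r\phi=0 .
\]
Differentiating this in $r$ (the natural commutator, since $\partial_r\psi$ is what we must control in $L^2$) produces an equation of the form
\[
n(\partial_r\psi)=-c\,\frac{\overline{\gamma}}{r}\partial_r\psi+(\text{terms in }\psi/r^2,\ \partial_r\overline{\gamma}\,\partial_r\psi,\ \partial_r^2\overline{\gamma}\,\psi/r,\ r^{1/2}\partial_r\partial_u\phi/r)
\]
with a positive constant $c$. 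Here the $\partial_u\phi$ contribution is eliminated by expressing $\partial_u\phi$ at $(u,r)$ as $-\frac{1}{r}\int_0^r\partial_r(\tilde r\overline{\gamma}\partial_r\phi)\,d\tilde r$ minus the $\partial_u\partial_r$ term. Alternatively, we can work with $\partial_r(r\partial_u\phi)$ directly.

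The key structural point is the favourable sign of the term $-c\overline{\gamma}\partial_r\psi/r$. Multiplying by $\partial_r\psi$ and integrating over $\mathcal{D}_{u_1,r_1}$ with the volume form from \Cref{bounds on geometry section} (using $|\textbf{v}-1|<1/50$ from \eqref{bound volume form}) gives the following. The flux through $\{u=u_1\}$ is $\int(\partial_r\psi)^2dr$. The ingoing flux through the cone $\chi_{u_0,r_1}$ has a good sign. The initial term is bounded by \eqref{assumption on initial data mass gap}. The bulk term is coercive, of size $\gtrsim\iint(\partial_r\psi)^2/r$. The boundary contribution at $\Gamma$ vanishes by regularity: $\psi=O(r^{3/2})$ there. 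Hardy's inequality in $r$ with vanishing at $r=0$ gives
\[
\int\psi^2/r^3\,dr\le 4\int(\partial_r\psi)^2/r\,dr ,
\]
which controls the $\psi^2/r^3$ bulk term.

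The error terms come from $\partial_r\overline{\gamma}$ and $\partial_r^2\overline{\gamma}$, which are bounded by \eqref{bounds for dr overline gamma}--\eqref{bounds for dr^2 overline gamma}. The dangerous piece is $\partial_r^2\overline{\gamma}\ni|\Lambda|\gamma r^2(\partial_r\phi)^2$, which is cubic. We will handle it with a bootstrap on \eqref{rdrphi bound}: assuming $|r\partial_r\phi|\le 2A$, this term is bounded by $|\Lambda|\gamma\cdot 4A^2$. Then smallness of $r_*$ from \eqref{bound on r*} (the factor $r\le r_*$ in front of these errors) absorbs every error into the coercive bulk. This yields \eqref{preliminary energy estimate psi} with constant $A^2$, given the factor $1000$ slack in \eqref{assumption on initial data mass gap}.

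To close the bootstrap, we write
\[
r\partial_r\phi=r^{-1/2}\psi=r^{-1/2}\int_0^r\partial_r\psi\,d\tilde r ,
\]
so Cauchy--Schwarz gives $|r\partial_r\phi|\le\big(\int_0^r(\partial_r\psi)^2\big)^{1/2}\le A$, improving $2A$ to $A$. This gives \eqref{rdrphi bound}, and a continuity argument in $u_1$ extends it to all of $\mathcal{D}$. The main obstacle is the commuted equation: precisely identifying the coefficient of $\partial_r\psi/r$ to confirm it is positive (this is where the choice of the weight $3/2$ matters), and ensuring the $\partial_u\phi/r$ term does not destroy coercivity. If the sign fails for the weight $3/2$, I would instead adjust the weight $r^{p}$ and use the energy bound \eqref{preliminary bound for phi from m} to control the lower-order terms.
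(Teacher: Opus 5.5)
Your overall architecture matches the paper's: a transport energy for $\partial_r\psi$ along $n$ on the truncated domains $\mathcal{D}_{u_1,r_1}$, Hardy for $\psi^2/r^3$, and a bootstrap on $|r\partial_r\phi|$ closed by $r^{-1/2}|\psi|\le(\int_0^r(\partial_r\psi)^2)^{1/2}$. But two steps fail as written.

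\textbf{The cubic term.} You bound $r^2(\partial_r\phi)^2\le4A^2$ and absorb
$\iint|\Lambda|\gamma\,4A^2|\psi\partial_r\psi|\,d\mathrm{Vol}\lesssim\frac{|\Lambda|}{\delta}A^2r_*^2\iint\big(\frac{(\partial_r\psi)^2}{r}+\frac{\psi^2}{r^3}\big)d\mathrm{Vol}$
into the bulk. This requires $|\Lambda|A^2r_*^2/\delta\ll1$.
\begin{itemize}
\item Condition \eqref{bound on r*} is independent of $A$, so it does not provide this.
\item You also cannot enforce it by shrinking $r_*$. That means moving $u_0$ towards $u_*$, which changes the data and hence $A$, the very quantity whose propagation you are proving.
\end{itemize}
The paper instead writes the cubic part as $|\Lambda|\gamma\,\psi^3\partial_r\psi/r=|\Lambda|\gamma\,\partial_r(\psi^4)/(4r)$ and integrates by parts in $r$.
\begin{itemize}
\item In the bulk term $\iint|\Lambda|\gamma\,\psi^4/(4r^2)\,d\mathrm{Vol}$, the bootstrap is used on only two factors ($\psi^2\le A^2r$). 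The $A$-independent mass bound \eqref{preliminary bound for phi from m} handles the other two, via $\iint\psi^2/r^2\,d\mathrm{Vol}\le|u_*-u_0|/\delta$. This gives at most $A^2/1000$ by \eqref{bound on r*}, which is harmless since the target is $A^2$.
\item The boundary term on $\chi_{u_0,r_0}$ is handled by $\psi^4\le\frac{4m}{\delta}\int_0^r\tilde r^2(\partial_r\psi)^2\,d\tilde r$. This is absorbable with no $A$ at all.
\end{itemize}

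\textbf{The $\psi/r^2$ term.} This is not an error that smallness of $r_*$ can absorb.
\begin{itemize}
\item Differentiate $r$ times \eqref{wave eq Bondi} in $r$. Precisely for the weight $r^{3/2}$ this gives $2\partial_u\partial_r\psi=r^{1/2}\partial_r^2\big(r^{-1/2}\overline{\gamma}\psi\big)$, so all $\partial_u\phi$ terms cancel identically.
\item This cancellation, not a sign condition, is what fixes the exponent $3/2$. A term like $r^{-1/2}\partial_r\partial_u\phi$ in your list, or a nonlocal substitution for $\partial_u\phi$, could not be controlled by this energy. Changing the weight would reintroduce $\partial_u\phi$.
\end{itemize}
The resulting identity \eqref{identity for psi} contains $-\frac{3\overline\gamma}{8r^2}\partial_r(\psi^2)$. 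This scales exactly like the coercive term $\frac{\overline\gamma}{r}(\partial_r\psi)^2$ and carries no power of $r_*$, so it must be beaten by constants.
\begin{itemize}
\item After integration by parts it becomes a boundary term on the ingoing side $\chi_{u_0,r_0}$ plus $\frac34\iint\overline\gamma\psi^2/r^3$. That boundary term has the bad sign, contrary to your expected favourable ingoing flux.
\item The sharp Hardy inequality \eqref{Hardy mass gap} has constant $1$ and keeps the pointwise term $\overline\gamma\psi^2/r^2$ on the left. It bounds both pieces by $\frac34\iint\overline\gamma(\partial_r\psi)^2/r$, up to $\partial_r\overline\gamma$ errors.
\item This leaves a margin of $\frac14$, then $\frac18$ after also placing $\iint\overline\gamma\psi^2/r^3$ on the left.
\end{itemize}
With your Hardy constant $4$, the best Young splitting of $\frac{3\overline\gamma}{4r^2}|\psi\partial_r\psi|$ costs $\frac34\sqrt{4}=\frac32>1$ times the coercive term, so the estimate does not close.
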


We first derive the following identity for $\psi,$ which is essential in proving energy estimates:
\begin{lemma}
    The function $\psi=r^{3/2}\partial_r\phi$ satisfies the identity:
    \begin{equation}\label{identity for psi}
        n\big(\partial_r\psi\big)^2+\frac{1}{r}\overline{\gamma}\big(\partial_r\psi\big)^2-\frac{3}{8r^2}\overline{\gamma}\partial_r\big(\psi^2\big)-2\partial_r\overline{\gamma}\big(\partial_r\psi\big)^2+\psi\partial_r\psi\cdot\bigg(\frac{1}{r}\partial_r\overline{\gamma}-\partial_r^2\overline{\gamma}\bigg)=0.
    \end{equation}
\end{lemma}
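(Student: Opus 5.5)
The identity is a direct algebraic consequence of the wave equation \eqref{wave eq Bondi}. The plan is to first derive a transport equation along $n$ for $\partial_r\psi$ in which $\partial_u\phi$ no longer appears, and then multiply it by $2\partial_r\psi$. The weight $r^{3/2}$ in $\psi$ is chosen precisely so that the $\partial_u\phi$-type terms cancel.

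\textbf{Step 1: eliminate $\partial_u\phi$.} Multiply \eqref{wave eq Bondi} by $r$ to get
\[2r\partial_u\partial_r\phi+\partial_u\phi=\partial_r\big(r\overline{\gamma}\partial_r\phi\big).\]
Then apply $\partial_r$, using that $\partial_u$ and $\partial_r$ commute in Bondi coordinates:
\[3\partial_u\partial_r\phi+2r\partial_u\partial_r^2\phi=\partial_r^2\big(r\overline{\gamma}\partial_r\phi\big).\]
Now substitute $\partial_r\phi=r^{-3/2}\psi$ and $\partial_r^2\phi=r^{-3/2}\partial_r\psi-\tfrac32 r^{-5/2}\psi$. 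The two terms $\pm3r^{-3/2}\partial_u\psi$ cancel, so the left-hand side becomes exactly $2r^{-1/2}\partial_u\partial_r\psi$. This yields
\[2\partial_u\partial_r\psi=r^{1/2}\partial_r^2\big(\overline{\gamma}r^{-1/2}\psi\big).\]

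\textbf{Step 2: expand the right-hand side.} Leibniz gives
\[r^{1/2}\partial_r^2\big(\overline{\gamma}r^{-1/2}\psi\big)=\overline{\gamma}\partial_r^2\psi+\Big(2\partial_r\overline{\gamma}-\tfrac{1}{r}\overline{\gamma}\Big)\partial_r\psi+\Big(\partial_r^2\overline{\gamma}-\tfrac1r\partial_r\overline{\gamma}+\tfrac{3}{4r^2}\overline{\gamma}\Big)\psi.\]
Since $n=\partial_u-\tfrac12\overline{\gamma}\partial_r$, the second-order term $\tfrac12\overline{\gamma}\partial_r^2\psi$ is absorbed and we get
\[n(\partial_r\psi)=\Big(\partial_r\overline{\gamma}-\tfrac{1}{2r}\overline{\gamma}\Big)\partial_r\psi+\tfrac12\Big(\partial_r^2\overline{\gamma}-\tfrac1r\partial_r\overline{\gamma}\Big)\psi+\tfrac{3}{8r^2}\overline{\gamma}\psi.\]

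\textbf{Step 3: multiply by $2\partial_r\psi$.} Multiplying the last equation by $2\partial_r\psi$ and writing $2\psi\partial_r\psi=\partial_r(\psi^2)$ in the $\tfrac{3}{8r^2}\overline{\gamma}$ term gives
\[n(\partial_r\psi)^2=\Big(2\partial_r\overline{\gamma}-\tfrac{1}{r}\overline{\gamma}\Big)(\partial_r\psi)^2+\psi\partial_r\psi\Big(\partial_r^2\overline{\gamma}-\tfrac1r\partial_r\overline{\gamma}\Big)+\tfrac{3}{8r^2}\overline{\gamma}\partial_r(\psi^2).\]
Moving everything to one side reproduces \eqref{identity for psi} term by term.

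The only real obstacle is bookkeeping in Step 1. One must check carefully that the $\partial_u\psi$ terms cancel exactly with the $3/2$ exponent, and track the $r^{-1/2}$ factors through the second derivative in Step 2. Both equations used hold classically in the region where $\phi\in C^2$ and $\partial_r\psi$ is differentiable along $n$. If needed, the identity can be justified first for smooth data and then extended by approximation. The identity is used only in integrated form afterwards.
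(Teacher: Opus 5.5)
Your proposal is correct and follows the paper's proof essentially step for step. Like the paper, you differentiate the wave equation in $r$ to reach $2\partial_u\partial_r\psi=r^{1/2}\partial_r^2\big(r^{-1/2}\overline{\gamma}\psi\big)$, expand it, replace $\partial_u$ by $n$, and multiply by $\partial_r\psi$. Your halving followed by multiplication by $2\partial_r\psi$ is equivalent to the paper's multiplication of the unhalved equation by $\partial_r\psi$.
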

\begin{proof}
    The wave equation \eqref{wave eq Bondi} implies:
\[2\partial_u\partial_r\big(r\partial_r\phi\big)+\frac{1}{r}\partial_u\big(r\partial_r\phi\big)-\partial_r^2\big(r\overline{\gamma}\partial_r\phi\big)=0.\]
Using the definition of $\psi$, we get that:
\[2\partial_u\partial_r\psi-r^{1/2}\partial_r^2\big(r^{-1/2}\overline{\gamma}\psi\big)=0.\]
Expanding the second term, we further get that:
%\[2\partial_u\partial_r\psi-r^{1/2}\partial_r\big(r^{-1/2}\overline{\gamma}\partial_r\psi+r^{-1/2}\partial_r\overline{\gamma}\psi-1/2r^{-3/2}\overline{\gamma}\psi\big)=0\]
\[2\partial_u\partial_r\psi-\overline{\gamma}\partial_r^2\psi-2\partial_r\overline{\gamma}\partial_r\psi+\frac{1}{r}\overline{\gamma}\partial_r\psi-\frac{3}{4r^2}\overline{\gamma}\psi-\psi\partial_r^2\overline{\gamma}+\frac{1}{r}\psi\partial_r\overline{\gamma}=0.\]
Next, we use \eqref{definition of n} to get the equation:
\[2n\partial_r\psi+\frac{1}{r}\overline{\gamma}\partial_r\psi-\frac{3}{4r^2}\overline{\gamma}\psi-2\partial_r\overline{\gamma}\partial_r\psi+\frac{1}{r}\psi\partial_r\overline{\gamma}-\psi\partial_r^2\overline{\gamma}=0.\]
Multiplying by $\partial_r\psi,$ we conclude the proof of \eqref{identity for psi}.
\end{proof}

\begin{proof}[Proof of \Cref{preliminary scalar field proposition}]
    For some $0<r_1\leq r_*$ and $u_0<u_1\leq u_{\Gamma}(u_0,r_1),$ we make the bootstrap assumption for all $(u,r)\in\mathcal{D}_{u_1,r_1}$:
    \begin{equation}\label{bootstrap assumption preliminary mass gap}
        \big|\partial_r\phi(u,r)\big|\leq\frac{A}{r}.
    \end{equation}

    We note that a direct computation using \eqref{volume form definition} implies that:
    \[n\big(\partial_r\psi\big)^2\big(\chi_{u_0,\Tilde{r}}(s)\big)\cdot \textbf{v}(u_0,\Tilde{r},s)=\frac{d}{ds}\bigg(\big(\partial_r\psi\big)^2\big(\chi_{u_0,\Tilde{r}}(s)\big)\cdot \textbf{v}(u_0,\Tilde{r},s)\bigg)+\frac{1}{2}\big(\partial_r\psi\big)^2\partial_r\overline{\gamma}\big(\chi_{u_0,\Tilde{r}}(s)\big)\cdot \textbf{v}(u_0,\Tilde{r},s).\]
    For any $\chi_{u_0,\Tilde{r}}(u)\in\mathcal{D}_{u_1,r_1}$ we integrate \eqref{identity for psi} along $\chi_{u_0,\Tilde{r}}$ for $s\in[u_0,u]:$
    \[\big(\partial_r\psi\big)^2\big(\chi_{u_0,\Tilde{r}}(u)\big)\cdot \textbf{v}(u_0,\Tilde{r},u)+\int_{u_0}^u\bigg(\frac{\overline{\gamma}}{r}\big(\partial_r\psi\big)^2\big(\chi_{u_0,\Tilde{r}}(s)\big)-\frac{3\overline{\gamma}}{8r^2}\partial_r\psi^2\big(\chi_{u_0,\Tilde{r}}(s)\big)\bigg)\cdot \textbf{v}(u_0,\Tilde{r},s)ds=\]\[=\big(\partial_r\psi\big)^2(u_0,\Tilde{r})+\int_{u_0}^u\bigg(\frac{3}{2}\partial_r\overline{\gamma}\big(\partial_r\psi\big)^2\big(\chi_{u_0,\Tilde{r}}(s)\big)+\psi\partial_r\psi\cdot\bigg(\partial_r^2\overline{\gamma}-\frac{1}{r}\partial_r\overline{\gamma}\bigg)\big(\chi_{u_0,\Tilde{r}}(s)\big)\bigg)\cdot \textbf{v}(u_0,\Tilde{r},s)ds.\]
    We notice that for $k\geq2,$ \eqref{regularity conditions pointwise} implies that $\partial_r\psi=0$ on $\Gamma.$ For any $\chi_{u_0,r_0}(u)\in\mathcal{D}_{u_1,r_1},$ we integrate the identity for all $\Tilde{r}\in[0,r_0]:$
    \begin{align}\label{energy identity preliminary}
        &\int_0^{r(\chi_{u_0,r_0}(u))}\big(\partial_r\psi\big)^2(u,r)dr+\iint_{\mathcal{D}_{u,r_0}}\bigg(\frac{\overline{\gamma}}{r}\big(\partial_r\psi\big)^2-\frac{3\overline{\gamma}}{8r^2}\partial_r\psi^2\bigg)\ d\mathrm{Vol}\\
        &=\int_0^{r_0}\big(\partial_r\psi\big)^2(u_0,\Tilde{r})d\Tilde{r}+\iint_{\mathcal{D}_{u,r_0}}\bigg(\frac{3}{2}\partial_r\overline{\gamma}\big(\partial_r\psi\big)^2+\psi\partial_r\psi\bigg(\partial_r^2\overline{\gamma}-\frac{1}{r}\partial_r\overline{\gamma}\bigg)\bigg)\ d\mathrm{Vol}.\notag
    \end{align}
    We deal with the following error term by integration by parts, since $r^{-1}\psi=0$ on $\Gamma$:
    \begin{equation}\label{first error term int by parts}
        \iint_{\mathcal{D}_{u,r_0}}\frac{3\overline{\gamma}}{8r^2}\partial_r\psi^2\ d\mathrm{Vol}=\int_{u_0}^u\frac{3\overline{\gamma}}{8r^2}\psi^2\big(\chi_{u_0,r_0}(s)\big)ds+\iint_{\mathcal{D}_{u,r_0}}\bigg(\frac{3\overline{\gamma}}{4r^3}-\frac{3\partial_r\overline{\gamma}}{8r^2}\bigg)\psi^2\ d\mathrm{Vol}.
    \end{equation}
    We have the following Hardy inequality, again using $r^{-1}\psi=0$ on $\Gamma$:
    \begin{equation}\label{Hardy mass gap}
        \frac{\overline{\gamma}\psi^2}{r^2}(s,r)+\int_0^r\frac{\overline{\gamma}}{\Tilde{r}^3}\psi^2(s,\Tilde{r})d\Tilde{r}\leq\int_0^r\frac{\overline{\gamma}}{\Tilde{r}}\big(\partial_r\psi\big)^2(s,\Tilde{r})d\Tilde{r}+\int_0^r\frac{\psi^2\partial_r\overline{\gamma}}{\Tilde{r}^2}(s,\Tilde{r})d\Tilde{r}.
    \end{equation}
    We use \eqref{Hardy mass gap} for $(s,r)=\chi_{u_0,r_0}(s)$ in \eqref{first error term int by parts} to get that:
    \begin{equation}\label{first error term bound}
        \iint_{\mathcal{D}_{u,r_0}}\frac{3\overline{\gamma}}{8r^2}\partial_r\psi^2\ d\mathrm{Vol}\leq\iint_{\mathcal{D}_{u,r_0}}\bigg(\frac{3\overline{\gamma}}{4r}\big(\partial_r\psi\big)^2+\frac{3\psi^2\partial_r\overline{\gamma}}{8r^2}\bigg)\ d\mathrm{Vol}.
    \end{equation}
    As a consequence of \eqref{Hardy mass gap}, we also have the estimate:
    \begin{equation}\label{one more error term bound}
        \frac{1}{8}\iint_{\mathcal{D}_{u,r_0}}\frac{\overline{\gamma}}{r^3}\psi^2\ d\mathrm{Vol}\leq\frac{1}{8}\iint_{\mathcal{D}_{u,r_0}}\bigg(\frac{\overline{\gamma}}{r}\big(\partial_r\psi\big)^2+\frac{\psi^2\partial_r\overline{\gamma}}{r^2}\bigg)\ d\mathrm{Vol}.
    \end{equation}

    Next, we consider the following error term and use \eqref{bounds for dr^2 overline gamma}:
    \[\iint_{\mathcal{D}_{u,r_0}}\psi\partial_r\psi\cdot\partial_r^2\overline{\gamma}\ d\mathrm{Vol}=\iint_{\mathcal{D}_{u,r_0}}|\Lambda|\gamma\bigg(\psi\partial_r\psi+\frac{\partial_r\psi^4}{4r}\bigg)\ d\mathrm{Vol}.\]
We integrate by parts the second term to obtain:
\begin{align}
    \iint_{\mathcal{D}_{u,r_0}}|\Lambda|\gamma\frac{\partial_r\psi^4}{4r}\ d\mathrm{Vol}&=\iint_{\mathcal{D}_{u,r_0}}|\Lambda|\psi^4\bigg(\frac{\gamma}{4r^2}-\frac{\partial_r\gamma}{4r}\bigg)\ d\mathrm{Vol}+\int_{u_0}^u|\Lambda|\gamma\frac{\psi^4}{4r}\big(\chi_{u_0,r_0}(s)\big)ds\label{second error term bound}\\
    &\leq\iint_{\mathcal{D}_{u,r_0}}|\Lambda|\psi^4\frac{\gamma}{4r^2}\ d\mathrm{Vol}+\int_{u_0}^u|\Lambda|\gamma\frac{\psi^4}{4r}\big(\chi_{u_0,r_0}(s)\big)ds.\notag
\end{align}

The error term estimates \eqref{first error term bound}-\eqref{second error term bound} and the energy identity \eqref{energy identity preliminary} imply for any $\chi_{u_0,r_0}(u)\in\mathcal{D}_{u_1,r_1}$:
\begin{align}
    \int_0^{r(\chi_{u_0,r_0}(u))}&\big(\partial_r\psi\big)^2(u,r)dr+\iint_{\mathcal{D}_{u,r_0}}\bigg(\frac{\overline{\gamma}}{r}\big(\partial_r\psi\big)^2+\frac{\overline{\gamma}}{r^3}\psi^2\bigg)\ d\mathrm{Vol}\label{energy estimate preliminary mass gap in prop}\\
    \leq16&\int_0^{r_0}\big(\partial_r\psi\big)^2(u_0,\Tilde{r})d\Tilde{r}+16\iint_{\mathcal{D}_{u,r_0}}\bigg(\partial_r\overline{\gamma}\big(\partial_r\psi\big)^2+\frac{\psi^2\partial_r\overline{\gamma}}{r^2}+|\Lambda|\gamma\cdot|\psi\partial_r\psi|\bigg)\ d\mathrm{Vol}\notag\\
    &+16\iint_{\mathcal{D}_{u,r_0}}|\Lambda|\psi^4\frac{\gamma}{4r^2}\ d\mathrm{Vol}+16\int_{u_0}^u|\Lambda|\gamma\frac{\psi^4}{4r}\big(\chi_{u_0,r_0}(s)\big)ds.\notag
\end{align}
We use \eqref{bounds for gamma} and \eqref{bounds for dr overline gamma} to bound the error terms:
\begin{align*}
\iint_{\mathcal{D}_{u,r_0}}\bigg(\partial_r\overline{\gamma}\big(\partial_r\psi\big)^2+\frac{\psi^2\partial_r\overline{\gamma}}{r^2}+|\Lambda|\gamma\cdot|\psi\partial_r\psi|\bigg)\ d\mathrm{Vol}&\leq \frac{4|\Lambda|}{\delta}\iint_{\mathcal{D}_{u,r_0}}\bigg(r\big(\partial_r\psi\big)^2+\frac{\psi^2}{r}\bigg)\ d\mathrm{Vol}\\
    &\leq\frac{4|\Lambda|}{\delta}r_*^2\iint_{\mathcal{D}_{u,r_0}}\bigg(\frac{\big(\partial_r\psi\big)^2}{r}+\frac{\psi^2}{r^3}\bigg)\ d\mathrm{Vol},
\end{align*}
which can be absorbed on the LHS of the energy estimate using \eqref{bound on r*} and \eqref{bounds for overline gamma}. We use the bootstrap assumption \eqref{bootstrap assumption preliminary mass gap}, the preliminary estimate \eqref{preliminary bound for phi from m}, and \eqref{bound on r*}:
\[\iint_{\mathcal{D}_{u,r_0}}|\Lambda|\psi^4\frac{\gamma}{4r^2}\ d\mathrm{Vol}\leq\frac{|\Lambda|A^2}{2\delta}r_*\iint_{\mathcal{D}_{u,r_0}}\frac{\psi^2}{r^2}\ d\mathrm{Vol}\leq\frac{|\Lambda|A^2|u-u_0|\cdot m(u_0,r_0)\cdot r_*}{2\delta^2}\leq\frac{A^2}{1000}.\]

Additionally, using Cauchy-Schwarz and \eqref{preliminary bound for phi from m}, we also have:
\[\psi^4(s,r)=\bigg(2\int_0^r\psi\partial_r\psi(s,\Tilde{r})d\Tilde{r}\bigg)^2\leq\frac{4m(s,r)}{\delta}\int_0^r\Tilde{r}^2\big(\partial_r\psi\big)^2(s,\Tilde{r})d\Tilde{r}.\]
We use this for $(s,r)=\chi_{u_0,r_0}(s)$ to get for the last error term in \eqref{energy estimate preliminary mass gap in prop}:
\[\int_{u_0}^u|\Lambda|\gamma\frac{\psi^4}{4r}\big(\chi_{u_0,r_0}(s)\big)ds\leq\frac{4|\Lambda|}{\delta^2}\iint_{\mathcal{D}_{u,r_0}}r\big(\partial_r\psi\big)^2\ d\mathrm{Vol}\leq\frac{4|\Lambda|}{\delta^2}r_*^2\iint_{\mathcal{D}_{u,r_0}}\frac{\big(\partial_r\psi\big)^2}{r}\ d\mathrm{Vol},\]
which can again be absorbed on the LHS of the energy estimate. 

We bounded all the error terms in \eqref{energy estimate preliminary mass gap in prop}, so using \eqref{assumption on initial data mass gap} for the initial data, we proved that for any $\chi_{u_0,r_0}(u)\in\mathcal{D}_{u_1,r_1}$ we have the estimate:
\begin{equation}\label{energy estimate preliminary mass gap in prop 2}
    \int_0^{r(\chi_{u_0,r_0}(u))}\big(\partial_r\psi\big)^2(u,r)dr+\iint_{\mathcal{D}_{u,r_0}}\bigg(\frac{\big(\partial_r\psi\big)^2}{r}+\frac{\psi^2}{r^3}\bigg)\ d\mathrm{Vol}\leq\frac{A^2}{10}.
\end{equation}
Using the first term in \eqref{energy estimate preliminary mass gap in prop 2}, we can  improve the bootstrap assumption \eqref{bootstrap assumption preliminary mass gap} for all $(u,r)\in\mathcal{D}_{u_1,r_1}$:
\[\big|r\partial_r\phi\big|(u,r)\leq\frac{1}{r^{1/2}}\int_0^r\big|\partial_r\psi\big|(u,\Tilde{r})d\Tilde{r}\leq\bigg(\int_0^r\big|\partial_r\psi\big|^2(u,\Tilde{r})d\Tilde{r}\bigg)^{1/2}\leq\frac{A}{2}.\]
Thus, we conclude that \eqref{bootstrap assumption preliminary mass gap} and \eqref{energy estimate preliminary mass gap in prop 2} hold for all $0<r_1\leq r_*$ and $u_0<u_1\leq u_{\Gamma}(u_0,r_1),$ completing the proof of \eqref{rdrphi bound}-\eqref{preliminary energy estimate psi}. 
\end{proof}

\subsection{Improved estimates for the scalar field}\label{improved scalar field section}
The goal of this section is to improve the previous preliminary estimates. The main point is that we had additional room for improvement in the proof of \Cref{preliminary scalar field proposition}, which we exploit in \Cref{improved scalar field proposition}. The improved estimates will play a key role in \Cref{proof of mass gap section} and allow us to prove uniform bounds for the local well-posedness norm.

We fix $\rho=1/100.$ Since $k\geq2,$ \eqref{regularity conditions scalar field L2}-\eqref{regularity conditions pointwise} imply that there exists a constant $\Tilde{A}>0$, such that:
\begin{equation}\label{assumption on initial data mass gap improved}
    \int_0^{r_*}\Tilde{r}^{-\rho}\big(\partial_r\psi\big)^2(u_0,\Tilde{r})d\Tilde{r}\leq\frac{\Tilde{A}^2}{1000}.
\end{equation}

\begin{proposition}\label{improved scalar field proposition}
     We have the following estimates for the scalar field:
    \begin{equation}\label{improved rdrphi bound}
        \sup_{(u,r)\in\mathcal{D}}\big|r^{1-\frac{\rho}{2}}\partial_r\phi(u,r)\big|\leq A+\Tilde{A},
    \end{equation}
    \begin{equation}\label{improved energy estimate psi}
        \sup_{u\in[u_0,u_*)}\int_0^{r(\chi_{u_0,r_*}(u))}r^{-\rho}(\partial_r\psi\big)^2(u,r)dr+\iint_{\mathcal{D}}\bigg(\frac{\big(\partial_r\psi\big)^2}{r^{1+\rho}}+\frac{\psi^2}{r^{3+\rho}}\bigg)\ d\mathrm{Vol}\leq A^2+\Tilde{A}^2.
    \end{equation}
\end{proposition}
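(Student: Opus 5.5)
The plan is to repeat the energy argument in the proof of \Cref{preliminary scalar field proposition}, now with the weight $r^{-\rho}$. The key observation is that the bulk term $\frac{\overline{\gamma}}{r}(\partial_r\psi)^2$ came with a spare factor: the coefficient $1$ against the $\frac{3}{4}$ lost in \eqref{first error term bound}. This slack absorbs the terms produced by differentiating a small power of $r$.

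\emph{Weighted identity.} Multiply the second-order equation for $\psi$ (the line before \eqref{identity for psi}) by $r^{-\rho}\partial_r\psi$ instead of $\partial_r\psi$. Since $n(r^{-\rho})=\frac{\rho}{2}\overline{\gamma}r^{-\rho-1}$, we get
\[
n\big(r^{-\rho}(\partial_r\psi)^2\big)=r^{-\rho}\,n(\partial_r\psi)^2+\frac{\rho}{2}\frac{\overline{\gamma}}{r^{1+\rho}}(\partial_r\psi)^2 .
\]
This extra term carries a favorable sign, and it is even an additional good term. Integrate along the curves $\chi_{u_0,\tilde r}$ with the volume form $\textbf{v}$, then over $\tilde r\in[0,r_0]$, exactly as in \eqref{energy identity preliminary}. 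The new version of the boundary contribution from $-\frac{3\overline{\gamma}}{8r^{2+\rho}}\partial_r\psi^2$ is handled by integration by parts as in \eqref{first error term int by parts}. It produces a good boundary term and the bulk term
\[
\Big(\tfrac{3(2+\rho)}{8}\overline{\gamma}r^{-3-\rho}-\tfrac{3}{8}\partial_r\overline{\gamma}\,r^{-2-\rho}\Big)\psi^2 .
\]
The weighted Hardy inequality (the analogue of \eqref{Hardy mass gap} with $r^{-\rho}$) gives
\[
\int_0^r \overline{\gamma}\,\tilde r^{-3-\rho}\psi^2 \le \Big(\tfrac{2}{2+\rho}\Big)^2\int_0^r \overline{\gamma}\,\tilde r^{-1-\rho}(\partial_r\psi)^2+\text{(lower order)}.
\]
Hence this term is bounded by $\frac{3}{4(2+\rho)}\cdot\frac{(2+\rho)}{1}\cdot\frac{1}{\ldots}$; more precisely, it is bounded by roughly $\frac{3}{2(2+\rho)}<1$ times the good bulk term. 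With $\rho=1/100$ there is therefore a fixed positive fraction left over, which also controls $\iint \psi^2/r^{3+\rho}$.

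\emph{Error terms.} These are treated exactly as before, with an extra factor $r^{-\rho}$ throughout. The $\partial_r\overline{\gamma}$ terms and the $|\Lambda|\gamma\psi\partial_r\psi$ term gain $r^2\le r_*^2$ by \eqref{bounds for dr overline gamma} and \eqref{bounds for gamma}, and are absorbed using \eqref{bound on r*}. For the quartic terms coming from $\partial_r^2\overline{\gamma}$ via \eqref{bounds for dr^2 overline gamma}, I now use the already established bound \eqref{rdrphi bound}, i.e.\ $\psi^2\le A^2 r$, instead of a bootstrap. For the bulk term this gives
\[
\iint|\Lambda|\gamma\,\psi^4 r^{-2-\rho}\lesssim A^2 r_*^{1-\rho}\iint \psi^2 r^{-2}\,(\ldots),
\]
which is controlled by \eqref{preliminary bound for phi from m} together with the unweighted bound \eqref{preliminary energy estimate psi}. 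For the boundary term along $\chi_{u_0,r_0}$, I combine the Cauchy--Schwarz bound on $\psi^4$ with $r^{-\rho}$ weights. This yields
\[
\int_{u_0}^{u}\psi^4 r^{-1-\rho}\,ds\lesssim \delta^{-2}r_*^{2}\iint r^{-1-\rho}(\partial_r\psi)^2 ,
\]
which is absorbed. Using \eqref{assumption on initial data mass gap improved} for the data term, these estimates give \eqref{improved energy estimate psi} with constant $A^2+\tilde A^2$. A bootstrap on the truncated domains $\mathcal{D}_{u_1,r_1}$ can be retained if needed for finiteness of the quantities involved.

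\emph{Pointwise bound.} By Cauchy--Schwarz,
\[
|\psi(u,r)|\le\int_0^r|\partial_r\psi|\le\Big(\int_0^r\tilde r^{\rho}\Big)^{1/2}\Big(\int_0^r\tilde r^{-\rho}(\partial_r\psi)^2\Big)^{1/2}\lesssim r^{\frac{1+\rho}{2}}(A+\tilde A).
\]
Hence $r^{1-\rho/2}|\partial_r\phi|=r^{-1/2-\rho/2}|\psi|\le A+\tilde A$, which is \eqref{improved rdrphi bound}. The main obstacle is the bookkeeping in the Hardy step. One must check that the combined coefficients of $\overline{\gamma}\psi^2/r^{3+\rho}$ and $\overline{\gamma}(\partial_r\psi)^2/r^{1+\rho}$ remain strictly positive after the weight perturbs both the integration by parts and the Hardy constant. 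This is where the smallness of $\rho$ is used.
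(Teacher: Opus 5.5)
Your proposal follows the paper's proof. You multiply by the weight $r^{-\rho}$, integrate along the curves $\chi_{u_0,\tilde r}$, and integrate the $\partial_r(\psi^2)$ term by parts. You absorb it with the weighted Hardy inequality, using the slack between $1$ and $\frac34$ left over from \Cref{preliminary scalar field proposition}. The remaining terms are handled with the extra room in the powers of $r$, using \eqref{rdrphi bound} in place of a bootstrap, and \eqref{improved rdrphi bound} follows by Cauchy--Schwarz.

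Two of your sign claims are wrong, although the argument survives both.

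First, since $n(r^{-\rho})=\frac{\rho}{2}\overline{\gamma}r^{-1-\rho}$, you have $r^{-\rho}n(\partial_r\psi)^2=n\big(r^{-\rho}(\partial_r\psi)^2\big)-\frac{\rho}{2}\overline{\gamma}r^{-1-\rho}(\partial_r\psi)^2$. So the weight does not add a good term: it lowers the coefficient of the good bulk term from $1$ to $1-\frac{\rho}{2}$, exactly as in the paper's weighted identity. The weight grows along the ingoing direction of propagation, which costs rather than helps.

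Second, integrating $-\frac{3\overline{\gamma}}{8r^{2+\rho}}\partial_r(\psi^2)$ by parts produces the boundary term $-\int_{u_0}^{u}\frac{3\overline{\gamma}}{8r^{2+\rho}}\psi^2(\chi_{u_0,r_0}(s))\,ds$ on the left-hand side. This is an error term, not a good one. It must be absorbed using the boundary term on the left of the Hardy inequality \eqref{Hardy mass gap improved}.

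The count still closes once both corrections are made. Write $B$ for that boundary term (without the $\frac38$), $X$ for the bulk of $\overline{\gamma}r^{-3-\rho}\psi^2$, and $Y$ for the bulk of $\overline{\gamma}r^{-1-\rho}(\partial_r\psi)^2$. The sharp Hardy inequality with boundary gives $B+\frac{2+\rho}{2}X\le\frac{2}{2+\rho}Y$ plus lower order terms. Hence $\frac38 B+\frac{3(2+\rho)}{8}X\le\frac{3}{2(2+\rho)}Y$ plus lower order terms. The leftover coefficient is $1-\frac{\rho}{2}-\frac{3}{2(2+\rho)}=\frac{1-\rho^2}{2(2+\rho)}>0$. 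The paper uses the cruder Hardy constant and obtains the leftover $\frac{1-7\rho}{8}$ instead, which is where it needs $\rho$ small.
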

\begin{proof}
    The proof mirrors that of \Cref{preliminary scalar field proposition}. We multiply \eqref{identity for psi} with $r^{-\rho}$ to get:
    \[n\Big(r^{-\rho}\big(\partial_r\psi\big)^2\Big)+\frac{1-\rho/2}{r^{1+\rho}}\overline{\gamma}\big(\partial_r\psi\big)^2-\frac{3}{8r^{2+\rho}}\overline{\gamma}\partial_r\psi^2-2r^{-\rho}\partial_r\overline{\gamma}\big(\partial_r\psi\big)^2+r^{-\rho}\psi\partial_r\psi\cdot\bigg(\frac{1}{r}\partial_r\overline{\gamma}-\partial_r^2\overline{\gamma}\bigg)=0.\]
    As in the proof of \Cref{preliminary scalar field proposition}, this implies the energy identity for any $\chi_{u_0,r_0}(u)\in\mathcal{D}:$
    \begin{align}\label{energy identity improved}
        &\int_0^{r(\chi_{u_0,r_0}(u))}r^{-\rho}\big(\partial_r\psi\big)^2(u,r)dr+\iint_{\mathcal{D}_{u,r_0}}\bigg(\frac{1-\rho/2}{r^{1+\rho}}\overline{\gamma}\big(\partial_r\psi\big)^2-\frac{3\overline{\gamma}}{8r^{2+\rho}}\partial_r\psi^2\bigg)\ d\mathrm{Vol}\\
        &=\int_0^{r_0}\Tilde{r}^{-\rho}\big(\partial_r\psi\big)^2(u_0,\Tilde{r})d\Tilde{r}+\iint_{\mathcal{D}_{u,r_0}}\bigg(\frac{3}{2}r^{-\rho}\partial_r\overline{\gamma}\big(\partial_r\psi\big)^2+r^{-\rho}\psi\partial_r\psi\bigg(\partial_r^2\overline{\gamma}-\frac{1}{r}\partial_r\overline{\gamma}\bigg)\bigg)\ d\mathrm{Vol}.\notag
    \end{align}
    As before, we first deal with the following error term in \eqref{energy identity improved} by integration by parts:
    \[\iint_{\mathcal{D}_{u,r_0}}\frac{3\overline{\gamma}}{8r^{2+\rho}}\partial_r\psi^2\ d\mathrm{Vol}=\int_{u_0}^u\frac{3\overline{\gamma}}{8r^{2+\rho}}\psi^2\big(\chi_{u_0,r_0}(s)\big)ds+\iint_{\mathcal{D}_{u,r_0}}\bigg(\frac{3(2+\rho)\overline{\gamma}}{8r^{3+\rho}}-\frac{3\partial_r\overline{\gamma}}{8r^{2+\rho}}\bigg)\psi^2\ d\mathrm{Vol}.\]
    Additionally, we have the following Hardy inequality:
    \begin{equation}\label{Hardy mass gap improved}
        \frac{\overline{\gamma}\psi^2}{r^{2+\rho}}(s,r)+\int_0^r\frac{(1+\rho)\overline{\gamma}}{\Tilde{r}^{3+\rho}}\psi^2(s,\Tilde{r})d\Tilde{r}\leq\int_0^r\frac{\overline{\gamma}}{\Tilde{r}^{1+\rho}}\big(\partial_r\psi\big)^2(s,\Tilde{r})d\Tilde{r}+\int_0^r\frac{\psi^2\partial_r\overline{\gamma}}{\Tilde{r}^{2+\rho}}(s,\Tilde{r})d\Tilde{r}.
    \end{equation}
    Using \eqref{Hardy mass gap improved}, we obtain the following bound for the error terms:
    \begin{equation}\label{error term improved bound}
        \iint_{\mathcal{D}_{u,r_0}}\frac{3\overline{\gamma}}{8r^{2+\rho}}\partial_r\psi^2\ d\mathrm{Vol}+\frac{1}{8}\iint_{\mathcal{D}_{u,r_0}}\frac{\overline{\gamma}}{r^{3+\rho}}\psi^2\ d\mathrm{Vol}\leq\iint_{\mathcal{D}_{u,r_0}}\bigg(\frac{(7+3\rho)\overline{\gamma}}{8r^{1+\rho}}\big(\partial_r\psi\big)^2+\frac{\psi^2\partial_r\overline{\gamma}}{r^{2+\rho}}\bigg)\ d\mathrm{Vol}.
    \end{equation}
    Next, we consider the following error term in \eqref{energy identity improved}:
    \[\iint_{\mathcal{D}_{u,r_0}}r^{-\rho}\psi\partial_r\psi\cdot\partial_r^2\overline{\gamma}\ d\mathrm{Vol}=\iint_{\mathcal{D}_{u,r_0}}|\Lambda|\gamma\bigg(r^{-\rho}\psi\partial_r\psi+\frac{\partial_r\psi^4}{4r^{1+\rho}}\bigg)\ d\mathrm{Vol}.\]
    We integrate by parts the second term to obtain:
    \begin{align}
    \iint_{\mathcal{D}_{u,r_0}}|\Lambda|\gamma\frac{\partial_r\psi^4}{4r^{1+\rho}}\ d\mathrm{Vol}&=\iint_{\mathcal{D}_{u,r_0}}|\Lambda|\psi^4\bigg(\frac{(1+\rho)\gamma}{4r^{2+\rho}}-\frac{\partial_r\gamma}{4r^{1+\rho}}\bigg)\ d\mathrm{Vol}+\int_{u_0}^u|\Lambda|\gamma\frac{\psi^4}{4r^{1+\rho}}\big(\chi_{u_0,r_0}(s)\big)ds\label{second error term improved bound}\\
    &\leq\iint_{\mathcal{D}_{u,r_0}}|\Lambda|\psi^4\frac{(1+\rho)\gamma}{4r^{2+\rho}}\ d\mathrm{Vol}+\int_{u_0}^u|\Lambda|\gamma\frac{\psi^4}{4r^{1+\rho}}\big(\chi_{u_0,r_0}(s)\big)ds.\notag
\end{align}
Using the error term estimates \eqref{error term improved bound}-\eqref{second error term improved bound} and the energy identity \eqref{energy identity improved}, we get for any $\chi_{u_0,r_0}(u)\in\mathcal{D}$:
\begin{align*}
    \int_0^{r(\chi_{u_0,r_0}(u))}&r^{-\rho}\big(\partial_r\psi\big)^2(u,r)dr+\iint_{\mathcal{D}_{u,r_0}}\bigg(\frac{1}{r^{1+\rho}}\big(\partial_r\psi\big)^2+\frac{1}{r^{3+\rho}}\psi^2\bigg)\ d\mathrm{Vol}\\
    \leq32&\int_0^{r_0}\Tilde{r}^{-\rho}\big(\partial_r\psi\big)^2(u_0,\Tilde{r})d\Tilde{r}+32\iint_{\mathcal{D}_{u,r_0}}\bigg(r^{-\rho}\partial_r\overline{\gamma}\big(\partial_r\psi\big)^2+\frac{\psi^2\partial_r\overline{\gamma}}{r^{2+\rho}}+r^{-\rho}|\Lambda|\gamma\cdot|\psi\partial_r\psi|\bigg)\ d\mathrm{Vol}\\
    &+32\iint_{\mathcal{D}_{u,r_0}}|\Lambda|\psi^4\frac{\gamma}{4r^{2+\rho}}\ d\mathrm{Vol}+32\int_{u_0}^u|\Lambda|\gamma\frac{\psi^4}{4r^{1+\rho}}\big(\chi_{u_0,r_0}(s)\big)ds.
\end{align*}
The first term on the RHS is bounded by $\Tilde{A}^2/10$ due to \eqref{assumption on initial data mass gap improved}. We notice that the remaining error terms are bounded by $A^2/5.$ This follows since in the proof of \Cref{preliminary scalar field proposition} we had additional room left in the powers of $r$ used. Thus, we conclude that \eqref{improved rdrphi bound}-\eqref{improved energy estimate psi} hold.
\end{proof}

\subsection{Proof of \Cref{mass gap theorem}}\label{proof of mass gap section}
\begin{proof}[Proof of \Cref{mass gap theorem}]
    To contradict \Cref{assumption mass gap} and complete the proof of \Cref{mass gap theorem}, it suffices to show that for $I(u)=\{u\}\times\big[0,r(\chi_{u_0,r_*}(u))\big],$ the functions $\phi\in H^{k+1}\big(I(u)\big),\ldots,\partial_u^{k}\phi\in H^1\big(I(u)\big)$ satisfy uniform bounds for $u\in[u_0,u_*).$ Having established the improved estimates for the scalar field in \Cref{improved scalar field section}, it will be straightforward to propagate the desired Sobolev norms from initial data. We notice that the uniform $H^1$ bound for $\phi$ follows from \eqref{n m Bondi} and \eqref{preliminary bound for phi from m}. The only nontrivial part is to establish uniform bounds for $\phi\in H^{2}\big(I(u)\big)$ and $\partial_u\phi\in H^1\big(I(u)\big)$ in \eqref{bound proved in mass gap thm pf}. At higher order, the equations become linear in the top order quantities, so the remaining bounds follow by a propagation of regularity argument, which we do not carry out here for the sake of brevity. We also note that with respect to Bondi coordinates, the Kodama vector field takes the form $\mathcal{T}=\gamma^{-1}\partial_u.$ Thus, one can further show that $\phi\in H^{k+1}\big(I(u)\big),\ldots,\mathcal{T}^{k}\phi\in H^1\big(I(u)\big)$ satisfy uniform bounds for $u\in[u_0,u_*).$ 
    %Note that using k=1 bounds on I'(u):=\{u\}\times\big[0,r(\chi_{u_0,r_*+\epsilon}(u))\big], we get that partial_r\gamma is unif bounded in \mathcal{D}, so get same statement for \mathcal{T} instead of \partial_ur.

    We further shrink the domain $\mathcal{D}$ by increasing $u_0,$ so that in addition to \eqref{bound on r*}, $r_*$ also satisfies:
    \begin{equation}\label{extra bound on r*}
    r_*\leq\frac{\delta^2\rho}{1000(1+|\Lambda|)(A+\Tilde{A}+1)^2(1+|u_*-u_0|)}.
    \end{equation}
    We point out that the bounds in Sections~\ref{bounds on geometry section}-\ref{improved scalar field section} still hold in the new domain $\mathcal{D}$ (except for \eqref{assumption on initial data mass gap} and \eqref{assumption on initial data mass gap improved}, which are replaced by \eqref{rdrphi bound}-\eqref{preliminary energy estimate psi} and \eqref{improved rdrphi bound}-\eqref{improved energy estimate psi}).

    For the rest of the proof, we establish the following bound for all $u\in[u_0,u_*):$
    \begin{equation}\label{bound proved in mass gap thm pf}
    \int_0^{r(\chi_{u_0,r_*}(u))}r\Big[\big(\partial_r\partial_u\phi\big)^2+\big(\partial_r^2\phi\big)^2+\big(\partial_u\phi\big)^2+\big(\partial_r\phi\big)^2\Big](u,r)dr\leq\frac{10}{\delta}+20\int_0^{r_*}\Tilde{r}\big(\partial_r\partial_u\phi\big)^2(u_0,\Tilde{r})d\Tilde{r}.
    \end{equation}

    We first commute the wave equation \eqref{wave eq Bondi} with $\partial_u:$
    \[2\partial_u\partial_r\partial_u\phi+\frac{1}{r}\partial_u\partial_u\phi-\overline{\gamma}\partial_r^2\partial_u\phi-\frac{1}{r}\overline{\gamma}\partial_r\partial_u\phi-\partial_r\overline{\gamma}\partial_r\partial_u\phi-\partial_u\overline{\gamma}\partial_r^2\phi-\frac{1}{r}\partial_u\overline{\gamma}\partial_r\phi-\partial_u\partial_r\overline{\gamma}\partial_r\phi=0.\]
    Multiplying this equation by $2n\partial_u\phi+\partial_r\partial_u\phi,$ we derive the energy identity:
    %We rewrite the $\partial_u$ derivatives using $n$ as follows:
    %\[2\partial_rn\partial_u\phi+\frac{1}{r}n\partial_u\phi-\frac{1}{2r}\overline{\gamma}\partial_r\partial_u\phi-\partial_u\overline{\gamma}\partial_r^2\phi-\frac{1}{r}\partial_u\overline{\gamma}\partial_r\phi-\partial_u\partial_r\overline{\gamma}\partial_r\phi=0\]
    %\[2n\partial_r\partial_u\phi+\frac{1}{r}n\partial_u\phi-\partial_r\overline{\gamma}\partial_r\partial_u\phi-\frac{1}{2r}\overline{\gamma}\partial_r\partial_u\phi-\partial_u\overline{\gamma}\partial_r^2\phi-\frac{1}{r}\partial_u\overline{\gamma}\partial_r\phi-\partial_u\partial_r\overline{\gamma}\partial_r\phi=0\]
    %We combine the two to obtain the following identities:
    %\[\partial_r\big(2r(n\partial_u\phi)^2\big)-\overline{\gamma}\partial_r\partial_u\phi\cdot n\partial_u\phi-2r\partial_u\overline{\gamma}\partial_r^2\phi\cdot n\partial_u\phi-2\partial_u\overline{\gamma}\partial_r\phi\cdot n\partial_u\phi-2r\partial_u\partial_r\overline{\gamma}\partial_r\phi\cdot n\partial_u\phi=0\]
    %\[n\big(r(\partial_r\partial_u\phi)^2\big)+\partial_r\partial_u\phi\cdot n\partial_u\phi-r\partial_r\overline{\gamma}\big(\partial_r\partial_u\phi\big)^2-r\partial_u\overline{\gamma}\partial_r^2\phi\cdot\partial_r\partial_u\phi-\partial_u\overline{\gamma}\partial_r\phi\cdot\partial_r\partial_u\phi-r\partial_u\partial_r\overline{\gamma}\partial_r\phi\cdot\partial_r\partial_u\phi=0\]
    \begin{align}
        \partial_r\big(2r(n\partial_u\phi)^2\big)+n\big(r(\partial_r\partial_u\phi)^2\big)=&r\partial_u\overline{\gamma}\partial_r^2\phi\cdot \big[2n\partial_u\phi+\partial_r\partial_u\phi\big]+ r\partial_r\overline{\gamma}\big(\partial_r\partial_u\phi\big)^2\label{energy identity for du phi}\\&+(\overline{\gamma}-1)\partial_r\partial_u\phi\cdot n\partial_u\phi+\partial_r\phi\cdot\big[\partial_u\overline{\gamma}+r\partial_u\partial_r\overline{\gamma}\big] \cdot\big[2n\partial_u\phi+\partial_r\partial_u\phi\big].\notag
    \end{align}
    We remark that the only potentially dangerous term is the first term on the RHS of \eqref{energy identity for du phi}, which is cubic in second derivatives of the scalar field, since $\partial_u\overline{\gamma}$ contains a $\partial_u\partial_r\phi$ term. In our argument below, we use the improved estimates in \Cref{improved scalar field section} to deal with this term. We also notice that when commuting the wave equation \eqref{wave eq Bondi} with more $\partial_u$ derivatives, we only get linear and quadratic top order terms in the corresponding energy identity, which simplifies the argument considerably. 

    We fix any $0<r_1\leq r_*$ and $u_0<u_1\leq u_{\Gamma}(u_0,r_1).$ For any $\chi_{u_0,r_0}(u)\in\mathcal{D}_{u_1,r_1}$, we integrate the energy identity \eqref{energy identity for du phi} in $\mathcal{D}_{u,r_0}$ to obtain the energy estimate:
    \begin{align}\label{energy identity pf of mass gap thm}
        &\int_0^{r(\chi_{u_0,r_0}(u))}r\big(\partial_r\partial_u\phi\big)^2(u,r)dr+\int_{u_0}^{u}2r\big(n\partial_u\phi\big)^2\big(\chi_{u_0,r_0}(s)\big)ds\\
        &=\int_0^{r_0}\Tilde{r}\big(\partial_r\partial_u\phi\big)^2(u_0,\Tilde{r})d\Tilde{r}+\iint_{\mathcal{D}_{u,r_0}}\bigg(r\partial_u\overline{\gamma}\partial_r^2\phi\cdot \big[2n\partial_u\phi+\partial_r\partial_u\phi\big]+\frac{1}{2}r\partial_r\overline{\gamma}\big(\partial_r\partial_u\phi\big)^2\bigg)\ d\mathrm{Vol}\notag\\
        &+\iint_{\mathcal{D}_{u,r_0}}\bigg((\overline{\gamma}-1)\partial_r\partial_u\phi\cdot n\partial_u\phi+\partial_r\phi\cdot\big[\partial_u\overline{\gamma}+r\partial_u\partial_r\overline{\gamma}\big]\cdot\big[2n\partial_u\phi+\partial_r\partial_u\phi\big]\bigg)\ d\mathrm{Vol}.\notag
    \end{align}
    Motivated by \eqref{energy identity pf of mass gap thm}, we define for any $\chi_{u_0,r_0}(u)\in\mathcal{D}_{u_1,r_1}$ the energy:
    \[E^2(u,r_0)=\sup\bigg\{\int_0^{r(\chi_{u_0,r'}(u'))}r\big(\partial_r\partial_u\phi\big)^2(u',r)dr+\int_{u_0}^{u'}2r\big(n\partial_u\phi\big)^2\big(\chi_{u_0,r'}(s)\big)ds:\ \chi_{u_0,r'}(u')\in\mathcal{D}_{u,r_0}\bigg\}.\]
    Using this definition, we have for any $\chi_{u_0,r_0}(u)\in\mathcal{D}_{u_1,r_1}$:
    \begin{equation}\label{bounds for bulk mass gap}
    \iint_{\mathcal{D}_{u,r_0}}r\big(\partial_r\partial_u\phi\big)^2\ d\mathrm{Vol}\leq|u_*-u_0|E^2(u,r_0),\ \iint_{\mathcal{D}_{u,r_0}}r\big(n\partial_u\phi\big)^2\ d\mathrm{Vol}\leq r_*E^2(u,r_0).
    \end{equation}
    Additionally, we have by \eqref{dr gamma Bondi} and \eqref{bounds for dr overline gamma} for any $(s,r)\in\mathcal{D}_{u,r_0}$:
    \begin{align}
    \big|\partial_u\partial_r\overline{\gamma}\big|(s,r)&=|\Lambda|r\big|\partial_u\gamma\big|\leq2|\Lambda|r\gamma\int_0^r\Tilde{r}\big|\partial_r\phi\partial_u\partial_r\phi\big|(s,\Tilde{r})d\Tilde{r}\label{du dr gamma bar}\\
    &\leq\frac{4|\Lambda|r(A+\Tilde{A})}{\delta}\int_0^r\Tilde{r}^{\frac{\rho}{2}}\big|\partial_u\partial_r\phi\big|(s,\Tilde{r})d\Tilde{r}\leq\frac{4|\Lambda|r^{1+\frac{\rho}{2}}(A+\Tilde{A})}{\delta\sqrt{\rho}}E(u,r_0),\notag\\
    \big|\partial_u\overline{\gamma}\big|(s,r)&\leq\frac{4|\Lambda|r^{2+\frac{\rho}{2}}(A+\Tilde{A})}{\delta\sqrt{\rho}}E(u,r_0),\label{du gamma bar}
\end{align}
where we used the fact that $\partial_u\gamma(u,0)=\partial_u\overline{\gamma}(u,0)=0.$

We now bound the error terms on the RHS of the energy estimate \eqref{energy identity pf of mass gap thm} one by one. For the second error term in \eqref{energy identity pf of mass gap thm}, we use \eqref{bounds for dr overline gamma} and \eqref{bounds for bulk mass gap}:
\[\iint_{\mathcal{D}_{u,r_0}}\frac{1}{2}r\partial_r\overline{\gamma}\big(\partial_r\partial_u\phi\big)^2\ d\mathrm{Vol}\leq\frac{|\Lambda|}{\delta}\iint_{\mathcal{D}_{u,r_0}}r^2\big(\partial_r\partial_u\phi\big)^2\ d\mathrm{Vol}\leq\frac{|\Lambda|}{\delta}|u_*-u_0|r_*E^2(u,r_0).\]
For the third error term in \eqref{energy identity pf of mass gap thm}, we use \eqref{bounds for overline gamma} and \eqref{bounds for bulk mass gap}:
\[\iint_{\mathcal{D}_{u,r_0}}(\overline{\gamma}-1)\big|\partial_r\partial_u\phi\cdot n\partial_u\phi\big|\ d\mathrm{Vol}\leq\frac{|\Lambda|}{\delta}\iint_{\mathcal{D}_{u,r_0}}r^2\big|\partial_r\partial_u\phi\cdot n\partial_u\phi\big|\ d\mathrm{Vol}\leq\frac{|\Lambda|}{\delta}\sqrt{|u_*-u_0|}r_*^{3/2}E^2(u,r_0).\]
For the fourth error term in \eqref{energy identity pf of mass gap thm}, we use \eqref{improved rdrphi bound} and \eqref{bounds for bulk mass gap}-\eqref{du gamma bar}:
\begin{align*}
    \iint_{\mathcal{D}_{u,r_0}}&\big|\partial_r\phi\big|\cdot\Big[\big|\partial_u\overline{\gamma}\big|+r\big|\partial_u\partial_r\overline{\gamma}\big|\Big]\cdot\Big[2\big|n\partial_u\phi\big|+\big|\partial_r\partial_u\phi\big|\Big]\ d\mathrm{Vol}\\
    &\leq\frac{8|\Lambda|(A+\Tilde{A})^2}{\delta\sqrt{\rho}}E(u,r_0)\iint_{\mathcal{D}_{u,r_0}}r^{1+\rho}\Big(2\big|n\partial_u\phi\big|+\big|\partial_r\partial_u\phi\big|\Big)\ d\mathrm{Vol}\\
    &\leq\frac{32|\Lambda|(A+\Tilde{A})^2}{\delta\sqrt{\rho}}\big(r_*+|u_*-u_0|\big)r_*^{1+\rho}E^2(u,r_0).
\end{align*}
For the first error term in \eqref{energy identity pf of mass gap thm}, we use the fact that $r|\partial_r^2\phi|\leq r^{-1/2}|\partial_r\psi|+2|\partial_r\phi|.$ The second term has just been dealt with above. For the first term we have by \eqref{du gamma bar} and \eqref{improved energy estimate psi}:
\begin{align*}
    &\iint_{\mathcal{D}_{u,r_0}}r^{-1/2}\big|\partial_u\overline{\gamma}\big|\cdot\big|\partial_r\psi\big|\cdot \Big[2\big|n\partial_u\phi\big|+\big|\partial_r\partial_u\phi\big|\Big]\ d\mathrm{Vol}\\
    &\leq\frac{4|\Lambda|(A+\Tilde{A})}{\delta\sqrt{\rho}}r_*^{\rho/2}E(u,r_0)\iint_{\mathcal{D}_{u,r_0}}r^{3/2}\big|\partial_r\psi\big|\Big[2\big|n\partial_u\phi\big|+\big|\partial_r\partial_u\phi\big|\Big]\ d\mathrm{Vol}\\
    &\leq\frac{16|\Lambda|(A+\Tilde{A})^2}{\delta\sqrt{\rho}}\big(r_*+|u_*-u_0|\big)r_*^{1+\rho}E^2(u,r_0).
\end{align*}

Using our error term estimates in \eqref{energy identity pf of mass gap thm}, we proved that for any $\chi_{u_0,r_0}(u)\in\mathcal{D}_{u_1,r_1}$:
\[E^2(u,r_0)\leq\int_0^{r_0}\Tilde{r}\big(\partial_r\partial_u\phi\big)^2(u_0,\Tilde{r})d\Tilde{r}+\frac{100|\Lambda|(A+\Tilde{A}+1)^2}{\delta\sqrt{\rho}}\big(|u_*-u_0|+1\big)r_*E^2(u,r_0).\]
In view of \eqref{extra bound on r*}, we can absorb the error term on the LHS, so we get that for any $u\in[u_0,u_*):$
\begin{equation}\label{H1 dot norm for Tphi mass gap}
    \int_0^{r(\chi_{u_0,r_*}(u))}r\big(\partial_r\partial_u\phi\big)^2(u,r)dr\leq2\int_0^{r_*}\Tilde{r}\big(\partial_r\partial_u\phi\big)^2(u_0,\Tilde{r})d\Tilde{r}.
\end{equation}

    In order to prove the uniform $H^2$ bound for $\phi$, we use \eqref{wave eq Bondi}, \eqref{bounds for overline gamma}, and \eqref{bounds for dr overline gamma} to get:
    \begin{equation}\label{pointwise bound for dr2 phi}
        \big|\partial_r^2\phi\big|\leq2\big|\partial_u\partial_r\phi\big|+\frac{\big|\partial_u\phi-\overline{\gamma}\partial_r\phi\big|}{r}+\frac{2|\Lambda|}{\delta}r\big|\partial_r\phi\big|.
    \end{equation}
    We define $\xi=\partial_u\phi-\overline{\gamma}\partial_r\phi.$ The boundary conditions at $\Gamma$ in \Cref{asympt ads data def} imply that $\xi(u,0)=0.$ Moreover, we compute that $\partial_r(r\xi)=-r\partial_u\partial_r\phi$. Using the Hardy inequality \eqref{Hardy mass gap}, we get that for any $(u,r)\in\mathcal{D}$:
    \begin{equation}\label{estimate for xi}
        \int_0^r\frac{\xi^2(u,\Tilde{r})}{\Tilde{r}}d\Tilde{r}=\int_0^r\frac{(r\xi)^2(u,\Tilde{r})}{\Tilde{r}^3}d\Tilde{r}\leq\int_0^r\frac{\big(\partial_r(r\xi)\big)^2(u,\Tilde{r})}{\Tilde{r}}d\Tilde{r}=\int_0^r\Tilde{r}\big(\partial_u\partial_r\phi\big)^2(u,\Tilde{r})d\Tilde{r}.
    \end{equation}
    Using \eqref{bound on r*}, \eqref{preliminary bound for phi from m}, and \eqref{H1 dot norm for Tphi mass gap}-\eqref{estimate for xi}, we get for any $u\in[u_0,u_*):$
\begin{equation}\label{H2 dot norm for phi mass gap}
    \int_0^{r(\chi_{u_0,r_*}(u))}r\big(\partial_r^2\phi\big)^2(u,r)dr\leq1+10\int_0^{r_*}\Tilde{r}\big(\partial_r\partial_u\phi\big)^2(u_0,\Tilde{r})d\Tilde{r}.
\end{equation}
Finally, using \eqref{preliminary bound for phi from m}, \eqref{H1 dot norm for Tphi mass gap}, and \eqref{estimate for xi}, we get for any $u\in[u_0,u_*):$
\begin{equation}\label{L2 norm for phi mass gap}
    \int_0^{r(\chi_{u_0,r_*}(u))}r\big(\partial_u\phi\big)^2(u,r)dr\leq\frac{4}{\delta}+\int_0^{r_*}\Tilde{r}\big(\partial_r\partial_u\phi\big)^2(u_0,\Tilde{r})d\Tilde{r}.
\end{equation}

According to \eqref{preliminary bound for phi from m}, \eqref{H1 dot norm for Tphi mass gap}, \eqref{H2 dot norm for phi mass gap}, and \eqref{L2 norm for phi mass gap}, we completed the proof of \eqref{bound proved in mass gap thm pf}. Thus, we established uniform bounds for $\phi\in H^{2}\big(I(u)\big)$ and $\partial_u\phi\in H^1\big(I(u)\big)$. As explained previously, our argument also works for higher order derivatives, where the structure of the error terms simplifies significantly. Therefore, one can also prove uniform bounds for $\phi\in H^{k+1}\big(I(u)\big),\ldots,\partial_u^{k}\phi\in H^1\big(I(u)\big),$ which contradicts \Cref{assumption mass gap}, completing the proof of \Cref{mass gap theorem}.
\end{proof}

\section{Trapped surface formation criterion}\label{trapped surface formation section}

We prove a criterion for trapped surface formation following the strategy of Christodoulou in \cite{ChrBH}. We recall the statement of \Cref{trapped surface formation theorem}:
\trapped*
\begin{proof}
    Throughout the proof we use the indices $1$ and $2$ for quantities evaluated on $C_1$ respectively $C_2.$ We first claim that there exists $u_0<u_1<0$ such that $r_2(u_1)=\frac{5\delta_0/4}{1+\delta_0}r_{2}(u_0).$ Supposing this is not true, we have by \Cref{global structure section} that $\overline{C_2}\cap\mathcal{B}_0\neq\emptyset$ (since $r=0$ on $\mathcal{B}$ and $C_1\cap\Gamma=\emptyset$). By equation \eqref{wave equation r}, we have that for any $u_0<u<0:$
    \[\frac{5\delta_0/4}{1+\delta_0}r_{2}(u_0)<r_2(u)\leq r_1(u)+\delta_0r_1(u_0).\]
    Taking $u\rightarrow0^-,$ we obtain the contradiction $r_{2}(u_0)<r_1(u_0).$

    We fix $u_0<u_1<0$ such that $r_2(u_1)=\frac{5\delta_0/4}{1+\delta_0}r_{2}(u_0)$. We assume the conclusion is false, so there exist no marginally trapped surfaces on $C_2$ for $u\in[u_0,u_1].$ At first, we also assume that $m_2(u)-m_1(u)>0$ for all $u\in[u_0,u_1]$.

    A key role in the proof is played by the quantity $\eta:=|\Lambda|^{-1}r_2^{-2}(m_2-m_1),$ for which we aim to derive a differential inequality. Using equation \eqref{du m}, we compute for any $u\in[u_0,u_1]$:
    \[\partial_u\big(m_2-m_1\big)=-\frac{4}{\gamma_2}\cdot\bigg(r_2\big(\partial_u\phi_2\big)^2-\frac{\gamma_2}{\gamma_1}r_1\big(\partial_u\phi_1\big)^2\bigg)=-\frac{4}{\gamma_2}\cdot\bigg(\zeta^2+2\zeta\sqrt{r_1}\partial_u\phi_1-\bigg(\frac{\gamma_2}{\gamma_1}-1\bigg)r_1\big(\partial_u\phi_1\big)^2\bigg),\]
    where we denoted $\zeta=\sqrt{r_2}\partial_u\phi_2-\sqrt{r_1}\partial_u\phi_1.$ By the assumption $m_2(u)-m_1(u)>0$, equations \eqref{dv m} and \eqref{dv log gamma} imply $\gamma_2>\gamma_1.$ We further get:
    \begin{equation}\label{prelim lower bound du m}
        \partial_u\big(m_2-m_1\big)\geq-\frac{4}{\gamma_2}\cdot\bigg(1+\frac{1}{\gamma_2/\gamma_1-1}\bigg)\zeta^2\geq-\frac{4}{\gamma_2}\cdot\bigg(1+\frac{1}{\log(\gamma_2/\gamma_1)}\bigg)\zeta^2.
    \end{equation}

    Using $m\geq1$ and the equations \eqref{dv log gamma}, \eqref{dv m}, \eqref{wave equation phi}, we have the bounds for any $u\in[u_0,u_1]$:
    \begin{align}
        \log\frac{\gamma_2}{\gamma_1}&=\int_{v_1}^{v_2}r\frac{(\partial_v\phi)^2}{\partial_vr}(u,v')dv'=\int_{v_1}^{v_2}\frac{\partial_vm}{-4\Omega^{-2}\partial_ur\partial_vr}(u,v')dv'\geq\frac{m_2-m_1}{|\Lambda|r_2^2}=\eta>0,\label{prelim trap bound 1}\\
        \zeta^2&\leq\bigg(\int_{v_1}^{v_2}\frac{|\partial_ur|\cdot|\partial_v\phi|}{2\sqrt{r}}\bigg)^2\leq\big(m_2-m_1\big)\int_{v_1}^{v_2}\frac{\Omega^{2}|\partial_ur|}{16r^2}\leq\frac{(m_2-m_1)\cdot \gamma_2\cdot|\partial_ur_2|}{16r_2}\bigg(\frac{r_2}{r_1}-1\bigg).\label{prelim trap bound 2}
    \end{align}
    We use \eqref{prelim trap bound 1}-\eqref{prelim trap bound 2} in \eqref{prelim lower bound du m}:
    \[\partial_u\big(m_2-m_1\big)\geq-\frac{(m_2-m_1)\cdot|\partial_ur_2|}{4r_2}\bigg(1+\frac{1}{\eta}\bigg)\cdot\bigg(\frac{r_2}{r_1}-1\bigg).\]
    Using the fact that $\eta=|\Lambda|^{-1}r_2^{-2}(m_2-m_1),$ we can rewrite this bound for any $u\in[u_0,u_1]$:
    \begin{equation}\label{prelim trap bound eta u}
        \partial_u\eta\geq\eta\cdot\frac{|\partial_ur_2|}{4r_2}\cdot\bigg[8-\bigg(1+\frac{1}{\eta}\bigg)\cdot\bigg(\frac{r_2}{r_1}-1\bigg)\bigg].
    \end{equation}
    We introduce the variable $x=r_2(u)/r_{2}(u_0)$ along $C_2,$ and we notice that \eqref{wave equation r} implies:
    \[\frac{r_2}{r_1}-1\leq\frac{\delta_0}{x(1+\delta_0)-\delta_0}.\]
    We rewrite inequality \eqref{prelim trap bound eta u} for any $x\in\big[\frac{5\delta_0/4}{1+\delta_0},1\big]$:
    \[-x\partial_x\eta\geq\eta\bigg[2-\frac{1}{4}\frac{\delta_0}{x(1+\delta_0)-\delta_0}\bigg(1+\frac{1}{\eta}\bigg)\bigg]=\mathbf{g}\eta-\mathbf{f},\]
    \[\mathbf{f}=\frac{1}{4}\frac{\delta_0}{x(1+\delta_0)-\delta_0},\ \mathbf{g}=2-\mathbf{f}.\]
    We integrate to get that for any $x\in\big[\frac{5\delta_0/4}{1+\delta_0},1\big]$:
    \[\eta_0\leq\mathbf{F}(x)+\eta(x)e^{-\mathbf{G}(x)},\]
    \[\mathbf{F}(x)=\int_x^1e^{-\mathbf{G}(x')}f(x')\frac{dx'}{x'},\ \mathbf{G}(x)=\int_x^1g(x')\frac{dx'}{x'}.\]
    By assumption, the circle $(u_1,v_2)$ is not trapped, so for $x_1:=r_2(u_1)/r_{2}(u_0):$
    \[\eta(x_1)=\frac{m_2(u_1)-m_1(u_1)}{|\Lambda|r_2^2(u_1)}\leq\frac{m_2(u_1)-1}{|\Lambda|r_2^2(u_1)}\leq1,\]
    which implies that $\eta_0\leq\mathbf{F}(x_1)+e^{-\mathbf{G}(x_1)}.$ We recall that we chose $u_1$ such that $x_1=\frac{5\delta_0/4}{1+\delta_0}.$ While not essential to the proof, we remark that the minimum of $\mathbf{F}(x)+e^{-\mathbf{G}(x)}$ for $x\in\big[\frac{5\delta_0/4}{1+\delta_0},1\big]$ is attained exactly at $x_1,$ where we have that $\mathbf{f}(x_1)=\mathbf{g}(x_1)=1.$ Finally, we compute that:
    \[e^{-\mathbf{G}(x)}=x^2\bigg(\frac{x}{x(1+\delta_0)-\delta_0}\bigg)^{1/4},\ \mathbf{F}(x)=\frac{\delta_0}{4}\int_x^1\bigg(\frac{x'}{x'(1+\delta_0)-\delta_0}\bigg)^{5/4}dx',\]
    \[\eta_0\leq\mathbf{F}(x_1)+e^{-\mathbf{G}(x_1)}=\frac{\delta_0}{4}+O\big(\delta_0^2|\log\delta_0|\big).\]
    However, we have that $\eta_0>\delta_0,$ which leads to a contradiction for $\delta_0>0$ small enough.

    We recall that in the above proof we assumed that for all $u\in[u_0,u_1]$, we have $m_2(u)-m_1(u)>0$. To deal with the remaining case, we denote $u_*=\inf_{[u_0,u_1]}\{u:\ m_2(u)=m_1(u)\},$ and we have that $u_*\in(u_0,u_1].$ We can repeat the above proof on the interval $[u_0,u_*)$ instead of $[u_0,u_1).$ We denote $x_*:=r_2(u_*)/r_{2}(u_0),$ and we get that $\eta_0\leq\mathbf{F}(x_*)+\eta(x_*)e^{-\mathbf{G}(x_*)}=\mathbf{F}(x_*)$, since $m_2(u_*)=m_1(u_*)$. However, we have $\mathbf{F}(x)<\delta_0/2$ for all $x\in[x_1,1],$ which again contradicts $\eta_0>\delta_0.$
\end{proof}

\section{Locally naked singularity spacetimes}\label{naked singularity section}
In this section, we provide a detailed description of locally naked singularity spacetimes in a neighborhood of the first singularity at the center. Throughout this section, we fix $k\geq 2$ and we consider $\big(\mathcal{M},g,\phi\big)$ to be the maximal globally hyperbolic development of a $C^k$ characteristic data set $\mathfrak{D}$ on the outgoing null cone $C_0^+.$ We assume that $b_{\Gamma}$ is a first singularity at the center $\Gamma$ and that the ingoing null cone $C_0^-$ passing through $b_{\Gamma}$ intersects $C_0^+$ (rather than $\mathcal{I}$). We choose double null coordinates $(u,v)$ such that: \[b_{\Gamma}=(0,0),\ C_0^-=\big\{v=0,\ 2r=-u\big\},\ C_0^+=\big\{u=u_0,\ 2r=v-u_0\big\}.\]
Moreover, we assume that $\big(\mathcal{M},g\big)$ is a locally naked singularity spacetime, as defined in \eqref{M locally naked}, so $\mathcal{B}_0\neq\emptyset.$ In particular, according to \Cref{global structure section}, there exists $v_1>0$ such that $[u_0,0)\times[0,v_1]\subset\mathcal{R}$.

As a consequence of \Cref{mass gap theorem} and \Cref{trapped region lemma}, we have that for any $(u,v)\in[u_0,0)\times[0,v_1]:$
\begin{equation}\label{bounds for mass locally naked}
    1\leq m(u,v)\leq1+|\Lambda|r^2(u,v).
\end{equation}
This further implies a lower bound for $\gamma$ on $C_0^-$:
\begin{equation}\label{gamma lower bound on C0-}
    \gamma(u,0)=\frac{-4\partial_ur(u,0)}{|\Lambda|r^2(u,0)+1-m(u,0)}\geq8|\Lambda|^{-1}|u|^{-2}.
\end{equation}
Moreover, we notice that \eqref{Ray u} implies that $\partial_u\Omega^{-2}\geq0,$ so $\Omega^2$ is decreasing along $C_0^-$. 

Next, we introduce the blueshift, which plays an essential role in our analysis:
\begin{definition}\label{blueshift definition}
    We define the blueshift as the function $\mathbf{b}:[u_0,0)\times[0,v_1]\rightarrow[0,\infty):$
    \begin{equation}\label{blueshift definition eq}
        \mathbf{b}(u,v)=\int_{u_0}^u\frac{|\Lambda|}{2}r\gamma(u',v)du'.
    \end{equation}
    We denote $\mathbf{b}(u)=\mathbf{b}(u,0).$ We also define the function $\mathbf{s}:[u_0,0)\rightarrow[0,\infty):$
    \begin{equation}\label{blueshift strength definition}
        \mathbf{s}(u)=\int_{u_0}^u\frac{|\Lambda|}{4}r^2\gamma(u',0)du'.
    \end{equation}
\end{definition}

Using this notation, we can integrate \eqref{wave equation r} to obtain that:
\begin{equation}\label{dv r in terms of blueshift}
        \partial_vr(u,v)=\frac{1}{2}e^{-\mathbf{b}(u,v)}.
\end{equation}
In particular, we have $e^{-\mathbf{b}(u)}\gamma(u,0)=2\partial_vr(u,0)\gamma(u,0)=2\Omega^2(u,0)\leq2\Omega^2(u_0,0).$ We use this bound for the rest of the section without explicit reference.

We prove that all first singularities at the center have unbounded blueshift on $C_0^-$:
\begin{proposition}\label{blueshift lower bound proposition}
    The blueshift $\mathbf{b}:[u_0,0)\rightarrow[0,\infty)$ is unbounded on $C_0^-$ and satisfies the quantitative estimate:
    \begin{equation}\label{blueshift lower bound eq}
        \mathbf{b}(u)\geq2\log\frac{u_0}{u}.
    \end{equation}
    The lower bound \eqref{blueshift lower bound eq} holds more generally for any solution $\big(\mathcal{M},g,\phi\big)$ with a first singularity $b_{\Gamma}\in\overline{\Gamma}\backslash\Gamma$  (which is not necessarily locally naked).
\end{proposition}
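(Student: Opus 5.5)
The plan is to bound the integrand of $\mathbf{b}(u)=\int_{u_0}^u\frac{|\Lambda|}{2}r\gamma(u',0)\,du'$ from below pointwise on $C_0^-$ and then integrate. Everything rests on the lower bound \eqref{gamma lower bound on C0-}, $\gamma(u,0)\geq8|\Lambda|^{-1}|u|^{-2}$, which in turn comes from the mass bound $m(u,0)\geq1$ on $C_0^-$. The key step is to justify that mass bound using only \Cref{mass gap theorem}, without the locally naked assumption.

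\textbf{Step 1: $m\geq1$ on $C_0^-$.} Suppose $m(u',0)<1$ for some $u'\in[u_0,0)$. Take the outgoing cone $\{u=u'\}$, running from $\Gamma$ through $(u',0)$, as a new characteristic initial cone. \Cref{mass gap theorem} then says the solution is regular in the domain of dependence of $\{u=u'\}\cap\{v\leq0\}$. That domain contains $\Gamma$ up to and including $b_\Gamma=(0,0)$: by local well-posedness the development extends past $(0,0)$, so $b_\Gamma$ is not a first singularity. This is a contradiction, hence $m(u,0)\geq1$ for all $u\in[u_0,0)$. This argument is exactly where a general first singularity suffices: no absence of trapped surfaces beyond $C_0^-$ is needed.

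\textbf{Step 2: the lower bound on $\gamma$.} $C_0^-$ lies in the causal past of $\Gamma$, so $\partial_vr>0$ there by \Cref{monotonicity lemma}. Then \Cref{trapped region lemma} gives $0<|\Lambda|r^2+1-m\leq|\Lambda|r^2$ on $C_0^-$. Rewriting \eqref{Hawking mass double null} as
\[
\gamma=\frac{-4\partial_ur}{|\Lambda|r^2+1-m},
\]
and using the gauge $2r=-u$ on $C_0^-$, so that $\partial_ur=-\tfrac12$ and $r=|u|/2$, we get
\[
\gamma(u,0)\geq\frac{2}{|\Lambda|r^2}=\frac{8}{|\Lambda|u^2}.
\]
In the general (non locally naked) case we only need the normalization $2r=-u$ along $C_0^-$. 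This normalization can always be arranged by reparametrizing $u$. Moreover, $r\gamma\,du=r\Omega^2(\partial_vr)^{-1}du$ is invariant under $u\mapsto\tilde u(u)$, so the choice of gauge is harmless.

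\textbf{Step 3: integrate.} Combining the above,
\[
\frac{|\Lambda|}{2}\,r\gamma(u',0)\geq\frac{|\Lambda|}{2}\cdot\frac{|u'|}{2}\cdot\frac{8}{|\Lambda|u'^2}=\frac{2}{|u'|}.
\]
Integrating over $u'\in[u_0,u]$ gives $\mathbf{b}(u)\geq2\log\frac{|u_0|}{|u|}=2\log\frac{u_0}{u}$. This tends to $\infty$ as $u\to0^-$, which proves unboundedness.

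The only nontrivial point is Step 1, namely checking that \Cref{mass gap theorem} applies with the cone $\{u=u'\}$ as initial data and that its domain of dependence genuinely reaches $b_\Gamma$. This holds because $(u',0)$ lies on the past cone of $b_\Gamma$, so the domain of dependence of $\{u=u'\}\cap\{v\leq0\}$ is exactly the region bounded by $C_0^-$, whose tip is $b_\Gamma$.
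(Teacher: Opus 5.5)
Your proof is correct and follows the paper's argument: the mass gap theorem gives $m\geq1$ on $C_0^-$, the Hawking mass formula then gives $\gamma(u,0)\geq8|\Lambda|^{-1}|u|^{-2}$, and integrating $\frac{|\Lambda|}{2}r\gamma\geq\frac{2}{|u|}$ yields the bound. One small correction: $C_0^-$ is not contained in $J^-(\Gamma)$, since its points have $v=0$ while $\Gamma\subset\{v<0\}$, so $\partial_vr>0$ on $C_0^-$ should be obtained differently: continuity from $\{v<0\}$ gives $\partial_vr\geq0$, and \Cref{trapped region lemma} rules out $\partial_vr=0$ at any $(u',0)$, because that would force $\partial_vr<0$ at $(u,0)$ for all $u\in(u',0)$.
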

\begin{proof}
    The proof is immediate by using \eqref{gamma lower bound on C0-} and \eqref{blueshift definition eq}. We note this only relies on \eqref{bounds for mass locally naked} on $C_0^-,$ which indeed holds for any first singularity at the center, without assuming that the solution is a locally naked singularity spacetime.
\end{proof}

In order to state our next result, we introduce notation similar to that of \Cref{trapped surface formation section}. For any $(u,v)\in[u_0,0)\times[0,v_1],$ we define:
\[\eta(u,v)=\frac{m(u,v)-m(u,0)}{|\Lambda|r^2(u,v)},\ \delta(u,v)=\frac{r(u,v)}{r(u,0)}-1.\]

We make the notation convention for the rest of this section that we write $x\lesssim y$ for any quantities $x,y>0$ if there is a constant $C\big(k,|\Lambda|,|u_0|,\Omega^2(u_0,0)\big)>0$ such that $x\leq Cy.$ 

We prove further a priori bounds for the solution $\big(\mathcal{M},g,\phi\big):$
\begin{proposition}\label{bounds for locally naked singularity proposition}
    For any $\epsilon>0$ sufficiently small depending on $|\Lambda|,|u_0|,\Omega^2(u_0,0)$, and any $v_0\in(0,v_1)$ sufficiently small depending on $\epsilon,|\Lambda|,|u_0|,\Omega^2(u_0,0),\gamma|_{C_0^+\cap\{0\leq v\leq v_1\}},$ we define the region:
    \begin{equation}
        \mathcal{P}=\Big\{(u,v)\in[u_0,0)\times[0,v_0]:\ v^{1/2}\mathbf{s}(u)\leq \epsilon\Big\},
    \end{equation}
    represented schematically in \Cref{fig:locallynakedsing}. Then, for any $(u,v)\in\mathcal{P}$ we have that:
    \begin{equation}\label{eta less than delta}
        \eta(u,v)\leq\delta(u,v),
    \end{equation}
    \begin{equation}
    \partial_vr(u,v)\leq 100e^{-\mathbf{b}(u)},
\end{equation}
\begin{equation}
    \gamma(u,0)\leq \gamma(u,v)\leq 100\gamma(u,0),
\end{equation}
    \begin{equation}\label{background energy bound 1}
    \int_0^v\theta^2\partial_vr(u,v')dv'\leq10^4|\Lambda|\cdot v|u|e^{-\mathbf{b}(u)}\gamma(u,0),
    \end{equation}
    \begin{equation}\label{background energy bound 2}
    \int_u^0|u'|\gamma^{-1}(u',0)\cdot\mathbf{1}_{\mathcal{P}}(u',v)\big|\partial_u\phi\big|^2(u',v)du'\lesssim|u|^2,
    \end{equation}
    where $\mathbf{1}_{\mathcal{P}}$ represents the characteristic function of $\mathcal{P}.$
\end{proposition}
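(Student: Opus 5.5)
The first step, \eqref{eta less than delta}, follows from the contrapositive of \Cref{trapped surface formation theorem}. For $(u,v)\in\mathcal{P}$ we apply the theorem with $C_1$ the ingoing cone $\{v'=0\}$ restricted to $u'\geq u$; this is legitimate because $m\geq1$ on $C_0^-$ by \eqref{bounds for mass locally naked}, and $r\to0$ as $u'\to0^-$. We take $C_2=\{v'=v\}$ and use the outgoing cone $\{u\}\times[\cdot,v]$ as initial cone. Since $[u_0,0)\times[0,v_1]\subset\mathcal{R}$, no marginally trapped surface forms on $C_2$. Hence, provided $\delta(u,v)$ is small, we must have $\eta(u,v)\leq\delta(u,v)$.

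Smallness of $\delta$ comes from the bound $r(u,v)-r(u,0)=\int_0^v\partial_vr\leq 100ve^{-\mathbf{b}(u)}$ together with $r(u,0)=|u|/2$. Since $e^{-\mathbf{b}(u)}\leq|u|^2/u_0^2$ by \Cref{blueshift lower bound proposition}, this gives $\delta\lesssim v|u|\leq v_0|u_0|$, which is small. So the first step is logically downstream of the $\partial_vr$ bound, and I would run everything as a single bootstrap in $\mathcal{P}$. The bootstrap assumptions are $\gamma(u,v)\leq 200\gamma(u,0)$ and $\partial_vr\leq 200e^{-\mathbf{b}(u)}$, and I would improve them to the constants $100$.

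The lower bound $\gamma(u,v)\geq\gamma(u,0)$ is just the monotonicity \eqref{dv log gamma}. For the upper bound I would use the identity
\[\log\frac{\gamma(u,v)}{\gamma(u,0)}=\int_0^v\theta^2\partial_vr\,dv'.\]
As in \eqref{prelim trap bound 1}, this integral equals $\int_0^v\frac{\partial_vm}{-4\Omega^{-2}\partial_ur\partial_vr}\,dv'=\int_0^v\frac{\gamma\,\partial_vm}{|\Lambda|r^2+1-m}\,dv'$. Inserting $\gamma\leq200\gamma(u,0)$, $m(u,v)-m(u,0)=\eta|\Lambda|r^2\leq\delta|\Lambda|r^2$ and $|\Lambda|r^2+1-m\geq m(u,0)-1\geq0$ is problematic, because the denominator can degenerate. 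Instead I would bound the denominator below by $1-m+|\Lambda|r^2=4\Omega^{-2}|\partial_ur|\partial_vr=\frac{4|\partial_ur|}{\gamma}$. This turns the integral into $\int_0^v\frac{\gamma\,\partial_vm}{4|\partial_ur|}$. Since $|\partial_ur|\gtrsim1$ by the monotonicity of $\partial_ur$ in $v$, this is at most $\gamma(u,0)\cdot 50\,\delta|\Lambda|r^2\lesssim|\Lambda|\gamma(u,0)\,v|u|e^{-\mathbf{b}(u)}$, which is \eqref{background energy bound 1}. Using $e^{-\mathbf{b}}\gamma(u,0)\leq2\Omega^2(u_0,0)$, the right-hand side is $\lesssim v_0|u_0|$, which is small. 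This improves the $\gamma$ bootstrap to $\gamma(u,v)\leq(1+o(1))\gamma(u,0)$.

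For $\partial_vr$, I would use \eqref{dv r in terms of blueshift}, $\partial_vr=\frac12e^{-\mathbf{b}(u,v)}$, and compare the blueshifts:
\[\mathbf{b}(u,v)-\mathbf{b}(u)=\int_{u_0}^u\frac{|\Lambda|}{2}\bigl(r(u',v)\gamma(u',v)-r(u',0)\gamma(u',0)\bigr)\,du'.\]
Since $r(u',v)\geq r(u',0)$ and $\gamma(u',v)\geq\gamma(u',0)$, this difference is nonnegative, so in fact $\partial_vr\leq\frac12e^{-\mathbf{b}(u)}$ directly. Thus the $\partial_vr$ bound is free, and the only genuinely bootstrapped quantity is $\gamma$. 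One check is needed: $\mathcal{P}$ must be past-closed in $u$ along each $v$, which holds because $\mathbf{s}$ is increasing, so the bootstrap region is well-defined. The role of the condition $v^{1/2}\mathbf{s}(u)\leq\epsilon$ then seems to be only in \eqref{background energy bound 2}.

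For \eqref{background energy bound 2} I would use \eqref{du m}: $|\partial_u\phi|^2=\frac{-\partial_um}{4r\gamma^{-1}}$, since $\partial_vr\,\Omega^{-2}=\gamma^{-1}$. The integrand is therefore $\frac{|u'|\gamma^{-1}(u',0)}{4r\gamma^{-1}(u',v)}(-\partial_um)\lesssim -\partial_um$, using $r\gtrsim|u'|$ and $\gamma(u',v)\sim\gamma(u',0)$. Integrating in $u'$ from $u$ to the exit of $\mathcal{P}$ gives at most $m(u,v)-m(u_{\text{exit}},v)$. By \eqref{bounds for mass locally naked} this is $\leq 1+|\Lambda|r^2(u,v)-1\lesssim|u|^2+v^2$. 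The $v^2$ term must be dominated by $|u|^2$, which is where I expect the main obstacle. Most naturally, the cutoff $\mathbf{1}_\mathcal{P}$ keeps $v$ small relative to $|u|$, since $\mathbf{s}(u)\geq\int\frac{|\Lambda|}{4}r^2\cdot8|\Lambda|^{-1}|u'|^{-2}\gtrsim\log\frac{u_0}{u}$. That only gives logarithmic smallness, however, so I would instead try using $m(u,v)\leq m(u,0)+\delta|\Lambda|r^2$ together with $m\geq1$ at the exit point. This bounds the $u$-integral of $-\partial_um$ by $|\Lambda|r^2(u,0)(1+\delta)\lesssim|u|^2$, which is the step to verify carefully.
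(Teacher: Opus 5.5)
Your proof is correct. Its skeleton matches the paper's:
\begin{itemize}
\item the contrapositive of \Cref{trapped surface formation theorem}, with $C_1=C_0^-$ and $m\geq1$ there, gives $\eta\leq\delta$;
\item the identity $\partial_v\log\gamma=\gamma\,\partial_vm/(4|\partial_ur|)$ together with $|\partial_ur|\geq\frac12$ turns $\eta\leq\delta$ into \eqref{background energy bound 1} and the $\gamma$ comparison;
\item \eqref{du m} with $1\leq m\leq1+|\Lambda|r^2$ gives \eqref{background energy bound 2}.
\end{itemize}

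The genuine difference is how you get the bound on $\partial_vr$. The paper bootstraps $\partial_vr\leq100e^{-\mathbf{b}(u)}$. It closes the bootstrap by estimating $|\mathbf{b}(u,v)-\mathbf{b}(u)|\lesssim v|u_0|\Omega^2(u_0,0)+v\,\Omega^2(u_0,0)\mathbf{s}(u)\lesssim\epsilon$, and this is exactly where the defining condition $v^{1/2}\mathbf{s}(u)\leq\epsilon$ of $\mathcal{P}$ enters. You instead note that $r\gamma$ is nondecreasing in $v$ on $[u_0,0)\times[0,v_1]\subset\mathcal{R}$. Hence $\mathbf{b}(u,v)\geq\mathbf{b}(u)$, and $\partial_vr\leq\frac12e^{-\mathbf{b}(u)}$ follows with no bootstrap at all. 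This is valid.

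It also shows that all five stated bounds hold on the whole rectangle $[u_0,0)\times[0,v_0]$. The $\gamma$ comparison closes by continuity in $v$ at each fixed $u$, so the cutoff $\mathbf{1}_{\mathcal{P}}$ in \eqref{background energy bound 2} is superfluous as well. Your remark that the $\mathcal{P}$ condition is needed there is therefore not accurate.

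What the paper's route buys is the two-sided estimate \eqref{dv r in terms of b(u)}, $\partial_vr=\frac12e^{-\mathbf{b}(u)}(1+O(\epsilon))$. In particular it gives the lower bound $\partial_vr\gtrsim e^{-\mathbf{b}(u)}$, which monotonicity cannot give. That lower bound is not part of the statement, but it is extracted from this proof and used in \Cref{instability section} to pass from bounds on $\int\theta^2\partial_vr$ to bounds on $\int\theta^2$. That is the real reason for restricting to $\mathcal{P}$.

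Finally, the ``$v^2$ obstacle'' you flag at the end is not there. Since $\delta\leq1$, we have $r(u,v)=(1+\delta)r(u,0)\leq|u|$. So $m(u,v)-1\leq|\Lambda|r^2(u,v)\lesssim|u|^2$ directly, exactly as in the paper.
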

\begin{proof}
    We notice that $\mathbf{s}$ is increasing, so if $(\Tilde{u},\Tilde{v})\in\mathcal{P}$ then $[u_0,\Tilde{u}]\times[0,\Tilde{v}]\subset\mathcal{P}.$ We make the following bootstrap assumptions for any $(u,v)\in[u_0,\Tilde{u}]\times[0,\Tilde{v}]\subset\mathcal{P}$:
\begin{equation}\label{A1}
    \partial_vr(u,v)\leq 100e^{-\mathbf{b}(u)},
\end{equation}
\begin{equation}\label{A2}
    \gamma(u,v)\leq 100\gamma(u,0).
\end{equation}
We first check that \eqref{A1}-\eqref{A2} hold initially with significantly better constants. In view of \eqref{dv r in terms of blueshift} and $2r(u_0,v)=v-u_0,$ we have for $(u,v)\in\big(C_0^-\cup C_0^+\big)\cap\mathcal{P}$ that $\partial_vr(u,v)\leq e^{-\mathbf{b}(u)}.$ By continuity, we have that $\gamma(u_0,v)\leq2\gamma(u_0,0)$ for all $v\in[0,v_0],$ where $v_0>0$ is small enough depending on $\gamma|_{C_0^+\cap\{0\leq v\leq v_1\}}$. 

Next, we have by \eqref{A1}, \eqref{blueshift lower bound eq} for any $(u,v)\in[u_0,\Tilde{u}]\times[0,\Tilde{v}]$:
\begin{equation}\label{bound for delta locally naked}
    \delta(u,v)=\frac{\int_0^v\partial_vr(u,v')dv'}{r(u,0)}\leq\frac{200e^{-\mathbf{b}(u)}v}{|u|}\leq\frac{200|u|v}{|u_0|^2}.
\end{equation}
For $v_0$ sufficiently small in terms of $|u_0|,$ we have for all $v\in(0,v_0]$ that $\delta(u,v)$ is small enough, as in the statement of \Cref{trapped surface formation theorem}. Since $[u_0,0)\times[0,v_0]\subset\mathcal{R},$ \Cref{trapped surface formation theorem} implies that $\eta(u,v)\leq\delta(u,v)$ for all $(u,v)\in[u_0,\Tilde{u}]\times[0,\Tilde{v}]$.

We prove some preliminary estimates for all $(u,v)\in[u_0,\Tilde{u}]\times[0,\Tilde{v}]$. By monotonicity, we have that $r\geq|u|/2,\ \partial_ur\leq-1/2,$ and $\gamma(u,v)\geq\gamma(u,0).$ Also, for $v_0$ sufficiently small in terms of $|u_0|,$ we have that \eqref{A1} and \eqref{blueshift lower bound eq} imply $r\leq|u|.$ Moreover, we have by \eqref{wave equation r}, \eqref{dv r in terms of blueshift}, \eqref{A1}, and \eqref{A2}:
\begin{align}\label{bound for du r locally naked}
    |\partial_ur|&\leq\frac{1}{2}+\int_0^v|\Lambda|\cdot|u|\gamma(u,v')\partial_vr(u,v')dv' \leq\frac{1}{2}+10^4|\Lambda|\cdot v|u|e^{-\mathbf{b}(u)}\gamma(u,0)\\
    &\leq\frac{1}{2}+2\cdot10^4|\Lambda|\cdot v|u|\Omega^2(u,0)\leq\frac{1}{2}+2\cdot10^4|\Lambda|\cdot v|u_0|\Omega^2(u_0,0)<1,\notag
\end{align}
for $v_0$ sufficiently small in terms of $|\Lambda|,|u_0|,\Omega^2(u_0,0).$

We now improve the bootstrap assumptions. Since $\eta(u,v)\leq\delta(u,v),$ we get by \eqref{dv m}, \eqref{dv log gamma}, \eqref{dv r in terms of blueshift}, \eqref{A2}, and \eqref{bound for delta locally naked} that for all $(u,v)\in[u_0,\Tilde{u}]\times[0,\Tilde{v}]$:
\begin{align*}
    \log\frac{\gamma(u,v)}{\gamma(u,0)}&=\int_0^v\frac{r(\partial_v\phi)^2}{\partial_vr}(u,v')dv'=\int_0^v\frac{\gamma\partial_vm}{-4\partial_ur}(u,v')dv'\leq50|\Lambda|\gamma(u,0)|u|^2\eta(u,v)\\
    &\leq50|\Lambda|\gamma(u,0)|u|^2\delta(u,v)\leq10^4|\Lambda|\cdot v|u|e^{-\mathbf{b}(u)}\gamma(u,0)\leq2\cdot10^4|\Lambda|\cdot v|u_0|\Omega^2(u_0,0)<1,
\end{align*}
for $v_0$ sufficiently small in terms of $|\Lambda|,|u_0|,\Omega^2(u_0,0).$ Thus, we improved the bootstrap assumption \eqref{A2}. We notice that in the above we also proved that for all $(u,v)\in[u_0,\Tilde{u}]\times[0,\Tilde{v}]$:
\begin{equation}\label{bound for theta^2 locally naked}
    \int_0^v\theta^2\partial_vr(u,v')dv'=\int_0^v\frac{r(\partial_v\phi)^2}{\partial_vr}(u,v')dv'\leq10^4|\Lambda|\cdot v|u|e^{-\mathbf{b}(u)}\gamma(u,0).
\end{equation}

Using \eqref{dv log gamma}, \eqref{dv r in terms of blueshift}, \eqref{A1}, \eqref{A2}, and \eqref{bound for theta^2 locally naked}, we have the following estimate for the blueshift for all $(u,v)\in[u_0,\Tilde{u}]\times[0,\Tilde{v}]$:
\begin{align*}
    |\mathbf{b}&(u,v)-\mathbf{b}(u)|\lesssim\int_0^v\int_{u_0}^u\partial_v(r\gamma)(u',v')du'dv'\lesssim\int_0^v\int_{u_0}^u\Big(\gamma\partial_vr+|u'|\gamma\theta^2\partial_vr\Big)(u',v')du'dv'\\
    &\lesssim v\int_{u_0}^u\gamma(u',0)e^{-\mathbf{b}(u')}du'+\int_{u_0}^u|u'|\gamma(u',0)\bigg(\int_0^v\theta^2\partial_vr(u',v')dv'\bigg)du'\\
    &\lesssim v|u_0|\Omega^2(u_0,0)+v\int_{u_0}^u|u'|^2\gamma^2(u',0)e^{-\mathbf{b}(u')}du'\lesssim v|u_0|\Omega^2(u_0,0)+v\Omega^2(u_0,0)\cdot\mathbf{s}(u)\lesssim\epsilon,
\end{align*}
for $v_0$ sufficiently small in terms of $\epsilon,|\Lambda|,|u_0|,\Omega^2(u_0,0).$ We use this estimate in \eqref{dv r in terms of blueshift}:
\begin{equation}\label{dv r in terms of b(u)}
    \partial_vr(u,v)=\frac{1}{2}e^{-\mathbf{b}(u,v)}=\frac{1}{2}e^{-\mathbf{b}(u)}\big(1+O(\epsilon)\big)\leq 10e^{-\mathbf{b}(u)},
\end{equation}
provided that $\epsilon>0$ is sufficiently small in terms of $|\Lambda|,|u_0|,\Omega^2(u_0,0).$ This allows us to improve the assumption \eqref{A1} as well.

We improved the bootstrap assumptions \eqref{A1}-\eqref{A2}, so we proved \eqref{eta less than delta}-\eqref{background energy bound 1} for all $(u,v)\in\mathcal{P}.$ As a consequence of \eqref{bounds for mass locally naked}, we have that for any $u_0\leq u_1<u_2<0$ and $v\in[0,v_1]:$
\[0\leq m(u_1,v)-m(u_2,v)\leq|\Lambda|r^2(u_1,v).\]
Using this inequality, together with \eqref{du m} and the bounds established in $\mathcal{P}$ completes the proof of \eqref{background energy bound 2}.
\end{proof}

\begin{remark}
    The function $\mathbf{s}(u)$ can be interpreted as a measurement for the strength of the blueshift, which affects the size of the region $\mathcal{P}.$ In the case when $\mathbf{s}(u)$ is bounded, we can take $\mathcal{P}=[u_0,0)\times[0,v_0]$ and we obtain a priori estimates all the way to $\mathcal{B}_0.$ In particular, \eqref{blueshift rate for CR} implies this is the case for the examples of locally naked singularities constructed in \cite{preparation}.
\end{remark}

\section{Instability to trapped surface formation}\label{instability section}
In this section, we prove that all locally naked singularity spacetimes are unstable to trapped surface formation arbitrarily close to $b_{\Gamma}$. Let $\big(\mathcal{M},g,\phi\big)$ be a locally naked singularity spacetime satisfying the set-up in \Cref{naked singularity section}. In particular, $\big(\mathcal{M},g,\phi\big)$ is the maximal globally hyperbolic development of a $C^k$ characteristic data set $\mathfrak{D}$, for some $k\geq 2$. We also assume that the ingoing null cone $C_0^-$ passing through $b_{\Gamma}$ intersects $C_0^+$, and we choose double null coordinates $(u,v)$ such that: \[b_{\Gamma}=(0,0),\ C_0^-=\big\{v=0,\ 2r=-u\big\},\ C_0^+=\big\{u=u_0,\ 2r=v-u_0\big\}.\]
Moreover, we take $v_1>0$ such that $[u_0,0)\times[0,v_1]\subset\mathcal{R}$, and we fix $v_0,\epsilon>0$ as required in \Cref{naked singularity section}.

We fix any integer $n>k+1$ and $\lambda\in\mathbb{R}\backslash\{0\}$. We define characteristic data on $\big\{u=u_0,\ v\in[u_0,v_1]\big\}:$
\begin{equation}\label{characteristic data perturbation}
    r_{\lambda}=r,\ \phi_{\lambda}=\phi,\ \Omega_{\lambda}=\Omega\ \text{on }\big\{u=u_0,\ v\in[u_0,0]\big\},\ r_{\lambda}=r,\ \theta_{\lambda}=\theta+\lambda v^{n}\ \text{on }\big\{u=u_0,\ v\in[0,v_1]\big\},
\end{equation}
where $\phi_{\lambda}$ and $\Omega_{\lambda}$ are defined on $\big\{u=u_0,\ v\in(0,v_1]\big\}$ according to \eqref{definition of theta} and \eqref{Ray v}. Since $n>k+1,$ the characteristic data $\big(r_{\lambda},\Omega_{\lambda},\phi_{\lambda}\big)$ determines uniquely a $C^k$ characteristic partial data set $\mathfrak{D}_{\lambda}$, as in \Cref{partial data set definition}, which coincides with $\mathfrak{D}$ for $\big\{u=u_0,\ v\in[u_0,0]\big\}.$ 

We denote by $\big(\mathcal{M}_{\lambda},g_{\lambda},\phi_{\lambda}\big)$ the maximal globally hyperbolic development of the $C^k$ characteristic partial data set $\mathfrak{D}_{\lambda}$. By domain of dependence, we have that $r_{\lambda}=r,\ \phi_{\lambda}=\phi,\ \Omega_{\lambda}=\Omega$ for $v\leq 0.$ The main result of this section is the instability of locally naked singularity spacetimes to trapped surface formation:

\begin{proposition}\label{instability to trapped surfaces proposition}
    For any $\lambda\in\mathbb{R}\backslash\{0\},$ the spacetime $\big(\mathcal{M}_{\lambda},g_{\lambda}\big)$ contains a sequence of trapped surfaces converging to $b_{\Gamma}.$
\end{proposition}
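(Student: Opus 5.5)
The argument has two stages, as outlined in \Cref{blueshift instability section intro}: first control the perturbed solution in a region $\mathcal{P}_\lambda\subset\mathcal{P}$ by a bootstrap mirroring \Cref{bounds for locally naked singularity proposition}, and then exhibit points near $b_\Gamma$ where the hypotheses of \Cref{trapped surface formation theorem} hold.

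\textbf{Stage one: a priori control of the perturbed solution.} I would define
\[\mathcal{P}_\lambda=\Big\{(u,v)\in\mathcal{P}:\ |\lambda| v^{n+\frac12}e^{\mathbf{b}(u)}\big(1+|u|^2\gamma(u,0)\big)\leq \epsilon'\Big\},\]
or a similar shrinking of $\mathcal{P}$, with $\epsilon'$ small. Since $\mathbf{b}(u)\to\infty$ by \Cref{blueshift lower bound proposition}, $\mathcal{P}_\lambda$ pinches toward $C_0^-$, but it still contains points $(u,v)$ with $v>0$ for every $u<0$.

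On $\mathcal{P}_\lambda$ I would bootstrap the following estimates:
\begin{itemize}
\item $\partial_v r_\lambda\leq 100e^{-\mathbf{b}(u)}$;
\item $\gamma(u,0)\leq\gamma_\lambda\leq 100\gamma(u,0)$;
\item $e^{\mathbf{b}_\lambda(u,v)}\sim e^{\mathbf{b}(u)}$;
\item difference bounds for $r_\lambda-r$, $\partial_u r_\lambda-\partial_u r$, and $\sqrt{r}(\partial_u\phi_\lambda-\partial_u\phi)$, vanishing like $|\lambda| v^{n+1}e^{\mathbf{b}(u)}$ times powers of $|u|$.
\end{itemize}
These are closed using \eqref{dv log gamma}, \eqref{dv r in terms of blueshift} for the perturbed solution, and the background bounds \eqref{background energy bound 1}--\eqref{background energy bound 2}.

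Next I would write the difference equation \eqref{du theta difference intro} and integrate with integrating factor $e^{\mathbf{b}_\lambda}$:
\[(\theta_\lambda-\theta)(u,v)=e^{\mathbf{b}_\lambda(u,v)}\lambda v^n+\int_{u_0}^u e^{\mathbf{b}_\lambda(u,v)-\mathbf{b}_\lambda(u',v)}\,Err(u',v)\,du'.\]
Here $Err$ collects $-\tfrac{\Lambda}{2}(r_\lambda\gamma_\lambda-r\gamma)\theta$ and $-\tfrac{1}{2}(r_\lambda^{-1/2}\partial_u\phi_\lambda-r^{-1/2}\partial_u\phi)$. I would bound $Err$ in $L^2_v$ via Cauchy--Schwarz in $u'$, using \eqref{background energy bound 2} and the bound $\int\theta^2\lesssim v|u|\gamma(u,0)$ from \eqref{background energy bound 1}. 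The aim is \eqref{blueshift instability intro eq}: the error should gain a factor $v^{2/3}$ (or any positive power of $v$) over the main term.

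\textbf{Main obstacle.} I expect this error estimate to be the hardest step. The $\gamma_\lambda-\gamma$ term couples back through $\int\theta_\lambda^2$, and the lower bound $\mathbf{b}\geq 2\log(u_0/u)$ must be used to convert factors of $|u|$ into powers of $e^{-\mathbf{b}}$. My first attempt would be to choose the exponents defining $\mathcal{P}_\lambda$ so that every error carries an extra $v^{1/3}$ after using $v\leq v_0$.

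\textbf{Stage two: verifying the trapped surface criterion.} I would define $v_\sharp(u)$ as the largest $v$ with $(u,v)\in\mathcal{P}_\lambda$, saturating the defining inequality. Then:
\begin{itemize}
\item From $(a+b)^2\geq a^2/2-b^2$, together with \eqref{background energy bound 1} for $\theta$ and \eqref{blueshift instability intro eq}, I get \eqref{lower bound for theta lambda intro}. This needs the background term $v|u|\gamma(u,0)e^{\mathbf{b}(u)}$ to be dominated by $\lambda^2v^{2n+1}e^{2\mathbf{b}(u)}$, which holds on the boundary of $\mathcal{P}_\lambda$ once $|u|$ is small, using \eqref{gamma lower bound on C0-} and \eqref{blueshift lower bound eq}.
\item Converting via $\partial_v m_\lambda=|\partial_u r_\lambda|\,\gamma_\lambda^{-1}\theta_\lambda^2\cdot 4\partial_v r_\lambda$ gives
\[m_\lambda(u,v_\sharp)-m_\lambda(u,0)\gtrsim \gamma(u,0)^{-1}e^{-\mathbf{b}(u)}\lambda^2v_\sharp^{2n+1}e^{2\mathbf{b}(u)},\]
while $\delta_\lambda\lesssim v_\sharp e^{-\mathbf{b}(u)}/|u|$.
\item Hence $\eta_\lambda/\delta_\lambda\gtrsim \lambda^2v_\sharp^{2n}e^{3\mathbf{b}}/(|u|\gamma)$, which is large by the saturation condition. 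This gives \eqref{eta>delta intro}, with $\delta_\lambda$ small.
\end{itemize}

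Finally, I would apply \Cref{trapped surface formation theorem} with initial cone $\{u\}\times[\cdot,v_\sharp(u)]$, $C_1=C_0^-$ (where $m\geq 1$ by \eqref{bounds for mass locally naked}) and $C_2=\{v=v_\sharp(u)\}$. This yields a marginally trapped surface on $C_2$, and \Cref{trapped region lemma} then gives trapped surfaces just to its future. Since $(u,v_\sharp(u))\to b_\Gamma$ as $u\to 0^-$, these trapped surfaces converge to $b_\Gamma$.
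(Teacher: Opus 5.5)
There is a genuine gap in Stage two. With your $\mathcal{P}_\lambda$, the saturating curve never reaches the regime where \Cref{trapped surface formation theorem} applies, and two bookkeeping slips hide this.

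First, dividing \eqref{background energy bound 1} by $\partial_vr\sim e^{-\mathbf{b}(u)}$ gives the background bound $\int_0^v\theta^2\,dv'\lesssim v|u|\gamma(u,0)$, without the factor $e^{\mathbf{b}(u)}$. Second, your mass increment together with $\delta_\lambda\lesssim v e^{-\mathbf{b}(u)}/|u|$ gives $\eta_\lambda/\delta_\lambda\gtrsim\lambda^2v^{2n}e^{2\mathbf{b}(u)}/(|u|\gamma(u,0))$, with $e^{2\mathbf{b}}$, not $e^{3\mathbf{b}}$.

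With these corrected, both requirements in Stage two reduce to the same threshold $|\lambda|v^ne^{\mathbf{b}(u)}\gg\sqrt{|u|\gamma(u,0)}$. The two requirements are that the amplified perturbation dominate the background, and that $\eta_\lambda>\delta_\lambda$. In words, the perturbation energy $\lambda^2v^{2n+1}e^{2\mathbf{b}}$ must beat the background scale $v|u|\gamma(u,0)$.

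Your region imposes $|\lambda|v^ne^{\mathbf{b}}\le\epsilon'v^{-1/2}(1+|u|^2\gamma)^{-1}$, which is incompatible with this threshold in general. Take $\gamma(u,0)\sim|u|^{-2}$ and $e^{\mathbf{b}(u)}\sim|u|^{-2}$, i.e.\ equality in \eqref{gamma lower bound on C0-} and \eqref{blueshift lower bound eq}; these lower bounds are all your argument uses.
\begin{itemize}
\item Your region then forces $v\lesssim|u|^{4/(2n+1)}$.
\item The threshold needs $v\gg|u|^{3/(2n)}$.
\item Since $|u|^{4/(2n+1)}\ll|u|^{3/(2n)}$ for $n\geq2$, the two are incompatible.
\end{itemize}
On your curve, the best lower bound you can extract is $\eta_\lambda/\delta_\lambda\gtrsim(\epsilon')^2|u|/v_\sharp\sim(\epsilon')^2|u|^{1-\frac{4}{2n+1}}\to0$. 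Meanwhile the background bound $v_\sharp|u|\gamma\sim|u|^{\frac{4}{2n+1}-1}$ swamps the perturbation term, which is of size $(\epsilon')^2$. So neither step of Stage two goes through.

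The missing idea is that the controlled region must extend \emph{past} this threshold, into a regime where the perturbation is no longer small compared with the background field, and the bootstrap must still close there. The paper's region \eqref{definition of P lambda} does exactly this. Its condition $|\lambda|v^{n+\frac14}\le e^{-\mathbf{b}(u)}|u|^{\frac12+\frac1{10n}}\sqrt{\gamma(u,0)}$, together with $v\le\min(v_0',\lambda^4,|u|^{1/n})$ and $v^{\frac1{10n}}\mathbf{s}(u)\le\epsilon'$, lets $\lambda^2v^{2n+1}e^{2\mathbf{b}}$ exceed $v|u|\gamma(u,0)$ by up to $v^{-1/2}|u|^{\frac1{5n}}\gg1$. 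The curve \eqref{definition of C lambda} then sits a factor $|u|^{-\frac1{10n}}$ above the threshold while staying inside $\mathcal{P}_\lambda$ (\Cref{curve C lemma}). This yields $\eta_\lambda\gtrsim v_\sharp|u|^{-1-\frac1{5n}}e^{-\mathbf{b}}$, which beats $\delta_\lambda\lesssim v_\sharp|u|^{-1}e^{-\mathbf{b}}$.

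Past the threshold, your linear-type difference bounds ($|\lambda|v^{n+1}e^{\mathbf{b}}$ times powers of $|u|$) also fail, for two reasons:
\begin{itemize}
\item The quadratic terms, such as $\lambda^2v^{2n+1}\gamma(u,0)e^{\mathbf{b}(u)}$ in \eqref{blueshift lemma bound 1}, exceed the linear ones by exactly the threshold ratio $|\lambda|v^ne^{\mathbf{b}}/\sqrt{|u|\gamma(u,0)}$.
\item The $\lambda$-independent contributions only become harmless after the iteration that produces the factors $[v\mathbf{s}(u)]^n$; these are then absorbed using $v\le\lambda^4$.
\end{itemize}
The error estimate therefore has to be built for this nonlinear regime, rather than around a small perturbative parameter $\epsilon'$.
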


The proof of \Cref{instability to trapped surfaces proposition} consists of two parts. First, in \Cref{estimates for perturbed spacetime section} we prove existence of the solution $\big(\mathcal{M}_{\lambda},g_{\lambda},\phi_{\lambda}\big)$ and quantitative bounds in a suitable region $\mathcal{P}_{\lambda}.$ A key aspect is that the estimates proved are consistent with the fact that the initial data perturbation in \eqref{characteristic data perturbation} vanishes to order $n$ at $C_0^-.$ Second, in \Cref{proof of instability thm section} we define a curve $\mathcal{C}_{\lambda}$ which converges to $b_{\Gamma}$ and is contained in $\mathcal{P}_{\lambda}$ sufficiently close to $b_{\Gamma}$. We prove that the blueshift effect causes the amplification of $\theta_{\lambda}$ on $\mathcal{C}_{\lambda}$, with the main term coming from the $\lambda v^n$ perturbation in \eqref{characteristic data perturbation}. We can then apply the formation of trapped surfaces criterion in \Cref{trapped surface formation theorem} for symmetry circles on $\mathcal{C}_{\lambda}$ to complete the proof of \Cref{instability to trapped surfaces proposition}.

\subsection{Estimates for the perturbed spacetime}\label{estimates for perturbed spacetime section}
The main result of this section establishes existence and quantitative estimates in the region $\mathcal{P}_{\lambda}:$
\begin{proposition}\label{bounds for perturbed spacetime proposition}
    For any $\epsilon'\in(0,\epsilon)$ sufficiently small depending on $\lambda,n,|\Lambda|,|u_0|,\Omega^2(u_0,0)$, and any $v_0'\in(0,v_0)$ sufficiently small depending on $\lambda,n,\epsilon',|\Lambda|,|u_0|,\Omega^2(u_0,0),\gamma|_{C_0^+\cap\{0\leq v\leq v_1\}},$ we define the region:
    \begin{equation}\label{definition of P lambda}
    \mathcal{P}_{\lambda}=\bigg\{u\in[u_0,0),\ 0\leq v\leq\min\big(v_0',\lambda^4,|u|^{\frac{1}{n}}\big),\ |\lambda| v^{n+\frac{1}{4}}\leq e^{-\mathbf{b}(u)}|u|^{\frac{1}{2}+\frac{1}{10n}}\sqrt{\gamma(u,0)},\ v^{\frac{1}{10n}}\mathbf{s}(u)\leq\epsilon'\bigg\},
    \end{equation}
    represented schematically in \Cref{fig:instability}. The solution $\big(\mathcal{M}_{\lambda},g_{\lambda},\phi_{\lambda}\big)$ exists and satisfies the quantitative~bounds in the region $\mathcal{P}_{\lambda}$:
    \begin{equation}\label{B1 blueshift instability}
    \partial_vr_{\lambda}(u,v)\leq 10e^{-\mathbf{b}(u)},
\end{equation}
\begin{equation}\label{B2 blueshift instability}
    \big|\gamma_{\lambda}(u,v)-\gamma(u,0)\big|\leq 100e^{-\mathbf{b}(u)}v^{1/4}|u|\gamma^2(u,0),
\end{equation}
\begin{equation}\label{B3 blueshift instability}
    \int_0^v\big|\theta_{\lambda}-\theta\big|^2(u,v')dv'\leq 100\lambda^2v^{2n+1}e^{2\mathbf{b}(u)}.
\end{equation}
\begin{equation}\label{B4 blueshift instability}
    \int_0^v\Big|\big(\theta_{\lambda}-\theta\big)(u,v')-e^{\mathbf{b}_{\lambda}(u,v')}\big(\theta_{\lambda}-\theta\big)(u_0,v')\Big|^2dv'\leq\lambda^2v^{2n+\frac{4}{3}}e^{2\mathbf{b}(u)}.
\end{equation}
\end{proposition}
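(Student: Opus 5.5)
The argument is a bootstrap in the region $\mathcal{P}_\lambda$, following the proof of \Cref{bounds for locally naked singularity proposition}, but for the perturbed solution and with difference estimates against the background. Since $\mathbf{s}$ is increasing and the remaining constraints defining $\mathcal{P}_\lambda$ are monotone in $v$ for fixed $u$ (one checks $e^{-\mathbf{b}(u)}|u|^{\frac12+\frac1{10n}}\sqrt{\gamma(u,0)}$ is handled by restricting to rectangles $[u_0,\tilde u]\times[0,\tilde v]\subset\mathcal{P}_\lambda$), it suffices to work on such rectangles. Existence of $(\mathcal{M}_\lambda,g_\lambda,\phi_\lambda)$ there follows by continuation: as long as the bounds hold, $\partial_v r_\lambda>0$ and $r_\lambda\geq|u|/2>0$, so by \Cref{extension principle away from center} and local well-posedness no singularity can occur in the rectangle. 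Note that $\mathcal{P}_\lambda\subset\mathcal{P}$, since $v^{1/2}\mathbf{s}\leq v^{1/10n}\mathbf{s}$ for $v\leq1$. Hence all background bounds of \Cref{bounds for locally naked singularity proposition} are available, namely $\gamma\sim\gamma(u,0)$, $\partial_vr\sim e^{-\mathbf{b}(u)}$, and \eqref{background energy bound 1}, \eqref{background energy bound 2}.

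\textbf{Bootstrap assumptions and geometry.} I would assume \eqref{B1 blueshift instability}--\eqref{B3 blueshift instability} with doubled constants. They hold on $C_0^-$ by domain of dependence, and on $C_0^+$ by the explicit data, since $\theta_\lambda-\theta=\lambda v^n$ and $\gamma_\lambda(u_0,\cdot)$ is close to $\gamma(u_0,0)$ for small $v_0'$. From \eqref{dv log gamma}, $\log(\gamma_\lambda/\gamma_\lambda(u,0))=\int_0^v\theta_\lambda^2\partial_vr_\lambda$. I split $\theta_\lambda^2\leq2\theta^2+2(\theta_\lambda-\theta)^2$, use \eqref{background energy bound 1} for the first piece, and use the bootstrap \eqref{B3 blueshift instability} times $\partial_vr_\lambda\lesssim e^{-\mathbf{b}(u)}$ for the second, giving a contribution $\lesssim\lambda^2v^{2n+1}e^{\mathbf{b}(u)}$. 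The defining inequality $|\lambda|v^{n+1/4}\leq e^{-\mathbf{b}}|u|^{1/2+1/10n}\sqrt{\gamma(u,0)}$ turns this into $\lesssim v^{1/2}e^{-\mathbf{b}}|u|\gamma(u,0)$, which improves \eqref{B2 blueshift instability}; the $v^{1/4}$ factor comes from this slack.

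Next, the difference $\mathbf{b}_\lambda(u,v)-\mathbf{b}(u)$ is estimated exactly as in the previous proof, integrating $\partial_v(r_\lambda\gamma_\lambda)$ in $u$. The extra $\theta_\lambda-\theta$ contribution is controlled by the same inequality, and the constraint $v^{1/10n}\mathbf{s}\leq\epsilon'$ absorbs the factor $\mathbf{s}(u)$. This gives $|\mathbf{b}_\lambda-\mathbf{b}(u)|\lesssim\epsilon'$, and then \eqref{B1 blueshift instability} follows from $\partial_vr_\lambda=\frac12e^{-\mathbf{b}_\lambda}$. One also recovers $|\partial_ur_\lambda|<1$ as in \eqref{bound for du r locally naked}.

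\textbf{The $\theta$ difference.} Subtracting \eqref{du theta intro} for the two solutions gives
\[\partial_u(\theta_\lambda-\theta)=\tfrac{|\Lambda|}{2}r_\lambda\gamma_\lambda(\theta_\lambda-\theta)+Err,\qquad Err=\tfrac{|\Lambda|}{2}(r_\lambda\gamma_\lambda-r\gamma)\theta-\tfrac{1}{2\sqrt{r_\lambda}}\partial_u\phi_\lambda+\tfrac{1}{2\sqrt r}\partial_u\phi .\]
Integrating with the factor $e^{\mathbf{b}_\lambda}$ yields
\[(\theta_\lambda-\theta)(u,v)-e^{\mathbf{b}_\lambda(u,v)}\lambda v^n=\int_{u_0}^u e^{\mathbf{b}_\lambda(u,v)-\mathbf{b}_\lambda(u',v)}Err(u',v)\,du'.\]
Since $e^{\mathbf{b}_\lambda}\sim e^{\mathbf{b}(u)}$, the main term already gives \eqref{B3 blueshift instability} with constant close to $1$, once \eqref{B4 blueshift instability} is shown. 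For \eqref{B4 blueshift instability}, I take the $L^2_v$ norm on $[0,v]$ and use Minkowski's inequality in $u'$. This reduces the problem to bounding $e^{\mathbf{b}(u)-\mathbf{b}(u')}\|Err(u',\cdot)\|_{L^2[0,v]}$ integrated in $u'$ by $|\lambda|v^{n+2/3}e^{\mathbf{b}(u)}$.

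The first term of $Err$ uses $|r_\lambda\gamma_\lambda-r\gamma|\lesssim$ (the \eqref{B2 blueshift instability}-type difference plus $|r_\lambda-r|$) together with \eqref{background energy bound 1}. The second term requires an estimate for $\sqrt{r}\,\partial_u(\phi_\lambda-\phi)$, which I would get by integrating the $v$-transport equation \eqref{wave equation phi} (in the form $\partial_v(\sqrt r\partial_u\phi)=-\frac{\partial_ur}{r}\cdot\frac{\sqrt r\,\partial_vr}{\sqrt r}\theta\cdots$) from $C_0^-$, where the difference vanishes. This gives a bound of size $\int_0^v\partial_vr\,|\theta_\lambda-\theta|/\sqrt r+\ldots\lesssim|\lambda|v^{n+3/2}|u|^{-1/2}$, up to factors from \eqref{B3 blueshift instability}. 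The $\partial_u\phi$ background terms are handled by \eqref{background energy bound 2} and Cauchy--Schwarz.

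\textbf{Main obstacle.} The delicate step is this error integral. The integration in $u'$ from $u_0$ to $u$ loses powers of $|u'|^{-1}$ and $\gamma(u',0)$, which can blow up. They must be compensated by the constraints $v\leq|u|^{1/n}$, the $|u|^{1/10n}$ weight in the definition of $\mathcal{P}_\lambda$, and the identity $e^{-\mathbf{b}}\gamma(u,0)\leq2\Omega^2(u_0,0)$, so as to extract a net gain $v^{1/6}$ (and small constants from $\epsilon'$ and $v_0'$) over the main term. I would first try bounding everything crudely by $v^{1/2}$ gains multiplied by negative powers of $|u|$, then convert those into $v$-losses through $|u|\geq v^n$; this is plausibly where the exponent $2n+\frac43$ originates. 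The choice $\lambda^4\geq v$ would absorb residual powers of $|\lambda|$ coming from the quadratic terms in $\theta_\lambda-\theta$.
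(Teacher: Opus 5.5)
There is a genuine gap in the error estimate needed for \eqref{B4 blueshift instability}. Your bootstrap structure matches the paper: improving \eqref{B1 blueshift instability}--\eqref{B2 blueshift instability}, the integrating-factor identity for $\theta_\lambda-\theta$, and deducing \eqref{B3 blueshift instability} from \eqref{B4 blueshift instability}.

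The problem is in how you bound $Err$. You bound $r_\lambda\gamma_\lambda-r\gamma$ by the \eqref{B2 blueshift instability}-type bound plus $|r_\lambda-r|\lesssim ve^{-\mathbf{b}(u)}$. In the transport equation for $\sqrt{r}\,\partial_u(\phi_\lambda-\phi)$, you leave the geometric source terms as ``$\ldots$''. These crude bounds carry no factor of $\lambda$ and vanish at most like $v^{1/4}$ as $v\to0$; for $\partial_vr_\lambda-\partial_vr$ they give no vanishing in $v$ at all. The right-hand side of \eqref{B4 blueshift instability}, however, is $\lambda^2v^{2n+4/3}e^{2\mathbf{b}(u)}$.

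Concretely, take $u$ bounded away from $0$, where $e^{\mathbf{b}(u)}\sim1$. Inserting $|r_\lambda\gamma_\lambda-r\gamma|\lesssim v^{1/4}e^{-\mathbf{b}}|u|^2\gamma^2(u,0)+ve^{-\mathbf{b}}\gamma(u,0)$ and \eqref{background energy bound 1} into the term $(r_\lambda\gamma_\lambda-r\gamma)\theta$ gives at best a contribution of order $v^{3/2}$. This is not $\lesssim\lambda^2v^{2n+4/3}$ as $v\to0$, since $n\geq4$. Trading powers of $|u|$ via $v\leq|u|^{1/n}$ or the $|u|^{1/10n}$ weight cannot repair this. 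What is wrong is the order of vanishing in $v$ and the missing factor of $\lambda$, so your ``main obstacle'' paragraph misidentifies the difficulty.

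The missing ingredient is a set of difference estimates for $\gamma,\partial_vr,r,\partial_ur,\sqrt{r}\partial_u\phi$ that inherit the $n$-th order vanishing at $v=0$ of the data perturbation $\lambda v^n$. The paper proves these in a separate lemma (\Cref{refined estimates section}) by an $n$-step induction, starting from the crude bounds. Each pass through the coupled loop gains a factor $v\mathbf{s}(u)$:
\begin{itemize}
\item bound $r_\lambda-r$ from $\partial_vr_\lambda-\partial_vr$, then bound $r_\lambda\gamma_\lambda-r\gamma$;
\item integrate in $u$, using $\int_{u_0}^u|u'|\gamma(u',0)e^{\mathbf{b}(u')}du'\lesssim e^{\mathbf{b}(u)}$, to improve $\partial_vr_\lambda-\partial_vr$ via \eqref{wave equation r};
\item improve $\gamma_\lambda-\gamma$ via \eqref{dv log gamma}.
\end{itemize}
After $n$ steps the $\lambda$-independent remainder is $[v\mathbf{s}(u)]^n\leq(\epsilon')^nv^{n-1/10}$. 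Everything else is of the form $|\lambda|v^{n+1}(\cdots)$ or $\lambda^2v^{2n+1}(\cdots)$. Only after this are the differences of $\partial_ur$ and $\sqrt{r}\partial_u\phi$ estimated.

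The constraint $v\leq\lambda^4$ in $\mathcal{P}_\lambda$ exists precisely to convert these remainders into $\lambda$-terms, e.g.\ $(\epsilon')^nv^{n+9/10}\lesssim|\lambda|v^{n+1/2}$. It is not there to absorb quadratic terms; those are handled by the constraint $|\lambda|v^{n+1/4}\leq e^{-\mathbf{b}(u)}|u|^{\frac12+\frac1{10n}}\sqrt{\gamma(u,0)}$. With the refined bounds in hand, the error integral splits into three pieces:
\begin{itemize}
\item $(r_\lambda-r)\partial_u\phi$ is handled by Cauchy--Schwarz in $u'$ against \eqref{background energy bound 2};
\item $(r_\lambda\gamma_\lambda-r\gamma)\theta$ is handled against \eqref{background energy bound 1};
\item $v\leq|u|^{1/n}$ enters only to absorb a $|\log|u||$ loss.
\end{itemize}
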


We prove \Cref{bounds for perturbed spacetime proposition} for the rest of \Cref{estimates for perturbed spacetime section}. We recall that along $C_0^-$ we have that $\mathbf{b}$ and $\mathbf{s}$ are increasing, while $\Omega^2$ is decreasing. Thus, if $(\Tilde{u},\Tilde{v})\in\mathcal{P}_{\lambda}$ then $[u_0,\Tilde{u}]\times[0,\Tilde{v}]\subset\mathcal{P}_{\lambda}.$ We make the bootstrap assumptions in for any $(u,v)\in[u_0,\Tilde{u}]\times[0,\Tilde{v}]\subset\mathcal{P}_{\lambda}$:
\begin{equation}\label{A1 blueshift instability}
    \partial_vr_{\lambda}(u,v)\leq 10e^{-\mathbf{b}(u)},
\end{equation}
\begin{equation}\label{A2 blueshift instability}
    \big|\gamma_{\lambda}(u,v)-\gamma(u,0)\big|\leq 100e^{-\mathbf{b}(u)}v^{1/4}|u|\gamma^2(u,0),
\end{equation}
\begin{equation}\label{A3 blueshift instability}
    \int_0^v\big|\theta_{\lambda}-\theta\big|^2(u,v')dv'\leq 100\lambda^2v^{2n+1}e^{2\mathbf{b}(u)}.
\end{equation}
We first check that \eqref{A1 blueshift instability}-\eqref{A3 blueshift instability} hold initially with significantly better constants. By \eqref{dv r in terms of blueshift} and \eqref{characteristic data perturbation}, we have that $\partial_vr_{\lambda}(u,v)=\partial_vr(u,v)\leq e^{-\mathbf{b}(u)}$ for any $(u,v)\in \big(C_0^-\cup C_0^+\big)\cap\mathcal{P}_{\lambda}.$ Moreover, \eqref{characteristic data perturbation} also implies:
\begin{equation}\label{bound for theta initial data pert}
    \int_0^v\big|\theta_{\lambda}-\theta\big|^2(u_0,v')dv'=\frac{\lambda^2}{2n+1}v^{2n+1}\leq\lambda^2v^{2n+1}e^{2\mathbf{b}(u_0)}.
\end{equation}
Next, we have by \eqref{dv log gamma}, \eqref{characteristic data perturbation}, \eqref{definition of P lambda}, and the proof of \Cref{bounds for locally naked singularity proposition} that for any $(u_0,v)\in C_0^+\cap\mathcal{P}_{\lambda}:$
\begin{align*}
    0\leq\frac{\gamma_{\lambda}(u_0,v)}{\gamma(u_0,0)}-1&\leq2\int_0^v\theta^2_{\lambda}\partial_vr_{\lambda}(u_0,v')dv'\leq4\int_0^v\theta^2\partial_vr(u_0,v')dv'+\frac{2\lambda^2}{2n+1}v^{2n+1}\\
    &\leq4|\Lambda|\gamma(u_0,0)\cdot v|u_0|+\gamma(u_0,0)\cdot v^{\frac{1}{2}}|u_0|^{1+\frac{1}{5n}}\leq e^{-\mathbf{b}(u_0)}v^{1/4}|u_0|\gamma(u_0,0),
\end{align*}
for $v_0'$ sufficiently small in terms of $|\Lambda|,|u_0|,\Omega^2(u_0,0).$

In order to prove \Cref{bounds for perturbed spacetime proposition}, it suffices to improve the bootstrap assumptions \eqref{A1 blueshift instability}-\eqref{A3 blueshift instability} in $[u_0,\Tilde{u}]\times[0,\Tilde{v}]$. In \Cref{A1 A2 section} we improve the bootstrap assumptions \eqref{A1 blueshift instability}-\eqref{A2 blueshift instability}. The most difficult part of the proof is improving \eqref{A3 blueshift instability}. In particular, we point out that in \eqref{A3 blueshift instability} we need to capture the same order of vanishing as on the initial data in \eqref{bound for theta initial data pert}. We establish further estimates for the solution in \Cref{refined estimates section}. We then improve the bootstrap assumption \eqref{A3 blueshift instability} in \Cref{A3 section}, completing the proof of \Cref{bounds for perturbed spacetime proposition}.

We make the notation convention for the rest of this section that we write $x\lesssim y$ for any quantities $x,y>0$ if there is a constant $C\big(k,n,|\Lambda|,|u_0|,\Omega^2(u_0,0)\big)>0$ such that $x\leq Cy.$ Additionally, for any quantity $\psi$ we denote $\widetilde{\psi}=\psi_{\lambda}-\psi.$

\subsubsection{Improving the bootstrap assumptions \eqref{A1 blueshift instability}-\eqref{A2 blueshift instability}}\label{A1 A2 section}
We begin the proof of \Cref{bounds for perturbed spacetime proposition} by improving the bootstrap assumptions \eqref{A1 blueshift instability}-\eqref{A2 blueshift instability}. Firstly, \eqref{A1 blueshift instability} implies that for any $(u,v)\in[u_0,\Tilde{u}]\times[0,\Tilde{v}]:$
\[\big|r_{\lambda}(u,v)-r(u,0)\big|\leq 10ve^{-\mathbf{b}(u)}.\]
As a result, $|u|/2\leq r_{\lambda}(u,v)\leq|u|.$ By taking $v_0'$ sufficiently small in terms of $|u_0|,\Omega^2(u_0,0),$ the bootstrap assumption \eqref{A2 blueshift instability} implies that $\gamma(u,0)\leq\gamma_{\lambda}(u,v)\leq2\gamma(u,0).$ Similarly to the proof of \eqref{bound for du r locally naked}, the bootstrap assumption \eqref{A1 blueshift instability} also implies that $1/2\leq|\partial_ur_{\lambda}|\leq1,$ provided that $v_0'$ is sufficiently small in terms of $|\Lambda|,|u_0|,\Omega^2(u_0,0).$ We use these bounds for $r_{\lambda},\gamma_{\lambda},\partial_ur_{\lambda}$ for the rest of the proof without explicit reference.

According to \eqref{background energy bound 1},\eqref{dv r in terms of b(u)}, \eqref{A1 blueshift instability}, and \eqref{A3 blueshift instability}:
\begin{align}\label{intermediate bound for theta lambda^2}
    \int_0^v\theta_{\lambda}^2\partial_vr_{\lambda}(u,v')dv'&\leq20e^{-\mathbf{b}(u)}\int_0^v\big|\theta_{\lambda}-\theta\big|^2(u,v')dv'+20e^{-\mathbf{b}(u)}\int_0^v\theta^2(u,v')dv'\\
    &\leq10^4\lambda^2v^{2n+1}e^{\mathbf{b}(u)}+100\int_0^v\theta^2\partial_vr(u,v')dv'\notag\\
    &\leq10^4\lambda^2v^{2n+1}e^{\mathbf{b}(u)}+10^6|\Lambda|\cdot v|u|e^{-\mathbf{b}(u)}\gamma(u,0)\leq e^{-\mathbf{b}(u)}v^{\frac{1}{3}}|u|\gamma(u,0),\notag
\end{align}
for $v_0'$ sufficiently small in terms of $|\Lambda|,|u_0|,\Omega^2(u_0,0).$ We note that in the derivation of \eqref{intermediate bound for theta lambda^2} we used the fact that \eqref{definition of P lambda} implies $\lambda^2v^{2n+1}e^{\mathbf{b}(u)}\leq v^{\frac{1}{2}}e^{-\mathbf{b}(u)}|u|\cdot|u_0|^{\frac{1}{5n}}\gamma(u,0).$ 

The RHS of \eqref{intermediate bound for theta lambda^2} is bounded by $2v^{\frac{1}{3}}|u_0|\Omega^2(u_0,0),$ so it is small. Thus, \eqref{dv log gamma} and \eqref{intermediate bound for theta lambda^2} imply:
\[\big|\gamma_{\lambda}(u,v)-\gamma(u,0)\big|\leq2\gamma(u,0)\int_0^v\theta_{\lambda}^2\partial_vr_{\lambda}(u,v')dv'\leq e^{-\mathbf{b}(u)}v^{\frac{1}{4}}|u|\gamma^2(u,0),\]
improving the bootstrap assumptions \eqref{A2 blueshift instability}.

Next, we estimate the blueshift difference using a similar argument to \Cref{bounds for locally naked singularity proposition}. Using \eqref{dv log gamma}, \eqref{dv r in terms of blueshift}, \eqref{A1 blueshift instability}, and \eqref{intermediate bound for theta lambda^2}, we have for all $(u,v)\in[u_0,\Tilde{u}]\times[0,\Tilde{v}]$:
\begin{align*}
    |\mathbf{b}_{\lambda}&(u,v)-\mathbf{b}(u)|\lesssim\int_0^v\int_{u_0}^u\Big(\gamma_{\lambda}\partial_vr_{\lambda}+|u'|\gamma_{\lambda}\theta_{\lambda}^2\partial_vr_{\lambda}\Big)(u',v')du'dv'\\
    &\lesssim v\int_{u_0}^u\gamma(u',0)e^{-\mathbf{b}(u')}du'+\int_{u_0}^u|u'|\gamma(u',0)\bigg(\int_0^v\theta_{\lambda}^2\partial_vr_{\lambda}(u',v')dv'\bigg)du'\\
    &\lesssim v|u_0|\Omega^2(u_0,0)+v^{\frac{1}{3}}\int_{u_0}^u|u'|^2\gamma^2(u',0)e^{-\mathbf{b}(u')}du'\lesssim v|u_0|\Omega^2(u_0,0)+v^{\frac{1}{3}}\Omega^2(u_0,0)\cdot\mathbf{s}(u).
\end{align*}
Thus, by \eqref{definition of P lambda} we have for $v_0'$ sufficiently small in terms of $n,|\Lambda|,|u_0|,\Omega^2(u_0,0):$
\begin{equation}\label{blueshift difference bound}
    |\mathbf{b}_{\lambda}(u,v)-\mathbf{b}(u)|\leq v^{\frac{1}{4}}.
\end{equation}
As a result, we have that for any $(u,v)\in[u_0,\Tilde{u}]\times[0,\Tilde{v}]$:
\begin{equation}\label{dv r lambda in terms of b(u)}
    \partial_vr_{\lambda}(u,v)=\frac{1}{2}e^{-\mathbf{b}_{\lambda}(u,v)}=\frac{1}{2}e^{-\mathbf{b}(u)}\big(1+O(v^{\frac{1}{4}})\big),
\end{equation}
which improves the bootstrap assumption \eqref{A1 blueshift instability}.

\subsubsection{Refined estimates}\label{refined estimates section}
The final step in the proof of \Cref{bounds for perturbed spacetime proposition} is to improve the bootstrap assumption \eqref{A3 blueshift instability}. For this we first need to establish more precise bounds which capture higher order vanishing properties at $v=0.$ We also recall that we are using the notation $\widetilde{\psi}=\psi_{\lambda}-\psi.$
\begin{lemma}
    Under the bootstrap assumptions \eqref{A1 blueshift instability}-\eqref{A3 blueshift instability}, the following bounds hold for all $(u,v)\in[u_0,\Tilde{u}]\times[0,\Tilde{v}]$:
    \begin{align}
        \big|\widetilde{\gamma}\big|(u,v)\lesssim& ve^{-\mathbf{b}(u)}|u|\gamma^2(u,0)\big[v\mathbf{s}(u)\big]^{n}+|\lambda| v^{n+1}\gamma(u,0)\sqrt{|u|\gamma(u,0)}+\lambda^2v^{2n+1}\gamma(u,0)e^{\mathbf{b}(u)},\label{blueshift lemma bound 1}\\
        \big|\widetilde{\partial_vr}\big|(u,v)\lesssim& e^{-\mathbf{b}(u)}\big[v\mathbf{s}(u)\big]^{n+1}+|\lambda|v^{n+1}e^{-\mathbf{b}(u)}\mathbf{x}(u)+\lambda^2v^{2n+1},\label{blueshift lemma bound 2}\\
        \big|\widetilde{r}\big|(u,v)\lesssim & e^{-\mathbf{b}(u)}v\big[v\mathbf{s}(u)\big]^{n+1}+|\lambda| v^{n+2}e^{-\mathbf{b}(u)}\mathbf{x}(u)+\lambda^2v^{2n+2},\label{blueshift lemma bound 3}\\
        \big|\widetilde{\partial_ur}\big|(u,v)\lesssim& v|u|\big[v\mathbf{s}(u)\big]^{n}+|\lambda| v^{n+1}e^{-\mathbf{b}(u)/2}|u|^{1/2}\gamma(u,0)+\lambda^2 v^{2n+1}\gamma(u,0),\label{blueshift lemma bound 4}\\
        \big|\widetilde{\sqrt{r}\partial_u\phi}\big|(u,v)\lesssim& v|u|^{-1/2}e^{-\mathbf{b}(u)}\sqrt{\gamma(u,0)}\big[v\mathbf{s}(u)\big]^{n}+|\lambda|\cdot|u|^{-1}v^{n+1},\label{blueshift lemma bound 5}
    \end{align}
    where we use the notation:
    \[\mathbf{x}(u)=\int_{u_0}^u\big(|u'|\gamma(u',0)\big)^{3/2}du'.\]
\end{lemma}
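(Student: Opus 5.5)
The five bounds are coupled, so I would prove them together by a secondary bootstrap (or a single continuity argument) on $[u_0,\Tilde{u}]\times[0,\Tilde{v}]$. Throughout, the estimates from \Cref{A1 A2 section} are available: $r_\lambda\sim|u|$, $\gamma_\lambda\sim\gamma(u,0)$, $|\partial_u r_\lambda|\sim 1$, and $\partial_v r_\lambda=\frac12 e^{-\mathbf{b}(u)}(1+O(v^{1/4}))$, together with \eqref{A3 blueshift instability} and the background bounds of \Cref{bounds for locally naked singularity proposition}. Each right-hand side has three types of term. Terms with powers of $v\mathbf{s}(u)$ come from iterating the linear error terms. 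Terms linear in $\lambda$ come from the leading part of $\widetilde\theta$. Terms quadratic in $\lambda$ come from $\widetilde{\theta^2}\ni(\widetilde\theta)^2$. The structure is to write transport equations for the differences, in $v$ for $\gamma$ and $\partial_u r$, and in $u$ for $\partial_v r$ and $\sqrt r\partial_u\phi$. The data on $C_0^-$ vanish identically because of domain of dependence. The data on $C_0^+$ are either zero ($\widetilde r=0$, so $\widetilde{\partial_v r}=0$) or explicitly $O(\lambda v^{n})$.

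The steps go in this order.

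First, \eqref{dv log gamma} gives $\partial_v\log\gamma_\lambda-\partial_v\log\gamma=\theta_\lambda^2\partial_vr_\lambda-\theta^2\partial_vr$. I expand this as $\widetilde{\theta}(2\theta+\widetilde\theta)\partial_vr_\lambda+\theta^2\widetilde{\partial_vr}$ and integrate from $v=0$. By Cauchy--Schwarz with \eqref{background energy bound 1} and \eqref{A3 blueshift instability}, the cross term gives the $|\lambda|v^{n+1}\gamma\sqrt{|u|\gamma}$ piece, and $\widetilde\theta^2$ gives the $\lambda^2 v^{2n+1}\gamma e^{\mathbf b}$ piece. The $\theta^2\widetilde{\partial_vr}$ term feeds in \eqref{blueshift lemma bound 2}.

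Second, \eqref{dv r in terms of blueshift} gives $\widetilde{\partial_v r}=\frac12(e^{-\mathbf b_\lambda}-e^{-\mathbf b})$, with $\mathbf b_\lambda-\mathbf b=\int_{u_0}^u\frac{|\Lambda|}2\widetilde{r\gamma}\,du'$. So $\widetilde{\partial_vr}$ is controlled by $e^{-\mathbf b(u)}\int_{u_0}^u|u'|\,|\widetilde\gamma|+\gamma|\widetilde r|\,du'$. Inserting the $\widetilde\gamma$ bound yields the three terms of \eqref{blueshift lemma bound 2}:
\begin{itemize}
\item the $|u'|^{3/2}\gamma^{3/2}$ integrand produces $\mathbf x(u)$;
\item $\int|u'|^2\gamma^2 e^{-\mathbf b}\,du'\lesssim\mathbf s$ raises the power of $v\mathbf s$ by one;
\item $\int|u'|\gamma\, e^{\mathbf b}\cdot e^{-\mathbf b}\,du'$ is bounded using $e^{-\mathbf b}\gamma\lesssim\Omega^2(u_0,0)$.
\end{itemize}
Integrating in $v$ gives \eqref{blueshift lemma bound 3}.

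Third, \eqref{wave equation r} written as $\partial_v\partial_u r=-\frac{|\Lambda|}{2}r\gamma\partial_vr$, differenced and integrated from $v=0$, gives \eqref{blueshift lemma bound 4}.

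Fourth, the wave equation \eqref{wave equation phi} in the form $\partial_v(\sqrt r\partial_u\phi)=-\frac{\partial_ur}{2\sqrt r}\partial_v\phi+\frac{\partial_vr}{2\sqrt r}\partial_u\phi$, that is $\partial_v(\sqrt r\partial_u\phi)=-\frac{\partial_u r\,\partial_v r}{2r}\theta+\frac{\partial_vr}{2r}\sqrt r\partial_u\phi$, is differenced and integrated in $v$. The $\widetilde\theta$ term, estimated with \eqref{A3 blueshift instability} and Cauchy--Schwarz, gives $|\lambda||u|^{-1}v^{n+1}$. The remaining terms are bounded by the others, using \eqref{background energy bound 2} for $\partial_u\phi$.

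The gain of powers $[v\mathbf s(u)]^{n}$ is produced by iteration. The linear system closes with a loss: each pass of the $\gamma\to\partial_vr\to\gamma$ loop, or each pass through $\widetilde{\theta}$ via \eqref{du theta intro}, contributes a factor $\lesssim v\mathbf s(u)$, which is $\le\epsilon'$ in $\mathcal P_\lambda$. Starting from crude bounds, I would iterate $n$ times, or alternatively bootstrap the bounds with the stated powers directly and close using smallness of $v\mathbf s$. The homogeneous parts have zero data, so only the iterated-forcing terms survive.

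The main obstacle is precisely this capture of the order-$n$ vanishing. Here $\widetilde\theta$ is controlled only in $L^2_v$ by \eqref{A3 blueshift instability}, and not pointwise, so every use of it must go through a $v$-integral followed by Cauchy--Schwarz. The $v\mathbf s$ bookkeeping has to be arranged so that constants do not grow with the iteration. They depend only on $n$, which is allowed in $\lesssim$.
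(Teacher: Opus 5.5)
Apart from one step, your plan is the paper's proof. You difference \eqref{dv log gamma} in $v$, control $\widetilde{\partial_vr}$ through $\mathbf b_\lambda-\mathbf b=\int_{u_0}^u\frac{|\Lambda|}{2}\widetilde{r\gamma}\,du'$, integrate in $v$ to get $\widetilde r$ and $\widetilde{\partial_ur}$, and produce the factors $[v\mathbf s(u)]^{i}$ by passing repeatedly through the loop $\widetilde\gamma\to\widetilde{\partial_vr}\to\widetilde\gamma$, starting from the crude bound $|\widetilde{\partial_vr}|\lesssim e^{-\mathbf b(u)}$. (In this lemma $\widetilde\theta$ is never propagated through its transport equation; it enters only via \eqref{A3 blueshift instability}.)

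The step that does not work as written is Step 4, because the identity you start from is wrong. Differentiating the weight gives $\frac{\partial_vr}{2\sqrt r}\partial_u\phi$, which cancels exactly against the term $-\frac{\partial_vr\,\partial_u\phi}{2\sqrt r}$ coming from \eqref{wave equation phi}. Hence
\[
\partial_v\big(\sqrt r\partial_u\phi\big)=-\frac{\partial_ur}{2\sqrt r}\partial_v\phi=-\frac{1}{2r}\partial_ur\,\partial_vr\cdot\theta ,
\]
with no $\partial_u\phi$ on the right; this cancellation is the reason for the $\sqrt r$ weight. With your extra term $\frac{\partial_vr}{2r}\sqrt r\partial_u\phi$, the differenced equation would contain $\widetilde{(\partial_vr/r)}\cdot\sqrt r\partial_u\phi$, integrated in $v$ at fixed $u$. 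Bounding that needs pointwise (or $L^1_v$) control of the background $\partial_u\phi$ along outgoing cones, which you do not have: $\partial_u\phi$ is controlled only through the mass flux. Moreover, \eqref{background energy bound 2} is an $L^2$ bound in $u$ at fixed $v$, so it cannot bound a $v$-integral at fixed $u$. In the paper it is used later, for the term $II$ in \Cref{bound for Err instability lemma}.

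Once the identity is corrected, the differenced right-hand side involves only $\widetilde{\partial_ur}\,\theta$, $\widetilde{\partial_vr}\,\theta$, $\widetilde r\,\theta$ and $\widetilde\theta$. Each is handled by Cauchy--Schwarz against \eqref{background energy bound 1} or \eqref{A3 blueshift instability}, as in your $\widetilde\theta$ computation. The contributions carrying $\mathbf x(u)$ and $\lambda^2$ must then be absorbed into $|\lambda|\,|u|^{-1}v^{n+1}$ using the defining inequalities of $\mathcal P_\lambda$ (compare \eqref{some bound I need later}). The same absorption is needed to keep the form of the $\widetilde\gamma$ bound stable along the iteration.

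A smaller slip is in the third bullet of your Step 2. The bound $e^{-\mathbf b}\gamma\lesssim\Omega^2(u_0,0)$ does not give $e^{-\mathbf b(u)}\int_{u_0}^u|u'|\gamma(u',0)e^{\mathbf b(u')}du'\lesssim1$. That follows from \eqref{bound for the old y}, i.e.\ exact integration using $\partial_u\mathbf b(u)=\frac{|\Lambda|}{4}|u|\gamma(u,0)$. The bound $e^{-\mathbf b}\gamma\lesssim\Omega^2$ is what handles the $\gamma|\widetilde r|$ contributions instead.
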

\begin{proof}
    We notice that we have the inequality:
    \[\big|\widetilde{\theta^2\partial_vr}\big|\lesssim\theta^2\big|\widetilde{\partial_vr}\big|+e^{-\mathbf{b}(u)}|\theta|\cdot\big|\widetilde{\theta}\big|+e^{-\mathbf{b}(u)}\big|\widetilde{\theta}\big|^2.\]
    We integrate this inequality, and use \eqref{background energy bound 1}, \eqref{dv r in terms of b(u)}, \eqref{dv r lambda in terms of b(u)} for the first term, \eqref{background energy bound 1}, \eqref{A3 blueshift instability} for the second term, and \eqref{A3 blueshift instability} for the third term. Thus, we get that for all $(u,v)\in[u_0,\Tilde{u}]\times[0,\Tilde{v}]$:
    \begin{align}\label{preliminary bound difference of theta}
        \int_0^v\big|\widetilde{\theta^2\partial_vr}\big|(u,v')dv'&\lesssim v|u|\gamma(u,0)\sup_{v'\in[0,v]}\big|\widetilde{\partial_vr}\big|(u,v')+|\lambda| v^{n+1}\sqrt{|u|\gamma(u,0)}+\lambda^2v^{2n+1}e^{\mathbf{b}(u)}\\
        &\lesssim v|u|\gamma(u,0)e^{-\mathbf{b}(u)}+|\lambda|v^{n+1}\sqrt{|u|\gamma(u,0)}+\lambda^2v^{2n+1}e^{\mathbf{b}(u)}.\notag
    \end{align}
    According to \eqref{definition of P lambda}, the RHS of \eqref{preliminary bound difference of theta} is bounded up to a constant by $v^{\frac{1}{2}}|u_0|^{1+\frac{1}{5n}}\Omega^2(u_0,0),$ so it is small for $v_0'$ sufficiently small in terms of $|\Lambda|,|u_0|,\Omega^2(u_0,0).$ As a result, \eqref{dv log gamma} implies that:
    \begin{align}\label{preliminary bound difference of gamma}
        \big|\widetilde{\gamma}\big|(u,v)&\lesssim \gamma(u,v)\cdot\int_0^v\big|\widetilde{\theta^2\partial_vr}\big|(u,v')dv'\\
        &\lesssim v|u|\gamma^2(u,0)e^{-\mathbf{b}(u)}+|\lambda|v^{n+1}\gamma(u,0)\sqrt{|u|\gamma(u,0)}+\lambda^2v^{2n+1}\gamma(u,0)e^{\mathbf{b}(u)}.\notag
    \end{align}
    
    By induction we prove that for any $0\leq i\leq n$ and $(u,v)\in[u_0,\Tilde{u}]\times[0,\Tilde{v}]$:
    \begin{align}
        \big|\widetilde{\gamma}\big|(u,v)\lesssim & ve^{-\mathbf{b}(u)}|u|\gamma^2(u,0)\big[v\mathbf{s}(u)\big]^{i}+|\lambda|v^{n+1}\gamma(u,0)\sqrt{|u|\gamma(u,0)}+\lambda^2v^{2n+1}\gamma(u,0)e^{\mathbf{b}(u)},\label{induction g}\\
        \big|\widetilde{\partial_vr}\big|(u,v)\lesssim & e^{-\mathbf{b}(u)}\big[v\mathbf{s}(u)\big]^{i}+|\lambda|v^{n+1}e^{-\mathbf{b}(u)}\mathbf{x}(u)+\lambda^2v^{2n+1}.\label{induction r}
    \end{align}
    According to \eqref{dv r in terms of b(u)}, \eqref{dv r lambda in terms of b(u)}, and \eqref{preliminary bound difference of gamma}, we already established the case $i=0$. We assume the induction hypothesis holds for $0\leq i\leq n-1$ and prove it for $i+1.$ We first have the preliminary estimate:
    \begin{equation}\label{preliminary bound difference of r}
        \big|\widetilde{r}\big|(u,v)\lesssim e^{-\mathbf{b}(u)}v\big[v\mathbf{s}(u)\big]^{i}+|\lambda| v^{n+2}e^{-\mathbf{b}(u)}\mathbf{x}(u)+\lambda^2v^{2n+2}.
    \end{equation}
    As a result, we get by combining \eqref{induction g} with \eqref{preliminary bound difference of r} and using \eqref{gamma lower bound on C0-} that for any $(u,v)\in[u_0,\Tilde{u}]\times[0,\Tilde{v}]$:
    \begin{align}\label{bound for tilde rgamma}
        \big|\widetilde{r\gamma}\big|(u,v)\lesssim & ve^{-\mathbf{b}(u)}|u|^2\gamma^2(u,0)\big[v\mathbf{s}(u)\big]^{i}+|\lambda| v^{n+1}\big(|u|\gamma(u,0)\big)^{3/2}+\lambda^2v^{2n+1}|u|\gamma(u,0)e^{\mathbf{b}(u)}\\
        & +|\lambda|\Omega^2(u_0,0) v^{n+2}\mathbf{x}(u)+\lambda^2\Omega^2(u_0,0)v^{2n+2}e^{\mathbf{b}(u)}.\notag
    \end{align}
    We notice that by \Cref{blueshift definition}, we have that:
    \begin{equation}\label{bound for the old y}
        \int_{u_0}^u|u'|\gamma(u',0)e^{\mathbf{b}(u')}du'\lesssim e^{\mathbf{b}(u)}.
    \end{equation}
    We integrate \eqref{bound for tilde rgamma}, using \eqref{bound for the old y} and the fact that $\mathbf{s},\mathbf{x}$ are non-decreasing to get that:
    \begin{equation}\label{preliminary bound difference of r gamma}
        \int_{u_0}^u\big|\widetilde{r\gamma}\big|(u',v)du'\lesssim \big[v\mathbf{s}(u)\big]^{i+1}+|\lambda| v^{n+1}\mathbf{x}(u)+\lambda^2v^{2n+1}e^{\mathbf{b}(u)}.
    \end{equation}
    We bound the last two terms on the RHS of \eqref{preliminary bound difference of r gamma} using \eqref{definition of P lambda}:
    \begin{align}
        |\lambda| v^{n+1}\mathbf{x}(u)&\lesssim\int_{u_0}^u|u'|^{\frac{3}{2}}\gamma(u',0)\cdot v^{\frac{3}{4}}|u'|^{\frac{1}{2}+\frac{1}{10n}}e^{-\mathbf{b}(u')}\gamma(u',0)du'\lesssim |u_0|^{\frac{1}{10n}}v^{\frac{3}{4}}\mathbf{s}(u),\label{preliminary bound x}\\
        \lambda^2v^{2n+1}e^{\mathbf{b}(u)}&\lesssim v^{\frac{1}{2}}e^{-\mathbf{b}(u)}|u|^{1+\frac{1}{5n}}\gamma(u,0)\lesssim|u_0|^{1+\frac{1}{5n}}v^{\frac{1}{2}}.\label{preliminary bound y}
    \end{align}
    In particular, \eqref{preliminary bound x}-\eqref{preliminary bound y} imply that the RHS of \eqref{preliminary bound difference of r gamma} is small, so using \eqref{wave equation r} we get:
    \begin{equation}\label{preliminary bound difference of dv r}
        \big|\widetilde{\partial_vr}\big|(u,v)\lesssim\partial_vr(u,v)\cdot\bigg|\exp\bigg(\int_{u_0}^u\frac{\Lambda}{2}\widetilde{r\gamma}(u',v)du'\bigg)-1\bigg|\lesssim e^{-\mathbf{b}(u)}\cdot\int_{u_0}^u\big|\widetilde{r\gamma}\big|(u',v)du'.
    \end{equation}
    Together with \eqref{preliminary bound difference of r gamma}, this implies \eqref{induction r} for $i+1.$ Similarly to the derivation of \eqref{preliminary bound difference of theta}, we have that:
    \begin{align}
        \int_0^v\big|\widetilde{\theta^2\partial_vr}\big|(u,v')dv'\lesssim&ve^{-\mathbf{b}(u)}|u|\gamma(u,0)\big[v\mathbf{s}(u)\big]^{i+1}+|\lambda|\cdot|u|\gamma(u,0)v^{n+2}e^{-\mathbf{b}(u)}\mathbf{x}(u)\label{mid preliminary bound difference of theta}\\
        &+\lambda^2|u|\gamma(u,0)v^{2n+2}+|\lambda| v^{n+1}\sqrt{|u|\gamma(u,0)}+\lambda^2v^{2n+1}e^{\mathbf{b}(u)},\notag
    \end{align}
    where the first three terms follow by using \eqref{induction r} for $i+1,$ and the last two terms are the same as in \eqref{preliminary bound difference of theta}. We bound the second term on the RHS of \eqref{mid preliminary bound difference of theta} using \eqref{definition of P lambda} as follows:
    \begin{align}\label{some bound I need later}
        |\lambda|\cdot|u|\gamma(u,0)v^{n+2}e^{-\mathbf{b}(u)}\mathbf{x}(u)&\lesssim|\lambda| v^{n+1}\sqrt{|u|\gamma(u,0)}\cdot v\sqrt{e^{-\mathbf{b}(u)}\gamma(u,0)}\int_{u_0}^u|u'|^2\gamma(u',0)\sqrt{e^{-\mathbf{b}(u')}\gamma(u',0)}du'\notag\\
        &\lesssim|\lambda| v^{n+1}\sqrt{|u|\gamma(u,0)}\cdot v\mathbf{s}(u)\lesssim|\lambda| v^{n+1}\sqrt{|u|\gamma(u,0)}.
    \end{align}
    Additionally, the third term on the RHS of \eqref{mid preliminary bound difference of theta} is bounded by the fifth term. Therefore, we can further bound the terms on the RHS of \eqref{mid preliminary bound difference of theta} to get for any $(u,v)\in[u_0,\Tilde{u}]\times[0,\Tilde{v}]$:
    \begin{equation}\label{better preliminary bound difference of theta}
        \int_0^v\big|\widetilde{\theta^2\partial_vr}\big|(u,v')dv'\lesssim ve^{-\mathbf{b}(u)}|u|\gamma(u,0)\big[v\mathbf{s}(u)\big]^{i+1}+|\lambda| v^{n+1}\sqrt{|u|\gamma(u,0)}+\lambda^2v^{2n+1}e^{\mathbf{b}(u)}.
    \end{equation}
    Arguing as we did for \eqref{preliminary bound difference of theta} above, the RHS of \eqref{better preliminary bound difference of theta} is small. We can repeat the derivation of \eqref{preliminary bound difference of gamma} and use \eqref{better preliminary bound difference of theta} instead of \eqref{preliminary bound difference of theta}. This implies that \eqref{induction g} holds for $i+1,$ completing our induction argument. 

    In particular, the estimate \eqref{induction g} for $i=n$ implies that \eqref{blueshift lemma bound 1} holds. Iterating the argument used in proving \eqref{preliminary bound difference of r}-\eqref{preliminary bound difference of dv r}, while starting with \eqref{induction g}-\eqref{induction r} for $i=n$, we also get \eqref{blueshift lemma bound 2} and the desired bound for $\widetilde{r}$ in \eqref{blueshift lemma bound 3}. Next, we have by \eqref{wave equation r}, \eqref{definition of P lambda}, and \eqref{blueshift lemma bound 1}-\eqref{blueshift lemma bound 3} that for any $(u,v)\in[u_0,\Tilde{u}]\times[0,\Tilde{v}]$:
    \begin{align*}
        \big|\widetilde{\partial_ur}\big|(u,v)&\lesssim\int_{0}^v\big|\widetilde{r\gamma\partial_vr}\big|(u,v')dv'\lesssim\int_{0}^v\bigg[\gamma(u,0)e^{-\mathbf{b}(u)}\big|\widetilde{r}\big|+|u|e^{-\mathbf{b}(u)}\big|\widetilde{\gamma}\big|+|u|\gamma(u,0)\big|\widetilde{\partial_vr}\big|\bigg](u,v')dv'\\
        &\lesssim v|u|\big[v\mathbf{s}(u)\big]^{n}+|\lambda|v^{n+2}e^{-\mathbf{b}(u)}|u|\gamma(u,0)\Big[\mathbf{x}(u)+\sqrt{|u|\gamma(u,0)}\Big]+\lambda^2 v^{2n+2}|u|\gamma(u,0)\\
        &\lesssim v|u|\big[v\mathbf{s}(u)\big]^{n}+|\lambda|v^{n+1}e^{-\mathbf{b}(u)/2}|u|^{1/2}\gamma(u,0)+\lambda^2 v^{2n+1}\gamma(u,0),
    \end{align*}
    which completes the proof of \eqref{blueshift lemma bound 4}. 

    In order to prove \eqref{blueshift lemma bound 5}, we first derive the following equation as a consequence of \eqref{wave equation phi}:
    \[\partial_v\big(\sqrt{r}\partial_u\phi\big)=-\frac{1}{2\sqrt{r}}\partial_ur\partial_v\phi=-\frac{1}{2r}\partial_ur\partial_vr\cdot\theta.\]
    We use this to get an equation for $\partial_v\big(\widetilde{\sqrt{r}\partial_u\phi}\big)$. Thus, we start with the following bound:
    \begin{equation}\label{preliminary bound difference du phi}
        \big|\widetilde{\sqrt{r}\partial_u\phi}\big|(u,v)\lesssim\int_0^v\bigg(|u|^{-1}\big|\widetilde{\partial_vr}\cdot\theta\big|+e^{-\mathbf{b}(u)}|u|^{-2}\big|\widetilde{r}\cdot\theta\big|+e^{-\mathbf{b}(u)}|u|^{-1}\Big(\big|\widetilde{\partial_ur}\cdot\theta\big|+\big|\widetilde{\theta}\big|\Big)\bigg)(u,v')dv'.
    \end{equation}
    We bound the terms on the RHS of \eqref{preliminary bound difference du phi} one by one. For the first term we have using \eqref{background energy bound 1}, \eqref{definition of P lambda}, \eqref{blueshift lemma bound 2}, and \eqref{some bound I need later}:
    \begin{align*}
        \int_0^v|u|^{-1}\big|\widetilde{\partial_vr}\cdot\theta\big|(u,v')dv'\lesssim &v|u|^{-1/2}e^{-\mathbf{b}(u)}\sqrt{\gamma(u,0)}\big[v\mathbf{s}(u)\big]^{n}+|u|^{-1}|\lambda|\sqrt{|u|\gamma(u,0)}v^{n+2}e^{-\mathbf{b}(u)}\mathbf{x}(u)\\
        &+\lambda^2v^{2n+2}|u|^{-1/2}\sqrt{\gamma(u,0)}\\
        \lesssim &v|u|^{-1/2}e^{-\mathbf{b}(u)}\sqrt{\gamma(u,0)}\big[v\mathbf{s}(u)\big]^{n}+|u|^{-1}|\lambda|v^{n+1}.
    \end{align*}
    Additionally, the second term on the RHS of \eqref{preliminary bound difference du phi} is bounded by the same quantity. This follows since $e^{-\mathbf{b}(u)}|u|^{-2}\lesssim1$ and the RHS of \eqref{blueshift lemma bound 3} is bounded by the RHS of \eqref{blueshift lemma bound 2}. For the third term on the RHS of \eqref{preliminary bound difference du phi} we use \eqref{background energy bound 1} and \eqref{blueshift lemma bound 4}:
    \begin{equation*}
        \int_0^ve^{-\mathbf{b}(u)}|u|^{-1}\big|\widetilde{\partial_ur}\cdot\theta\big|(u,v')dv'\lesssim v^2e^{-\mathbf{b}(u)}\sqrt{|u|\gamma(u,0)}\big[v\mathbf{s}(u)\big]^{n}+|\lambda|v^{n+2}+\lambda^2v^{2n+2}|u|^{-1/2}\sqrt{\gamma(u,0)}.
    \end{equation*}
    Finally, for the fourth term on the RHS of \eqref{preliminary bound difference du phi} we use the bootstrap assumption \eqref{A3 blueshift instability}:
    \[\int_0^ve^{-\mathbf{b}(u)}|u|^{-1}\big|\widetilde{\theta}\big|(u,v')dv'\lesssim\sqrt{v}e^{-\mathbf{b}(u)}|u|^{-1}\bigg(\int_0^v\big|\widetilde{\theta}\big|^2(u,v')dv'\bigg)^{1/2}\lesssim|u|^{-1}|\lambda|v^{n+1}.\]
    Combining the estimates for the terms on the RHS of \eqref{preliminary bound difference du phi}, we completed the proof of \eqref{blueshift lemma bound 5}.
    %For the fourth term on the RHS of \eqref{preliminary bound difference du phi} we use \eqref{background energy bound 1} and \eqref{blueshift lemma bound 3}:\begin{align*}\int_0^ve^{-\mathbf{b}(u)}|u|^{-2}\big|\widetilde{r}\cdot\theta\big|(u,v')dv'\lesssim &v^2e^{-\mathbf{b}(u)}\sqrt{|u|\gamma(u,0)}\big[v\mathbf{s}(u)\big]^{n}+|\lambda|\sqrt{|u|\gamma(u,0)}v^{n+3}e^{-\mathbf{b}(u)}\mathbf{x}(u)\\ &+\lambda^2v^{2n+3}\sqrt{|u|\gamma(u,0)}\\ \lesssim &v^2e^{-\mathbf{b}(u)}\sqrt{|u|\gamma(u,0)}\big[v\mathbf{s}(u)\big]^{n}+\lambda|v^{n+2}.\end{align*}
\end{proof}

\subsubsection{Improving the bootstrap assumption \eqref{A3 blueshift instability}}\label{A3 section}
The goal of this section is to improve the bootstrap assumption \eqref{A3 blueshift instability} and complete the proof of \Cref{bounds for perturbed spacetime proposition}. We first notice that the wave equations \eqref{wave equation phi} and \eqref{wave equation r} imply:
\[\partial_u\theta=-\frac{\Lambda}{2}r\gamma\theta-\frac{1}{2\sqrt{r}}\partial_u\phi.\]
As a result, we can derive the following equation for $\widetilde{\theta}:$
\begin{align}
    \partial_u\widetilde{\theta}=&-\frac{\Lambda}{2}r_{\lambda}\gamma_{\lambda}\widetilde{\theta}+Err,\label{du tilde theta}\\
    Err=&-\frac{\Lambda}{2}\widetilde{r\gamma}\cdot\theta-\frac{1}{2}\widetilde{r^{-1/2}\partial_u\phi}. \label{def of err instability}
\end{align}
We integrate equation \eqref{du tilde theta} to obtain for all $(u,v)\in[u_0,\Tilde{u}]\times[0,\Tilde{v}]$:
\begin{equation}\label{tilde theta integrated}
    \widetilde{\theta}(u,v)=e^{\mathbf{b}_{\lambda}(u,v)}\cdot\bigg(\widetilde{\theta}(u_0,v)+\int_{u_0}^uErr\cdot e^{-\mathbf{b}_{\lambda}(u',v)}(u',v)du'\bigg).
\end{equation}

The essential point that allows us to improve the bootstrap assumption \eqref{A3 blueshift instability} is that the main contribution in \eqref{tilde theta integrated} comes from the first term, determined by the initial data perturbation \eqref{characteristic data perturbation}. We deal with the second term in \eqref{tilde theta integrated} using \Cref{bound for Err instability lemma} below. We assume for now the bound \eqref{bound for Err instability} and complete the proof of \Cref{bounds for perturbed spacetime proposition}.

Combining \eqref{tilde theta integrated} and \eqref{bound for Err instability}, we get using \eqref{blueshift difference bound} that for all $(u,v)\in[u_0,\Tilde{u}]\times[0,\Tilde{v}]$:
\[\int_0^v\Big|\widetilde{\theta}(u,v')-e^{\mathbf{b}_{\lambda}(u,v')}\widetilde{\theta}(u_0,v')\Big|^2dv'\lesssim\int_0^ve^{2\mathbf{b}_{\lambda}(u,v')}\bigg(\int_{u_0}^u\big|Err(u',v')\big|e^{-\mathbf{b}(u')}du'\bigg)^2dv'\lesssim\lambda^2v^{2n+\frac{3}{2}}e^{2\mathbf{b}(u)}.\]
We denote the implicit constant in this estimate by $C>0,$ which depends on $k,n,|\Lambda|,|u_0|,$ and $\Omega^2(u_0,0).$

On the other hand, we have by the definition \eqref{characteristic data perturbation} that $\big|\widetilde{\theta}(u_0,v)\big|=|\lambda|v^n$. We also use \eqref{blueshift difference bound} to get for all $(u,v)\in[u_0,\Tilde{u}]\times[0,\Tilde{v}]$:
\begin{align*}
    \int_0^v\big|\widetilde{\theta}(u,v')\big|^2dv'\leq&2\int_0^v\Big|\widetilde{\theta}(u,v')-e^{\mathbf{b}_{\lambda}(u,v')}\widetilde{\theta}(u_0,v')\Big|^2dv'+2\int_0^v\Big|e^{\mathbf{b}_{\lambda}(u,v')}\widetilde{\theta}(u_0,v')\Big|^2dv'\\
    \leq & 2C\lambda^2v^{2n+\frac{3}{2}}e^{2\mathbf{b}(u)}+2\int_0^v\Big|e^{\mathbf{b}(u)}\big(1+2v'^{\frac{1}{4}}\big)\lambda v'^n\Big|^2dv'\\
    \leq & 2(C+8)\lambda^2v^{2n+\frac{3}{2}}e^{2\mathbf{b}(u)}+4\lambda^2v^{2n+1}e^{2\mathbf{b}(u)}\leq10\lambda^2v^{2n+1}e^{2\mathbf{b}(u)},
\end{align*}
provided that $v_0'>0$ is small enough in terms of $k,n,|\Lambda|,|u_0|,$ and $\Omega^2(u_0,0).$ Conditioned on \eqref{bound for Err instability}, this improves the bootstrap assumption \eqref{A3 blueshift instability}, concluding the bootstrap argument. We notice that in the above argument we also established \eqref{B4 blueshift instability} (again, by taking $v_0'>0$ sufficiently small), which was the remaining bound in \Cref{bounds for perturbed spacetime proposition}.

In the above, we completed the proof of \Cref{bounds for perturbed spacetime proposition}, conditioned on \eqref{bound for Err instability}. We now establish this bound to complete our argument.
\begin{lemma}\label{bound for Err instability lemma}
    Under the bootstrap assumptions \eqref{A1 blueshift instability}-\eqref{A3 blueshift instability}, the 
    error term \eqref{def of err instability} satisfies the following bound for all $(u,v)\in[u_0,\Tilde{u}]\times[0,\Tilde{v}]$:
    \begin{equation}\label{bound for Err instability}
        \int_0^v\bigg(\int_{u_0}^u\big|Err(u',v')\big|e^{-\mathbf{b}(u')}du'\bigg)^2dv'\lesssim\lambda^2v^{2n+\frac{3}{2}}.
    \end{equation}
\end{lemma}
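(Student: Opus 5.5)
We split $Err$ into its two pieces from \eqref{def of err instability}, namely $Err_1=-\frac{\Lambda}{2}\widetilde{r\gamma}\cdot\theta$ and $Err_2=-\frac12\widetilde{r^{-1/2}\partial_u\phi}$, and bound the contribution of each to the left-hand side of \eqref{bound for Err instability} separately. All the needed inputs are already available: the difference bounds \eqref{blueshift lemma bound 1}--\eqref{blueshift lemma bound 5} together with \eqref{bound for tilde rgamma}, the background energy bounds \eqref{background energy bound 1}--\eqref{background energy bound 2}, and the constraints defining $\mathcal{P}_\lambda$ in \eqref{definition of P lambda}. Throughout we use $|u|/2\le r_\lambda,r\le|u|$, $\gamma(u,0)\gtrsim|u|^{-2}$ from \eqref{gamma lower bound on C0-}, $e^{-\mathbf{b}(u)}\gamma(u,0)\lesssim1$, and the monotonicity of $\mathbf{b},\mathbf{s},\mathbf{x}$.

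For $Err_2$, we write $r^{-1/2}\partial_u\phi=r^{-1}(\sqrt r\partial_u\phi)$. The difference then splits as $r_\lambda^{-1}\widetilde{\sqrt r\partial_u\phi}+\widetilde{r^{-1}}\sqrt r\partial_u\phi$.

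The first piece is controlled pointwise by \eqref{blueshift lemma bound 5}. Its $\lambda$ part contributes $\int_{u_0}^u|\lambda||u'|^{-2}v^{n+1}e^{-\mathbf{b}(u')}du'$. Since \eqref{blueshift lower bound eq} gives $e^{-\mathbf{b}(u')}\le (u'/u_0)^2$, the $u'$-integral is bounded, and this piece is $\lesssim|\lambda|v^{n+1}$. For the non-$\lambda$ part of \eqref{blueshift lemma bound 5} we use $[v\mathbf{s}(u)]^n$. In $\mathcal{P}_\lambda$ we have $v^{1/(10n)}\mathbf{s}\le\epsilon'$ and $v\le\lambda^4$, so $[v\mathbf{s}]^n\le v^{n-1/10}\epsilon'^n$. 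Combined with the prefactor $v$, this gives $\lesssim v^{n+9/10}$. To convert this into a factor of $|\lambda|$ we use $v\le\lambda^4$, so $v^{1/4}\le|\lambda|$. This gives $v^{n+9/10}\le|\lambda|v^{n+13/20}$, which is at least $v^{n+3/4}$ up to the factor $|\lambda|$. The remaining $u'$-integral of $e^{-2\mathbf{b}}|u'|^{-3/2}\sqrt{\gamma}$ is bounded by $\int|u'|^{-3/2}\,du'\cdot\sup e^{-\mathbf{b}}\sqrt{\gamma e^{-\mathbf{b}}}$, using $e^{-\mathbf{b}}\lesssim|u'|^2$.

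The second piece, $\widetilde{r^{-1}}\sqrt r\partial_u\phi$, uses $|\widetilde r|$ from \eqref{blueshift lemma bound 3}. We bound it via Cauchy--Schwarz in $u'$ against \eqref{background energy bound 2}. This is the reason for the weight $|u'|\gamma^{-1}(u',0)$ there.

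For $Err_1$, we use \eqref{bound for tilde rgamma} with $i=n$, or directly \eqref{blueshift lemma bound 1} and \eqref{blueshift lemma bound 3}. This bounds $|\widetilde{r\gamma}|$ by three types of terms:
\[
ve^{-\mathbf{b}}|u|^2\gamma^2[v\mathbf{s}]^n,\qquad |\lambda|v^{n+1}(|u|\gamma)^{3/2},\qquad \lambda^2v^{2n+1}|u|\gamma e^{\mathbf{b}}.
\]
After multiplying by $e^{-\mathbf{b}(u')}$, we handle $\theta$ via Cauchy--Schwarz first in $u'$ and then in $v'$. From \eqref{background energy bound 1} and \eqref{dv r in terms of b(u)} we get $\int_0^v\theta^2(u',v')dv'\lesssim v|u'|\gamma(u',0)$. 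Hence
\[
\int_0^v\Big(\int_{u_0}^u|\widetilde{r\gamma}|\,|\theta|\,e^{-\mathbf{b}}\,du'\Big)^2dv'\lesssim \Big(\int_{u_0}^u \sup_{v'\le v}|\widetilde{r\gamma}|\,e^{-\mathbf{b}}\sqrt{v|u'|\gamma}\,du'\Big)^2
\]
by Minkowski's inequality. Each of the three terms is then checked using the constraints in $\mathcal{P}_\lambda$:
\begin{itemize}
\item The $\lambda^2$ term is reduced to a $|\lambda|$ term via $|\lambda|v^{n+1/4}\le e^{-\mathbf{b}}|u|^{1/2+1/(10n)}\sqrt\gamma$.
\item The $|\lambda|$ term requires integrability of $e^{-\mathbf{b}}(|u|\gamma)^{2}$, which follows from $e^{-\mathbf{b}}\gamma\lesssim1$ and $|u|\gamma e^{-\mathbf{b}/2}$-type bounds together with the definition of $\mathbf{s}$. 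This is the mechanism of \eqref{some bound I need later}.
\item The first term is handled exactly as the non-$\lambda$ part of $Err_2$, using $v\le\lambda^4$.
\end{itemize}

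The main obstacle is the non-$\lambda$ terms proportional to $[v\mathbf{s}(u)]^n$. These must be shown to be $O(\lambda^2v^{2n+3/2})$ after squaring, and the only sources of smallness are the vanishing order $n$ and $v\le\min(\lambda^4,|u|^{1/n})$. I expect to spend the most effort on balancing the powers of $v$: $v^{1/(10n)}\mathbf{s}\le\epsilon'$ costs $v^{1/10}$ overall, and this loss must still leave a margin of $v^{1/4}$, traded for $|\lambda|$. Alongside this, one must ensure that the $u'$-integrals with growing weights $\gamma(u',0)$ are absorbed by $e^{-\mathbf{b}}$ using \eqref{blueshift lower bound eq} and \eqref{bound for the old y}.
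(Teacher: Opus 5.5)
Your proposal is essentially the paper's proof. It uses the same split of $Err$ into the $\widetilde{\sqrt{r}\partial_u\phi}$, $\widetilde{r}\,\partial_u\phi$ and $\widetilde{r\gamma}\,\theta$ pieces (the paper's $I$, $II$, $III$), handled respectively by \eqref{blueshift lemma bound 5} with $v\le\lambda^4$, by Cauchy--Schwarz against \eqref{background energy bound 2} with \eqref{blueshift lemma bound 3}, and by \eqref{background energy bound 1} with \eqref{bound for tilde rgamma}; your Minkowski step for $III$ is a harmless substitute for the paper's Cauchy--Schwarz in $u'$ with weights $|u'|^{\pm1}$. The one place needing more care than you indicate is $II$: the $\mathbf{x}^2$ part of $|\widetilde r|^2$ leaves $\int_{u_0}^u|u'|^{-5}e^{-2\mathbf{b}(u')}du'\lesssim\big|\log|u|\big|$, which the paper absorbs using $v\le|u|^{\frac1n}$, and the $\lambda^4v^{4n+4}$ part needs a $\mathcal{P}_\lambda$ constraint to make the $u'$-integral converge (the paper trades $\lambda^2v^{2n+\frac12}$ for $e^{-2\mathbf{b}}|u|^{1+\frac{1}{5n}}\gamma(u,0)$ via \eqref{definition of P lambda}).
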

\begin{proof}
    Using the definition \eqref{def of err instability}, we bound $Err$ as follows:
    \begin{equation}\label{bound for Err using I II III}
        \big|Err\big|(u,v)\lesssim|u|^{-1}\big|\widetilde{\sqrt{r}\partial_u\phi}\big|(u,v)+|u|^{-\frac{3}{2}}\big|\widetilde{r}\cdot\partial_u\phi\big|(u,v)+\big|\widetilde{r\gamma}\cdot\theta\big|(u,v).
    \end{equation}
    Upon integration as on the LHS of \eqref{bound for Err instability}, the bound \eqref{bound for Err using I II III} gives rise to three error terms, which we denote by $I,II,$ and $III.$ In order to bound $I,$ we use \eqref{blueshift lower bound eq}, \eqref{definition of P lambda}, and \eqref{blueshift lemma bound 5}:
    \begin{align*}
        \int_{u_0}^u|u'|^{-1}e^{-\mathbf{b}(u')}\big|\widetilde{\sqrt{r}\partial_u\phi}\big|(u',v)du'&\lesssim\int_{u_0}^u\bigg(|u'|^{-\frac{3}{2}}e^{-2\mathbf{b}(u')}\sqrt{\gamma(u',0)}v\big[v\mathbf{s}(u')\big]^n+|\lambda|v^{n+1}|u'|^{-2}e^{-\mathbf{b}(u')}\bigg)du'\\
        &\lesssim v\big[v\mathbf{s}(u)\big]^n+|\lambda|v^{n+1}\lesssim(\epsilon')^nv^{n+\frac{9}{10}}+|\lambda|v^{n+1}\lesssim|\lambda|v^{n+\frac{1}{2}},
    \end{align*}
    where we also used the fact that $\epsilon'\lesssim1$ and $v\lesssim\lambda^4.$ We square this relation and integrate in $v$ to get that $I$ is bounded by the RHS of \eqref{bound for Err instability}.

    In order to bound $II,$ we first notice that \eqref{background energy bound 2} implies for any $(u,v)\in[u_0,\Tilde{u}]\times[0,\Tilde{v}]$:
    \begin{align*}
        \bigg(\int_{u_0}^u|u'|^{-\frac{3}{2}}e^{-\mathbf{b}(u')}\big|\widetilde{r}\partial_u\phi\big|(u',v)du'\bigg)^2&\lesssim\int_{u_0}^u|u'|\gamma^{-1}(u',0)\big|\partial_u\phi\big|^2(u',v)du'\cdot\int_{u_0}^u|u'|^{-4}\gamma(u',0)e^{-2\mathbf{b}(u')}\big|\widetilde{r}\big|^2du'\\
        &\lesssim\int_{u_0}^u|u'|^{-4}\gamma(u',0)e^{-2\mathbf{b}(u')}\big|\widetilde{r}\big|^2(u',v)du'.
    \end{align*}
    We use \eqref{blueshift lemma bound 3}, which creates three error terms that we bound using \eqref{definition of P lambda}, \eqref{some bound I need later} as follows:
    \[\int_{u_0}^u|u'|^{-4}\gamma(u',0)e^{-4\mathbf{b}(u')}v^2\big[v\mathbf{s}(u')\big]^{2n+2}du'\lesssim v^2\big[v\mathbf{s}(u)\big]^{2n+2}\lesssim\lambda^2v^{2n+2},\]
    \[\lambda^2v^{2n+4}\int_{u_0}^u|u'|^{-4}\gamma(u',0)e^{-4\mathbf{b}(u')}\mathbf{x}^2(u')du'\lesssim\lambda^2v^{2n+2}\int_{u_0}^u|u'|^{-5}e^{-2\mathbf{b}(u')}du'\lesssim\lambda^2v^{2n+2}\big|\log|u|\big|\lesssim\lambda^2v^{2n+\frac{3}{2}},\]
    \[\int_{u_0}^u|u'|^{-4}\gamma(u',0)e^{-2\mathbf{b}(u')}\lambda^4v^{4n+4}du'\lesssim\lambda^2v^{2n+3}\int_{u_0}^u|u'|^{-3}\gamma^2(u',0)e^{-4\mathbf{b}(u')}du'\lesssim\lambda^2v^{2n+3}.\]
    Integrating in $v$ and summing these terms, we get that $II$ is bounded by the RHS of \eqref{bound for Err instability}.

    The last step of the proof if to bound $III.$ We have by Cauchy-Schwarz that:
    \begin{equation}\label{bound for III}
        III\lesssim\int_{u_0}^u\int_0^v|u'|\theta^2(u',v')dv'du'\cdot\sup_{v'\in[0,v]}\int_{u_0}^u|u'|^{-1}e^{-2\mathbf{b}(u')}\big|\widetilde{r\gamma}\big|^2(u',v')du'.
    \end{equation}
    For the first term, we have according to \eqref{background energy bound 1} and \eqref{definition of P lambda}:
    \begin{equation}\label{basic bound for theta instability}
        \int_{u_0}^u|u'|\int_0^v\theta^2(u',v')dv'du'\lesssim v\int_{u_0}^u|u'|^2\gamma(u',0)du'\lesssim v\mathbf{s}(u)\lesssim v^{1/2}.
    \end{equation}
    In order to deal with the second term in \eqref{bound for III}, we use \eqref{bound for tilde rgamma} for $i=n$ (which follows by \eqref{blueshift lemma bound 1} and \eqref{blueshift lemma bound 3}) to get for $(u,v)\in[u_0,\Tilde{u}]\times[0,\Tilde{v}]$:
    \begin{align*}
        e^{-2\mathbf{b}(u)}\big|\widetilde{r\gamma}\big|^2(u,v)\lesssim & v^2|u|^4\big[v\mathbf{s}(u)\big]^{2n}+\lambda^2 v^{2n+2}|u|^3\gamma(u,0)+\lambda^4v^{4n+2}|u|^2\gamma^2(u,0)+\lambda^2v^{2n+4}e^{-2\mathbf{b}(u)}\mathbf{x}^2(u)\\
        \lesssim & v^2|u|^4\big[v\mathbf{s}(u)\big]^{2n}+\lambda^2 v^{2n+2}|u|^3\gamma(u,0)+\lambda^4v^{4n+2}|u|^2\gamma^2(u,0)+\lambda^2v^{2n+4}|u|\mathbf{s}^2(u).
    \end{align*}
    Additionally, we use \eqref{definition of P lambda} as well to get for all $(u,v)\in[u_0,\Tilde{u}]\times[0,\Tilde{v}]$:
    \begin{equation}\label{final bound for tilde r gamma}
        |u|^{-1}e^{-2\mathbf{b}(u)}\big|\widetilde{r\gamma}\big|^2(u,v)\lesssim \lambda^2v^{2n+1}+\lambda^2 v^{2n+\frac{3}{2}}|u|^2\gamma(u,0).
    \end{equation}
    We integrate \eqref{final bound for tilde r gamma}, using \eqref{blueshift strength definition} and \eqref{definition of P lambda}. We plug this bound into \eqref{bound for III}, together with \eqref{basic bound for theta instability}, to conclude that $III$ is bounded by the RHS of \eqref{bound for Err instability}. This completes the proof of \eqref{bound for Err instability}.
\end{proof}

\subsection{Proof of \Cref{instability to trapped surfaces proposition}}\label{proof of instability thm section}
In this section, we complete the proof of \Cref{instability to trapped surfaces proposition}. The main role is played by the curve $\mathcal{C}_{\lambda}$:
\begin{equation}\label{definition of C lambda}
    \mathcal{C}_{\lambda}=\Big\{\big(u,v_{\sharp}(u)\big):\ u\in[u_0,0),\ v_{\sharp}(u)>0,\ |\lambda|\big[v_{\sharp}(u)\big]^{n}=e^{-\mathbf{b}(u)}|u|^{\frac{1}{2}-\frac{1}{10n}}\sqrt{\gamma(u,0)}\Big\},
\end{equation} 
represented schematically in \Cref{fig:instability}. Below, we shall see that the blueshift instability drives the growth of certain quantities and allows us to  apply the formation of trapped surfaces criterion in \Cref{trapped surface formation theorem} for points on $\mathcal{C}_{\lambda}.$

We first show that the curve $\mathcal{C}_{\lambda}$ is indeed contained in the domain $\mathcal{P}_{\lambda},$ for which we proved quantitative estimates in \Cref{bounds for perturbed spacetime proposition}.

\begin{lemma}\label{curve C lemma}
    The curve $\mathcal{C}_{\lambda}$ converges to $b_{\Gamma}=(0,0)$ as $u\rightarrow0$. Moreover, for all $|u|$ small enough, $\mathcal{C}_{\lambda}$ is contained in the region $\mathcal{P}_{\lambda}$ defined in \Cref{bounds for perturbed spacetime proposition}.
\end{lemma}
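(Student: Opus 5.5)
The plan is to read off everything from the defining relation of $\mathcal{C}_{\lambda}$, namely $|\lambda|[v_{\sharp}(u)]^{n}=e^{-\mathbf{b}(u)}|u|^{\frac12-\frac1{10n}}\sqrt{\gamma(u,0)}$. The key input is the bound $e^{-\mathbf{b}(u)}\gamma(u,0)=2\Omega^2(u,0)\leq2\Omega^2(u_0,0)$ from \Cref{naked singularity section}, together with the blueshift lower bound \eqref{blueshift lower bound eq}, $e^{-\mathbf{b}(u)}\leq |u|^2/|u_0|^2$. Writing $e^{-\mathbf{b}}\sqrt{\gamma}=\sqrt{e^{-\mathbf{b}}}\sqrt{e^{-\mathbf{b}}\gamma}$, I obtain
\[|\lambda|[v_{\sharp}(u)]^n\lesssim |u|\cdot|u|^{\frac12-\frac1{10n}}=|u|^{\frac32-\frac1{10n}}.\]
Hence $v_{\sharp}(u)\lesssim (|\lambda|^{-1}|u|^{\frac32-\frac1{10n}})^{1/n}\to0$ as $u\to0^-$. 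This gives convergence of $\mathcal{C}_\lambda$ to $b_\Gamma=(0,0)$, since $v_\sharp(u)$ is uniquely defined and positive for each $u$.

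Next I check the four conditions in \eqref{definition of P lambda} for $|u|$ small.

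First, $v_\sharp\leq\min(v_0',\lambda^4)$ follows from $v_\sharp\to0$.

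Second, $v_\sharp\leq|u|^{1/n}$ holds because $v_\sharp^n\lesssim|\lambda|^{-1}|u|^{\frac32-\frac1{10n}}\leq|u|$ once $|u|^{\frac12-\frac1{10n}}\ll|\lambda|$.

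Third, the blueshift condition $|\lambda|v^{n+\frac14}\leq e^{-\mathbf{b}}|u|^{\frac12+\frac1{10n}}\sqrt{\gamma(u,0)}$ becomes, after substituting the defining relation, $v_\sharp^{1/4}|u|^{-\frac1{10n}}\leq|u|^{\frac1{10n}}$, i.e.\ $v_\sharp\leq|u|^{\frac{4}{5n}}$. This holds since $v_\sharp\lesssim |\lambda|^{-1/n}|u|^{\frac{3}{2n}-\frac1{10n^2}}$ and $\frac{3}{2n}-\frac{1}{10n^2}>\frac{4}{5n}$.

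Fourth, I need $v_\sharp^{\frac1{10n}}\mathbf{s}(u)\leq\epsilon'$, and I expect this to be the main obstacle, since $\mathbf{s}$ may blow up. I will bound $\mathbf{s}$ using the same $\Omega^2$ bound:
\[\mathbf{s}(u)\lesssim\int_{u_0}^u|u'|^2\gamma(u',0)\,du'\lesssim\int_{u_0}^u|u'|^2e^{\mathbf{b}(u')}\,du'.\]
On its own this is not controlled by powers of $|u|$, so instead I combine it with $v_\sharp$. From the definition, $v_\sharp^n\approx|\lambda|^{-1}e^{-\mathbf{b}(u)}\sqrt{\gamma(u,0)}\,|u|^{\frac12-\frac1{10n}}\lesssim|\lambda|^{-1}e^{-\mathbf{b}(u)/2}|u|^{\frac12-\frac1{10n}}$. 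Therefore $v_\sharp^{1/(10n)}\lesssim e^{-\mathbf{b}(u)/(20n^2)}|u|^{c}$ for some $c>0$.

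For $\mathbf{s}$, I use the monotonicity of $\mathbf{b}$ to get $\mathbf{s}(u)\lesssim e^{\mathbf{b}(u)}$, which is too weak. The better route is to note that $\partial_u e^{\mathbf{b}}=\frac{|\Lambda|}2 r\gamma e^{\mathbf{b}}$, which gives control of the form $\int|u'|\gamma\,du'\leq\int|u'|\gamma e^{\mathbf{b}}e^{-\mathbf{b}}$. Concretely, $|u'|^2\gamma(u',0)\lesssim |u'|\,\partial_u\mathbf{b}(u')$. Integrating by parts then bounds $\mathbf{s}(u)\lesssim |u_0|\mathbf{b}(u)$, i.e.\ $\mathbf{s}$ grows only like the logarithm $\mathbf{b}$. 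Since $e^{-\mathbf{b}/(20n^2)}\mathbf{b}$ is bounded and $|u|^c\to0$, the product $v_\sharp^{1/(10n)}\mathbf{s}(u)\to0$, which is below $\epsilon'$ for $|u|$ small.

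This completes the verification that $\mathcal{C}_\lambda\subset\mathcal{P}_\lambda$ near $b_\Gamma$.
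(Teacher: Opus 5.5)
Your proposal is correct and follows the paper's proof essentially step by step. It uses the same bound $|\lambda|v_{\sharp}^{n}\lesssim|u|^{\frac{3}{2}-\frac{1}{10n}}$, obtained from $e^{-\mathbf{b}(u)}\gamma(u,0)\le 2\Omega^{2}(u_0,0)$ and $\mathbf{b}(u)\ge 2\log\frac{u_0}{u}$. It makes the same substitution of the defining relation of $\mathcal{C}_{\lambda}$ into the condition $|\lambda|v^{n+\frac{1}{4}}\le e^{-\mathbf{b}(u)}|u|^{\frac{1}{2}+\frac{1}{10n}}\sqrt{\gamma(u,0)}$. For the last condition it uses the same key estimate $\mathbf{s}(u)\lesssim|u_0|\mathbf{b}(u)$, paired with $v_{\sharp}^{\frac{1}{10n}}\lesssim e^{-\mathbf{b}(u)/(20n^{2})}|u|^{c}$.

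The only difference is that no integration by parts is needed for $\mathbf{s}\lesssim|u_0|\mathbf{b}$. On $C_0^-$ one has $r=|u'|/2$, so pointwise $\frac{|\Lambda|}{4}r^{2}\gamma=\frac{r}{2}\cdot\frac{|\Lambda|}{2}r\gamma\le\frac{|u_0|}{4}\cdot\frac{|\Lambda|}{2}r\gamma$.
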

\begin{proof}
    The curve $\mathcal{C}_{\lambda}$ converges to $b_{\Gamma}=(0,0)$ as $u\rightarrow0$ since:
    \[|\lambda|\big[v_{\sharp}(u)\big]^{n}=e^{-\mathbf{b}(u)}|u|^{\frac{1}{2}-\frac{1}{10n}}\sqrt{\gamma(u,0)}\leq2\Omega(u_0,0)\cdot e^{-\mathbf{b}(u)/2}|u|^{\frac{1}{2}-\frac{1}{10n}}\leq2\Omega(u_0,0)\cdot|u|^{\frac{3}{2}-\frac{1}{10n}}|u_0|^{-1}.\]
    The same bound also implies that for $|u|$ small enough we get $v_{\sharp}(u)\leq\min\big(v_0',\lambda^4,|u|^{\frac{1}{n}}\big).$ In order to check the next condition in the definition of $\mathcal{P}_{\lambda}$ in \eqref{definition of P lambda}, we notice that for all $|u|$ small enough:
    \begin{align*}
        |\lambda|\big[v_{\sharp}(u)\big]^{n+\frac{1}{4}}&=\big[v_{\sharp}(u)\big]^{\frac{1}{4}}\cdot e^{-\mathbf{b}(u)}|u|^{\frac{1}{2}-\frac{1}{10n}}\sqrt{\gamma(u,0)}\\
        &\leq e^{-\mathbf{b}(u)}|u|^{\frac{1}{2}+\frac{1}{4n}-\frac{1}{10n}}\sqrt{\gamma(u,0)}\leq e^{-\mathbf{b}(u)}|u|^{\frac{1}{2}+\frac{1}{10n}}\sqrt{\gamma(u,0)},
    \end{align*}
    so the second condition in \eqref{definition of P lambda} holds as well. Finally, for the third condition in \eqref{definition of P lambda}, we compute that:
    \begin{align*}
        \big[v_{\sharp}(u)\big]^{\frac{1}{10n}}\cdot\mathbf{s}(u)&=|\lambda|^{-\frac{1}{10n^2}}e^{-\frac{\mathbf{b}(u)}{10n^2}}|u|^{\frac{1}{20n^2}-\frac{1}{100n^3}}\big[\gamma(u,0)\big]^{\frac{1}{20n^2}}\cdot\mathbf{s}(u)\\
        &\lesssim|\lambda|^{-\frac{1}{10n^2}}|u_0|\cdot\big[\Omega^2(u_0,0)\big]^{\frac{1}{20n^2}}\cdot e^{-\frac{\mathbf{b}(u)}{20n^2}}\mathbf{b}(u)\cdot|u|^{\frac{1}{20n^2}-\frac{1}{100n^3}}\rightarrow0\text{ as }u\rightarrow0,
    \end{align*}
    where we used the fact that $\mathbf{s}(u)\leq|u_0|\mathbf{b}(u)$ and $\mathbf{b}(u)\rightarrow\infty$ as $u\rightarrow0.$ This implies that $\big[v_{\sharp}(u)\big]^{\frac{1}{10n}}\cdot\mathbf{s}(u)\lesssim\epsilon'$ for all $|u|$ sufficiently small, so the curve $\mathcal{C}_{\lambda}$ is contained in the region $\mathcal{P}_{\lambda}$ for $|u|$ small enough.
\end{proof}

We now have all the necessary ingredients to complete the proof of \Cref{instability to trapped surfaces proposition}, establishing the instability of locally naked singularity spacetimes to trapped surface formation arbitrarily close to $b_{\Gamma}$:
\begin{proof}[Proof of \Cref{instability to trapped surfaces proposition}]
    According to \Cref{curve C lemma}, the curve $\mathcal{C}_{\lambda}$ is contained in the region $\mathcal{P}_{\lambda}$ for all $|u|$ small enough. Thus, we fix $u_{\mathcal{C}}\in(u_0,0)$ such that $\big(u,v_{\sharp}(u)\big)\in\mathcal{P}_{\lambda}$ for all $u\in[u_{\mathcal{C}},0).$

    Our goal is to use \Cref{trapped surface formation theorem}. Firstly, we have by \eqref{B1 blueshift instability} that for any $u\in[u_{\mathcal{C}},0):$
    \begin{equation}\label{bound for delta lambda}
        \delta_{\lambda}\big(u,v_{\sharp}(u)\big)=\frac{r_{\lambda}\big(u,v_{\sharp}(u)\big)}{r_{\lambda}(u,0)}-1\lesssim v_{\sharp}(u)|u|^{-1}e^{-\mathbf{b}(u)}.
    \end{equation}
    Next, we want to obtain a lower bound for $\eta_{\lambda}\big(u,v_{\sharp}(u)\big).$ We use \eqref{characteristic data perturbation}, \eqref{blueshift difference bound}, \eqref{B4 blueshift instability}, and \eqref{background energy bound 1} to get that for $u\in[u_{\mathcal{C}},0):$
    \begin{align*}
        \lambda^2\big[v_{\sharp}(u)\big]^{2n+1}e^{2\mathbf{b}(u)}&\lesssim\int_0^{v_{\sharp}(u)}\Big|e^{\mathbf{b}_{\lambda}(u,v')}\widetilde{\theta}(u_0,v')\Big|^2dv'\\&\lesssim\int_0^{v_{\sharp}(u)}\big|\widetilde{\theta}(u,v')\big|^2dv'+\int_0^{v_{\sharp}(u)}\Big|\widetilde{\theta}(u,v')-e^{\mathbf{b}_{\lambda}(u,v')}\widetilde{\theta}(u_0,v')\Big|^2dv'\\
        &\lesssim\int_0^{v_{\sharp}(u)}\big|\theta_{\lambda}(u,v')\big|^2dv'+\int_0^{v_{\sharp}(u)}\big|\theta(u,v')\big|^2dv'+\lambda^2v^{2n+\frac{4}{3}}e^{2\mathbf{b}(u)}\\
        &\lesssim\int_0^{v_{\sharp}(u)}\big|\theta_{\lambda}(u,v')\big|^2dv'+v_{\sharp}(u)\cdot|u|\gamma(u,0)+\lambda^2\big[v_{\sharp}(u)\big]^{2n+\frac{4}{3}}e^{2\mathbf{b}(u)}.
    \end{align*}
    By the construction of $\mathcal{C}_{\lambda},$ we have that $v_{\sharp}(u)\cdot|u|\gamma(u,0)=\lambda^2\big[v_{\sharp}(u)\big]^{2n+1}e^{2\mathbf{b}(u)}\cdot|u|^{\frac{1}{5n}}$. As a result, there exists $u_{\mathcal{C}}'\in[u_{\mathcal{C}},0)$ such that for all $u\in[u_{\mathcal{C}}',0):$
    \begin{equation}\label{lower bound for theta lambda}
        \int_0^{v_{\sharp}(u)}\big|\theta_{\lambda}(u,v')\big|^2dv'\gtrsim\lambda^2\big[v_{\sharp}(u)\big]^{2n+1}e^{2\mathbf{b}(u)}.
    \end{equation}
    This lower bound confirms our expectation that the leading order term in $\theta_{\lambda}$ comes from the amplification of the initial data perturbation $\lambda v^n$ by the blueshift $e^{\mathbf{b}(u)}.$

    We use \eqref{dv m} and \eqref{lower bound for theta lambda} to derive a lower bound for $\eta_{\lambda}\big(u,v_{\sharp}(u)\big)$ for any $u\in[u_{\mathcal{C}}',0):$
    \begin{align*}
        \eta_{\lambda}\big(u,v_{\sharp}(u)\big)&=\frac{m_{\lambda}\big(u,v_{\sharp}(u)\big)-m_{\lambda}(u,0)}{|\Lambda|r_{\lambda}^2\big(u,v_{\sharp}(u)\big)}\gtrsim|u|^{-2}\int_0^{v_{\sharp}(u)}\gamma_{\lambda}^{-1}|\partial_ur_{\lambda}|\theta_{\lambda}^2\partial_vr_{\lambda}(u,v')dv'\\
        &\gtrsim|u|^{-2}e^{-\mathbf{b}(u)}\gamma^{-1}(u,0)\int_0^{v_{\sharp}(u)}\big|\theta_{\lambda}(u,v')\big|^2dv'\gtrsim\lambda^2\big[v_{\sharp}(u)\big]^{2n+1}|u|^{-2}e^{\mathbf{b}(u)}\gamma^{-1}(u,0).
    \end{align*}
    According to the definition of $\mathcal{C}_{\lambda}$ we further have that for all $u\in[u_{\mathcal{C}}',0):$
    \begin{equation}\label{lower bound for eta lambda}
        \eta_{\lambda}\big(u,v_{\sharp}(u)\big)\gtrsim v_{\sharp}(u)|u|^{-1-\frac{1}{5n}}e^{-\mathbf{b}(u)}.
    \end{equation}

    Combining \eqref{bound for delta lambda} and \eqref{lower bound for eta lambda}, we get that there exists $u_{\mathcal{C}}''\in[u_{\mathcal{C}}',0)$ such that for all $u\in[u_{\mathcal{C}}'',0)$ we have $\eta_{\lambda}\big(u,v_{\sharp}(u)\big)>\delta_{\lambda}\big(u,v_{\sharp}(u)\big)$ and $\delta_{\lambda}\big(u,v_{\sharp}(u)\big)>0$ is sufficiently small as required in \Cref{trapped surface formation theorem}. Thus, the trapped surface formation criterion in \Cref{trapped surface formation theorem} implies that for all $u\in[u_{\mathcal{C}}'',0)$ a trapped surface forms on the ingoing cone $\big\{v=v_{\sharp}(u)\big\}$ to the future of $\big(u,v_{\sharp}(u)\big).$
\end{proof}

\section{Proof of \Cref{main theorem}}\label{WCC section}

We combine the results in \Cref{set up section}, \Cref{instability section}, and Appendix~\ref{appendix}, to complete the proof of \Cref{main theorem}, establishing the weak cosmic censorship conjecture for circularly
symmetric solutions of \eqref{Einstein equations}:
\begin{proof}[Proof of \Cref{main theorem}]

    The most important part of the proof is showing \eqref{loc naked for main theorem}:
    \begin{equation}\label{M null unstable to M black}
        \mathfrak{M}_{\mathrm{loc.naked}}\subset\mathrm{cl}\big(\mathfrak{M}_{\mathrm{black}}^{\mathrm{spacelike}}\big)\backslash\mathfrak{M}_{\mathrm{black}}^{\mathrm{spacelike}},
    \end{equation}
    establishing the fact that all locally naked singularities are unstable to the formation of spacelike-only singularities.

    In order to prove \eqref{M null unstable to M black}, we consider any $\mathfrak{D}\in\mathfrak{M}_{\mathrm{loc.naked}}.$ For any $\epsilon>0,$ we prove that there exists $\mathfrak{D}'\in\mathfrak{M}_{\mathrm{black}}^{\mathrm{spacelike}}$ such that:
    \begin{equation}\label{D-D' less than epsilon}
        d_{X([0,\infty))}\big(\mathfrak{D},\mathfrak{D}'\big)<\epsilon.
    \end{equation}
    
    We denote by $\big(\mathcal{M},g,\phi\big)$ the maximal globally hyperbolic development of the asymptotically AdS characteristic data set $\mathfrak{D}$ on the initial outgoing null cone $C_0^+.$ Since $\mathfrak{D}\in\mathfrak{M}_{\mathrm{loc.naked}},$ we have by definition that the solution has a first singularity $b_{\Gamma}$ at the center $\Gamma.$ We denote by $C_0^-$ the ingoing null cone that ends at $b_{\Gamma}$. By global hyperbolicity, we have that either $C_0^-\cap C_0^+\neq\emptyset$ or $C_0^-$ has a limit endpoint on $\mathcal{I}.$ If the latter holds, we fix an outgoing null cone $C_1^+$ such that $C_0^-\cap C_1^+\neq\emptyset$ and $\overline{C_1^+}\cap\mathcal{I}\neq\emptyset$. On the other hand, if we already have $C_0^-\cap C_0^+\neq\emptyset$, we simply apply the argument below for $C_1^+=C_0^+.$

    We denote by $\mathfrak{D}_*$ the asymptotically AdS characteristic data set induced by $\big(\mathcal{M},g,\phi\big)$ on $C_1^+$. We note that we can also apply the results in \Cref{well posedness section} and \Cref{global structure section} towards the past. By Cauchy stability, there exists $\delta>0$ such that for any asymptotically AdS characteristic data set $\mathfrak{D}_*'$ on $C_1^+$ which satisfies:
    \begin{equation}\label{D1-D1' less than delta}
        d_{X([0,\infty))}\big(\mathfrak{D}_*,\mathfrak{D}_*'\big)<\delta,
    \end{equation}
    the maximal globally hyperbolic (past) development of $\mathfrak{D}_*'$ induces an asymptotically AdS characteristic data set $\mathfrak{D}'$ on $C_0^+$ which satisfies $d_{X([0,\infty))}\big(\mathfrak{D},\mathfrak{D}'\big)<\epsilon.$

    In view of the above argument, to prove \eqref{D-D' less than epsilon} it suffices to show that for any $\delta>0$ there exists $\mathfrak{D}_*'\in\mathfrak{M}_{\mathrm{black}}^{\mathrm{spacelike}}$ satisfying \eqref{D1-D1' less than delta}. The strategy is to use the results in \Cref{instability section} to construct locally a perturbation of $\mathfrak{D}_*$ which in evolution forms trapped surfaces arbitrarily close to $b_{\Gamma}$, and then extend this to a global perturbation of $\mathfrak{D}_*$ using the results in Appendix~\ref{appendix}.
    
    In order to use the results in \Cref{instability section}, we choose double null coordinates $(u,v)$ on $\mathcal{Q}=\mathcal{M}/U(1),$ such that: \[b_{\Gamma}=(0,0),\ C_0^-=\big\{v=0,\ 2r=-u\big\},\ C_1^+=\big\{u=u_1,\ 2r=v-u_1\big\}.\]
    Let $n>k+1$ be an integer. We fix $R_0>0$ large enough depending on $\mathfrak{D}_*$ and $k$, as required in \Cref{local perturbation extension proposition}, and we set $v_1=2R_0+u_1.$ We carry out the construction in \Cref{instability section}. For any $\lambda\in\mathbb{R}\backslash\{0\},$ we define characteristic data as in \eqref{characteristic data perturbation}, which determines uniquely on $\big\{u=u_1,\ v\in[u_1,v_1]\big\}$ a $C^k$ characteristic partial data set $\mathfrak{D}_{\lambda}$, as in \Cref{partial data set definition}, which coincides with $\mathfrak{D}_*$ for $\big\{u=u_1,\ v\in[u_1,0]\big\}.$ 
    
    We want to apply \Cref{local perturbation extension proposition} to extend $\mathfrak{D}_{\lambda}$ to $C_1^+$ as a small perturbation of $\mathfrak{D}_*.$ We point out that this is a nontrivial difficulty that is not present in the asymptotically flat setting. We denote by $C_0>0$ the constant in \Cref{local perturbation extension proposition}, which depends only on $\mathfrak{D}_*, R_0$ and $k.$ We choose $\varepsilon>0$ to be small enough in terms of $\mathfrak{D}_*,R_0,$ and $k,$ as required in \Cref{local perturbation extension proposition}, and to satisfy $C_0\varepsilon<\delta.$ 
    
    By uniform continuity in $\lambda$ on $[0,R_0],$ there exists $\lambda\neq0$ sufficiently small, such that:
    \begin{equation}\label{D-Dlamdba R0}
        d_{X([0,R_0])}\big(\mathfrak{D}_*,\mathfrak{D}_{\lambda}\big)<\varepsilon.
    \end{equation}
    We apply \Cref{local perturbation extension proposition} with $r_0=R_0.$ We get that $\mathfrak{D}_{\lambda}$ can be extended as a $C^k$ asymptotically AdS characteristic data set on $C_1^+$ such that:
    \begin{equation}\label{D-Dlamdba}
        d_{X([0,\infty))}\big(\mathfrak{D}_*,\mathfrak{D}_{\lambda}\big)<C_0\varepsilon<\delta.
    \end{equation}
    We relabel the extended characteristic data set $\mathfrak{D}_{\lambda}$ by $\mathfrak{D}'_*,$ so $\mathfrak{D}'_*$ satisfies \eqref{D1-D1' less than delta}. According to our construction, \Cref{instability to trapped surfaces proposition} implies that the maximal globally hyperbolic development of $\mathfrak{D}'_*$ contains trapped surfaces arbitrarily close to $b_{\Gamma}$, so $\mathfrak{D}'_*\in\mathfrak{M}_{\mathrm{black}}^{\mathrm{spacelike}}.$ This completes the proof of \eqref{M null unstable to M black}.

    We define $\mathfrak{M}_{\mathrm{generic}}=\mathfrak{M}_{\mathrm{black}}\sqcup\mathrm{int}\big(\mathfrak{M}_{\mathrm{non}}\big).$ According to \eqref{M null in M locally naked}, \eqref{decomposition of M}, and \eqref{M null unstable to M black}, we have that \eqref{inclusions for main theorem} holds. To complete the proof of \Cref{main theorem}, we show that $\mathfrak{M}_{\mathrm{generic}}$ is open and dense in $\mathfrak{M}$. According to \Cref{global structure section}, $\mathfrak{D}\in\mathfrak{M}_{\mathrm{black}}$ if and only if $\mathcal{T}\neq\emptyset.$ By Cauchy stability, having trapped surfaces is an open condition, so $\mathfrak{M}_{\mathrm{black}}$ is open. Since $\mathrm{int}\big(\mathfrak{M}_{\mathrm{non}}\big)$ is open by definition, we get that $\mathfrak{M}_{\mathrm{generic}}$ is open in $\mathfrak{M}.$ Finally, we prove that $\mathfrak{M}_{\mathrm{generic}}$ is dense in $\mathfrak{M}$ by showing that $\mathrm{cl}\big(\mathfrak{M}_{\mathrm{generic}}\big)=\mathfrak{M}.$ By the definition of $\mathfrak{M}_{\mathrm{generic}}$, we have that $\mathrm{cl}\big(\mathfrak{M}_{\mathrm{generic}}\big)=\mathrm{cl}\big(\mathfrak{M}_{\mathrm{black}}\big)\cup\mathrm{cl}\big(\mathrm{int}\big(\mathfrak{M}_{\mathrm{non}}\big)\big).$ Then, \eqref{inclusions for main theorem} implies that:
    \[\mathfrak{M}_{\mathrm{naked}}\cup\mathfrak{M}_{\mathrm{black}}\cup\mathrm{int}\big(\mathfrak{M}_{\mathrm{non}}\big)\subset\mathrm{cl}\big(\mathfrak{M}_{\mathrm{generic}}\big).\]
    This equation and \eqref{decomposition of M} also imply that: \[\mathfrak{M}_{\mathrm{non}}\backslash\mathrm{int}\big(\mathfrak{M}_{\mathrm{non}}\big)\subset\mathrm{cl}\big(\mathfrak{M}\backslash\mathfrak{M}_{\mathrm{non}}\big)=\mathrm{cl}\big(\mathfrak{M}_{\mathrm{naked}}\cup\mathfrak{M}_{\mathrm{black}}\big)\subset\mathrm{cl}\big(\mathfrak{M}_{\mathrm{generic}}\big),\]
    so we conclude that $\mathfrak{M}_{\mathrm{generic}}$ is dense in $\mathfrak{M}$, completing the proof of \Cref{main theorem}.
\end{proof}

\begin{remark}
    Since \eqref{loc naked for main theorem} shows that locally naked singularities are unstable to the formation of spacelike-only singularities, one might wonder whether $\mathfrak{M}_{\mathrm{loc.naked}}$ is non-generic as well. However, it is not clear if $\mathfrak{M}_{\mathrm{black}}^{\mathrm{spacelike}}$ is open or if it even has a nonempty interior.
\end{remark}

\appendix
\section{Appendix}\label{appendix}
The goal of this section is to prove that local perturbations at the level of initial data can be extended to global perturbations. While this is straightforward for asymptotically flat characteristic data, the proof in the case of asymptotically AdS characteristic data is more involved due to the timelike asymptotic boundary $\mathcal{I}.$

We first define the notion of a characteristic partial data set:
\begin{definition}\label{partial data set definition}
    Let $k\geq2$. For any $r_0>0,$ we say that the tuple of functions on $[0,r_0]$:
    \[\mathfrak{D}=\big(r,\ldots,\partial_U^{k+1}r,\Omega,\ldots,\partial_U^k\Omega,m,\ldots,\partial_U^km,\phi,\ldots,\partial_U^k\phi,\mathcal{T}\phi,\ldots,\mathcal{T}^k\phi\big)\]
    represents a $C^k$ characteristic \textit{partial} data set if it satisfies conditions $1.$ to $4.$ in \Cref{asympt ads data def} on the interval $\{r\in[0,r_0]\}$.
\end{definition}

In this section, we establish the following extension result for characteristic partial data sets, which plays a key role in the proof of \Cref{main theorem} in \Cref{WCC section}.

\begin{proposition}\label{local perturbation extension proposition}
    Let $k\geq2$ and let  $\mathfrak{D}$ be a $C^k$ asymptotically AdS characteristic data set. There exists $R_0>0$ large enough (depending on $\mathfrak{D},\Lambda,$ and $k$), such that for any $r_0\geq R_0$ and any $\varepsilon>0$ sufficiently small in terms of $\mathfrak{D},r_0,\Lambda,$ and $k$, the following holds: any $C^k$ characteristic partial data set on $[0,r_0]$: \[\underline{\mathfrak{D}}=\big(\underline{r},\ldots,\underline{\partial_U^{k+1}r},\underline{\Omega},\ldots,\underline{\partial_U^k\Omega},\underline{m},\ldots,\underline{\partial_U^km},\underline{\phi},\ldots,\underline{\partial_U^k\phi},\underline{\mathcal{T}\phi},\ldots,\underline{\mathcal{T}^k\phi}\big),\]  
    which satisfies:
    \begin{equation}\label{smallness in a finite region}
        d_{X([0,r_0])}\big(\mathfrak{D},\underline{\mathfrak{D}}\big)<\varepsilon,
    \end{equation}
    can be extended to an asymptotically AdS characteristic data set on $[0,\infty)$ such that:
    \begin{equation}\label{smallness of extension}
        d_{X([0,\infty))}\big(\mathfrak{D},\underline{\mathfrak{D}}\big)<C_0\varepsilon,
    \end{equation}
    where the constant $C_0>0$ depends only on $\mathfrak{D},r_0,\Lambda,$ and $k$. 
\end{proposition}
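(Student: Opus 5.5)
The plan is to treat the extension as a \emph{gluing problem for seed data} on the outgoing cone, carried out in the asymptotic region $r\in[r_0,\infty)$ where the background $\mathfrak{D}$ is close to BTZ. I would parametrize data on $C_0^+$ by the coordinate $r$ (so $\partial_Vr$ is fixed by the gauge) and the seed $\phi$. The remaining quantities are then recovered by ODEs along the cone:
\begin{itemize}
\item $\log\gamma$ from \eqref{dv log gamma};
\item $m$ from \eqref{dv m};
\item $\partial_Ur$ from \eqref{Hawking mass double null}.
\end{itemize}
The transversal derivatives $\partial_U^j$ and $\mathcal{T}^j\phi$ are obtained from the compatibility conditions, which are ODEs in $V$ obtained by differentiating \eqref{wave equation phi}--\eqref{du m}. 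The extension of $\underline{\mathfrak{D}}$ is then defined as follows. On $[r_0,\infty)$ set $\underline\phi=\phi+\chi(r)\,(\underline\phi-\phi)^{\mathrm{ext}}$, where $(\underline\phi-\phi)^{\mathrm{ext}}$ is a $C^{k+1}$ Whitney/Seeley-type extension of the difference from a neighborhood of $r_0$, and $\chi$ is a cutoff equal to $1$ near $r_0$ and vanishing for $r\ge r_0+1$. All other components are then propagated from $r_0$ by the constraint ODEs. This automatically gives $C^k$ regularity across $r_0$, the compatibility conditions, and the boundary conditions at $\Gamma$, the latter being inherited from the partial data.

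The first key step is to show that, for $r\ge r_0+1$ where $\underline\phi=\phi$, the propagated geometric quantities differ from those of $\mathfrak{D}$ by $O(\varepsilon)$. The equations for $m$ and $\log\gamma$ have sources quadratic in $\partial\phi$ and decaying like $r^{-2}$ or faster by \eqref{asympt behavior phi}. The difference of initial values at $r_0+1$, which is $O(\varepsilon)$ by \eqref{smallness in a finite region} and the ODE on $[r_0,r_0+1]$, is therefore transported with a bounded integrating factor. This gives $|\underline m-m|\lesssim\varepsilon$ and a limit $\underline M$ with $|\underline M-M|\lesssim\varepsilon$, so condition $5$ holds. One must also check that $\partial_Vr$ remains asymptotically Eddington--Finkelstein, i.e.\ \eqref{asympt EF coordinate} with $\underline M$. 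Here the gauge freedom $V\mapsto\tilde V(V)$ on $[r_0+1,\infty)$ is used to renormalize $\partial_Vr/(1-\underline M+|\Lambda|r^2)\to\tfrac12$. The rates \eqref{asympt behavior dv r}--\eqref{asympt behavior Lapse} then follow by the same ODE bounds.

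The main obstacle I expect is the norm \eqref{smallness of extension}. It involves $\mathcal{T}^i\phi$ for $i\le k$, and these are \emph{not} determined by $\phi$ alone but by transport ODEs in $V$ (the nonlocal, nonlinear constraints of \Cref{nonlocal nonlinear remark}). Even where $\underline\phi=\phi$, one has $\underline{\mathcal{T}^i\phi}\neq\mathcal{T}^i\phi$, because the coefficients ($\Omega,\partial_Ur,m$) and the data at $r_0+1$ differ. For the weighted $H^{k+1-i}$ norms with weights $(1+r^2)^j$ to be finite and $O(\varepsilon)$, the homogeneous solutions of these transport equations must decay at the rate \eqref{asympt behavior phi}, $\partial^j\phi=O(r^{j-3/2})$. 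Generically the transported $\mathcal{T}^i\phi$ would instead pick up a non-decaying (Dirichlet-violating) mode.

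I would handle this as in the proof of \Cref{proposition linear wave on asympt ads}. First write each transport equation in renormalized variables such as $\zeta=\sqrt r\,\partial_U\phi/\partial_Ur$ multiplied by $r^2$, so that the integrating factor is bounded. Then integrate \emph{from $\mathcal{I}$ inward}, imposing the reflective condition at infinity, rather than from $r_0$ outward. The mismatch at $r_0+1$ between this inward solution and the value forced from the interior is $O(\varepsilon)$. I would absorb it by adding to $\underline\phi$ on $[r_0+1,r_0+2]$ a finite-dimensional correction $\sum_{i\le k}c_i\psi_i(r)$ with fixed bump functions $\psi_i$, solving a $k\times k$ linear system for the $c_i$. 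Invertibility holds for $r_0\ge R_0$ large, since the system is a small perturbation of the exact BTZ one. The implicit function theorem then gives $|c_i|\lesssim\varepsilon$. Collecting the bounds, with the constant depending on $\mathfrak{D},r_0,\Lambda,k$, yields \eqref{smallness of extension}.
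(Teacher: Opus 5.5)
Your architecture is a legitimate alternative to the paper's. You take the seed $\phi$ as free data, cut off to the background beyond $r_0+1$, correctly identify the only obstructions as the $k$ non-decaying modes of $\mathcal{T}\phi,\dots,\mathcal{T}^k\phi$, and remove them with a $k$-parameter correction. The paper instead prescribes the top-order quantity $F_1=\partial_r\mathcal{T}(\sqrt r\gamma\mathcal{T}\phi)$ on $(r_0,\infty)$ and integrates the null constraints downward, so that $\phi$, $\gamma\mathcal{T}\phi$ and $\mathcal{T}(\sqrt r\gamma\mathcal{T}\phi)$ merely acquire finite limits at infinity; it then applies the Implicit Function Theorem to the resulting $C^1$ map $\mathcal{C}$ into $\mathbb{R}^3$.

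The gap is that the step carrying all the weight in your plan, invertibility of the $k\times k$ matrix, is asserted rather than proved, and ``a small perturbation of the exact BTZ one'' does not prove it. Two things are missing:
\begin{itemize}
\item \emph{Non-degeneracy on BTZ.} You need to compute the linearized obstruction functionals $\psi\mapsto\ell_j(\psi)$ on BTZ, i.e.\ the coefficient of the non-decaying mode of $\mathcal{T}^j\dot\phi$ generated by a compactly supported seed perturbation $\psi$, and show they are linearly independent. Nothing in your sketch does this.
\item \emph{Quantitative stability.} This is delicate with unit-width bumps near $r_0$. For $k=2$, linearizing about BTZ, the two kernels grow to leading order like $s^{1/2}$ and $s^{3/2}$. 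On $[r_0+1,r_0+2]$ they are therefore proportional up to a relative error $O(r_0^{-1})$, so the BTZ matrix is badly conditioned. You must show the deviation of $\mathfrak{D}$ from BTZ is smaller than this.
\end{itemize}
This non-degeneracy is exactly the content of \Cref{dC is surjective lemma}. The paper proves it directly on the actual background, testing $d\mathcal{C}$ on bumps $\dot F_1$ supported in $[r_1,r_1+1]$ with $r_1\to\infty$. The three components then have definite signs and sizes $1$, $r_1^{-3/2}$, $r_1^{-5/2}$, which kills a putative annihilating covector one component at a time.

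A secondary issue is specific to the seed parametrization. The $j$-th decay condition is not even finite unless the lower ones already hold: if $\sqrt r\,\gamma\mathcal{T}\phi\not\to0$, then $\partial_r\mathcal{T}(\sqrt r\gamma\mathcal{T}\phi)$ tends to a nonzero constant and the next vanishing integral diverges. Your inward integration from $\mathcal{I}$ is the right way around this. But it turns the construction into a two-point problem: the geometric quantities (e.g.\ $\gamma,\partial_Ur,\mathcal{T}\gamma,\mathcal{T}\partial_Ur$) are propagated from $r_0$, while the $\mathcal{T}^j\phi$ are pinned at $\mathcal{I}$. You need to establish its solvability and $C^1$ dependence on the $c_i$ before the Implicit Function Theorem applies.

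The matching point must also lie outside the support of the correction. As written, you place the correction on $[r_0+1,r_0+2]$, match at $r_0+1$, and assert $\underline\phi=\phi$ for $r\ge r_0+1$, which is inconsistent. The paper's choice of the top-order transversal quantity as free data avoids this hierarchy entirely.

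Minor points: $\phi$ is only $H^{k+1}$, so you need a Sobolev rather than a $C^{k+1}$ extension. You must also renormalize $U$, not just $V$, to obtain \eqref{asympt behavior du r}.
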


\begin{remark}
    We briefly explain the connection of \Cref{local perturbation extension proposition} to the characteristic gluing literature. The problem of characteristic gluing consists of constructing characteristic data on a truncated null hypersurface connecting prescribed data on two codimension-2 spheres. This was initiated in the series of works \cite{gluing1,gluing2,gluing3} for the Einstein vacuum equations in a perturbative regime around Minkowski space, up to linear obstructions (see also \cite{Lgluing1%,Lgluing2,Lgluing3,Lgluing4
    } for the non-vanishing cosmological constant case). These results were later extended to the obstruction-free case in \cite{gluing4}. More recently, the works \cite{gluing5,gluing6} proved non-perturbative characteristic gluing results. \Cref{local perturbation extension proposition} can be interpreted as a characteristic gluing result, in the sense that we glue a characteristic partial data set $\underline{\mathfrak{D}}$ to an asymptotic BTZ sphere at~$\mathcal{I}.$ We note that our result is perturbative around any given asymptotically AdS characteristic data set~$\mathfrak{D}$. Moreover, we expect that our methods can also be used to prove that $\underline{\mathfrak{D}}$ can be glued to a finite BTZ sphere. We also note that in contrast to the characteristic gluing results referenced above, which prescribe seed data at zeroth-order, we prescribe data at the level of the top-order transversal derivative of $\phi.$
\end{remark}

We prove this result for the rest of the section. For the sake of exposition, we restrict to the case $k=2$ for the proof, as it is clear that our methods can be adapted to the general case as explained in \Cref{remark on general k}. We briefly outline the idea of the proof and the structure of the Appendix.

The main obstructions to extending $\underline{\mathfrak{D}}$ are given by the compatibility conditions in \Cref{asympt ads data def} and the global regularity requirements \eqref{regularity conditions scalar field L2}. In order to enforce the needed compatibility conditions, we want to extend $\underline{\mathcal{T}^2\phi}$ to $(r_0,\infty),$ then compute the remaining components of $\underline{\mathfrak{D}}$ uniquely by solving transport equations with initial data at $r_0$. We set up the system of null constraint equations in \Cref{null constraint section}. We point out that except for the lapse and its derivatives, the elements of $\underline{\mathfrak{D}}$ are independent of the choice of a $V$ coordinate, so we carry out most of the argument using the coordinate $r$. However, it is clear that not every choice of $\underline{\mathcal{T}^2\phi}\in H^1((r_0,\infty))$ will lead to $\underline{\phi}\in H^3([0,\infty))$ and $\underline{\mathcal{T}\phi}\in H^2([0,\infty)).$ For this to hold, we need certain vanishing conditions which guarantee the suitable decay of $\underline{\phi}$ and $\underline{\mathcal{T}\phi},$ which we prove in \Cref{vanishing section}. The idea is that the vanishing conditions hold for $\mathfrak{D},$ so we prove the existence of a suitable $\underline{\mathcal{T}^2\phi}$ as a perturbation of $\mathcal{T}^2\phi$ using the Implicit Function Theorem. Finally, we prove in \Cref{extending local perturbations section} that under the vanishing conditions of \Cref{vanishing section}, the smallness bound \eqref{smallness of extension} holds.

\subsection{Null constraint equations}\label{null constraint section}

In this section, we derive the system of null constraint equations \eqref{dr theta}-\eqref{dr T phi} satisfied by the elements of a characteristic data set $\mathfrak{D}$ and we set up these equations as a system of ODEs \eqref{dr A_1}-\eqref{ODE system data}.

We define the renormalized outgoing derivative:
\[\partial_r=\frac{\partial_V}{\partial_Vr},\]
which is independent of the choice of the outgoing  coordinate $V.$ We recall the definition of $\gamma$ in \eqref{definition of gamma} and we express the Kodama vector field $\mathcal{T}$ defined in \eqref{definition of T} as:
\[\mathcal{T}=\Omega^{-2}\partial_Vr\partial_U-\Omega^{-2}\partial_Ur\partial_V=\frac{1}{\gamma}\partial_U-\frac{1}{\gamma}\partial_Ur\partial_r.\]

Firstly, we notice that equations \eqref{wave equation phi}, \eqref{Ray v}, and \eqref{wave equation r} are equivalent to:
\begin{align}
    \partial_r\theta+\bigg(\frac{1}{2r}+\frac{\Lambda\gamma r}{2\partial_Ur}\bigg)\theta=&\frac{\partial_r\big(\sqrt{r}\gamma\mathcal{T\phi}\big)}{-\partial_Ur}\label{dr theta}\\
    \partial_r\log\gamma=&\theta^2\label{dr gamma}\\
    \partial_r\partial_Ur=&\frac{\Lambda}{2}\gamma r.\label{dr du r}
\end{align}
In order to derive the higher order equations, we compute the following commutation relation:
\begin{equation}\label{commutation of T and dr}
    \big[\mathcal{T},\partial_r\big]=\bigg[\mathcal{T},\frac{\partial_V}{\partial_Vr}\bigg]=\theta^2\mathcal{T}.
\end{equation}
Commuting equations \eqref{dr theta}-\eqref{dr du r} with $\mathcal{T},$ we obtain the equations:
\begin{align}
    \partial_r\mathcal{T}\theta+\bigg(\frac{1}{2r}+\frac{\Lambda\gamma r}{2\partial_Ur}+\theta^2\bigg)\mathcal{T}\theta=&\frac{\partial_r\mathcal{T}\big(\sqrt{r}\gamma\mathcal{T}\phi\big)}{-\partial_Ur}+\frac{\theta^2\mathcal{T}\big(\sqrt{r}\gamma\mathcal{T}\phi\big)}{-\partial_Ur}+\frac{\mathcal{T}\partial_Ur\cdot\partial_r\big(\sqrt{r}\gamma\mathcal{T\phi}\big)}{(\partial_Ur)^2}\label{dr T theta}\\
    &+\frac{\Lambda}{2}r\theta\frac{\mathcal{T}\gamma}{-\partial_Ur}+\frac{\Lambda}{2}r\theta\frac{\gamma\mathcal{T}\partial_Ur}{(\partial_Ur)^2},\notag\\
    \partial_r\mathcal{T}\gamma=&2\gamma\theta\mathcal{T}\theta,\label{dr T gamma}\\
    \partial_r\mathcal{T}\partial_Ur=& -\theta^2\mathcal{T}\partial_Ur+\frac{\Lambda}{2}r\mathcal{T}\gamma.\label{dr T du r}
\end{align}
Moreover, we also compute that:
\begin{equation}\label{dr T phi}
    \partial_r\big(\sqrt{r}\gamma\mathcal{T\phi}\big)=\gamma\mathcal{T}\theta+\frac{\gamma\mathcal{T}\phi}{2\sqrt{r}},\ \partial_r\big(\gamma\mathcal{T}\phi\big)=\frac{\gamma\mathcal{T}\theta}{\sqrt{r}}.
\end{equation}

We refer to equations \eqref{dr theta}-\eqref{dr T phi} as the \textit{null constraint equations}. We note that one can derive equations with similar structure for any order $k\geq2,$ see \Cref{remark on general k}. 

\begin{remark}\label{solving compatibility conditions remark}
    Given a tuple $\mathfrak{D}$ at $r_0,$ solutions of the null constraint equations on $[r_0,\infty)$ determine the tuple $\mathfrak{D}$ uniquely on $[r_0,\infty)$ (up to the choice of a coordinate $V$); moreover, the elements of $\mathfrak{D}$ satisfy the compatibility conditions in \Cref{asympt ads data def}. To see this, we note that equations \eqref{dr theta}-\eqref{dr du r} imply \eqref{wave equation phi}, \eqref{wave equation r}, and \eqref{Ray v}. Also, equations \eqref{dr theta}-\eqref{dr du r}, \eqref{dr T gamma}, and \eqref{dr T phi} imply \eqref{wave equation Omega}, while using \eqref{dr T du r} as well we obtain \eqref{Ray u}. At higher order, we notice that \eqref{dr T theta} implies the equation obtained by taking a $\partial_U$ derivative of \eqref{wave equation phi}. Finally, the quantities $\partial_U^2\Omega$ and $\partial_U^3r$ are not computed directly from the solutions of the null constraints, so we define them using the equations obtained by taking a $\partial_U$ derivative of \eqref{wave equation Omega} and \eqref{Ray u}.
\end{remark}

In view of \Cref{solving compatibility conditions remark}, the goal in \Cref{local perturbation extension proposition} is to construct solutions of the null constraint equations \eqref{dr theta}-\eqref{dr T phi} on $(r_0,\infty).$ We notice that all the quantities are determined at $r_0$ by the partial data set on $[0,r_0]$. Moreover, the only quantity that appears on the RHS in \eqref{dr theta}-\eqref{dr T phi} but does not satisfy a $\partial_r$ equation itself is the top order quantity $\partial_r\mathcal{T}\big(\sqrt{r}\gamma\mathcal{T}\phi\big).$ Thus, in order to construct solutions of the null constraint equations, we prescribe $\partial_r\mathcal{T}\big(\sqrt{r}\gamma\mathcal{T}\phi\big)$ on $(r_0,\infty)$ and solve equations \eqref{dr theta}-\eqref{dr T phi} as a system of ODEs with initial data at $r_0.$

Our discussion motivates us to study separately the following system of ODEs:
\begin{align}
    \partial_rA_1+\bigg(\frac{1}{2r}+\frac{\Lambda B_0 r}{2D_0}+A_0^2\bigg)A_1=&\frac{F_1}{-D_0}+\frac{A_0^2P_2}{-D_0}+\frac{D_1F_0}{D_0^2}\label{dr A_1}\\&+\frac{\Lambda}{2}r\frac{A_0B_1}{-D_0}+\frac{\Lambda}{2}r\frac{A_0B_0D_1}{D_0^2},\notag\\
    \partial_rB_1=&2B_0 A_0 A_1,\label{dr B_1}\\
    \partial_rD_1=& -A_0^2D_1+\frac{\Lambda}{2}rB_1,\label{dr D_1}\\
    \partial_rP_2=&F_1,\label{dr P_2}\\
    \partial_rA_0+\bigg(\frac{1}{2r}+\frac{\Lambda B_0 r}{2D_0}\bigg)A_0=& \frac{F_0}{-D_0},\label{dr A_0}\\
    F_0=&B_0A_1+\frac{P_1}{2\sqrt{r}},\label{F_0}\\
    \partial_rB_0=& A_0^2B_0,\label{dr B_0}\\
    \partial_rD_0=& \frac{\Lambda}{2}B_0 r,\label{dr D_0}\\
    \partial_rP_1=&\frac{B_0A_1}{\sqrt{r}},\label{dr P_1}\\
    A_i(r_0)=a_i,\ B_i(r_0)=b_i,\ D_i(r_0)=&d_i,\ P_{i+1}(r_0)=p_{i+1},\ i=0,1.\label{ODE system data}
\end{align}
We study the system \eqref{dr A_1}-\eqref{ODE system data} in detail in \Cref{vanishing section}. We also point out that the null constraint equations \eqref{dr theta}-\eqref{dr T phi} are equivalent to the system \eqref{dr A_1}-\eqref{ODE system data} for:
\[A_0=\theta,A_1=\mathcal{T}\theta,B_0=\gamma,B_1=\mathcal{T}\gamma,D_0=\partial_Ur,D_1=\mathcal{T}\partial_Ur,\]\[P_1=\gamma\mathcal{T}\phi, P_2=\mathcal{T}(\sqrt{r}\gamma\mathcal{T\phi}),\ F_0=\partial_r(\sqrt{r}\gamma\mathcal{T\phi}),\ F_1=\partial_r\mathcal{T}(\sqrt{r}\gamma\mathcal{T\phi}).\]

\subsection{Vanishing conditions}\label{vanishing section}

According to \Cref{null constraint section}, the strategy to construct characteristic data sets is to prescribe $\partial_r\mathcal{T}\big(\sqrt{r}\gamma\mathcal{T}\phi\big)$ on $(r_0,\infty)$ and treat the null constraint equations as a system of ODEs. However, prescribing $\partial_r\mathcal{T}\big(\sqrt{r}\gamma\mathcal{T}\phi\big)$ freely on $(r_0,\infty)$ will not lead in general to $\phi,\mathcal{T}\phi,$ and $\mathcal{T}^2\phi$ decaying sufficiently fast, unless additional vanishing conditions hold. We know these hold for the elements of an asymptotically AdS characteristic data set $\mathfrak{D}.$ More specifically, we have that \Cref{asympt ads data def} (more specifically, equation \eqref{regularity conditions scalar field L2}) and equations \eqref{dr theta}-\eqref{dr T phi} imply:
\begin{equation}\label{L2 condition for F1}
    \sqrt{r}\partial_r\mathcal{T}\big(\sqrt{r}\gamma\mathcal{T\phi}\big)\in L^2([r_0,\infty)),
\end{equation}
\begin{equation}\label{vanishing conditions for D}
    \int_{r_0}^{\infty}\partial_r\mathcal{T}\big(\sqrt{r}\gamma\mathcal{T\phi}\big)dr=-\mathcal{T}\big(\sqrt{r}\gamma\mathcal{T\phi}\big)(r_0),\ \int_{r_0}^{\infty}\frac{\gamma\mathcal{T}\theta}{\sqrt{r}}dr=-\gamma\mathcal{T\phi}(r_0),\ \int_{r_0}^{\infty}\frac{\theta}{\sqrt{r}}dr=-\phi(r_0),
\end{equation}
where we also used the fact that $\mathcal{T}\gamma$ is uniformly bounded due to \eqref{dr T gamma} and \eqref{regularity conditions scalar field L2}.

\begin{remark}\label{nonlocal nonlinear remark}
    One could take an alternative approach and prescribe instead the seed data $\phi$ on $(r_0,\infty)$, similarly to the asymptotically flat case of \cite{ChrWCC}, see also \Cref{differences from HS remark}. Using the null constraint equations \eqref{dr theta}-\eqref{dr T phi}, one recovers all the other quantities in the system in terms of $\phi$. However, the vanishing conditions \eqref{vanishing conditions for D} represent nonlinear and non-local constraints for $\phi,$ so we do not expect to be able to express the space of seed data as a vector space. We choose not to pursue this approach, and instead prescribe $\partial_r\mathcal{T}\big(\sqrt{r}\gamma\mathcal{T}\phi\big)$ on $(r_0,\infty)$ as outlined in our above strategy.
\end{remark}

Given an asymptotically AdS data set $\mathfrak{D},$ our goal is to prove that there exists $\partial_r\underline{\mathcal{T}}\big(\sqrt{r}\underline{\gamma\mathcal{T}\phi}\big)$ such that:
\begin{equation}\label{vanishing conditions F1 bar L2 conditions}
    \sqrt{r}\partial_r\underline{\mathcal{T}}\big(\sqrt{r}\underline{\gamma\mathcal{T}\phi}\big)\in L^2([r_0,\infty)),\ \big\|\sqrt{r}\partial_r\underline{\mathcal{T}}\big(\sqrt{r}\underline{\gamma\mathcal{T}\phi}\big)-\sqrt{r}\partial_r\mathcal{T}\big(\sqrt{r}\gamma\mathcal{T\phi}\big)\big\|_{L^2([r_0,\infty))}<C_0\varepsilon,
\end{equation}
\begin{equation}\label{vanishing conditions F1 bar}
    \int_{r_0}^{\infty}\partial_r\underline{\mathcal{T}}\big(\sqrt{r}\underline{\gamma\mathcal{T}\phi}\big)dr=-\underline{\mathcal{T}}\big(\sqrt{r}\underline{\gamma}\underline{\mathcal{T}\phi}\big)(r_0),\ \int_{r_0}^{\infty}\frac{\underline{\gamma \mathcal{T}\theta}}{\sqrt{r}}dr=-\underline{\gamma}\underline{\mathcal{T}\phi}(r_0),\ \int_{r_0}^{\infty}\frac{\underline{\theta}}{\sqrt{r}}dr=-\underline{\phi}(r_0),
\end{equation}
where the constant $C_0>0$ is independent of $\varepsilon$ and $\partial_r\underline{\mathcal{T}}\big(\sqrt{r}\underline{\gamma\mathcal{T}\phi}\big)$.

We prove for the rest of the section the existence of $\partial_r\underline{\mathcal{T}}\big(\sqrt{r}\underline{\gamma\mathcal{T}\phi}\big)$ satisfying \eqref{vanishing conditions F1 bar L2 conditions}-\eqref{vanishing conditions F1 bar}. This requires proving a series of results for the ODE system \eqref{dr A_1}-\eqref{ODE system data}. We show that the system \eqref{dr A_1}-\eqref{ODE system data} has global solutions in \Cref{S is well defined lemma}, and that the solution map is $C^1$ in \Cref{dot S is well defined lemma}. We then prove a non-degeneracy condition in \Cref{dC is surjective lemma}, which allow us to use the Implicit Function Theorem in \Cref{IFT corollary} and conclude the existence of a suitable $\partial_r\underline{\mathcal{T}}\big(\sqrt{r}\underline{\gamma\mathcal{T}\phi}\big)$ in \Cref{existence of F1 proposition}.
    
We first define the suitable weighted spaces used in our argument. We fix a constant $M>0$ and a large constant $R_0>0$ (to be determined in the course of the proof). For any $r_0\geq R_0,$ we define:
\begin{align*}
    TY=\bigg\{F:[r_0,\infty)\rightarrow\mathbb{R}:\ &\|F\|_{TY}^2=\int_{r_0}^{\infty}r^2F^2(r)dr<\infty\bigg\},\\
    Y=\bigg\{F\in TY:\ &\|F\|_Y=\|F\|_{TY}<M\bigg\},\\
    Z_1=\bigg\{A:[r_0,\infty)\rightarrow\mathbb{R}:\ &\|A\|_{Z_1}=\sup_{[r_0,\infty)}r^{3/2}|A|<\infty\bigg\},\\
    Z_2=\bigg\{B:[r_0,\infty)\rightarrow\mathbb{R}:\ &\|B\|_{Z_2}=\sup_{[r_0,\infty)}|B|<\infty\bigg\},\\
    Z_3=\bigg\{D:[r_0,\infty)\rightarrow\mathbb{R}:\ &\|D\|_{Z_2}=\sup_{[r_0,\infty)}r^{-2}|D|<\infty\bigg\},\\
    Z_4=\bigg\{P:[r_0,\infty)\rightarrow\mathbb{R}:\ &\|P\|_{Z_4}=\sup_{[r_0,\infty)}|P|<\infty\bigg\},\\
    Z=Z_1^2\times Z_2^2&\times Z_3^2\times Z_4^2,\\
    \|w\|_W=r_0^{3/2}|a_0|+r_0^{3/2}|a_1|+|b_0|+&|b_1|+r_0^{-2}|d_0|+r_0^{-2}|d_1|+r_0|p_1|+r_0^{1/2}|p_2|,\\
    W=\bigg\{w=(a_0,a_1,b_0,b_1,d_0,d_1,p_1,p_2)\in\mathbb{R}^8:\ &|b_0-2|<1,\ \bigg|d_0-\frac{\Lambda}{2}r_0^2\bigg|<\frac{|\Lambda|}{4}r_0^2,\ \|w\|_W<M\bigg\}.
\end{align*}

We prove that the system \eqref{dr A_1}-\eqref{ODE system data} has global solutions:
\begin{lemma}\label{S is well defined lemma}
        The solution map of the system \eqref{dr A_1}-\eqref{ODE system data} given by $\mathcal{S}:W\times Y\rightarrow Z$: \[(w,F_1)=(a_0,a_1,b_0,b_1,d_0,d_1,p_1,p_2,F_1)\mapsto(A_0,A_1,B_0,B_1,D_0,D_1,P_1,P_2)\] is well-defined.
\end{lemma}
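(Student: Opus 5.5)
The plan is a continuity (bootstrap) argument on $[r_0,r_1]$: fix $(w,F_1)\in W\times Y$, take the unique local $C^1$ solution near $r_0$ from standard ODE theory (the coefficients are smooth as long as $B_0>0$, $D_0\neq0$, $r\geq r_0>0$), and show the quantities measured by $\|\cdot\|_Z$ obey a priori bounds that depend only on $M,\Lambda$ and are uniform in $r_1$. This rules out blow-up and gives a global solution with $\mathcal{S}(w,F_1)\in Z$.

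On $[r_0,r_1]$ I would assume the bootstrap bounds
\[
r^{3/2}|A_0|+r^{3/2}|A_1|\leq K,\qquad |B_0-2|\leq 2,\qquad |B_1|\leq K,\qquad \tfrac{|\Lambda|}{8}r^2\leq -D_0\leq |\Lambda| r^2,\qquad r^{-2}|D_1|\leq K,\qquad |P_1|+|P_2|\leq K,
\]
with $K=K(M,\Lambda)$ large, and then take $R_0$ large so that every nonlinear correction carries a negative power of $r_0$ and the bounds improve.

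The geometric unknowns close directly. From \eqref{dr B_0}, $\log(B_0/b_0)=\int A_0^2\lesssim K^2r_0^{-2}$, so $B_0$ stays near $b_0\in(1,3)$. Integrating \eqref{dr D_0} gives $D_0=d_0+\frac{\Lambda}{4}\int_{r_0}^r 2B_0\tilde r\,d\tilde r$, which together with the condition on $d_0$ in $W$ yields $-D_0\sim|\Lambda|r^2$ (using $\Lambda<0$). For $P_2$, Cauchy–Schwarz in \eqref{dr P_2} gives $|P_2-p_2|\leq\|F_1\|_{TY}\,r_0^{-1/2}\leq Mr_0^{-1/2}$. From \eqref{dr P_1}, $|P_1-p_1|\lesssim K\int\tilde r^{-2}\lesssim Kr_0^{-1}$. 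From \eqref{dr B_1}, $|B_1-b_1|\lesssim K^3r_0^{-2}$. Finally \eqref{dr D_1} is linear in $D_1$ with integrating factor $e^{\int A_0^2}\approx1$, and its source is $\frac{\Lambda}{2}rB_1$, so $|D_1|\lesssim|d_1|+(|b_1|+K^3r_0^{-2})r^2$, which is consistent with the $r^{-2}$ weight.

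The main obstacle is closing the $r^{3/2}$ decay for $A_0$ and $A_1$, which is exactly the Hardy/integrating-factor argument behind the $\zeta$ estimate in \Cref{proposition linear wave on asympt ads}. Write $\frac{\Lambda B_0 r}{2D_0}=\frac{B_0}{2r}\cdot\frac{\Lambda r^2}{D_0}$. From the $D_0$ asymptotics, $\frac{\Lambda r^2}{D_0}\to\frac{2}{b_0}\cdot(\text{correction})$, so to leading order $\frac{\Lambda B_0 r}{2D_0}\approx\frac{1}{r}$ with an $O(r_0^{-2})$-integrable error. The total damping coefficient is therefore roughly $\frac{3}{2r}$, and the integrating factor is $\sim r^{3/2}$. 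Hence
\[
r^{3/2}A_0(r)\approx r_0^{3/2}a_0+\int_{r_0}^r \tilde r^{3/2}\frac{F_0}{-D_0}\,d\tilde r,\qquad \Big|\tilde r^{3/2}\frac{F_0}{-D_0}\Big|\lesssim \tilde r^{-1/2}\big(|A_1|+\tilde r^{-1/2}|P_1|\big),
\]
and this is integrable in $\tilde r$ once $A_1=O(r^{-3/2})$ is available.

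For $A_1$ the argument is the same, with the extra damping $A_0^2\geq0$ only helping, but the forcing term $\frac{F_1}{-D_0}$ contains $F_1$ only in $L^2(r^2dr)$. Multiplying by $r^{3/2}$ gives $\int\tilde r^{3/2}\tilde r^{-2}|F_1|\leq\|F_1\|_{TY}\big(\int\tilde r^{-3}\big)^{1/2}\lesssim Mr_0^{-1}$, which is bounded. The remaining sources are bounded as follows:
\begin{itemize}
\item $\frac{A_0^2P_2}{D_0}\sim r^{-5}$;
\item $\frac{D_1F_0}{D_0^2}\lesssim K^2r^{-2}\cdot r^{-3/2}$;
\item $rA_0B_1/D_0\sim r^{-5/2}$;
\item $rA_0B_0D_1/D_0^2$: here the factor $rD_1/D_0^2\sim r^{-1}$ is bounded by $K r^{-1}$, giving $r^{-5/2}$.
\end{itemize}
After multiplication by $r^{3/2}$, each of these integrates to something bounded by $C(K)\,r_0^{-\alpha}$ for some $\alpha>0$.

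With $K$ chosen as a large multiple of $M$ plus constants, the improved bounds hold with $K/2$ once $R_0$ is large depending on $K,\Lambda,M$. This closes the bootstrap, so the solution extends to $[r_0,\infty)$ with $\mathcal{S}(w,F_1)\in Z$.

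The delicate point to check carefully is that the coefficient $\frac{\Lambda B_0 r}{2D_0}-\frac1r$ really is integrable. The $d_0$ term and the variation of $B_0$ produce corrections of order $r^{-3}$ and $r^{-3}$ respectively, which are integrable. If instead an $O(1/r)$ mismatch appeared, the decay rate would shift and the $Z_1$ weight would fail.
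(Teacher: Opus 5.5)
\textbf{There is a gap in the step you single out as the main obstacle.} Your overall scheme is the paper's: a continuity argument on $[r_0,r_1]$, direct integration of the equations for $B_i,D_i,P_i$, integrating factors for $A_0,A_1$, and $R_0$ large. But your integrating factor for $A_0,A_1$ is wrong, and with your weight $r^{3/2}$ the bootstrap for the $Z_1$ norms does not close.

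\textbf{The coefficient.} Integrating \eqref{dr D_0} with $B_0\approx b_0$ gives $D_0\approx\frac{\Lambda b_0}{4}r^2$. Hence $\frac{\Lambda r^2}{D_0}\to\frac{4}{b_0}$, not $\frac{2}{b_0}$, and $\frac{\Lambda B_0 r}{2D_0}\approx\frac{2}{r}$. The damping in \eqref{dr A_0} is therefore $\frac{5}{2r}$, not $\frac{3}{2r}$. Treating the difference from $\frac{1}{r}$ as an integrable error is exactly the non-integrable $O(1/r)$ mismatch you were worried about.

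\textbf{Where it breaks in your own estimates.}
\begin{itemize}
\item For $A_0$: the $\frac{P_1}{2\sqrt r}$ part of $F_0$ gives $\tilde r^{3/2}\cdot\frac{|P_1|}{2\sqrt{\tilde r}\,|D_0|}\sim|P_1|\tilde r^{-1}$. Its integral grows like $|P_1|\log(r/r_0)$, and knowing $A_1=O(r^{-3/2})$ does not make it integrable.
\item For $A_1$: the sources you list as $O(r^{-5/2})$, namely $rA_0B_1/D_0$ and $rA_0B_0D_1/D_0^2$, become $O(\tilde r^{-1})$ after multiplying by $\tilde r^{3/2}$. So the claimed bound $C(K)r_0^{-\alpha}$ is false.
\item The same happens for $D_1F_0/D_0^2$ once you keep the $P_1/(2\sqrt r)\sim r^{-1/2}$ part of $F_0$, which your estimate $r^{-2}\cdot r^{-3/2}$ drops.
\end{itemize}
As written, you only get $r^{3/2}A_0$ and $r^{3/2}A_1$ bounded by logarithmically growing quantities.

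\textbf{The fix.} No asymptotic analysis of the coefficient is needed. By \eqref{dr D_0}, $\frac{\Lambda B_0 r}{2D_0}=\partial_r\log|D_0|$ exactly, so $\sqrt r\,D_0$ is an exact integrating factor. This gives $\partial_r(\sqrt r D_0A_0)=-\sqrt r F_0$, which is the paper's formula \eqref{integrated eq for A0}. For \eqref{dr A_1} you use $\sqrt r D_0\,e^{\int_{r_0}^r A_0^2}$.

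Homogeneous solutions then decay like $r_0^{5/2}r^{-5/2}$. The $r^{-3/2}$ rate built into $Z_1$ is set by the forcing, not by the free decay:
\begin{itemize}
\item $\frac{1}{\sqrt r|D_0|}\int_{r_0}^r\frac{|P_1|}{2}\,d\tilde r\lesssim|P_1|r^{-3/2}$;
\item each $O(\tilde r^{-5/2})$ source in \eqref{dr A_1} contributes $O(r^{-3/2})$;
\item the $F_1$ term contributes only $\lesssim\|F_1\|_{TY}\,r^{-5/2}\sqrt{\log(r/r_0)}$.
\end{itemize}
With this correction your bootstrap closes for some $K=K(\Lambda,M)$ and $R_0$ large, and your bounds for $B_i,D_i,P_i$ go through unchanged. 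The paper runs the same argument, but bootstraps the integrated quantity $\int_{r_0}^R A_0^2+rA_1^2\,dr<\log 2$ rather than weighted sup bounds.
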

    \begin{proof}
        Since $F_1\in Y$, equation \eqref{dr P_2} implies:
        \[|P_2|\leq|p_2|+r_0^{-1/2}\|F_1\|_Y<2r_0^{-1/2}M.\]
        We make the following bootstrap assumption for some $R>r_0$:
        \begin{equation}\label{boostrap assumption A0 A1}
            \int_{r_0}^RA_0^2(r)+rA_1^2(r)dr<\log2.
        \end{equation}
        As a consequence, we get from \eqref{dr B_0}, \eqref{dr D_0}, and \eqref{dr P_1} that for all $r\in[r_0,R]$:
        \begin{equation}\label{bounds for B0 D0 P1}
            b_0\leq B_0(r)\leq 2b_0,\ \frac{|\Lambda|}{4}r^2<-D_0(r)<4|\Lambda|r^2,\ |P_1|<8r_0^{-1/2}.
        \end{equation}
        We integrate the equation \eqref{dr A_0} for $A_0:$
        \begin{equation}\label{integrated eq for A0}
            A_0(r)=\frac{a_0d_0\sqrt{r_0}}{D_0\sqrt{r}}+\frac{1}{-D_0\sqrt{r}}\int_{r_0}^r\sqrt{r'}F_0(r')dr'.
        \end{equation}
        Using \eqref{F_0}, \eqref{boostrap assumption A0 A1}, and \eqref{bounds for B0 D0 P1}, we get that \eqref{integrated eq for A0} implies for some constant $C(\Lambda)>0$:
        \begin{equation}\label{pointwise bound for A0}
            |A_0|<8Mr^{-5/2}r_0+64|\Lambda|^{-1}r^{-3/2}r_0^{-1/2}\leq (M+1)C(\Lambda)r^{-3/2}.
        \end{equation}
        As a result, we get from \eqref{dr B_1} and \eqref{dr D_1} that for $R_0$ sufficiently large in terms of $\Lambda$:
        \[|B_1|<2(M+1),\ |D_1|<2(M+1)(1+|\Lambda|)r^2.\]
        Similarly to the derivation of \eqref{integrated eq for A0}, we integrate the equation \eqref{dr A_1} for $A_1:$
        \begin{align*}
            A_1(r)=&\frac{a_1d_0\sqrt{r_0}}{D_0\sqrt{r}}\cdot\exp\bigg(-\int_{r_0}^rA_0^2dr'\bigg)+\frac{1}{-D_0\sqrt{r}}\int_{r_0}^r\sqrt{r'}F_1(r')\exp\bigg(-\int_{r'}^rA_0^2dr''\bigg)dr'\\
            &+\frac{1}{-D_0\sqrt{r}}\int_{r_0}^r\sqrt{r'}\bigg[A_0^2P_2+\frac{D_1F_0}{-D_0}+\frac{\Lambda}{2}r'A_0B_1+\frac{\Lambda}{2}r'\frac{A_0B_0D_1}{-D_0}\bigg]\exp\bigg(-\int_{r'}^rA_0^2dr''\bigg)dr'.
        \end{align*}
        Using the previous pointwise bounds and the bootstrap assumption \eqref{boostrap assumption A0 A1}, we obtain for some constant $C(\Lambda,M)>0$:
        \begin{equation}\label{pointwise bound for A1}
            |A_1|\leq C(\Lambda,M)r^{-3/2}.
        \end{equation}
        
        The pointwise bounds \eqref{pointwise bound for A0} and \eqref{pointwise bound for A1} allow us to improve the bootstrap assumption \eqref{boostrap assumption A0 A1}, provided that $R_0$ is sufficiently large in terms of $\Lambda$ and $M$. Thus, the ODE system \eqref{dr A_1}-\eqref{ODE system data} has global solutions. Moreover, the proof implies that $A_0,A_1\in Z_1,$ $B_0,B_1\in Z_2,$ $D_0,D_1\in Z_3,$ and $P_1,P_2\in Z_4,$ so the map $\mathcal{S}:W\times Y\rightarrow Z$ is well-defined.
    \end{proof}

We prove that the map $\mathcal{S}:W\times Y\rightarrow Z$ is $C^1.$ Let $(A_0,A_1,B_0,B_1,D_0,D_1,P_1,P_2)=\mathcal{S}(w,F_1).$ We consider the system of ODEs satisfied by the formal Fréchet derivatives of $A_0,\ldots,P_2$ with respect to $a_0,\ldots,p_2,$ and $F_1$:
%\partial_r\dot{A}_1+\bigg(\frac{1}{2r}+\frac{\Lambda B_0 r}{2D_0}+A_0^2\bigg)\dot{A}_1=&\frac{\dot{F}_1}{-D_0}+\frac{F_1\dot{D}_0}{D_0^2}+\frac{A_0^2\dot{P}_2}{-D_0}+\frac{2A_0\dot{A}_0P_2}{-D_0}+\frac{A_0^2P_2\dot{D}_0}{D_0^2}+\frac{\dot{D}_1F_0}{D_0^2}\label{dr dot A1}\\&+\frac{D_1\dot{F}_0}{D_0^2}-\frac{2D_1F_0\dot{D}_0}{D_0^3}-\frac{\Lambda}{2}r\bigg(\frac{\dot{A}_0B_1}{D_0}+\frac{A_0\dot{B}_1}{D_0}-\frac{A_0B_1\dot{D}_0}{D_0^2}\bigg)\notag\\&+\frac{\Lambda}{2}r\bigg(\frac{\dot{A}_0B_0D_1}{D_0^2}+\frac{A_0\dot{B}_0D_1}{D_0^2}+\frac{A_0B_0\dot{D}_1}{D_0^2}-\frac{2A_0B_0D_1\dot{D}_0}{D_0^3}\bigg)\notag\\ &-\frac{\Lambda A_1 \dot{B}_0 r}{2D_0}+\frac{\Lambda A_1 B_0 \dot{D}_0 r}{2D_0^2}-2A_0A_1\dot{A}_0,\notag\\
\begin{align}
    \partial_r\dot{A}_1+\bigg(\frac{1}{2r}+\frac{\Lambda B_0 r}{2D_0}+A_0^2\bigg)\dot{A}_1=&\frac{\dot{F}_1}{-D_0}+\frac{A_0^2\dot{P}_2}{-D_0}+\frac{D_1\dot{F}_0}{D_0^2}+\dot{A}_0\frac{\Lambda r}{2}\bigg(\frac{B_0D_1}{D_0^2}-\frac{B_1}{D_0}\bigg)+\label{dr dot A1}\\&+2\dot{A}_0\bigg(\frac{A_0P_2}{-D_0}-A_0A_1\bigg)+\dot{B}_0\frac{\Lambda r}{2}\bigg(\frac{A_0D_1}{D_0^2}-\frac{A_1}{D_0}\bigg)-\dot{B}_1\frac{\Lambda r}{2}\frac{A_0}{D_0}\notag\\&+\dot{D}_0\frac{\Lambda r}{2}\bigg(\frac{A_1 B_0}{D_0^2}-\frac{2A_0B_0D_1}{D_0^3}+\frac{A_0B_1}{D_0^2}\bigg)+\dot{D}_1\frac{\Lambda r}{2}\frac{A_0B_0}{D_0^2}\notag\\
    &+\dot{D}_0\bigg(\frac{F_1}{D_0^2}+\frac{A_0^2P_2}{D_0^2}-\frac{2D_1F_0}{D_0^3}\bigg)+\dot{D}_1\frac{F_0}{D_0^2},\notag\\
    \partial_r\dot{B}_1=&2\dot{B}_0 A_0 A_1+2B_0\dot{A}_0 A_1+2B_0 A_0\dot{A}_1,\label{dr dot B1}\\
    \partial_r\dot{D}_1=& -A_0^2\dot{D}_1-2A_0D_1\dot{A}_0+\frac{\Lambda}{2}r\dot{B}_1,\label{dr dot D1}\\
    \partial_r\dot{P}_2=&\dot{F}_1,\label{dr dot P2}\\
    \partial_r\dot{A}_0+\bigg(\frac{1}{2r}+\frac{\Lambda B_0 r}{2D_0}\bigg)\dot{A}_0=&\frac{\dot{F}_0}{-D_0}+\frac{F_0\dot{D}_0}{D_0^2}-\frac{\Lambda A_0 \dot{B}_0 r}{2D_0}+\frac{\Lambda A_0 B_0 \dot{D}_0 r}{2D_0^2}\label{dr dot A0}\\
    \dot{F}_0=&\dot{B}_0A_1+B_0\dot{A}_1+\frac{\dot{P}_1}{2\sqrt{r}},\label{dot F0}\\
    \partial_r\dot{B}_0=&\dot{B}_0A_0^2+2A_0B_0\dot{A}_0\label{dr dot B0}\\
    \partial_r\dot{D}_0=&\frac{\Lambda}{2}\dot{B}_0 r\label{dr dot D0}\\
    \partial_r\dot{P}_1=&\frac{\dot{B}_0A_1}{\sqrt{r}}+\frac{B_0\dot{A}_1}{\sqrt{r}},\label{dr dot P1}\\
    \dot{A}_i(r_0)=\dot{a}_i,\ \dot{B}_i(r_0)=&\dot{b}_i,\ \dot{D}_i(r_0)=\dot{d}_i,\ \dot{P}_{i+1}(r_0)=\dot{p}_{i+1},\ i=0,1.\label{dot data}
\end{align}
\begin{lemma}\label{dot S is well defined lemma}
    The solution map of the system \eqref{dr dot A1}-\eqref{dot data} given by $\dot{\mathcal{S}}_{(w,F_1)}:\mathbb{R}^8\times TY\rightarrow Z$:
    \[(\dot{a}_0,\dot{a}_1,\dot{b}_0,\dot{b}_1,\dot{d}_0,\dot{d}_1,\dot{p}_1,\dot{p}_2,\dot{F}_1)\mapsto(\dot{A}_0,\dot{A}_1,\dot{B}_0,\dot{B}_1,\dot{D}_0,\dot{D}_1,\dot{P}_1,\dot{P}_2)\] 
    is a well-defined bounded linear functional. Moreover, the map $\dot{\mathcal{S}}_{(w,F_1)}$ is continuous in $(w,F_1)$ and $\dot{\mathcal{S}}_{(w,F_1)}=d\mathcal{S}_{(w,F_1)}$. Therefore, the map $\mathcal{S}:W\times Y\rightarrow Z$ is $C^1.$
\end{lemma}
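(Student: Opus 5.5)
The plan mirrors the proof of \Cref{S is well defined lemma}, now for the linearized system \eqref{dr dot A1}--\eqref{dot data} around the fixed background $(A_0,\ldots,P_2)=\mathcal{S}(w,F_1)$. From that lemma we use the bounds $|A_0|,|A_1|\lesssim r^{-3/2}$, $B_0\sim 1$, $|B_1|\lesssim 1$, $-D_0\sim r^2$, $|D_1|\lesssim r^2$, and $|P_1|,|P_2|\lesssim 1$ (with $|P_1|,|P_2|\lesssim r_0^{-1/2}$). Since the system is linear in the dotted unknowns, global existence is automatic once we have a priori bounds. So the work is to show
\[
\|(\dot A_0,\ldots,\dot P_2)\|_Z\le C\big(\|(\dot a_0,\ldots,\dot p_2)\|_W+\|\dot F_1\|_{TY}\big),
\]
with $C=C(\Lambda,M)$ for $R_0$ large. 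This gives a well-defined bounded linear map $\mathbb{R}^8\times TY\to Z$.

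To get the bound, set $N(R)=\sup_{[r_0,R]}\big(r^{3/2}|\dot A_0|+r^{3/2}|\dot A_1|+|\dot B_0|+|\dot B_1|+r^{-2}|\dot D_0|+r^{-2}|\dot D_1|+|\dot P_1|+|\dot P_2|\big)$ and close a linear bootstrap in $N$.
\begin{itemize}
\item $\dot P_2$: integrate \eqref{dr dot P2} and use Cauchy--Schwarz, giving $|\dot P_2|\le|\dot p_2|+r_0^{-1/2}\|\dot F_1\|_{TY}$.
\item $\dot B_0,\dot B_1$: integrate \eqref{dr dot B0} and \eqref{dr dot B1}. The coefficients carry factors $A_0A_1$, $A_0^2\lesssim r^{-3}$ and $B_0A_0\lesssim r^{-3/2}$ against $\dot A_i\lesssim N r^{-3/2}$, so these contribute $O(r_0^{-1})N$ plus the data.
\item $\dot D_0,\dot D_1$: integrate \eqref{dr dot D0} and \eqref{dr dot D1}. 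This gives $r^{-2}|\dot D_i|\lesssim r_0^{-2}|\dot d_i|+|\dot b_i|$-type data plus $\sup|\dot B_i|$ plus a small multiple of $N$.
\item $\dot P_1$: integrate \eqref{dr dot P1}.
\item $\dot A_0,\dot A_1$: use the integrating factor $\sqrt r\,(-D_0)$ exactly as in \eqref{integrated eq for A0}, since the homogeneous parts of \eqref{dr dot A0} and \eqref{dr dot A1} coincide with those of \eqref{dr A_0} and \eqref{dr A_1} (for $\dot A_1$, times $\exp(-\int A_0^2)$). Each forcing term on the right-hand side is then checked to be $O(r^{-2})\times$(a bounded dotted quantity, or $r^{-3/2}\dot A_i$), or else an integrable $\dot F_1/(-D_0)$ term. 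The latter gives $r^{-5/2}\int_{r_0}^r\sqrt{r'}|\dot F_1|\,dr'\lesssim r^{-3/2}\|\dot F_1\|_{TY}$.
\end{itemize}
Collecting these yields $N\le C(\text{data})+C r_0^{-1/2}N$, and $N$ absorbs for $R_0$ large depending on $\Lambda,M$.

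Next we prove continuity in $(w,F_1)$ and identify $\dot{\mathcal S}_{(w,F_1)}$ with $d\mathcal S_{(w,F_1)}$.
\begin{itemize}
\item Continuity: the coefficients of the linear system depend on $\mathcal S(w,F_1)$. So we first show that $\mathcal S$ is Lipschitz from $W\times Y$ to $Z$, by running the same difference estimate on \eqref{dr A_1}--\eqref{dr P_1}. The operator-norm difference of $\dot{\mathcal S}$ is then bounded by the $Z$-distance of the backgrounds, using the same weighted estimates.
\item Fréchet differentiability: take the remainder $\mathcal S(w+\dot w,F_1+\dot F_1)-\mathcal S(w,F_1)-\dot{\mathcal S}_{(w,F_1)}(\dot w,\dot F_1)$. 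It satisfies the linearized system with forcing quadratic in the differences, which by the Lipschitz bound is $O\big((\|\dot w\|_W+\|\dot F_1\|_{TY})^2\big)$ in the weighted norms. Feeding this into the same a priori estimate shows the remainder is $o\big(\|\dot w\|_W+\|\dot F_1\|_{TY}\big)$ in $Z$.
\end{itemize}
Continuity of $(w,F_1)\mapsto d\mathcal S_{(w,F_1)}$ then gives that $\mathcal S$ is $C^1$.

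The main obstacle is the weight bookkeeping in the $\dot A_1$ equation \eqref{dr dot A1}. There the coefficients such as $\frac{\Lambda r}{2}\frac{B_1}{D_0}$, $\frac{\dot D_0 F_1}{D_0^2}$ and $\frac{D_1\dot F_0}{D_0^2}$ are only borderline decaying. The term $\dot D_0F_1/D_0^2$ is handled pointwise by $|\dot D_0|\lesssim r^2N$ together with Cauchy--Schwarz against $r^2F_1^2$. For the coupling through $\dot F_0\ni B_0\dot A_1$ inside the $\dot A_1$ equation, I would first try to absorb it using $D_1/D_0^2\lesssim r^{-2}$ and the smallness coming from large $r_0$. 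If the constants do not close directly, the fallback is to replace the sup weight $r^{3/2}$ by $r^{3/2-\epsilon}$ in an intermediate step, then upgrade.
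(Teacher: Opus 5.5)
Your argument is correct. It follows the paper's overall scheme: integrate the hierarchy for $\dot P_2,\dot B_i,\dot D_i,\dot P_1$, then use the integrating factor $\sqrt r(-D_0)$ (times $\exp(\int A_0^2)$ for $\dot A_1$), and close for $R_0$ large. The difference is the norm in which you close it.

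The paper bootstraps the integrated quantity $\int_{r_0}^R \dot A_0^2+r\dot A_1^2\,dr<K^2$, inherited from the nonlinear proof of \Cref{S is well defined lemma}. It recovers pointwise bounds by Cauchy--Schwarz and then chooses $K^2$ with explicit powers of $r_0$. The resulting estimate $\|\dot{\mathcal S}_{(w,F_1)}(\dot w,\dot F_1)\|_Z\lesssim r_0^2(|\dot a_0|+|\dot a_1|)+\sum_i(|\dot b_i|+|\dot d_i|+|\dot p_{i+1}|)+\|\dot F_1\|_{TY}$ does not match the weights in $\|\cdot\|_W$. This is harmless because $r_0$ is fixed.

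Your weighted sup norm $N(R)$ suits a linear system better. $N(R)<\infty$ for every finite $R$ by linear (Carath\'eodory) ODE theory, so the estimate closes with no continuity argument. It also yields a bound at the natural scale of $\|\cdot\|_W$, uniform in $r_0$.

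You also give an actual argument for continuity and for $\dot{\mathcal S}=d\mathcal S$, which the paper only asserts. One caveat there: the coefficients of the linearized system depend on $F_1$ directly, through $\dot D_0F_1/D_0^2$, not only on $\mathcal S(w,F_1)$. So continuity must also be measured in $\|F_1-F_1'\|_{TY}$, with that term again handled by Cauchy--Schwarz.

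One piece of bookkeeping needs correcting. The integrating factor has size $r^{5/2}$, so a forcing of size $r^{-2}$ times a bounded quantity only yields $|\dot A_i|\lesssim r^{-1}$. What you need, and what actually holds, is $r^{-5/2}$ times a bounded quantity. Several terms sit exactly at this threshold, for example $\dot A_0\frac{\Lambda r}{2}\frac{B_1}{D_0}$, $\frac{D_1}{D_0^2}\frac{\dot P_1}{2\sqrt r}$, $\dot B_0\frac{\Lambda r A_1}{2D_0}$ and $\dot D_1\frac{F_0}{D_0^2}$.

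These borderline terms enter $r^{3/2}|\dot A_1|$ with $O(1)$ coefficients, not small ones. They multiply $\sup|\dot B_i|$, $\sup r^{-2}|\dot D_i|$, $\sup|\dot P_1|$ and $\sup r^{3/2}|\dot A_0|$. So the inequality $N\le C(\text{data})+Cr_0^{-1/2}N$ only follows after you first substitute the bounds for $\dot B_i,\dot D_i,\dot P_1$ and then for $\dot A_0$, each of which is data plus $O(r_0^{-1})N$. In other words, the closure uses the triangular structure, which your ordering of the steps does respect.

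Finally, the coupling $D_1B_0\dot A_1/D_0^2$ that worried you gains a factor $r^{-1}\log(r/r_0)\le (e r_0)^{-1}$. It absorbs directly, so the $r^{3/2-\epsilon}$ fallback is not needed.
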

\begin{proof}
    We make the notation convention for the purpose of this proof that we write $x\lesssim y$ for any quantities $x,y>0$ if there is a constant $C(\Lambda,M)>0$ such that $x\leq Cy.$ We first have that \eqref{dr dot P2} implies the bound:
    \[|\dot{P}_2|\leq|\dot{p}_2|+r_0^{-1/2}\|\dot{F}_1\|_{TY}.\]
    
    We make the following bootstrap assumption, for some constant $K>0$ to be determined in the course of the proof:
    \begin{equation}\label{boostrap assumption dot A0 A1}
        \int_{r_0}^R\dot{A}_0^2(r)+r\dot{A}_1^2(r)dr<K^2.
    \end{equation}
    As a consequence, we get from \eqref{dr dot B0}, \eqref{dr dot D0}, \eqref{dr dot P1}, and \eqref{dot F0} that for all $r\in[r_0,R]$:
    \[|\dot{B}_0|\lesssim |\dot{b}_0|+Kr_0^{-1},\ |\dot{D}_0|\lesssim|\dot{d}_0|+|\dot{b}_0|r^2+Kr^2r_0^{-1},\]\[|\dot{P}_1|\lesssim|\dot{p}_1|+|\dot{b}_0|+Kr_0^{-1/2},\ |\dot{F}_0|\lesssim|\dot{p}_1|r^{-1/2}+|\dot{b}_0|r^{-1/2}+Kr^{-1/2}r_0^{-1/2}+|\dot{A}_1|.\] 
    Similarly to the derivation of \eqref{integrated eq for A0}, we integrate the equation \eqref{dr dot A0} for $\dot{A}_0$:
    \[\dot{A}_0(r)=\frac{\dot{a}_0d_0\sqrt{r_0}}{D_0\sqrt{r}}+\frac{1}{-D_0\sqrt{r}}\int_{r_0}^r\sqrt{r'}\bigg(\dot{F}_0+F_0\frac{\dot{D}_0}{-D_0}+\frac{\Lambda A_0 \dot{B}_0 r'}{2}+\frac{\Lambda A_0 B_0 \dot{D}_0 r'}{-2D_0}\bigg)dr'.\]
    Using the above pointwise estimates and the bootstrap assumption \eqref{boostrap assumption dot A0 A1}, we get:
    \begin{equation}\label{pointwise bound for dot A0}
        |\dot{A}_0|\lesssim r_0^{5/2}r^{-5/2}|\dot{a}_0|+r^{-3/2}\big(|\dot{b}_0|+|\dot{d}_0|+|\dot{p}_1|\big)+r^{-3/2}r_0^{-1/2}K.
    \end{equation}
    
    Next, we integrate equations \eqref{dr dot B1} and \eqref{dr dot D1}, using \eqref{pointwise bound for dot A0}, the previous pointwise estimates, and the bootstrap assumption \eqref{boostrap assumption dot A0 A1}:
    \[|\dot{B}_1|\lesssim|\dot{a}_0|+|\dot{b}_0|+|\dot{b}_1|+|\dot{d}_0|+|\dot{p}_1|+Kr_0^{-3/2}.\]
    \[|\dot{D}_1|\lesssim|\dot{d}_1|+r^2\big(|\dot{a}_0|+|\dot{b}_0|+|\dot{d}_0|+|\dot{b}_1|+|\dot{p}_1|+Kr_0^{-3/2}\big).\]
    Similarly to the derivation of \eqref{integrated eq for A0}, we can integrate equation \eqref{dr dot A1} for $\dot{A}_1$. Using the previously established bounds and the bootstrap assumption \eqref{boostrap assumption dot A0 A1}, one eventually obtains the following pointwise estimate for $\dot{A}_1$:
    \begin{align*}
        |\dot{A}_1|\lesssim & r_0^{5/2}r^{-5/2}\big(|\dot{a}_1|+|\dot{a}_0|\log r\big)+\big\|\dot{F}_1\big\|_{TY}r^{-5/2}\log r\\ &+r^{-3/2}\sum_{i=0}^1\big(|\dot{a}_i|+|\dot{b}_i|+|\dot{d}_i|+|\dot{p}_{i+1}|\big)+Kr^{-3/2}r_0^{-1/2}.
    \end{align*}
    We combine the pointwise bounds for $\dot{A}_0$ and $\dot{A}_1$ to bound the LHS of \eqref{boostrap assumption dot A0 A1}:
    \[\int_{r_0}^R\dot{A}_0^2(r)+r\dot{A}_1^2(r)dr\lesssim\big(|\dot{a}_1|^2+|\dot{a}_0|^2\big)r_0^3+\sum_{i=0}^1\big(|\dot{b}_i|^2+|\dot{d}_i|^2+|\dot{p}_{i+1}|^2\big)r_0^{-1}+\big\|\dot{F}_1\big\|_{TY}^2r_0^{-2}+K^2r_0^{-2}.\]
    
    By taking the constant $K$ to be:
    \[K^2=\big(|\dot{a}_1|^2+|\dot{a}_0|^2\big)r_0^5+\sum_{i=0}^1\big(|\dot{b}_i|^2+|\dot{d}_i|^2+|\dot{p}_{i+1}|^2\big)+\big\|\dot{F}_1\big\|_{TY}^2,\]
    we improve the bootstrap assumption \eqref{boostrap assumption dot A0 A1}, provided that $R_0$ is sufficiently large in terms of $\Lambda$ and $M$. We obtain that the ODE system \eqref{dr dot A1}-\eqref{dot data} has global solutions. Moreover, the proof implies that $\dot{A}_0,\dot{A}_1\in Z_1$, $\dot{B}_0,\dot{B}_1\in Z_2,$ $\dot{D}_0,\dot{D}_1\in Z_3,$ and $\dot{P}_1,\dot{P}_2\in Z_4.$ Thus, the linear map $\dot{\mathcal{S}}_{(w,F_1)}:\mathbb{R}^8\times TY\rightarrow Z$ is well-defined and bounded, since we have:
    \[\big\|\dot{\mathcal{S}}_{(w,F_1)}(\dot{w},\dot{F}_1)\big\|_Z\lesssim\big(|\dot{a}_1|+|\dot{a}_0|\big)r_0^2+\sum_{i=0}^1\big(|\dot{b}_i|+|\dot{d}_i|+|\dot{p}_{i+1}|\big)+\big\|\dot{F}_1\big\|_{TY}.\]

    Using similar arguments, it is straightforward to prove that the map $\dot{\mathcal{S}}_{(w,F_1)}$ is continuous in $(w,F_1)$ and $\dot{\mathcal{S}}_{(w,F_1)}=d\mathcal{S}_{(w,F_1)}$. We conclude that the map $\mathcal{S}:W\times Y\rightarrow Z$ is $C^1.$
    
    \end{proof}

In order to find $\underline{F_1}$ that satisfies the vanishing conditions in \eqref{vanishing conditions F1 bar}, we define the map: 
\[\mathcal{C}:W\times\mathbb{R}\times Y\rightarrow\mathbb{R}^3,\] 
\[\mathcal{C}(w,p_0,F_1)=\bigg(\int_{r_0}^{\infty}F_1dr+p_2,\ \int_{r_0}^{\infty}\frac{B_0A_1}{\sqrt{r}}dr+p_1,\ \int_{r_0}^{\infty}\frac{A_0}{\sqrt{r}}dr+p_0\bigg),\]
where $(A_0,A_1,B_0,B_1,D_0,D_1,P_1,P_2)=\mathcal{S}(w,F_1)$. 

Using \Cref{S is well defined lemma} and \Cref{dot S is well defined lemma}, we have that $\mathcal{C}$ is $C^1$ and:
\[d\mathcal{C}_{(w,p_0,F_1)}(0,0,\dot{F}_1)=\bigg(\int_{r_0}^{\infty}\dot{F}_1dr,\ \int_{r_0}^{\infty}\frac{B_0\dot{A}_1+\dot{B}_0A_1}{\sqrt{r}}dr,\ \int_{r_0}^{\infty}\frac{\dot{A}_0}{\sqrt{r}}dr\bigg),\]
where $(\dot{A}_0,\dot{A}_1,\dot{B}_0,\dot{B}_1,\dot{D}_0,\dot{D}_1,\dot{P}_0,\dot{P}_1)=d\mathcal{S}_{(w,F_1)}(0,\dot{F}_1).$

We establish the essential non-degeneracy condition required for applying the Implicit Function Theorem in \Cref{IFT corollary}:
\begin{lemma}\label{dC is surjective lemma}
    The map $d\mathcal{C}_{(w,p_0,F_1)}(0,0,\cdot):TY\rightarrow\mathbb{R}^3$ is surjective.
\end{lemma}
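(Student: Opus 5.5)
Since the target space is $\mathbb{R}^3$, it suffices to produce three test perturbations $\dot{F}_1^{(1)},\dot{F}_1^{(2)},\dot{F}_1^{(3)}\in TY$ whose images under $d\mathcal{C}_{(w,p_0,F_1)}(0,0,\cdot)$ are linearly independent. The plan is to analyze the linearized system \eqref{dr dot A1}-\eqref{dot data} with zero data $\dot{w}=0$ and to show that, for $r_0\geq R_0$ large, the map is a small perturbation of an explicit ``model'' map whose surjectivity is elementary.

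First I would identify the leading-order structure. With $\dot w=0$, the estimates in the proof of \Cref{dot S is well defined lemma} give that $\dot{B}_0,\dot{D}_0,\dot{P}_1$ are lower order, controlled by $Kr_0^{-1}$, $Kr^2r_0^{-1}$ and $Kr_0^{-1/2}$, where $K\simeq\|\dot{F}_1\|_{TY}$. The component $\dot{A}_1$ is driven by $\dot{F}_1/(-D_0)$, with the integrating factor $\frac{1}{2r}+\frac{\Lambda B_0r}{2D_0}\approx\frac{1}{2r}+\frac{1}{r}$, since $D_0\approx\frac{\Lambda}{2}B_0r^2$ up to lower order when $d_0\approx\frac{\Lambda}{2}r_0^2$. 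Hence
\[\dot{A}_1(r)\approx\frac{1}{r^{3/2}}\int_{r_0}^r\frac{2\sqrt{r'}\,\dot{F}_1(r')}{|\Lambda|B_0}\frac{dr'}{1}\cdot\frac{1}{\ldots},\]
schematically a weighted integral $\mathcal{K}_1\dot{F}_1$. Similarly, $\dot{A}_0\approx\mathcal{K}_0(B_0\dot{A}_1)$, again through the $A_0$-type integrating factor, since $\dot{F}_0\approx B_0\dot{A}_1$ at leading order. The three components of $d\mathcal{C}$ then become, up to errors of relative size $O(r_0^{-1/2})$ (using $|A_0|,|A_1|\lesssim r^{-3/2}$), three explicit linear functionals of the form $\int_{r_0}^\infty \dot{F}_1 k_j(r)\,dr$, obtained by Fubini. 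The kernels are $k_1\equiv1$, $k_2(r)\sim c\,r^{1/2}\int_r^\infty s^{-2}\,ds\sim c\,r^{-1/2}$, and $k_3(r)\sim c\,r^{1/2}\int_r^\infty s^{-2}\cdots$, which carries a different power or logarithm.

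Second, I would check that the three model kernels are linearly independent as functions on $[r_0,\infty)$; this is the key computation. Once that holds, one can choose compactly supported bump functions $\dot{F}_1^{(j)}$ supported in, say, $[r_0,4r_0]$, such that the $3\times3$ matrix of model values is invertible with a quantitative lower bound on its determinant after rescaling $r=r_0x$. The point of the rescaling is that the model kernels become fixed functions of $x$ (times powers of $r_0$), so the invertibility constant is uniform in $r_0$.

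Third, the error terms are controlled by the bounds already in the proof of \Cref{dot S is well defined lemma}. They carry extra factors of $r_0^{-1/2}$ or $M r_0^{-1}$ relative to the model, so for $R_0$ large depending on $\Lambda$ and $M$ the perturbed matrix remains invertible. This gives surjectivity.

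The main obstacle is the second step. The kernels arising from $\dot{A}_1$ and from $\dot{A}_0$ may have the same leading power law, which would make the model matrix degenerate at leading order. If so, I would instead exploit the iterated-integral structure directly: choose $\dot{F}_1^{(j)}$ with $\int\dot{F}_1^{(j)}=0$ but nonzero first and second moments. This separates the three functionals through the distinct nested integrations, $\dot{P}_2$ versus $\dot{P}_1$ versus $\dot{A}_0$, rather than through their power-law weights.
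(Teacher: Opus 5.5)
Your reduction is fine: $d\mathcal{C}_{(w,p_0,F_1)}(0,0,\cdot)$ is represented by three kernels $k_1\equiv1,k_2,k_3$, and surjectivity is equivalent to their linear independence. But the two steps that carry the content are not done correctly, so there is a genuine gap.

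First, the model is misidentified. The integrating factor in \eqref{dr dot A0} is exactly $\sqrt{r}\,|D_0|$, and the same holds up to $e^{\int A_0^2}$ in \eqref{dr dot A1}. Since $D_0\approx\frac{\Lambda}{4}B_0r^2$ (not $\frac{\Lambda}{2}B_0r^2$), the homogeneous decay is $r^{-5/2}$, so $k_2(s)\sim s^{-3/2}$ rather than $s^{-1/2}$. Moreover, $\dot{F}_0$ is not $B_0\dot{A}_1$ to leading order. The term $\dot{P}_1/(2\sqrt{r})$ is comparable on the scale of the support of $\dot{F}_1$ and dominates beyond it, and it contributes to $k_3$ at leading order. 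Done correctly, the leading kernels are $1$, $c_2s^{-3/2}$, $c_3s^{-5/2}$ with $c_2,c_3>0$. So independence is true, but that is exactly the step you leave open, and the zero-mean/moment fallback does not replace it. Also, on $[r_0,4r_0]$ the kernels depend on $b_0$ and $d_0r_0^{-2}$ through the shape of $D_0$. They are therefore not fixed functions of $x$, and you would need a determinant bound uniform over the parameter range allowed in $W$.

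The more serious gap is your third step. The bounds from \Cref{dot S is well defined lemma} are far too lossy. Take $\dot{F}_1(r)=r_0^{-1}f(r/r_0)$ with $\int\dot{F}_1=1$. Then $K=\|\dot{F}_1\|_{TY}\simeq r_0^{1/2}$, so that lemma only gives $|\dot{A}_1|\lesssim r_0^{1/2}r^{-5/2}\log r+r^{-3/2}$, $|\dot{A}_0|\lesssim r^{-3/2}$, $|\dot{P}_1|\lesssim1$. Hence it only gives $O(r_0^{-1})$ for the second and third components, which exceeds their model values $\sim r_0^{-3/2}$ and $\sim r_0^{-5/2}$. Since the rows of your matrix live at different scales, the errors must be small relative to each row, and that requires sharp asymptotics for the linearized system. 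Nor are the errors uniformly of relative size $r_0^{-1/2}$ or $Mr_0^{-1}$. Couplings such as $D_1\dot{F}_0/D_0^2$, with $D_1/D_0=O(M)$, produce an $r^{-3/2}$ tail in $\dot{A}_1$ that dominates the model $r^{-5/2}$ behaviour at large $r$.

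The paper handles this as follows:
\begin{itemize}
\item It obtains sharp bounds by iterating the linearized equations several times, using that all dotted quantities vanish before the support of $\dot{F}_1$ (from \eqref{preliminary dot bounds 1} to \eqref{sharp dot bounds 1}-\eqref{sharp dot bounds 2}).
\item It places the bump at $r_1\to\infty$ with $r_0$ fixed, so the three functionals have sizes $1$, $\gtrsim r_1^{-3/2}$, $\gtrsim r_1^{-5/2}$. The lower bounds come from positivity of the leading nested integrals.
\item It concludes $\eta_1=\eta_2=\eta_3=0$ by triangular elimination.
\end{itemize}
Letting $r_1\to\infty$ avoids both the uniform-determinant issue and any need for $r_0$ to beat the row scales. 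Your rescaled version could only be completed after supplying the same sharp bounds.
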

\begin{proof}
    We make the notation convention for the purpose of this proof that we write $x\lesssim y$ for any quantities $x,y>0$ if there is a constant $C(\Lambda,M,r_0)>0$ such that $x\leq Cy.$
    
    We assume the contrary, so there exists $(\eta_1,\eta_2,\eta_3)\in\mathbb{R}^3\backslash\{(0,0)\}$ such that for all $\dot{F}_1\in TY$:
    \begin{equation}\label{linear dep eq}
        \eta_1\int_{r_0}^{\infty}\dot{F}_1dr+\eta_2\int_{r_0}^{\infty}\frac{B_0\dot{A}_1+\dot{B}_0A_1}{\sqrt{r}}dr+\eta_3\int_{r_0}^{\infty}\frac{\dot{A}_0}{\sqrt{r}}dr=0.
    \end{equation}

    For any $r_1>r_0$ sufficiently large, we consider $\dot{F}_1$ to be a smooth non-negative bump function supported in $[r_1,r_1+1]$, such that:
    \begin{equation}\label{conditions on F1 dot}
        \int_{r_0}^{\infty}\dot{F}_1(r)dr=1,\ \frac{\sqrt{r_1}}{2}<\int_{r_0}^{\infty}\sqrt{r}\dot{F}_1(r)dr<2\sqrt{r_1},\ \frac{r_1}{2}<\|\dot{F}_1\|_{TY}<2r_1.
    \end{equation}
    As a result, we have that $\dot{A}_i=\dot{B}_i=\dot{D}_i=\dot{P}_{i+1}=0$ for $i=1,2$ on $[r_0,r_1].$ 

    We first apply \Cref{dot S is well defined lemma}, and notice that $r_1/2<K<2r_1$ because of \eqref{conditions on F1 dot}. Thus, the pointwise bounds on $\dot{A}_0$ and $\dot{A}_1$ proved in \Cref{dot S is well defined lemma} imply that $\dot{A}_0,\dot{A}_1\lesssim r_1r^{-3/2}.$ Using these bounds in equations \eqref{dot F0}-\eqref{dr dot P1} and \eqref{dr dot B1}-\eqref{dr dot P2}, we obtain the following preliminary bounds:
    \begin{equation}\label{preliminary dot bounds 1}
        \dot{A}_0,\dot{A}_1\lesssim r_1r^{-3/2},\ \dot{B}_0,\dot{B}_1\lesssim r_1^{-1},\ \dot{D}_0,\dot{D}_1\lesssim r_1^{-1}r^2,\ \dot{P}_1,\dot{P}_2\lesssim1,\ \dot{F}_0\lesssim r^{-1/2}.
    \end{equation}
    Our strategy is to use the preliminary bounds in equations \eqref{dr dot A0}, \eqref{dr dot A1} to improve the bounds for $\dot{A}_0$ and $\dot{A}_1$, then use equations \eqref{dot F0}-\eqref{dr dot P1} and \eqref{dr dot B1}-\eqref{dr dot P2} to improve the bounds for the remaining quantities. We do not write the estimates in detail here, but we notice that integrating \eqref{dr dot A0} and \eqref{dr dot A1} implies:
    \[\big|\dot{A}_0\big|(r)\lesssim r^{-5/2}\int_{r_1}^r\sqrt{\Tilde{r}}\big|\dot{F_0}\big|+\big|\dot{B_0}\big|+\Tilde{r}^{-2}\big|\dot{D_0}\big|d\Tilde{r},\]
    \[\big|\dot{A}_1\big|(r)\lesssim r^{-5/2}\int_{r_1}^r\sum_{i=0}^1\bigg[\sqrt{\Tilde{r}}\big|\dot{F_i}\big|+\big|\dot{B_i}\big|+\Tilde{r}^{-2}\big|\dot{D_i}\big|\bigg]+\Tilde{r}^{-5/2}\big|\dot{P_2}\big|+\Tilde{r}^{3/2}\big|\dot{A_0}\big|+\Tilde{r}^{-3/2}\big|\dot{D_0} F_1\big|d\Tilde{r}.\]
    After the first iteration, we obtain the improved preliminary bounds:
    \begin{equation}\label{preliminary dot bounds 2}
        \dot{A}_0,\dot{A}_1\lesssim r^{-3/2},\ \dot{B}_0,\dot{B}_1\lesssim r_1^{-2},\ \dot{D}_0,\dot{D}_1\lesssim r_1^{-2}r^2,\ \dot{P}_1\lesssim r_1^{-1},\ \dot{P}_2\lesssim 1,\ \dot{F}_0\lesssim r_1^{-1}r^{-1/2}.
    \end{equation}
    Repeating our argument by starting with \eqref{preliminary dot bounds 2}, we further improve to: 
    \[\dot{A}_0\lesssim r_1^{-1}r^{-3/2},\ \dot{A}_1\lesssim r_1^{-1}r^{-3/2}+\sqrt{r_1}r^{-5/2},\ \dot{B}_0\lesssim r_1^{-3},\ \dot{B}_1\lesssim r_1^{-5/2},\] \[\dot{D}_0\lesssim r_1^{-3}r^2,\ \dot{D}_1\lesssim r_1^{-5/2}r^2,\ \dot{P}_1\lesssim r_1^{-3/2},\ \dot{P}_2\lesssim1,\ \dot{F}_0\lesssim r_1^{-3/2}r^{-1/2}.\]
    Finally, we iterate our argument one last time to obtain the sharp pointwise bounds:
    \begin{align}
        \dot{A}_0\lesssim r_1^{-3/2}r^{-3/2},\ \dot{A}_1\lesssim r_1^{-3/2}r^{-3/2}+\sqrt{r_1}r^{-5/2},\ \dot{B}_0\lesssim r_1^{-7/2},\ \dot{B}_1\lesssim r_1^{-5/2},\label{sharp dot bounds 1}\\
        \dot{D}_0\lesssim r_1^{-7/2}r^2,\ \dot{D}_1\lesssim r_1^{-5/2}r^2,\ \dot{P}_1\lesssim r_1^{-3/2},\ \dot{P}_2\lesssim1,\ \dot{F}_0\lesssim r_1^{-3/2}r^{-1/2}.\label{sharp dot bounds 2}
    \end{align}

    We use \eqref{sharp dot bounds 1}-\eqref{sharp dot bounds 2} to bound the terms in \eqref{linear dep eq} as follows:
    \begin{equation}\label{bound for each term in lin dep eq}
        \int_{r_0}^{\infty}\frac{\big|B_0\dot{A}_1\big|}{\sqrt{r}}dr\lesssim r_1^{-3/2},\ \int_{r_0}^{\infty}\frac{\big|\dot{B}_0A_1\big|}{\sqrt{r}}dr\lesssim r_1^{-7/2},\ \int_{r_0}^{\infty}\frac{\big|\dot{A}_0\big|}{\sqrt{r}}dr\lesssim r_1^{-5/2}.
    \end{equation}
    We notice that equations~\eqref{linear dep eq}, \eqref{conditions on F1 dot}, and \eqref{bound for each term in lin dep eq} imply that $|\eta_1|\lesssim|\eta_2|r_1^{-3/2}+|\eta_3|r_1^{-5/2}$. Since $r_1$ can be arbitrarily large, we obtain that $\eta_1=0.$ 

    Integrating equation \eqref{dr dot A1}, we have that:
    %\[\dot{A}_1(r)=\frac{1}{D_0\sqrt{r}}\int_{r_1}^{\infty}\sqrt{r'}\dot{F}_1(r')\exp\bigg(-\int_{r'}^rA_0^2dr''\bigg)dr'+O\big(r_1^{-3/2}r^{-3/2}\big).\]
    \[\dot{A}_1(r)=\frac{1}{-D_0\sqrt{r}}\int_{r_1}^{r}\sqrt{\Tilde{r}}\dot{F}_1(\Tilde{r})d\Tilde{r}+O\big(r_1^{-3/2}r^{-3/2}\big).\]
    We note that the first term on the RHS is non-negative. Using \eqref{bounds for B0 D0 P1} and \eqref{conditions on F1 dot}, we obtain that for $r>r_1+1$:
    \begin{equation}\label{exact rate for dot A1}
        \dot{A}_1(r)\sim\sqrt{r_1}r^{-5/2}+O\big(r_1^{-3/2}r^{-3/2}\big).
    \end{equation}
    Therefore, equations \eqref{bounds for B0 D0 P1}, \eqref{bound for each term in lin dep eq}, and \eqref{exact rate for dot A1} imply that:
    \[\int_{r_0}^{\infty}\frac{B_0\dot{A}_1+\dot{B}_0A_1}{\sqrt{r}}dr>\int_{r_1+1}^{\infty}\frac{B_0\dot{A}_1}{\sqrt{r}}dr+O\big(r_1^{-5/2}\big)\gtrsim r_1^{-3/2}+O\big(r_1^{-5/2}\big).\]
    We use this estimate in equation \eqref{linear dep eq}, in order to get that $|\eta_2|r_1^{-3/2}\lesssim|\eta_3|r_1^{-5/2}$. Since $r_1$ can be arbitrarily large, we obtain that $\eta_2=0.$ 

    Integrating equation \eqref{dr dot P1}, we have that:
    \[\dot{P}_1(r)=\int_{r_1}^{r}\frac{B_0}{\sqrt{r'}}\bigg(\frac{1}{-D_0\sqrt{r'}}\int_{r_1}^{r'}\sqrt{\Tilde{r}}\dot{F}_1(\Tilde{r})d\Tilde{r}\bigg) dr'+O\big(r_1^{-5/2}\big).\] 
    The first term on the RHS is non-negative, and we get that for all $r>2r_1:$
    \[\dot{P}_1(r)\sim r_1^{-3/2}+O\big(r_1^{-5/2}\big).\] 
    According to \eqref{dot F0}, we also get the equation:
    \[\dot{F}_0(r)=\frac{B_0}{-D_0\sqrt{r}}\int_{r_1}^{r}\sqrt{\Tilde{r}}\dot{F}_1(\Tilde{r})d\Tilde{r}+\frac{1}{2\sqrt{r}}\int_{r_1}^{r}\frac{B_0}{\sqrt{r'}}\bigg(\frac{1}{-D_0\sqrt{r'}}\int_{r_1}^{r'}\sqrt{\Tilde{r}}\dot{F}_1(\Tilde{r})d\Tilde{r}\bigg) dr'+O\big(r_1^{-5/2}r^{-1/2}\big),\]
    where the first two terms are non-negative. Thus, for all $r>2r_1:$
    \[\dot{F}_0(r)\gtrsim r_1^{-3/2}r^{-1/2}+O\big(r_1^{-5/2}r^{-1/2}\big).\]
    Next, we integrate equation \eqref{dr dot A0}: 
    %\[\dot{A}_0(r)=\frac{1}{D_0\sqrt{r}}\int_{r_0}^r\sqrt{r'}\bigg(\dot{F}_0+F_0\frac{\dot{D}_0}{-D_0}+\frac{\Lambda A_0 \dot{B}_0 r'}{2}+\frac{\Lambda A_0 B_0 \dot{D}_0 r'}{-2D_0}\bigg)dr'\]
    \begin{align*}
        \dot{A}_0(r)&=\frac{1}{-D_0\sqrt{r}}\int_{r_1}^r\sqrt{r'}\dot{F}_0(r')dr'+O\big(r_1^{-5/2}r^{-3/2}\big)\\
        &=\frac{1}{-2D_0\sqrt{r}}\int_{r_1}^r\bigg[\int_{r_1}^{r'}\frac{B_0}{\sqrt{r''}}\bigg(\frac{1}{-D_0\sqrt{r''}}\int_{r_1}^{r''}\sqrt{\Tilde{r}}\dot{F}_1(\Tilde{r})d\Tilde{r}\bigg) dr''\bigg]dr'\\
        &\ +\frac{1}{-D_0\sqrt{r}}\int_{r_1}^r\bigg(\frac{B_0}{-D_0}\int_{r_1}^{r'}\sqrt{\Tilde{r}}\dot{F}_1(\Tilde{r})d\Tilde{r}\bigg)dr'+O\big(r_1^{-5/2}r^{-3/2}\big).
    \end{align*}
    We notice that the first two terms are non-negative and for any $r>3r_1$, we have:
    \[\dot{A}_0(r)\gtrsim r_1^{-3/2}r^{-3/2}+O\big(r_1^{-5/2}r^{-3/2}\big).\]
    As a result, we also have:
    \[\int_{r_0}^{\infty}\frac{\dot{A}_0}{\sqrt{r}}dr>\int_{3r_1}^{\infty}\frac{\dot{A}_0}{\sqrt{r}}dr+O\big(r_1^{-7/2}\big)\gtrsim r_1^{-5/2}.\]
    Thus, equation \eqref{linear dep eq} implies that $\eta_3=0$ as well, which contradicts $(\eta_1,\eta_2,\eta_3)\neq(0,0,0).$
\end{proof}

According to \Cref{S is well defined lemma}, \Cref{dot S is well defined lemma}, and \Cref{dC is surjective lemma}, the map $\mathcal{C}$ satisfies the required conditions to apply the Implicit Function Theorem. We obtain the following result:
    \begin{corollary}\label{IFT corollary}
        Let $(w,p_0,F_1)\in W\times\mathbb{R}\times Y$ such that $\mathcal{C}(w,p_0,F_1)=0$ and $r_0\geq R_0$ is large enough, as required in Lemmas~\ref{S is well defined lemma}, \ref{dot S is well defined lemma}, and \ref{dC is surjective lemma}. There exist neighborhoods $(w,p_0)\in U\subset W\times\mathbb{R}$, $F_1\in V\subset Y,$ and a $C^1$ function $\mathcal{F}:U\rightarrow V$ such that $F_1=\mathcal{F}(w,p_0)$ and  $\mathcal{C}\big(\underline{w},\underline{p_0},\mathcal{F}(\underline{w},\underline{p_0})\big)=0$ for any $(\underline{w},\underline{p_0})\in U.$
    \end{corollary}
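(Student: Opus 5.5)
This is a direct application of the Implicit Function Theorem in Banach spaces, applied to the $C^1$ map $\mathcal{C}:W\times\mathbb{R}\times Y\rightarrow\mathbb{R}^3$. The one subtlety is that the derivative in the $F_1$ variable is surjective onto $\mathbb{R}^3$ but has an infinite-dimensional kernel, so it is not invertible. I would therefore first reduce to a finite-dimensional problem and apply the classical IFT there.

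\textbf{Step 1 (regularity and openness).} By \Cref{S is well defined lemma} and \Cref{dot S is well defined lemma}, the map $\mathcal{S}$ is $C^1$ from $W\times Y$ to $Z$. The three components of $\mathcal{C}$ are compositions of $\mathcal{S}$ with bounded linear functionals on $Z$ and $TY$:
\begin{itemize}
\item $F_1\mapsto\int_{r_0}^\infty F_1\,dr$ is bounded on $TY$ by Cauchy--Schwarz, since $\int_{r_0}^\infty r^{-2}\,dr<\infty$.
\item $A_0\mapsto\int_{r_0}^\infty A_0r^{-1/2}\,dr$ is bounded on $Z_1$, since $|A_0|r^{-1/2}\le\|A_0\|_{Z_1}r^{-2}$.
\item $(B_0,A_1)\mapsto\int_{r_0}^\infty B_0A_1r^{-1/2}\,dr$ is a bounded bilinear form on $Z_2\times Z_1$, for the same reason.
\end{itemize}
Hence $\mathcal{C}$ is $C^1$, with the derivative formula already recorded. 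Note also that $W\times\mathbb{R}$ is an open subset of $\mathbb{R}^9$ and $Y$ is an open ball in the Hilbert space $TY$, so the ambient setting is suitable for the IFT.

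\textbf{Step 2 (finite-dimensional reduction).} By \Cref{dC is surjective lemma}, the linear map $L:=d\mathcal{C}_{(w,p_0,F_1)}(0,0,\cdot):TY\rightarrow\mathbb{R}^3$ is surjective. I would choose $\dot F^{(1)},\dot F^{(2)},\dot F^{(3)}\in TY$ whose images under $L$ form a basis of $\mathbb{R}^3$; for instance, take preimages of the standard basis vectors. Then define
\[
\Phi(\underline{w},\underline{p_0},t)=\mathcal{C}\Big(\underline{w},\underline{p_0},F_1+\sum_{j=1}^3 t_j\dot F^{(j)}\Big),\qquad t\in\mathbb{R}^3 .
\]
This $\Phi$ is $C^1$ on an open neighbourhood of $(w,p_0,0)$ in $\mathbb{R}^9\times\mathbb{R}^3$; this is well defined because $Y$ is open, so $F_1+\sum_j t_j\dot F^{(j)}\in Y$ for small $t$. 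We have $\Phi(w,p_0,0)=0$, and $\partial_t\Phi(w,p_0,0)$ is the matrix with columns $L\dot F^{(j)}$, which is invertible.

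\textbf{Step 3 (conclusion).} The classical finite-dimensional IFT gives a neighbourhood $U$ of $(w,p_0)$ and a $C^1$ map $t(\cdot)$ on $U$ with $t(w,p_0)=0$ and $\Phi(\underline w,\underline{p_0},t(\underline w,\underline{p_0}))=0$. I would then set
\[
\mathcal{F}(\underline w,\underline{p_0})=F_1+\sum_{j=1}^3 t_j(\underline w,\underline{p_0})\dot F^{(j)} .
\]
This map is $C^1$ into $TY$, satisfies $\mathcal{F}(w,p_0)=F_1$, and after shrinking $U$ it takes values in a prescribed neighbourhood $V\subset Y$ of $F_1$. By construction $\mathcal{C}(\underline w,\underline{p_0},\mathcal{F}(\underline w,\underline{p_0}))=0$ for all $(\underline w,\underline{p_0})\in U$. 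Shrinking $U$ further if needed, I would also record the estimate
\[
\|\mathcal{F}(\underline w,\underline{p_0})-F_1\|_{TY}\lesssim\|\underline w-w\|+|\underline{p_0}-p_0|,
\]
which follows from the $C^1$ bound on $t$ and is what feeds \eqref{vanishing conditions F1 bar L2 conditions} later.

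The only real obstacle is the non-invertibility of $L$ on the infinite-dimensional space $TY$. The reduction in Step 2 avoids needing to construct a closed complement of $\ker L$; alternatively one could use the Banach-space IFT on $\ker L\oplus\mathrm{span}\{\dot F^{(j)}\}$. Everything else is bookkeeping, namely that $\mathcal{C}$ is genuinely $C^1$, which Step 1 establishes.
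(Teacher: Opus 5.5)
Your proposal is correct and follows the paper's route. The paper simply invokes the Implicit Function Theorem, using the $C^1$ regularity from \Cref{S is well defined lemma} and \Cref{dot S is well defined lemma} and the surjectivity from \Cref{dC is surjective lemma}. Your finite-dimensional reduction makes explicit a step the paper leaves implicit: how to apply the theorem when the $F_1$-derivative of $\mathcal{C}$ is surjective but not injective. You restrict $F_1$ to the affine span of three preimages of a basis of $\mathbb{R}^3$ and apply the classical theorem in $t\in\mathbb{R}^3$.
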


    We conclude this section by proving that in the set-up of \Cref{local perturbation extension proposition}, we can extend $\partial_r\underline{\mathcal{T}}\big(\sqrt{r}\underline{\gamma\mathcal{T}\phi}\big)$ to $[r_0,\infty),$ while satisfying the vanishing conditions \eqref{vanishing conditions F1 bar L2 conditions}-\eqref{vanishing conditions F1 bar}:
\begin{proposition}\label{existence of F1 proposition}
	Let $\mathfrak{D}$ be a $C^2$ asymptotically AdS characteristic data set. We fix $R_0>0$ large enough depending on $\mathfrak{D}$ and $\Lambda$, so that Lemmas~\ref{S is well defined lemma}, \ref{dot S is well defined lemma}, \ref{dC is surjective lemma}, and \Cref{IFT corollary} apply for any $r_0\geq R_0$ with:
    \begin{equation}\label{w p0 F1}
        w=\big(\theta(r_0),\mathcal{T}\theta(r_0),\ldots,\mathcal{T}\big(\sqrt{r}\gamma\mathcal{T}\phi\big)(r_0)\big),\ p_0=\phi(r_0), \  F_1=\partial_r\mathcal{T}\big(\sqrt{r}\gamma\mathcal{T\phi}\big).
    \end{equation}

    There exists a constant $C_1>0,$ such that the following holds: for all $\varepsilon>0$ small enough in terms of $\mathfrak{D},\Lambda,$ and $r_0$, and any partial data set $\underline{\mathfrak{D}}$ on $[0,r_0]$ which satisfies \eqref{smallness in a finite region} and $\underline{\partial_U^ir}(0)=\partial_U^ir(0)$ for $i\in\{1,2\}$, we denote:
    \[\underline{w}=\big(\underline{\theta}(r_0),\underline{\mathcal{T}\theta}(r_0),\ldots,\underline{\mathcal{T}}\big(\sqrt{r}\underline{\gamma}\underline{\mathcal{T}\phi}\big)(r_0)\big),\ \underline{p_0}=\underline{\phi}(r_0).\]
    Then, there exists $\underline{F_1}:[r_0,\infty)\rightarrow\mathbb{R}$ such that $\underline{F_1}\in Y,\ \|\underline{F_1}-F_1\|_{Y}<C_1\varepsilon,$ and: 
    \begin{equation}\label{vanishing conditions eq}
    \int_{r_0}^{\infty}\partial_r\underline{\mathcal{T}}\big(\sqrt{r}\underline{\gamma\mathcal{T}\phi}\big)dr=-\underline{\mathcal{T}}\big(\sqrt{r}\underline{\gamma\mathcal{T}\phi}\big)(r_0),\ \int_{r_0}^{\infty}\frac{\underline{\gamma\mathcal{T}\theta}}{\sqrt{r}}dr=-\underline{\gamma\mathcal{T}\phi}(r_0),\ \int_{r_0}^{\infty}\frac{\underline{\theta}}{\sqrt{r}}dr=-\underline{\phi}(r_0),
    \end{equation}
where $\big(\underline{\theta},\underline{\mathcal{T}\theta},\ldots,\underline{\mathcal{T}}\big(\sqrt{r}\underline{\gamma}\underline{\mathcal{T}\phi}\big)\big)=\mathcal{S}\big(\underline{w},\underline{F_1}\big).$
\end{proposition}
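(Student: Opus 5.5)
The plan is to apply \Cref{IFT corollary} at the base point \eqref{w p0 F1} and read off the estimate from the $C^1$ regularity of the implicit function. The first step is to check that the base point is admissible, meaning $(w,p_0,F_1)\in W\times\mathbb{R}\times Y$ and $\mathcal{C}(w,p_0,F_1)=0$.

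Since $\mathfrak{D}$ is an asymptotically AdS data set, the asymptotics in \Cref{asympt ads data def} give $\gamma\to2$ and $\partial_Ur\approx\frac{\Lambda}{2}r^2$. Hence $|b_0-2|<1$ and $|d_0-\frac{\Lambda}{2}r_0^2|<\frac{|\Lambda|}{4}r_0^2$ once $R_0$ is large. The decay rates \eqref{asympt behavior phi} give $r^{3/2}|\theta|$, $r^{3/2}|\mathcal{T}\theta|$, $r|\gamma\mathcal{T}\phi|$ and $r^{1/2}|\mathcal{T}(\sqrt r\gamma\mathcal{T}\phi)|$ bounded. Condition \eqref{L2 condition for F1} gives $F_1\in TY$. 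We choose the constant $M$ larger than twice the resulting norms, which is the only choice available. Then $\mathcal{C}(w,p_0,F_1)=0$ is exactly \eqref{vanishing conditions for D}, because by uniqueness of the ODE system \eqref{dr A_1}--\eqref{ODE system data} we have $\mathcal{S}(w,F_1)=(\theta,\mathcal{T}\theta,\dots)$.

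The second step is to show that $(\underline{w},\underline{p_0})$ lies in the neighborhood $U$ of \Cref{IFT corollary}, with $\|\underline{w}-w\|_W+|\underline{p_0}-p_0|\lesssim\varepsilon$. This is the step I expect to be the main obstacle. The distance \eqref{smallness in a finite region} only controls $\mathcal{T}^i\phi$ in weighted Sobolev norms on $[0,r_0]$. The geometric quantities at $r_0$, namely $\gamma$, $\partial_Ur$, $\mathcal{T}\gamma$ and $\mathcal{T}\partial_Ur$, must be recovered by integrating the null constraints \eqref{dr theta}--\eqref{dr T phi} outward from the center $r=0$.

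The plan for this step has three parts.
\begin{itemize}
\item Use the boundary conditions at $\Gamma$ (item 4 of \Cref{asympt ads data def}) together with the normalization $\underline{\partial_U^ir}(0)=\partial_U^ir(0)$. This fixes the central values of $\partial_Ur$ and $\mathcal{T}\partial_Ur$, and by $m|_\Gamma=0$ also of $\gamma$, so the two data sets share initial data at $r=0$.
\item Write the constraint system for the differences. In it, $\partial_r\log\gamma=\theta^2$ and $\partial_r\partial_Ur=\frac{\Lambda}{2}\gamma r$ are controlled in terms of $\|\theta\|_{L^2}$-type quantities. The quantities $\theta$ and $\mathcal{T}\theta$ are in turn expressed through $\phi$, $\mathcal{T}\phi$, $\mathcal{T}^2\phi$, which $d_{X([0,r_0])}$ controls.
\item Close with Gr\"onwall on the compact interval $[0,r_0]$. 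There the background is bounded, with $\partial_Ur\le-1/2$ by monotonicity.
\end{itemize}
The delicate point is the singular weight $1/2r$ near the center. There I would use the Hardy-type bounds from \Cref{mass gap section} and Sobolev embedding, as in \eqref{Hardy mass gap}, so that pointwise values at $r_0$ of $\underline\theta-\theta$, $\underline{\mathcal{T}\theta}-\mathcal{T}\theta$, $\underline{\gamma\mathcal{T}\phi}-\gamma\mathcal{T}\phi$ and $\underline{\phi}-\phi$ are bounded by $C(\mathfrak{D},r_0)\,\varepsilon$. Taking $\varepsilon$ small then places $(\underline w,\underline{p_0})$ in $U$.

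Finally, set $\underline{F_1}=\mathcal{F}(\underline w,\underline{p_0})$. By \Cref{IFT corollary}, $\mathcal{C}(\underline w,\underline{p_0},\underline{F_1})=0$, which is precisely \eqref{vanishing conditions eq} once $\mathcal{S}(\underline w,\underline{F_1})$ is identified with the barred quantities. Since $\mathcal{F}$ is $C^1$, after shrinking $U$ to a ball on which $\|d\mathcal{F}\|$ is bounded we get $\|\underline{F_1}-F_1\|_Y\le\sup\|d\mathcal{F}\|\cdot C\varepsilon=:C_1\varepsilon$. The constant $C_1$ depends only on $\mathfrak{D}$, $\Lambda$ and $r_0$.
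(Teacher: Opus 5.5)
Your proposal is correct and follows the paper's proof essentially step for step. You check that $(w,p_0,F_1)$ is an admissible zero of $\mathcal{C}$, with $M$ fixed from $\mathfrak{D}$ before $R_0$. You then place $(\underline w,\underline{p_0})$ in $U$ by combining \eqref{smallness in a finite region} and the central boundary conditions and integrating the constraints outward; the paper only states the resulting estimates at $r_0$, so your version of this step is more detailed. Finally you set $\underline{F_1}=\mathcal{F}(\underline w,\underline{p_0})$ and use the $C^1$ bound on $\mathcal{F}$.

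One small citation fix: $\|F_1\|_{TY}^2=\int r^2F_1^2\,dr$ requires $rF_1\in L^2$. This is stronger than \eqref{L2 condition for F1}, which only gives $\sqrt r F_1\in L^2$ (the paper makes the same citation). It does follow directly from $\mathcal{T}^2\phi\in H^1(I)$ and $\mathcal{T}\phi\in H^2(I)$ in \eqref{regularity conditions scalar field L2}.
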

\begin{proof}
    We note that \Cref{asympt ads data def}, the null constraint equations \eqref{dr theta}-\eqref{dr T phi}, and \eqref{L2 condition for F1}-\eqref{vanishing conditions for D} imply that there exists $M>0$ depending on $\mathfrak{D}$ such that $w,p_0,$ and $F_1$ defined in \eqref{w p0 F1} satisfy  $w\in W,\ F_1\in Y,$ and $\mathcal{C}(w,p_0,F_1)=0,$ for all $r_0$ sufficiently large in terms of $\mathfrak{D}$ and $\Lambda$. Given this value of $M,$ we fix $R_0$ such that Lemmas~\ref{S is well defined lemma}, \ref{dot S is well defined lemma}, \ref{dC is surjective lemma}, and \Cref{IFT corollary} apply for any $r_0\geq R_0$.
    
    According to \Cref{IFT corollary}, we get neighborhoods $(w,p_0)\in U\subset W\times\mathbb{R}$, $F_1\in V\subset Y,$ and a $C^1$ function $\mathcal{F}:U\rightarrow V$ such that $F_1=\mathcal{F}(w,p_0)$ and  $\mathcal{C}\big(\underline{w'},\underline{p_0'},\mathcal{F}(\underline{w'},\underline{p_0'})\big)=0$ for any $(\underline{w'},\underline{p_0'})\in U.$

    The smallness condition \eqref{smallness in a finite region} and the boundary conditions at $\Gamma$ imply that there exists a constant $C_1'>0$ independent of $\varepsilon$ and $r_0,$ such that the following estimates hold at $r_0:$
    \begin{align}
        \big|\underline{\gamma}-\gamma\big|(r_0)<C_1'\varepsilon,\ &\big|\underline{\partial_Ur}-\partial_Ur\big|(r_0)<C_1'\varepsilon r_0^2,\ \big|\underline{\phi}-\phi\big|(r_0)<C_1'\varepsilon r_0^{-1},\\ \big|\underline{\theta}-\theta\big|(r_0)<C_1'\varepsilon r_0^{-3/2},\
        &\big|\underline{\mathcal{T}\gamma}-\mathcal{T}\gamma\big|(r_0)<C_1'\varepsilon,\ \big|\underline{\mathcal{T}\partial_Ur}-\mathcal{T}\partial_Ur\big|(r_0)<C_1'\varepsilon r_0^2,\\ \big|\underline{\mathcal{T}\phi}-\mathcal{T}\phi\big|(r_0)<C_1'\varepsilon r_0^{-1},\ &\big|\underline{\mathcal{T}\theta}-\mathcal{T}\theta\big|(r_0)<C_1'\varepsilon r_0^{-3/2},\ \big|\underline{\mathcal{T}^2\phi}-\mathcal{T}^2\phi\big|(r_0)<C_1'\varepsilon r_0^{-1}.
    \end{align}
    % scalar field bounds follow from Sobolev embedding (first show r\phi=0 at the center), the rest follows by integrating transport equations using boundary conditions at the center
    
    Therefore, for all $\varepsilon>0$ sufficiently small in terms of $\mathfrak{D},\Lambda,$ and $r_0,$ we have that $(\underline{w},\underline{p_0})\in U$. We define $\underline{F_1}=\mathcal{F}(\underline{w},\underline{p_0}),$ which satisfies $\underline{F_1}\in V$ and $\mathcal{C}(\underline{w},\underline{p_0},\underline{F_1})=0,$ so \eqref{vanishing conditions eq} holds. Since $\mathcal{F}\in C^1$, there exists a constant $C_1>0$ depending on $\mathfrak{D},\Lambda,$ and $r_0,$ such that $\|\underline{F_1}-F_1\|_{Y}\leq C_1\varepsilon.$ 
\end{proof}

\subsection{Proof of \Cref{local perturbation extension proposition}}\label{extending local perturbations section}

\begin{proof}[Proof of \Cref{local perturbation extension proposition}] We prove the result for $k=2,$ and we refer the reader to \Cref{remark on general k} for the general case. Let $\mathfrak{D}$ be a $C^2$ asymptotically AdS characteristic data set. We fix $r_0>0$ large enough in terms of $\mathfrak{D}$ and $\Lambda$, as in the proof of \Cref{existence of F1 proposition}. We make the notation convention for the purpose of this proof that we write $x\lesssim y$ for any quantities $x,y>0$ if there is a constant $C(\Lambda,\mathfrak{D},r_0)>0$ such that $x\leq Cy.$

For any $\varepsilon>0$ sufficiently small in terms of $\mathfrak{D},\Lambda,$ and $r_0$, let $\underline{\mathfrak{D}}$ be a $C^2$ characteristic partial data set satisfying \eqref{smallness in a finite region}. At first, we use double null coordinates adapted to $\mathfrak{D}.$ We choose an ingoing coordinate for $\underline{\mathfrak{D}}$ such that $\underline{\partial_U^ir}(0)=\partial_U^ir(0),\ i\in\{1,2\}$, and we use the asymptotically Eddington-Finkelstein coordinate $V$ of  $\mathfrak{D}$ for $\underline{\mathfrak{D}}$ as well.

According to \Cref{existence of F1 proposition}, there exists $\underline{F_1}\in Y$ such that $\|\underline{F_1}-F_1\|_{Y}<C_1\varepsilon$ and the vanishing conditions \eqref{vanishing conditions eq} hold, where $\big(\underline{\theta},\underline{\mathcal{T}\theta},\ldots,\underline{\mathcal{T}}\big(\sqrt{r}\underline{\gamma}\underline{\mathcal{T}\phi}\big)\big)=\mathcal{S}\big(\underline{w},\underline{F_1}\big).$ As a consequence, we get that:
\begin{equation}\label{Tphi using vanishing}
    \underline{\gamma\mathcal{T}\phi}(r)=-\int_{r}^{\infty}\frac{\underline{\gamma\mathcal{T}\theta}}{\sqrt{r'}}dr'.
\end{equation}
In particular, we improve the pointwise bounds in \Cref{S is well defined lemma} as follows:
\[\big|\underline{\gamma\mathcal{T}\phi}\big|\lesssim r^{-1},\ \big|\partial_r(\sqrt{r}\underline{\gamma\mathcal{T}\phi})\big|\lesssim r^{-3/2}.\]
Integrating equations \eqref{dr A_0} and \eqref{dr A_1} as in the proof of \Cref{S is well defined lemma}, we improve the pointwise bounds \eqref{pointwise bound for A0}, \eqref{pointwise bound for A1}:
\begin{equation}\label{imrpoved pointwise bounds for theta Ttheta}
    |\underline{\theta}|\lesssim r^{-5/2}\log r,\ |\partial_r\underline{\theta}|\lesssim r^{-7/2}\log r,\ |\underline{\mathcal{T}\theta}|\lesssim r^{-5/2}(\log r)^2.
\end{equation}

Using \eqref{vanishing conditions eq}, we can extend $\underline{\phi}$ as a $C^2$ function to $[r_0,\infty)$ by:
    \[\underline{\phi}(r)=-\int_r^{\infty}\frac{\underline{\theta}}{\sqrt{r'}}dr',\]
which satisfies $\underline{\phi}\in L^2([r_0,\infty)).$ The pointwise bounds in \eqref{imrpoved pointwise bounds for theta Ttheta} further imply that $\underline{\phi}\in H^2([r_0,\infty)).$ Differentiating equation \eqref{dr theta}, we also obtain that:
\[|\partial_r^3\underline{\phi}|\lesssim r^{-5}(\log r)^2+r^{-9/2}|\underline{F_1}|,\]
%actually log instead of log^2
which implies that $\underline{\phi}\in H^3([r_0,\infty)).$ 

We use the improved pointwise bounds \eqref{imrpoved pointwise bounds for theta Ttheta} in \eqref{Tphi using vanishing} to get:
\[|\underline{\mathcal{T}\phi}|\lesssim r^{-2}(\log r)^3,\ |\partial_r\underline{\mathcal{T}\phi}|\lesssim r^{-3}(\log r)^2,\]
so we get that $\underline{\mathcal{T}\phi}\in H^1([r_0,\infty)).$ Equation \eqref{dr T theta} also implies that:
\[|\partial_r^2\underline{\mathcal{T}\phi}|\lesssim r^{-4}(\log r)^2+r^{-5/2}|\underline{F_1}|,\]
so we further get that $\underline{\mathcal{T}\phi}\in H^2([r_0,\infty)).$ 

Additionally, using \eqref{vanishing conditions eq} we can extend $\underline{\mathcal{T}^2\phi}$ as a $C^0$ function to $[r_0,\infty)$:
\[\underline{\mathcal{T}^2\phi}(r)=-\frac{1}{\sqrt{r}\underline{\gamma}(r)}\int_r^{\infty}\underline{F_1}(r')dr'-\frac{\underline{\mathcal{T}\gamma}}{\underline{\gamma}}\underline{\mathcal{T}\phi},\]
which also implies that $\underline{F_1}=\partial_r\underline{\mathcal{T}}\big(\sqrt{r}\underline{\gamma\mathcal{T}\phi}\big).$ We also notice that:
\begin{equation}\label{bound for first term in Hardy}
    \lim_{r\rightarrow\infty}r^2\big(\underline{\mathcal{T}^2\phi}\big)^2\leq\lim_{r\rightarrow\infty}r\bigg(\int_r^{\infty}\underline{F_1}(r')dr'\bigg)^2\leq\lim_{r\rightarrow\infty}\int_r^{\infty}\big(r'\underline{F_1}\big)^2dr'=0,
\end{equation}
\begin{equation}\label{bound for second term in Hardy}
    \int_r^{\infty}\Tilde{r}^2\Big[\partial_r\big(\sqrt{r}\underline{\gamma}\underline{\mathcal{T}^2\phi}\big)\Big]^2d\Tilde{r}\lesssim\big\|\underline{F_1}\big\|_{Y}^2+\int_r^{\infty}\Tilde{r}^2\Big[\partial_r\big(\sqrt{r}\underline{\mathcal{T}\gamma}\underline{\mathcal{T}\phi}\big)\Big]^2d\Tilde{r}\lesssim\big\|\underline{F_1}\big\|_{Y}^2+\big\|\underline{\mathcal{T}\phi}\big\|_{H^1}^2\lesssim1.
\end{equation}
Similarly to \eqref{Hardy ineq near I}, we have the following Hardy inequality for any function $f$ such that the RHS of \eqref{Hardy ineq near I p=0} is finite:
\begin{equation}\label{Hardy ineq near I p=0}
    rf^2+\int_r^{\infty}f^2d\Tilde{r}\leq2\lim_{\Tilde{r}\rightarrow\infty}\Tilde{r}f^2+ 8\int_r^{\infty}\Tilde{r}^2\big(\partial_rf\big)^2d\Tilde{r}.
\end{equation}
Using \eqref{Hardy ineq near I p=0} for $\sqrt{r}\underline{\gamma}\cdot\underline{\mathcal{T}^2\phi},$ together with \eqref{bound for first term in Hardy}-\eqref{bound for second term in Hardy}, we obtain that $\underline{\mathcal{T}^2\phi}\in H^1([r_0,\infty)).$ 

So far, we proved that $\underline{\phi}\in H^3([r_0,\infty)),\ \underline{\mathcal{T}\phi}\in H^2([r_0,\infty)),\ \underline{\mathcal{T}^2\phi}\in H^1([r_0,\infty)).$ Using a similar argument for differences of solutions, it is straightforward to obtain that:
\begin{equation}\label{main difference estimate for extension}
    \big\|\underline{\phi}-\phi\big\|_{H^3[r_0,\infty)}+\big\|\underline{\mathcal{T}\phi}-\mathcal{T}\phi\big\|_{H^2[r_0,\infty)}+\big\|\underline{\mathcal{T}^2\phi}-\mathcal{T}^2\phi\big\|_{H^1[r_0,\infty)}\lesssim\varepsilon.
\end{equation}

According to \Cref{solving compatibility conditions remark}, the tuple $\underline{\mathfrak{D}}$ is determined uniquely on $[r_0,\infty)$ by the solution to the null constraint equations \eqref{dr theta}-\eqref{dr T phi} that we constructed, and it satisfies the compatibility conditions in \Cref{asympt ads data def}. In view of \eqref{main difference estimate for extension}, we have that the extended $\underline{\mathfrak{D}}$ satisfies \eqref{smallness of extension} as desired. Next, we choose coordinates $(\underline{U},\underline{V})$ that are asymptotically Eddington-Finkelstein with respect to $\underline{\mathfrak{D}}$. We define $\underline{M}=\lim_{r\rightarrow\infty}\underline{m}(r)$, then choose an asymptotically Eddington-Finkelstein coordinate $\underline{V}$ satisfying \eqref{asympt EF coordinate} and \eqref{asympt behavior dv r}. We rescale the ingoing coordinate $\underline{U}=U\cdot(1+O(\varepsilon))$ such that $\lim_{r\rightarrow\infty}\underline{\gamma}(r)=2,$ which implies that \eqref{asympt behavior du r}-\eqref{asympt behavior Lapse} hold at zero order. Furthermore, using the already established pointwise estimates, it is straightforward to prove the asymptotic rates \eqref{asympt behavior du r}-\eqref{asympt behavior phi}, concluding that $\underline{\mathfrak{D}}$ is a $C^2$ asymptotically AdS characteristic data set.
\end{proof}

%\textbf{Remark.} I don't ask that $\underline{\mathcal{T}\gamma}$ and $\underline{\mathcal{T}\partial_Ur}$ vanish at infinity. The way the proof of \cite{localwellposed} works is that they find a coordinate system in which this holds afterwards.

\begin{remark}\label{remark on general k}
    In the general case $k\geq2$, we have the commuted system for all $0\leq n\leq k-1:$
\begin{align}
    \partial_r\mathcal{T}^n\theta+\bigg(\frac{1}{2r}+\frac{\Lambda\gamma r}{2\partial_Ur}\bigg)\mathcal{T}^n\theta=&\frac{\partial_r\mathcal{T}^n\big(\sqrt{r}\gamma\mathcal{T}\phi\big)}{-\partial_Ur}+Err_1^n, \\
    \partial_r\mathcal{T}^n\gamma=& Err_2^n,\\
    \partial_r\mathcal{T}^n\partial_Ur=& \frac{\Lambda}{2}r\mathcal{T}^n\gamma+Err_3^n,
\end{align}
where we have the schematic error terms:
\begin{align*}
    Err_1^n=&\sum_{i+j+l=n,\ l\geq 1}\mathcal{T}^i\theta\mathcal{T}^j\theta\mathcal{T}^l\theta+\sum_{h+i+j+l=n,\ l\geq 1}\mathcal{T}^h\bigg(\frac{\sqrt{r}}{-\partial_Ur}\bigg)\mathcal{T}^i\theta\mathcal{T}^j\theta\mathcal{T}^l\big(\gamma\mathcal{T}\phi\big)+ \\
    &+\sum_{i=1}^n\mathcal{T}^i\bigg(\frac{1}{-\partial_Ur}\bigg)\partial_r\mathcal{T}^{n-i}\big(\sqrt{r}\gamma\mathcal{T}\phi\big)+\frac{\Lambda r}{2}\sum_{i=1}^n\mathcal{T}^i\bigg(\frac{\gamma}{\partial_Ur}\bigg)\mathcal{T}^{n-i}\theta,\\
    Err_2^n=& \sum_{i+j+l=n}\mathcal{T}^i\theta\mathcal{T}^j\theta\mathcal{T}^l\gamma,\\
    Err_3^n=&\sum_{i+j+l=n,\ l\geq 1}\mathcal{T}^i\theta\mathcal{T}^j\theta\mathcal{T}^l\partial_Ur.
\end{align*}
%[induction for $i>0$ gives $\dot{A}_{n-i}\sim r^{-3/2}r_1^{-1/2-i}+r^{-5/2}r_1^{1/2-i}$]

This system has the same structure as the null constraint equations \eqref{dr theta}-\eqref{dr T phi}, so it is clear that the argument in the proof of \Cref{local perturbation extension proposition} can be adapted to the general case $k\geq2$ using an induction argument.
\end{remark}

\bibliographystyle{amsalpha}
\bibliography{refs}

\end{document}